\documentclass{article}

\usepackage{arxiv}

\usepackage[utf8]{inputenc} 
\usepackage[T1]{fontenc}    
\usepackage{hyperref}       
\hypersetup{
    colorlinks=true,
    linkcolor=blue,    
    citecolor=blue,    
    urlcolor=blue      
}
\usepackage{url}            
\usepackage{booktabs}       
\usepackage{amsfonts}       
\usepackage{nicefrac}       
\usepackage{microtype}      
\usepackage{lipsum}		
\usepackage{graphicx}
\usepackage{natbib}
\usepackage{caption}
\usepackage{doi}
\usepackage{algorithm}
\usepackage{algpseudocode}
\usepackage{newfloat}
\usepackage{listings}
\usepackage{booktabs}

\usepackage{amsmath}
\usepackage{amssymb}
\usepackage{amsthm}
\usepackage{mathtools}
\usepackage{xcolor}
\usepackage{cleveref}
\usepackage{tikz}
\usepackage{enumitem}
\usepackage{cancel}
\usepackage{subcaption}
\usepackage{standalone}
\usetikzlibrary{backgrounds,shapes.geometric,positioning,fit, arrows.meta}
\usetikzlibrary{arrows.meta,positioning,calc,bending}

\tikzset{
  sp picture/.style={
    line cap=round,
    line join=round
  },
  sp vertex/.style={
    circle,
    draw,
    fill=white,
    line width=0.6pt,
    inner sep=0pt,
    minimum size=2.6mm
  },
  sp terminal/.style={
    sp vertex,
    minimum size=3.8mm,
    font=\scriptsize
  },
  sp edge/.style={
    draw,
    line width=0.9pt,
    -{Latex[length=2.2mm,width=1.8mm]},
    shorten >=0.4pt
  },
  sp edge label/.style={
    midway,
    fill=white,
    inner xsep=1.6pt,
    inner ysep=0.9pt,
    font=\scriptsize,
    align=center,
    text height=1.6ex,
    text depth=.25ex
  }
}

\newcommand{\E}{\mathbb{E}}

\newcommand{\Oh}[1]{\mathcal{O}(#1)}

\newcommand{\Ctrl}{\mathsf{Ctrl}}
\newcommand{\Ser}{\mathsf{Series}}
\newcommand{\Par}{\mathsf{Par}}
\newcommand{\ParAND}{\mathsf{ParAND}}
\newcommand{\ParOR}{\mathsf{ParOR}}
\newcommand{\one}{\mathbf{1}}
\newcommand{\Ps}{\Psi}
\newcommand{\Psh}{\widehat{\Psi}}
\newcommand{\al}{\alpha}

\newcommand{\sgn}{\sigma}
\newcommand{\Network}{\mathcal{N}}
\newcommand{\frontier}{\mathcal{F}}
\newcommand{\snk}{\mathrm{snk}}

\newcommand{\eqdef}{\vcentcolon=}

\DeclareMathOperator*{\argmax}{arg\,max}

\theoremstyle{plain}
\newtheorem{theorem}{Theorem}

\newtheorem{proposition}{Proposition}

\theoremstyle{definition}

\theoremstyle{remark}
\newtheorem{remark}{Remark}
\newtheorem{example}{Example}
\title{Security Games on Series–Parallel Attack Graphs with Adaptive Attackers}

\author{Russell Kai Min Tan\\
	School of Computing\\
	National University of Singapore\\
	\texttt{russelltankm@u.nus.edu} \\
	\And
	{Hui Han Chin}\\
	DSO National Laboratories\\
	\texttt{chuihan@dso.org.sg} \\
    \And
	{Chun Kai Ling}\\
	School of Computing\\
	National University of Singapore\\
	\texttt{chunkail@nus.edu.sg} \\
}

\renewcommand{\undertitle}{}

\begin{document}
\maketitle

\begin{abstract}
We study security games on attack graphs, where an adaptive attacker seeks to reach a target by sequentially attempting stochastic controls along the current attack frontier, while a defender allocates limited resources across controls to delay compromise. The attacker may choose among exponentially many attack routes and freely pivot between them as successes and failures are observed, yielding an exponentially large space of contingent attack policies.
For any fixed defender allocation, we show that an optimal attacker policy on a two-terminal series–parallel attack graph is an index policy: at each step, the attacker selects an available control with the largest value of an extension of the classical Gittins index. The indices and the resulting attacker best response can be computed in polynomial time, without explicitly enumerating attack paths or contingent policies. To the best of our knowledge, this is the first optimal index characterization for adaptive attackers in security games on general series–parallel attack graphs.
We further develop efficient algorithms for computing the attacker’s exact utility and an exact defender subgradient, enabling deterministic first-order optimization of defensive resource allocations without sampling attack trajectories. Our framework strictly generalizes prior approaches restricted to parallel chains and out-trees, while exploiting the compositional structure of series–parallel graphs to support interpretable attacker policies and parallel computation across independent subgraphs. Experiments demonstrate that the resulting methods scale substantially better than naive explicit-state approaches while producing effective defensive allocations.
\end{abstract}

\section{Introduction}
Security systems commonly deploy multiple complementary controls to
protect high-value assets \citep{jointtaskforce2020,nist_sp80082r3_2023}.
Attack graphs and attack/fault trees provide standard abstractions for
reasoning about how these controls interact, representing a successful
compromise or system failure through alternative routes, sequential
subgoals, and explicit defensive countermeasures
\citep{kordy2011foundations,kordy2014attack,jha2002two,ou2005mulval}.
Such models are widely used for vulnerability analysis, attack-path
evaluation, reliability assessment, and countermeasure selection
\citep{jurgenson2008computing, pietre2010beyond,durkota2015game,ruijters2019ffort}. In particular, a defender may harden selected nodes or
edges subject to a budget, trading off increased attacker effort
against deployment cost and disruption users
\citep{durkota2015game}.

The difficulty is that a strategic attacker need not commit to a single
source--target path. Attacks unfold over time: attempts may succeed or
fail stochastically, repeated failures may change the state of a control
or eventually cause lockout, and the attacker may pivot among exposed
controls as outcomes are observed. Such lateral movement and adaptive
route selection are central features of multistage attacks
\citep{knie2026graph}. The attacker therefore chooses from an exponentially
large space of contingent policies rather than merely from a set of
paths. \cite{ling2026security} obtain an exact Gittins-index solution
when the attack graph consists of independent parallel chains, but this
topology excludes nested alternatives, sequential subgoals, and
reconverging routes. We study two-terminal series--parallel graphs,
which recursively compose attack components in series and parallel and
naturally capture the modular structure present in attack and fault
models, including instances from established fault-tree benchmarks
\citep{ruijters2019ffort,formalmethods2026ffort}.

\section{Related Work}
Our work connects three literatures: security games, attack/fault graphs, and index policies for multi-armed bandits.

\paragraph{Security Games.}
Security games traditionally model a defender allocating scarce resources across potential targets, with applications including airport, transportation, and maritime security; subsequent work incorporates movement and sequential interaction through patrolling and extensive-form models \citep{pita2008armor,sinha2018stackelberg,basilico2009leader,lisy2016counterfactual}. Closest to our setting, \cite{ling2026security} studies a cybersecurity-motivated game in which a defender allocates resources to delay defensive controls, while an adaptive attacker seeks to reach a target quickly. They show that explicitly solving the attacker's best-response MDP is computationally prohibitive and that optimizing against heuristic attackers can produce highly suboptimal allocations. Their tractability results are restricted to parallel chains, where classical Gittins-index arguments apply. We broaden the class of security games admitting efficient attacker best responses and defender optimization to two-terminal series--parallel attack graphs.

\paragraph{Attack and Fault Graphs.}
Attack graphs and attack/fault trees compactly represent multistage routes to compromise or failure, and are widely used for vulnerability analysis, attack-path ranking, reliability evaluation, and countermeasure selection \citep{phillips1998graph,sheyner2002automated,jha2002attackgraphs,ou2006scalable,lallie2020review}. Their compositional AND/OR and series/parallel structure motivates our focus on series--parallel attack graphs. Prior work has also developed game-theoretic attack-graph models for network hardening, deception and honeypot placement, and edge interdiction \citep{durkota2015attackgraphgames,durkota2015game,milani2020harnessing,guo2022practical}. Such models typically have the attacker select a complete attack plan or path, or rely on generic mathematical-programming and approximation methods. In contrast, our attacker observes stochastic outcomes, interleaves attempts across branches, and may pivot freely during an attack. Thus, our analysis captures strategic adaptation within an attack, rather than only the ex ante selection of an attack path.

\paragraph{Gittins Indices and Branching Bandits.}
The classical Gittins-index theorem shows that, in a discounted multi-armed bandit whose independent arms evolve only when selected, an optimal policy always plays an arm with the largest current index \citep{gittins1979bandit,gittins2011multiarmed}.\footnote{Here, ``bandit'' refers to the classical discounted stochastic-control formulation, rather than the regret-minimization formulation commonly associated with UCB or multiplicative weights.} Subsequent work develops efficient index-computation algorithms and polyhedral interpretations, as well as related index methods for restless bandits and exact extensions to branching bandits \citep{bertsimas1996conservation,nino20072,whittle1988restless,weiss1988branching}. In a branching bandit, playing an arm may generate new arms, while an optimal Gittins priority rule is retained \citep{weiss1988branching}. More recently, \cite{choo2025adaptivefrontierexplorationgraphs} obtained an optimal index policy for adaptive frontier exploration on forests, with applications to network-based disease testing, while \cite{ling2026security} exploited a parallel-chain decomposition in security games. These settings allow tree-like branching or independent parallel chains but not the reconverging dependencies of general series--parallel graphs. To our knowledge, we give the first optimal Gittins-type index policy for a graph class supporting both branching and reconvergence.

\section{Security Games with SP Attack Graphs}
\label{sec:model}

We restrict our attention to attack graphs that are directed two-terminal series-parallel (SP) given by $\Network=(\mathcal{V},\mathcal{E})$. SP attack graphs have a unique source ($s$) and sink ($\varphi$); vertices represent intermediate attack states or milestones, while each edge represents a potential vulnerability that allows the attacker to progress from one state to another. Within each edge lies an atomic security measure (e.g. firewalls, authentication) that the defender may harden at some cost, hindering the attacker's ability to compromise the edge. An attack run is successful if the attacker can traverse a directed $s$--$\varphi$ path where every control in it has been compromised. Examples of SP attack graphs are shown in Figure~\ref{fig:example-edge}.

\begin{figure}[t]
\centering

\begin{subfigure}[t]{0.23\textwidth}
\centering
\begin{tikzpicture}[
  >=Stealth,
  thick,
  vtx/.style={
    circle,
    draw,
    fill=black,
    inner sep=0pt,
    minimum size=2.5pt
  },
  term/.style={
    inner sep=1pt,
    font=\small
  },
  lab/.style={
    font=\small,
    fill=white,
    inner sep=1pt
  }
]

\node[term] (s) at (0,0) {$s$};
\node[vtx] (u) at (1.15,0.9) {};
\node[vtx] (b) at (1.15,-0.9) {};
\node[vtx] (m) at (2.65,-0.9) {};
\node[term] (t) at (3.5,0) {$\varphi$};

\draw[->] (s) -- node[lab, above left] {$A$} (u);
\draw[->] (u) -- node[lab, above right] {$C$} (t);

\draw[->] (s) -- node[lab, below left] {$B$} (b);
\draw[->] (b) to[bend left=18]
  node[lab, above] {$D$} (m);
\draw[->] (b) to[bend right=18]
  node[lab, below] {$E$} (m);
\draw[->] (m) -- node[lab, below right] {$F$} (t);
\end{tikzpicture}
\caption{}
\label{fig:example-edge-compact}
\end{subfigure}
\begin{subfigure}[t]{0.23\textwidth}
\centering
\begin{tikzpicture}[
  >=Stealth,
  thick,
  vtx/.style={
    circle,
    draw,
    fill=black,
    inner sep=0pt,
    minimum size=2.4pt
  },
  lab/.style={
    font=\small,
    fill=white,
    inner sep=1pt
  }
]

\node[vtx, label=left:$s$] (s) at (0,0) {};
\node[vtx, label=right:$\varphi$] (t) at (2.3,0) {};

\node[vtx] (a1) at (1.15,0.95) {};

\node[vtx] (c1) at (1.15,-0.95) {};

\draw[->] (s)  -- node[pos=.5, above=2pt, lab] {$A_1$} (a1);
\draw[->] (a1) -- node[pos=.5, above=2pt, lab] {$A_2$} (t);

\draw[->] (s) -- node[pos=.5, above=2pt, lab] {$B$} (t);

\draw[->] (s)  -- node[pos=.5, below=2pt, lab] {$C_1$} (c1);
\draw[->] (c1) -- node[pos=.5, below=2pt, lab] {$C_2$} (t);

\end{tikzpicture}
\caption{}
\label{fig:three-parallel-chains}
\end{subfigure}

\caption{Examples of attack graphs that are two-terminal
series--parallel network with source \(s\) and sink \(\varphi\). The former is a running example used. The latter is an example of a parallel-chain graph used by \citet{ling2026security}.}
\label{fig:example-edge}
\end{figure}

\subsection{Controls in SP Graphs}
The basic building block of a SP network is a control, which correspond to edges in the attack graph. Tagged to each control $e$ is its \textit{length} $\ell_e$, which captures the ``work factor'' or ``time'' that the attacker incurs every attempt it makes at compromising the control. \footnote{Aside from time, another equivalent interpretation is the ``rate'' that an attacker would be detected (thwarting the entire attack run) while attempting to compromise the control.} We denote by $p_e(k)$ the probability that an attempt to compromise $e$ is successful after $k$ prior failures. We make the following technical assumptions. 
\textbf{Assumption 1}: $p_e(k)$ are monotonically non-increasing in $k$ \citep{ling2026security}. \textbf{Assumption 2}: after $q_e > 0$ attempts, the probability of success drops to $0$, i.e., the attacker is \textit{locked out} of $e$ after $q_e$ failed attempts. 
These assumptions are based on existing work. 
\subsection{SP Attack Graphs and Parse Trees}
\label{sec:sp-topology}
The SP attack graph $\Network$ is written as a parse tree, 
where every sub-component is 
generated recursively by the grammar
\begin{align*}
\label{eq:n-nary-grammar}
    G \eqdef \Ctrl(e) \mid \Ser(G_1 ,\cdots, G_m) \mid \Par(G_1, \cdots, G_n).
\end{align*}

The leaves of the parse tree are $\Ctrl(e)$, where each $\Ctrl(e)$ can only appear \textit{once} in $G$. Each is visualized by a single directed edge $e \in \mathcal{E}$ in the attack graph as well as its two endpoints which serve trivially as source and sink.
These foundational units are recursively composed into larger sub-components by operations $\Ser$ and $\Par$.
A \textbf{series composition} $\Ser(G_1,\dots,G_m)$ 
merges each sink vertex of $G_i$ directly with the source vertex of $G_{i+1}$, forming a longer chain, with source equal to that of $G_1$ and sink that of $G_m$. Since the graphs are directed, $\Ser(G_1,\dots,G_m)$ is not commutative, i.e., $\Ser(G_1, G_2) \neq \Ser(G_2, G_1)$. 
A \textbf{parallel composition} $\Par(G_1,\dots,G_m)$ provides alternative routes under \textsc{or} semantics. The source and sinks of all constituents $G_i$ are merged to form a new common sink and source. 
Unlike $\Ser$, $\Par$ is topologically commutative. For ease of exposition, we will drop the $\Ctrl(\cdot)$ operator and directly use $A, B, C$ etc. to refer to a single edge/control.

\Cref{fig:example-edge-compact} shows a network we use as a running example with $\Network=\Par(\Ser(A,C),\allowbreak \Ser(B,\Ser(\Par(D,E),F)))$. 
\Cref{fig:three-parallel-chains} show an example of the special case of parallel-chains used by \cite{ling2026security}, here $\Network=\Par(\Ser(A_1, A_2), \allowbreak B,  \Ser(C_1,C_2))$, which more generally are of the form $\Par(\Ser(\dots), \Ser(\dots), \Ser(\dots))$.
A \textbf{sub-component} of the attack graph is one that is formed by a sub-tree of the \textit{parse tree} (\textit{not} to be confused a subgraph of the \textit{attack graph}). For example, in Figure~\ref{fig:example-edge-compact}, the graph induced by edges $(D,E)$ forms a subcomponent corresponding to $\Par(D, E)$. 
These compositions by our grammar maintains the invariant that any subcomponent $G$ always retains a single source and sink (and is acyclic). We denote the sink of a sub-component $G$ as $\snk(G)$.

\subsection{Temporally-Extended Attacks on SP Graphs}
In our paper, a single \textit{attack run} (henceforth \textit{run}) spans over a time period, and compromised controls expand the exposed attack surface, known as a frontier $\mathcal{F}$. Every failed attempt at compromising a control changes its underlying internal state (e.g., failure to enter a correct password for a long enough period may eventually lead to a lockout). We model this by ascribing to each control $e \in \mathcal{E}$ an \textit{internal state}.
The internal state of each control edge $e$ is its failure count, $k \in \{0, \dots, q_e\}$, bounded by the lockout limit $q_e$. 

\paragraph{Attack Frontier.}
At any point during an attack, the \emph{frontier}
$\mathcal F$ is the set of controls currently available for the
attacker to attempt. Initially, it consists of the controls leaving the
source $s$, and successful compromises expose new downstream controls.
We use \emph{eager pruning}: locked-out controls are removed from
$\mathcal F$. If an enclosing subcomponent becomes impossible to complete, this is propagated through the SP structure, and controls no longer able to contribute to reaching $\varphi$ are also pruned.
Similarly, completing any child of a parallel subcomponent completes
the entire subcomponent, after which all unfinished sibling
subcomponents are pruned. Thus, $\mathcal{F}$ contains
exactly those exposed controls whose outcomes can still affect whether
the target $\varphi$ is reached.

\paragraph{Anatomy of a run.} At each decision point, the attacker selects a control
$e\in\mathcal{F}$ to attempt. If $e$ has experienced $k$ prior
failures, the attempt succeeds with probability $p_e(k)$. \textbf{Upon failure},
$k$ is incremented. If $k$ reaches $q_e$, then $e$
is permanently locked out and removed from $\mathcal{F}$; otherwise, it
remains available for a subsequent attempt. \textbf{Upon success}, $e$ is
compromised and $\mathcal{F}$ is updated according to the series and
parallel semantics described above. If $e$ is adjacent to $\varphi$, the run is over. Otherwise, the attacker selects another control from the updated frontier and the process continues until $\varphi$ is exposed or the frontier becomes empty. Let $n_e$ denote the total number of attempts made on control $e$ during the run. The total elapsed time is $T=\sum_{e} n_e\ell_e$ and a successful run yields attacker utility $\exp(-\lambda T)$, where $\lambda>0$ is the discount rate. If the frontier becomes empty before $\varphi$ is reached, the run is unsuccessful and the attacker receives utility zero, or equivalently $T=\infty$.

\begin{figure}[!h]
\centering
\begin{subfigure}[t]{0.24\textwidth}
\centering
\begin{tikzpicture}[
  >=Stealth,
  thick,
  vtx/.style={
    circle,
    draw,
    fill=black,
    inner sep=0pt,
    minimum size=2.5pt
  },
  term/.style={
    inner sep=1pt,
    font=\small
  },
  lab/.style={
    font=\small,
    fill=white,
    inner sep=1pt
  }
]

\node[term] (s) at (0,0) {$s$};
\node[vtx] (u) at (1.15,0.9) {};
\node[vtx] (b) at (1.15,-0.9) {};
\node[vtx] (m) at (2.65,-0.9) {};
\node[term] (t) at (3.5,0) {$\varphi$};

\draw[->] (s) -- node[lab, above left] {$A:$ $\frac{2}{3}$} (u);
\draw[->] (u) -- node[lab, above right] {$C:$ $\frac{0}{5}$} (t);

\draw[->] (s) -- node[lab, below left] {$B:$ $\frac{0}{2}$} (b);
\draw[->] (b) to[bend left=18]
  node[lab, above] {$D:$ $\frac{0}{2}$} (m);
\draw[->] (b) to[bend right=18]
  node[lab, below] {$E:$ $\frac{0}{2}$} (m);
\draw[->] (m) -- node[lab, below right] {$F:$ $\frac{0}{3}$} (t);
\end{tikzpicture}
\caption{$\mathcal{F}=\{ A, B \}$}
\label{fig:anim-1}
\end{subfigure}
\begin{subfigure}[t]{0.24\textwidth}
\centering
\begin{tikzpicture}[
  >=Stealth,
  thick,
  vtx/.style={
    circle,
    draw,
    fill=black,
    inner sep=0pt,
    minimum size=2.5pt
  },
  term/.style={
    inner sep=1pt,
    font=\small
  },
  lab/.style={
    font=\small,
    fill=white,
    inner sep=1pt
  }
]

\node[term] (s) at (0,0) {$s$};
\node[vtx] (u) at (1.15,0.9) {};
\node[vtx] (b) at (1.15,-0.9) {};
\node[vtx] (m) at (2.65,-0.9) {};
\node[term] (t) at (3.5,0) {$\varphi$};

\draw[->] (s) -- node[lab, above left] {$A:$ $\frac{2}{3}$} (u);
\draw[->] (u) -- node[lab, above right] {$C:$ $\frac{0}{5}$} (t);

\draw[line width=2pt, ->] (s) -- node[lab, below left] {$B:$ $\frac{2}{2}$} (b);
\draw[->] (b) to[bend left=18]
  node[lab, above] {$D:$ $\frac{0}{2}$} (m);
\draw[->] (b) to[bend right=18]
  node[lab, below] {$E:$ $\frac{0}{2}$} (m);
\draw[->] (m) -- node[lab, below right] {$F:$ $\frac{0}{3}$} (t);
\end{tikzpicture}
\caption{$\mathcal{F}=\{ A, D, E \}$}
\label{fig:anim-2}
\end{subfigure}
\begin{subfigure}[t]{0.24\textwidth}
\centering
\begin{tikzpicture}[
  >=Stealth,
  thick,
  vtx/.style={
    circle,
    draw,
    fill=black,
    inner sep=0pt,
    minimum size=2.5pt
  },
  term/.style={
    inner sep=1pt,
    font=\small
  },
  lab/.style={
    font=\small,
    fill=white,
    inner sep=1pt
  }
]

\node[term] (s) at (0,0) {$s$};
\node[vtx] (u) at (1.15,0.9) {};
\node[vtx] (b) at (1.15,-0.9) {};
\node[vtx] (m) at (2.65,-0.9) {};
\node[term] (t) at (3.5,0) {$\varphi$};

\draw[->] (s) -- node[lab, above left] {$A:$ $\frac{2}{3}$} (u);
\draw[->] (u) -- node[lab, above right] {$C:$ $\frac{0}{5}$} (t);

\draw[line width=2pt, ->] (s) -- node[lab, below left] {$B:$ $\frac{2}{2}$} (b);
\draw[->] (b) to[bend left=18]
  node[lab, above] {$D:$ $\frac{1}{2}$} (m);
\draw[->] (b) to[bend right=18]
  node[lab, below] {$E:$ $\frac{1}{2}$} (m);
\draw[->] (m) -- node[lab, below right] {$F:$ $\frac{0}{3}$} (t);
\end{tikzpicture}
\caption{$\mathcal{F}=\{ A, D, E \}$}
\label{fig:anim-3}
\end{subfigure}
\begin{subfigure}[t]{0.24\textwidth}
\centering
\begin{tikzpicture}[
  >=Stealth,
  thick,
  vtx/.style={
    circle,
    draw,
    fill=black,
    inner sep=0pt,
    minimum size=2.5pt
  },
  term/.style={
    inner sep=1pt,
    font=\small
  },
  lab/.style={
    font=\small,
    fill=white,
    inner sep=1pt
  }
]

\node[term] (s) at (0,0) {$s$};
\node[vtx] (u) at (1.15,0.9) {};
\node[vtx] (b) at (1.15,-0.9) {};
\node[vtx] (m) at (2.65,-0.9) {};
\node[term] (t) at (3.5,0) {$\varphi$};

\draw[dotted, ->] (s) -- node[lab, above left] {$A:$ $\frac{3}{3}$} (u);
\draw[->] (u) -- node[lab, above right] {$C:$ $\frac{0}{5}$} (t);

\draw[line width=2pt, ->] (s) -- node[lab, below left] {$B:$ $\frac{2}{2}$} (b);
\draw[line width=2pt, ->] (b) to[bend left=18]
  node[lab, above] {$D:$ $\frac{2}{2}$} (m);
\draw[->] (b) to[bend right=18]
  node[lab, below] {$E:$ $\frac{1}{2}$} (m);
\draw[->] (m) -- node[lab, below right] {$F:$ $\frac{0}{3}$} (t);
\end{tikzpicture}
\caption{$\mathcal{F}=\{ F \}$}
\label{fig:anim-4}
\end{subfigure}
\caption{An illustration of a run unfolding over time. Bolded edges indicate compromised controls, dotted lines are controls that have been locked out. Each label shows the control and the number of attempts made on the control so far above its lockout limit $q_e$ (not to be confused with fractions) next to it. The current frontier is shown in the respective subcaptions.}
\label{fig:anim-full}
\end{figure}

\begin{example} 
\label{ex:run-description}
Consider the run is based on the network topology $\mathcal{N}$ shown in Figure~\ref{fig:example-edge-compact} and is illustrated in Figure~\ref{fig:anim-full} based on $q_e$ illustrated there. (i) At the beginning, only controls $A$ and $B$ are in $\mathcal{F}$. The attacker attempts to compromise $A$. It tries twice and fails. This leads to Figure~\ref{fig:anim-1}. (ii) Now, the attacker decides that attempting $A$ yet again is not the best course of action, and pivots to $B$. It fails the first time but succeeds on the second attempt. Now, controls $D$ and $E$ are exposed (Figure~\ref{fig:anim-2}). (iii) At this point, the attacker can attempt either $A$, $D$, or $E$. It chooses $D$ and fails. Instead of continuing with $D$ again, it chooses to pivot and try $E$ and fails again (Figure~\ref{fig:anim-3}). (iv) Now, instead of continuing with either $D$ or $E$, the attacker pivots \textit{back} to $A$ but fails. This triggers a lockdown of control $A$ whose lockout limit $q_A=3$. Thus, only $D$ and $E$ are exposed. The attacker attempts $D$ once again and is successful. Since $E$ is parallel to $D$, it is removed from $\mathcal{F}$ (attacking it any more can never benefit the attacker). Also, control $F$ is now exposed (Figure~\ref{fig:anim-4}). (v) Now, only $F$ remains on the frontier. The attack is successful after two tries. This yield a successful run with a compromised $s-\varphi$ path of $B \rightarrow D \rightarrow F$. In total, controls $A,B,C,D,E,F$ were attempted $3$, $2$, $0$, $2$, $1$, $2$ times respectively, giving a total time of $T=3\ell_A + 2\ell_B + 2\ell_D + \ell_E + 2\ell_F$ and the attack receives a reward of $\exp(-\lambda T)$.
\end{example}

In Example~\ref{ex:run-description}, the attacker dynamically pivots between exposed controls, adapting to observed failures. This adaptive attacker strategy contrasts to a naive method of committing to a single path (e.g., $B \rightarrow D \rightarrow F$), but is significantly more complex. Note that runs differ substantially depending on which attacks succeed and how the attacker chooses to pivot (possibly even with the attacker failing), hence $T$ and $\exp(-\lambda T)$ are in general random variables.

An attack state is a tuple
$\sigma=(\mathcal F,(k_e)_{e\in\mathcal F})$, consisting of the current
frontier and the failure count of every control on it. A deterministic
attacker policy $\pi$ selects a control
$\pi(\sigma)\in\mathcal F$ at every nonterminal state $\sigma$.
Denote the set of such policies by $\Pi$.
For a fixed $\ell$ and attacker policy $\pi$, let $T$
denote the time when the attacker reaches $\varphi$, with
$T=\infty$ if the run is unsuccessful. This yields the value functions:
\[
    V_{\mathcal N}^{\pi}(\ell)
    :=\mathbb E\!\left[e^{-\lambda T}\mid \ell,\pi\right]
    \quad \text{and} \quad
    V_{\mathcal N}^{*}(\ell)
    :=\max_{\pi\in\Pi}V_{\mathcal N}^{\pi}(\ell).
\]

\subsection{Optimal Defense Allocation as a Game} \label{sec:defender}

Fix $\mathcal{N}$. The defender chooses $\ell_e$, but must balance security against usability --- extreme values of $\ell$ delay attackers but heavily penalise legitimate users. Like \cite{ling2026security}, this operational trade-off is captured by restricting the $\ell$ to the bounded polytope $\mathcal{L}$ 
the $|\mathcal{E}|$-simplex, $\mathcal{L} = \{ \ell \in \mathbb{R}^{|\mathcal{E}|}_+ \mid \mathbf{1}^T \ell = 1 \}$.
Given any $\ell$, the attacker best responds with $\pi^*_\ell = \argmax_{\pi \in \Pi} V_{\mathcal N}^{\pi}(\ell)$. Consequently, the defender seeks to minimise this worst-case scenario, formulating the interaction as the following min-max problem: 
\begin{equation} \min_{\ell \in \mathcal{L}} \max_{\pi\in\Pi} V^\pi_\mathcal{N}(\ell) = \min_{\ell \in \mathcal{L}} \max_{\pi\in\Pi} \mathbb E\!\left[e^{-\lambda T}\mid \ell,\pi\right]
\label{eq:minmax}
\end{equation}
Since $\mathcal{L}$ is compact and $V^*_\mathcal{N}$ continuous, the problem achieves a global minimum, with the maximum possible utility $V_\Network^{*}(\ell) \in [0,1]$. Finally, $V_\Network^{*}(\ell)$ is convex w.r.t. $\ell$ because it is the pointwise maximization over convex functions. 

\begin{remark}
    Since SP-networks are more general than parallel chains, the negative results of \cite{ling2026security} regarding heuristic-based solutions also apply. In particular, assuming heuristic attacker policies can be highly detrimental for either attacker and defender. 
    This shows the importance of respecting the attacker's ability to pivot. 
\end{remark}
\begin{remark}
While \eqref{eq:minmax} is identical to \cite{ling2026security} in terms of rewards and control structure, the move from parallel chains to general SP attack graphs necessitates a more complex frontier $\mathcal{F}$ and policy space $\Pi$. 
\end{remark}

\section{Solving for the Optimal Attacker Policies}
The min-max problem in \eqref{eq:minmax} can be solved by first-order methods (FOM) as long as one is able to readily able to obtain $\pi^*_\ell$ and subgradients $g(\ell) \in \partial V_\mathcal{N}^*(\ell)$ . 
We follow \cite{ling2026security} and utilize regret matching \citep{hart2013simple}, deferring details to the appendix. Thus, the rest of the section is devoted to our novel method of computing $\pi^*_\ell$ and $g(\ell)$ for a fixed $\ell$. Note about notation: in this paper, we write $g(\ell)$ to denote subgradients where $V^*_\mathcal{N}(\ell)$ is nonsmooth, typically it is the pointwise maximum of finitely many policies. When needed, we perform tiebreaks consistently (e.g., by lexicographic order). 

\subsection{Structure of Optimal Attacker Policies}

Let us focus on computing $\pi^*_\ell$. Clearly, the naive method of solving a huge MDP is inefficient; this was true even in the special case of parallel chains.  
Another naive method: expanding an SP network into independent parallel chains can also produce highly suboptimal attacker policies.

\begin{example}
\label{ex:counterexample-expand-sp}
Let $\mathcal N_n
    =
    \Par\!\left(
        A,\Ser\!\left(B,\Par(C_1,\ldots,C_n)\right)
    \right)$
with $\lambda=1$ (Figure~\ref{fig:decomposability-example}). Suppose $A$ and $B$ succeed deterministically ($p_A(0)=p_B(0)=1)$, with
$\ell_A=2.5$ and $\ell_B=1$, respectively. Each $C_i$ permits one
attempt ($q_{C_i}=1$), with $\ell_{C_i}=0.1$, $p_{C_i}(0)=0.1$. 
Thus, the attacker is choosing between the safe control $A$ and a
shared prerequisite $B$ that unlocks $n$ alternative downstream
controls. This models a situation where the attacker is choosing between a safe but slow attack in $A$, or, an initial attack on $B$ which opens up $n$ possible future paths. This decision depends on $n$: if $n=1$, then choosing $A$ is ideal; if $n$ is high, then the one-off attack on $B$ pays off.

\begin{figure}[h!]
\centering
\begin{tikzpicture}[
  >=Stealth,
  thick,
  vtx/.style={
    circle,
    draw,
    fill=black,
    inner sep=0pt,
    minimum size=2.5pt
  },
  lab/.style={
    font=\small,
    fill=white,
    inner sep=1pt
  }
]

\node[vtx, label=left:$s$] (s) at (0,0) {};
\node[vtx] (m) at (1.0,-1.2) {};
\node[vtx, label=right:$\varphi$] (t) at (2.5,0) {};

\draw[->] (s) -- node[lab, above] {$A$} (t);

\draw[->] (s) -- node[lab, below left] {$B$} (m);

\draw[->]
  (m) .. controls (1.4,-0.55) and (2.05,-0.25) ..
  node[pos=.5, lab, above] {$C_1$}
  (t);

\node[font=\small] at (1.72,-0.78) {$\vdots$};

\draw[->]
  (m) .. controls (1.35,-1.65) and (2.0,-1.35) ..
  node[pos=.5, lab, below] {$C_n$}
  (t);
\end{tikzpicture}
\begin{tikzpicture}[
  >=Stealth,
  thick,
  vtx/.style={
    circle,
    draw,
    fill=black,
    inner sep=0pt,
    minimum size=2.5pt
  },
  lab/.style={
    font=\small,
    fill=white,
    inner sep=1pt
  }
]

\node[vtx, label=left:$s$] (s) at (0,0) {};
\node[vtx, label=right:$\varphi$] (t) at (3.45,0) {};

\node[vtx] (m1) at (1.725,-0.62) {};
\node[vtx] (mn) at (1.725,-1.58) {};

\draw[->]
  (s) -- node[lab, above] {$A$} (t);

\draw[->]
  (s) to[out=-18,in=165]
  node[pos=.42, lab, above, xshift=-2pt, yshift=-2.5pt] {$B_1$}
  (m1);

\draw[->]
  (m1) to[out=15,in=-162]
  node[pos=.58, lab, above, xshift=2pt, yshift=-2.5pt] {$C_1$}
  (t);

\node[font=\small] at (1.725,-1.00) {$\vdots$};

\draw[->]
  (s) to[out=-38,in=165]
  node[pos=.42, lab, below, xshift=-2pt, yshift=-2pt] {$B_n$}
  (mn);

\draw[->]
  (mn) to[out=15,in=-142]
  node[pos=.58, lab, below, xshift=2pt, yshift=-2pt] {$C_n$}
  (t);

\end{tikzpicture}
\caption{The graphs $\mathcal{N}_n$ (left) and $\widetilde{N}_n$ (right) in Example~\ref{ex:counterexample-expand-sp}.
}
\label{fig:decomposability-example}
\end{figure}

Now consider a naive \textbf{alternative} graph
$\widetilde{\mathcal N}_n
    =
    \Par\!\left(
        A,\Ser(B_1,C_1),\ldots,\Ser(B_n,C_n)
    \right),
$
where the $B_i$ are independent copies of $B$. Each $B_i$ unlocks only
one downstream control $C_i$, 
Now, the one-off nature of breaching $B$ is not accounted for, and $A$ remains the best option regardless of $n$. An optimal attacker in this alternative always selects $A$.
For $n=20$, deploying this policy $\widetilde{\pi}^*$ on the true network $\mathcal{N}_n$ obtains value $0.0821$, whereas the optimal SP-aware policy obtains value $0.1769$.
\end{example}

Example~\ref{ex:counterexample-expand-sp} shows that SP networks are not easily ``expanded'' into parallel-chains, and that tackling SP networks requires approaches beyond \cite{ling2026security}.
A natural approach is dynamic programming: focus on a single subcomponent $G$ (e.g., the parallel bundle in Example~\ref{ex:counterexample-expand-sp}), compute the optimal strategy $\pi^*$ for that subcomponent (treating source and sink as subcomponent terminals) and recombine them recursively bottom-up from the parse-tree. This intuition turns out to be correct, but requires more nuance. For suppose we efficiently found a optimal policy some sub-component $G$, say, using Gittins indices. Adopting this optimal policy for this sub-component results in a Markov Process, which may be treated as a single ``mega-arm'' which can be composed with its sibling components. 

However, this Markov Process written explicitly will have exponentially many states in $n$, and it is not obvious how to efficiently combine or compare these exponentially many states to other options when handling further recursion.

\subsection{Indexability and Optimal Attacker Policies}
The discussion above suggests that the exponentially large state of a
subcomponent might be summarized by a scalar priority assigned to each
available control state (and their internal failure count). Our main result shows that such priorities --- also known as \textit{indices} --- exist and are sufficient to characterize an optimal attacker policy.

\begin{theorem}[Optimal index policy]
\label{thm:index-policy}
Fix an SP network $\mathcal N$ satisfying Assumptions~1--2 and a
defender allocation $\ell$. There exists an index
$\alpha(e,k_e)\in\mathbb R$ for every available control $e$ at failure
count $k_e$, determined by its local state and the calibrated
continuation profile of the forward cone (the sub-component remaining to be breached upon breaching $e$) exposed. The index does not depend on the states of sibling subcomponents. Moreover, there exists an optimal adaptive attacker
policy $\pi_\ell^*$ satisfying
$
    \pi_\ell^*(\sigma)
    \in
    \arg\max_{e\in\mathcal F(\sigma)}
    \alpha(e,k_e)
$
at every nonterminal attack state $\sigma$.
\end{theorem}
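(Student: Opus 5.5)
We prove the theorem by structural induction on the parse tree. The tool is the classical reduction of a discounted bandit arm to its Gittins index via a calibrating retirement reward. The induction hypothesis carries a scalar \emph{calibrated continuation profile}, not just an index. For every sub-component $G$ and every retirement/terminal reward $M\in[0,1]$ collected at $\snk(G)$, let $W_G(M)$ be the optimal value of attacking $G$ alone when reaching $\snk(G)$ pays $M$ and failure pays $0$.

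The first claim is that $W_G$ is nondecreasing and convex in $M$, being a maximum over policies of functions linear in $M$ of the form $M\,\E[e^{-\lambda T_G}\mathbf 1\{\text{success}\}]$. Moreover, the map $M\mapsto W_G(M)$ is the only information about $G$ that enclosing components need. Within $G$, the index of an exposed control $e$ at count $k_e$ is defined as the Gittins index of $e$ viewed as a one-arm process whose success transition pays the continuation value of its forward cone. Concretely, $\alpha(e,k_e)=\sup_{\tau}\E[\text{discounted reward up to }\tau]/\E[\text{discounted ``time'' up to }\tau]$ in the usual restart/retirement form. The reward upon success is the calibrated value $W_{\text{cone}(e)}(\cdot)$ evaluated at the retirement level being compared.

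The induction runs as follows. For $\Ctrl(e)$, $W_e(M)=M\max_{j}\E[\cdots]$ is explicit, and Assumption~1 (nonincreasing $p_e$) makes the optimal stopping rule monotone. As a result, the index is well defined and nonincreasing in $k_e$, and Assumption~2 makes everything finite. For $\Par(G_1,\dots,G_n)$, the components evolve independently and the first completion ends the component (OR semantics). This is exactly a multi-armed bandit in which each arm $G_i$ has terminal reward $M$. We show, via the retirement-option argument of Whittle, that the optimal policy plays the child whose current (inherited) maximal frontier index is largest. We also show that $W_{\Par}(M)$ is obtained from the $W_{G_i}$ by Whittle's product formula $\partial_M W=\prod_i\partial_M W_{G_i}$; the analogue here is adapted to absorbing success. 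For $\Ser(G_1,\dots,G_m)$, reaching $\snk(G_i)$ exposes $G_{i+1}$ and nothing else, so $W_{\Ser}(M)=W_{G_1}(W_{G_2}(\cdots W_{G_m}(M)))$. This composition is where the forward cone enters: the index of a control in $G_1$ is computed with continuation reward $W_{G_2}\circ\cdots\circ W_{G_m}$. At the global level we set $M=1$ at $\varphi$, and the argmax policy over the whole frontier is obtained by unrolling the recursion. The local indices are consistent across levels because each $W$ depends only on its own sub-component. Polynomial computability follows since each $W_G$ is piecewise linear with breakpoints at indices, which can be tracked.

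The main obstacle is the $\Par$ step when the children are themselves compound. Each child is then a branching process whose exposed frontier can contain several controls, so a priority rule in which a child ``plays'' only its top-index control must be shown to be consistent with interleaving. We want the children to be interchangeable with single arms carrying Gittins-type indices, which requires that the index sequence along a child's optimal trajectory behaves like a Gittins arm. We would establish this with the prevailing-charge (interchange) argument of Weiss's branching bandits. The key is that, under the calibrated reward $M$, the optimal child policy is nested in $M$: the set of states where continuing beats retiring shrinks monotonically as $M$ grows. This indexability property is what we would prove first, by induction, using convexity and monotonicity of $W_G$ and the monotonicity from Assumption~1.
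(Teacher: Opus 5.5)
Your skeleton (structural induction, retirement calibration, a product rule at $\Par$, composition at $\Ser$, then a Gittins-type theorem) matches the paper's. The first gap is that the object you induct on is mis-specified. As you define it (reward $M$ at $\snk(G)$, failure pays $0$, no outside option), every policy earns $M a_\pi$ with $M\ge 0$. Hence $W_G(M)=M\,W_G(1)$ is linear, and no index can be read off it. It also falsifies your $\Par$ rule: two deterministic parallel controls with common discount $\beta$ give $\partial_M W_{\Par}=\beta$, but $\prod_i\partial_M W_{G_i}=\beta^2$.

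The product rule holds only when the argument is a buyout available at every instant. With that reading, however, your $\Ser$ rule fails, because the buyout is still available after $G_1$ is breached. The correct value is $\Psi_{G_1}\bigl(\gamma;\sigma^0_{G_1}\mid\widehat\Psi_{G_2}(\gamma)\bigr)=\widehat\Psi_{G_2}(\gamma)\,\widehat\Psi_{G_1}\bigl(\gamma/\widehat\Psi_{G_2}(\gamma)\bigr)$, not $\widehat\Psi_{G_1}(\widehat\Psi_{G_2}(\gamma))$. For two deterministic controls in series, the plain composition gives root value $\beta$ instead of $\beta^2$. You therefore need three changes: a two-argument profile $\Psi_G(\gamma;\sigma\mid r)$ with homogeneity; an induction hypothesis stated for an arbitrary buyout-dependent continuation profile $R(\gamma)$; and the index defined as the smallest root of $\alpha=c(e,k)R_e(\alpha)$, rather than a reward-to-time ratio with a fixed success reward.

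The second gap is in the $\Par$ step, where you plan to prove the wrong property. Nestedness of the continuation region in the buyout is automatic: every profile is $1$-Lipschitz, so $\Psi_G(\gamma)-\gamma$ is nonincreasing and the continuation set is always $[0,\alpha_G)$. It does not resolve the obstacle you correctly flag. A child with several frontier controls is a superprocess. Whittle's retirement argument, or equivalently the Gittins theorem for superprocesses, then requires the buyout-independent control condition at every reachable state $\sigma$, not only the fresh one: $\alpha_G(\sigma)=\max_{e\in\mathcal F_G(\sigma)}\alpha(e,k_e)$, and one fixed maximiser is optimal simultaneously for every buyout $\gamma<\alpha_G(\sigma)$. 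Without this, the best internal control could change with the level of sibling competition, and no index rule need be optimal.

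The paper proves this statewise maximum-index property by induction. At $\Ser$ nodes it uses the unique active child with continuation $R_j$. At $\Par$ nodes it uses the action-gap identity $\Psi_G-\Psi_{G\mid e}=h_e(\gamma)g_2(\gamma)+\int_{(\gamma,1]}h_e\,dg_2$. This vanishes for the maximiser $e^*$ because $h_{e^*}\equiv0$ on $[0,\theta_1]$, while $dg_2$ is supported on $[0,\theta_2]\subseteq[0,\theta_1]$. Weiss's branching-bandit interchange does not supply this property either: rewards here are not additive across arms, completing one $\Par$ child prunes its siblings, and $\Ser$ nodes reconverge.
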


We call $\alpha(e,k_e)$ the \emph{index} of control state
$(e,k_e)$. Theorem~\ref{thm:index-policy} reduces the attacker's
sequential decision problem to a greedy rule once these indices are
known. These indices are analogous to the classic Gittins index \citep{gittins1979bandit}, which was the cornerstone of \cite{ling2026security}.  We now define these indices and explain
the why this rule is optimal. Section~4.3 gives
an efficient algorithm for computing them.

To define these indices, we introduce an auxiliary
retirement problem in which the attacker may abandon the current
subcomponent in exchange for a buyout $\gamma$. The buyout represents
the value of the attacker's alternatives elsewhere on the frontier and
separates two decisions: whether to continue the attack and,
conditional on continuing, which available control to attempt. Suppose we have a sub-component $G$, state $\sigma$, exit reward $r>0$, and buyout $\gamma\in[0,r]$. We define its \textit{calibrated value profile} $\Ps_G$, where $\Ps_G(\gamma;\sgn\mid r)\eqdef\sup_{\pi}\E_\sgn\big[e^{-\lambda T} r\,\one\{T\le\tau_\pi\}+\gamma\,e^{-\lambda\tau_\pi}\one\{T>\tau_\pi\}\big]$, where $T$ is the time required to complete $G$ and $\tau_\pi$ is the retirement time. Retirement may be voluntary or forced when no route through $G$ remains. The original attacker problem corresponds to $\gamma=0$ and $r=1$ at the root.

For any fixed policy, the calibrated payoff is affine in $\gamma$ and $r$. Since the attack process is finite, $\Ps_G(\gamma; \sigma \mid r)$ is therefore the maximum of finitely many affine functions, and hence convex and piecewise linear in $\gamma$. Immediate retirement gives $\Ps_G(\gamma; \sigma \mid r) \geq \gamma$, with equality at $\gamma = r$. Thus, there \textit{exists} a threshold which we define $\alpha_G(\sigma \mid r) = \inf \{\gamma \in [0, r] : \Ps_G(\gamma; \sigma \mid r) = \gamma\}$, below which $G$ is worth pursuing and at or above which retirement is optimal. In this sense, $\alpha_G$ is exactly a priority: \textbf{how attractive can the outside alternative become before the attacker would rather abandon G?} Positive homogeneity of $\Ps_G$ lets us normalise to $r = 1$.

The subtlety is that $G$ is not an ordinary bandit arm. After deciding to continue with $G$, the attacker must still choose \textit{which} frontier control inside $G$ to attack, and in general that choice \textit{could} depend on $\gamma$. The SP structure \textit{prevents} this. At a $\Ser$ node, only the current child is active while its downstream children remains frozen. $\Ser$ composition therefore changes only the continuation obtained after success. At a $\Par$ node, the viable children are alternatives. The unchosen children remain frozen, and completing one child completes the parallel subcomponent and prunes the others. These two operations preserve the same maximising internal control throughout the continuation region.

Consequently, for every reachable state of every SP subcomponent $G$, we have that $\alpha_G(\sigma \mid 1) = \max_{e \in \frontier_G(\sigma)} \alpha(e, k_e)$ and a fixed tie-broken maximiser remains optimal for every buyout below this threshold. This is precisely the buyout-independent-control condition demonstrated. By the special Gittins theorem \citep{doi:https://doi.org/10.1002/9780470980033.ch4} which says that the Gittins policy (picking the largest index control to attack) is optimal. Structural induction over the SP parse tree therefore yields \Cref{thm:index-policy}.

This also explains \textit{locality}. Once $e$ is breached, only its downstream $\Ser$ continuation remain whereas in a parallel subcomponent, its siblings are pruned. Hence $\alpha(e, k_e)$ depends on $e$'s local state and its forward continuation, but not on sibling states. At the root, $\gamma = 0$ and $r = 1$, so repeatedly attacking a maximum-index frontier control is optimal. Full proofs are in the appendix.

\subsection{Compositional Index Computation}\label{sec:componsitional_index_comp}
\Cref{thm:index-policy} reduces the attacker problem to computing the control-state indices. The key observation is that a control does not need the state of the whole network but only the knowledge of what a successful compromise leads to. Let $\beta_e = \exp(-\lambda \cdot \ell_e)$. For $\beta_e < 1$, define $c(e, k) \eqdef \frac{\beta_e p_e(k)}{1 - \beta_e(1 - p_e(k))}$, with boundary values $c(e,k) = 1$ when $\beta_e = 1, p_e(k) > 0$ and $c(e, k) = 0$ when $\beta_e = 1, p_e(k) = 0$. This quantity summarises the local control. If successfully breaching $e$ immediately paid a fixed reward $r$, the index would simply be $c(e, k)r$. The full derivation can be found in the appendix.

In the network, however, success at $e$ exposes a continuation. Let $R_e(\gamma)$ be the calibrated value of everything that must still happen after breaching $e$, under outside option $\gamma$. The index is therefore the smallest solution of $\alpha(e, k) = c(e, k)R_e(\alpha(e, k))$. This equation has a useful interpretation. The left side is the outside option at which the attacker is indifferent, while the right side is the value of attacking $e$ after $k$ prior failed attempts (on $e$) and then receiving its continuation. Since $R_e$ is piecewise linear, the solution is cheap to find; on a piece $R_e(\gamma) = A +B\gamma$, we have that $\alpha(e, k) = \frac{c(e,k)A}{1 - c(e, k)B}$. Moreover, because $p_e(k)$ decreases with $k$, all indices of $e$ can be obtained in one monotone sweep. 

So the remaining task is to compute every continuation profile $R_e$. We do this in two passes over the same SP parse tree. For a \textit{fresh} subcomponent $G$ (denoted by state $\sigma_G^0$), let $\Psh_G(\gamma) \eqdef \Ps_G(\gamma; \sigma_G^0\mid 1)$. Think of $\Psh_G$ \textbf{as a polyline summarising the entire subtree $G$}. Its breakpoints are precisely the values of $\gamma$ at which the optimal behaviour inside $G$ changes. We describe the algorithm below:

\textbf{Pass 1: Summarising each subcomponent.}  Starting from the controls, we replace progressively larger subcomponents by their profiles. For $G = \Par(G_1, \cdots, G_m)$, all children face the same outside option. We therefore sweep their breakpoints together. Between two consecutive breakpoints every child lies on a fixed affine segment, and the slope of the parent segment is the product of the corresponding child slopes. For $G = \Ser(G_1, \cdots, G_m)$, we instead work backward from $G_m$: the value of completing $G_j$ is whatever can subsequently be obtained from $G_{j+1}, \cdots, G_m$. Folding these continuations from downstream to upstream produces $\Psh_G$, while storing the suffix after each child for Pass 2. These operations preserve piecewise linearity, and the root node gives $V^*_\Network (\ell) = \Psh_\Network(0)$. As profiles are piecewise linear in $\gamma$, computation is easy as it is a matter of recording breakpoints and gradient slopes. 

\textbf{Pass 2: Supply each control's context.} Start above the root $R \equiv 1$. A $\Par$ node passes the same surrounding continuation profile to every child, because its children are alternatives to one another. A $\Ser$ node passes each child its stored downstream suffix profile followed by the continuation outside the $\Ser$ node, using the homogeneity $\Ps_G(\gamma; \sigma \mid r) = r \Psh_G(\gamma / r)$. When this pass reaches a control (which is a leaf on the parse tree) $e$, its continuation profile is exactly $R_e$, and the fixed-point equation above yields all of its indices.

\begin{proposition}[Index Computation Complexity]
\label{prop:complexity}
Let $d$ be the height of the SP parse tree of $\Network$. The value $\Ps_\Network$ and all $Q$ indices are computed in $\Oh{Qd}$ time and $\Oh{Qd}$ space.
\end{proposition}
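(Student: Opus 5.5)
Here $Q=\sum_{e\in\mathcal E} q_e$ is the total number of control states $(e,k)$, $0\le k<q_e$, i.e.\ the number of indices. The plan is to bound the number of breakpoints of every polyline the algorithm touches, then charge each linear-time operation of Passes~1 and~2 to those breakpoints, level by level in the parse tree.

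\textbf{Step 1: breakpoint bound.} For a subcomponent $G$ with control set $\mathcal E_G$, let $Q_G=\sum_{e\in\mathcal E_G}q_e$. The claim is that $\Psh_G$ has at most $Q_G+O(1)$ linear pieces on $[0,1]$, and so does every stored suffix profile and every continuation profile $R_e$ restricted to the relevant range. The intuition comes from the index characterisation in \Cref{thm:index-policy}. By the retirement-problem argument, $\Psh_G(\gamma)$ is attained by playing, in index order, exactly those control states whose index exceeds $\gamma$. A breakpoint can therefore occur only at a value $\gamma=\alpha(e,k)$ for some $(e,k)$ in $G$, which gives at most $Q_G$ breakpoints.

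I would make this rigorous by induction over the grammar:
\begin{itemize}
\item For $\Ctrl(e)$, the profile is determined by the $q_e$ thresholds $\alpha(e,k)$, which are monotone by Assumption~1, so it has at most $q_e$ breakpoints.
\item For $\Par$, the merged sweep has breakpoints contained in the union of the children's breakpoints, since slopes are multiplied on common intervals. The count is therefore at most $\sum_i Q_{G_i}$.
\item For $\Ser$, composing a child profile with a downstream suffix (via homogeneity $r\Psh(\gamma/r)$) creates breakpoints only at the child's own thresholds and at images of the suffix's thresholds. Again the count is at most the sum.
\end{itemize}
\textbf{This is the main obstacle:} showing that $\Ser$ composition does not multiply breakpoints. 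Function composition of polylines generically can, so I would lean on the index interpretation rather than pure geometry. Concretely, I would argue that each breakpoint of the composed profile is a $\gamma$ at which the set of control states worth playing changes, and that set changes exactly at the indices.

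\textbf{Step 2: time per level.} Each parse-tree node $G$ does work linear in the total size of its children's polylines:
\begin{itemize}
\item In Pass~1, the $\Par$ merge is a $k$-way sweep. I would order it so the cost is linear, for example by sorted merging amortised via pairwise merges along the tree.
\item The $\Ser$ backward fold is a sequence of two-polyline compositions, each done by a simultaneous monotone scan of the two breakpoint lists. This works because profiles are nondecreasing and $\gamma\mapsto r\gamma$ preserves order.
\end{itemize}
So node $G$ costs $O(Q_G)$. The nodes at any fixed depth have disjoint control sets, so their $Q_G$ sum to at most $Q$. Summing over the $d$ depths gives $O(Qd)$ for Pass~1.

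\textbf{Step 3: Pass~2 and the indices.} Pass~2 passes down continuation profiles. The profile delivered to a child is a suffix composed with the parent's continuation, and by Step~1 it has $O(Q)$ pieces only in aggregate. To keep the cost within $O(Qd)$, I would charge each composition at node $G$ to $O(Q_G)$ plus the parent's context. This needs the observation that only the portion of $R_e$ on $[0,\max_k c(e,k)]$ matters, together with the fact that breakpoints inherited from outside $G$ can be evaluated lazily by a pointer walk. I regard this as the secondary delicate point.

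At a leaf $e$, the $q_e$ fixed points $\alpha=c(e,k)R_e(\alpha)$ are found in one monotone sweep. Since $c(e,k)$ is nonincreasing in $k$, the solutions move monotonically along the pieces of $R_e$, so the cost is $O(q_e+\#\text{pieces})$. This is covered by the same charge.

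\textbf{Space.} Pass~1 stores the suffix profiles for every node, $O(Q_G)$ per node, which totals $O(Qd)$ by the same level-summing argument. The value $V^*_\Network(\ell)=\Psh_\Network(0)$ is read off the root.
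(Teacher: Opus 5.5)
Your Pass~1 and space accounting match the paper's proof: piece counts by induction over $\Ctrl$, $\Par$, $\Ser$, work linear in $Q_G$ per node, and $\sum_G Q_G=\sum_e q_e d_e\le Qd$, which is your level-by-level sum. The $\Ser$ case you call the main obstacle is routine, and the paper handles it by monotonicity rather than indices. On an affine piece $\rho=A+B\gamma$ of the downstream suffix profile, $A$ is one of the intercepts $a_\pi\ge 0$, so $\gamma\mapsto\gamma/\rho(\gamma)$ has derivative $A/\rho^2\ge 0$. Hence the breakpoints of $\rho$ stay in place, and each breakpoint of $\Psh_{G_j}$ pulls back to a single point (or to an interval on which the composite is already affine). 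On every cell of the common refinement, $\rho(\gamma)\Psh_{G_j}(\gamma/\rho(\gamma))=a\gamma+b\rho(\gamma)$ is affine. This monotonicity of $\gamma/\rho(\gamma)$, not monotonicity of the profiles, is also what justifies your simultaneous scan in Step~2. Your index-based argument is valid too, provided the indices are those computed with unit reward at $\snk(G)$, not the network indices $\alpha(e,k)$, which depend on what lies outside $G$. One caveat you share with the paper: an $m$-ary $\Ser$ fold stores $m$ suffixes of total size up to $\Theta(mQ_G)$, and an $m$-way merge is not linear in general. So ``$O(Q_G)$ per node'' needs $d$ to be the height of a binarised tree, in line with the paper's own $O(mQ)$ remark for chains of length $m$.

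The genuine gap is Step~3. The profile $R_G$ handed to a node carries the breakpoints of every series suffix downstream of $G$ but outside its subtree. So ``$O(Q_G)$ plus the parent's context'' is not $O(Q_G)$, and neither truncating to $[0,\max_k c(e,k)]$ nor an unspecified lazy pointer walk changes that. For example, take $\Network=\Ser(A,B)$, where $A$ is a balanced binary $\Par$-tree over the branches $\Ser(a_i,a_i')$, $i=1,\dots,n$. Let $B$ be one control with $q_B=n$ and strictly decreasing positive $p_B(k)$, and set all other $q_e=1$. Then $Q=3n$ and $d=O(\log n)$. Yet for suitable parameters (e.g.\ $c(a_i,0)$ and $c(a_i',0)$ close to $1$), each of the $n$ profiles $R_{a_i}=\mathcal P_{\Psh_{a_i'}}[\Psh_B]$ has $\Theta(n)$ pieces in the range where its fixed point can lie. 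Materialising them, or sweeping each as in your leaf step, therefore costs $\Theta(n^2)$. The paper's one-line ``the same accounting applies'' for Pass~2 also passes over this, so your suspicion was well placed, but the proposal does not supply the fix.

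The missing idea is to pull indices up rather than push profiles down. Write $R_e^H$ for the continuation of $e$ computed with unit reward at $\snk(H)$, and keep at each node $H$ the sorted list of indices of its control states relative to $\snk(H)$. At a leaf this list is $(c(e,k))_k$, sorted by Assumption~1, and a $\Par$ node just merges its children's lists. At $G=\Ser(G_1,\dots,G_m)$ with $e\in G_j$ and $S_{j+1}=\Ser(G_{j+1},\dots,G_m)$, associativity of perspective composition gives $R_e^{G}=\mathcal P_{R_e^{G_j}}[\Psh_{S_{j+1}}]$. With $\tau(\gamma)=\gamma/\Psh_{S_{j+1}}(\gamma)$, the equation $\alpha=c(e,k)R_e^{G}(\alpha)$ becomes $\tau(\alpha)=c(e,k)R_e^{G_j}(\tau(\alpha))$. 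So the new index is $\min\{\gamma:\tau(\gamma)\ge a\}$, where $a$ is the index relative to $\snk(G_j)$. Since $\tau$ is monotone, the whole sorted list of $G_j$ is mapped in one sweep over the stored suffix, and each bounded-arity node costs $O(Q_G)$. The root's list then holds all $Q$ indices, computed in $O(\sum_G Q_G)\le O(Qd)$ time without ever building an $R_e$.
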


Each sub-component $G$ has at most $Q_G+1$ profile pieces, where $Q_G=\sum_{e\in G}q_e$. Consequently, evaluating any $n$-ary $\Ser$ or $\Par$ node requires only a linear-time monotone merge of its children, at a cost of $\Oh{\sum_i Q_{G_i}}$. Aggregating over the tree, each control is processed at most once per ancestor, yielding a total time complexity of $\Oh{\sum_e q_e d_e} \le \Oh{Qd}$. This scales from $\Oh{Q\log n}$ on balanced networks to $\Oh{mQ}$ for $k$ parallel chains of length $m$. For space complexity, the two-pass algorithm only needs to store the piecewise-linear value functions. Memoising them during the bottom-up fold (Pass 1) for use in the top-down index pass retains each control's breakpoints at every ancestor, storing $\sum_G Q_G = \sum_e q_e d_e \le Qd$ breakpoints in aggregate. As the output table stores exactly $Q$ indices, the working set is bounded by $\Oh{Qd}$.

\subsection{Exact Subgradients by Reversing the Fold}\label{sec:defender}
\Cref{sec:componsitional_index_comp} gives a compact computation of the attacker value: from $\ell$ to leaf profiles to subcomponent profiles to $V^*_\Network(\ell)$. The defender asks the reverse question: \textbf{how much does the root value depend on each $\ell_e$?} Rather than enumerate attack trajectories, we differentiate this same computation backward.

We introduce the $\mathbf{n}$\textbf{-ary Reverse-Tape} ($n$-RT) algorithm records the arithmetic used by the bottom-up profile fold, including the construction of affine pieces and their breakpoint locations. After evaluating $V^*_\Network(\ell)$, the the tape is traversed in reverse and the chain rule propagates sensitivity from the root back to every control. Breakpoints must be recorded as computed quantities because their locations also vary with $\ell$.

The only additional issue is high arity. At a $\Par$ node with $m$ children, the active parent slope is a product of child slopes, so a balanced product tree processes each child breakpoint in $\Oh{\log m}$ time. At a $\Ser$ node, define the map $\tau_G(\gamma) = \frac{\gamma}{\Psh_G(\gamma)}$. On each affine piece this map is fractional-linear and these maps are closed under Series composition, so a balanced tree likewise maintains the current composition in $\Oh{\log m}$ time per breakpoint. Every recorded forward operation is then reversed once. A more detailed derivation and breakdown of the algorithm is presented in the appendix due to space constraints and the technical complexity of the algorithm.

At a leaf (a control), the reverse pass yields $\partial V^*_\Network / \partial \beta_e$. Since $\beta_e = \exp(-\lambda \cdot \ell_e)$, the subgradient is then $g_e(\ell) = -\lambda \cdot \beta_e \cdot \partial V^*_\Network / \partial \beta_e$. Thus a single reverse pass obtains all defender subgradients simultaneously. When the active affine pieces and breakpoint ordering are locally fixed, the recorded computation is differentiable and $n$-RT returns the exact subgradient. At a non-smooth point, simultaneous events are processed consistently and the fixed tie rule selects a limiting gradient. Convexity of $V^*_\Network(\ell)$ makes this a valid subgradient. The formal tie argument is deferred to the appendix.

\begin{proposition}[Subgradient complexity]
\label{prop:subgradient-complexity}
For an internal node $G$, let $m_G$ denote its arity, and define \(\delta \eqdef \max_{e\in E}\sum_{G\succeq e}\log m_G,\) where the sum ranges over the ancestors of $e$ (denoted by $\succeq e$) in the parse tree. At a regular parameter vector, $n$-RT computes $V_{\mathcal N}^*(\ell)$ and its exact subgradient in \(\mathcal O\big(\sum_G Q_G\log m_G\bigr) =\Oh{Q\delta}\) time and space. Under the fixed tie-breaking rule, the same procedure returns a valid subgradient at nonsmooth points.
\end{proposition}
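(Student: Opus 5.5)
We prove Proposition~\ref{prop:subgradient-complexity} in three parts: the forward cost of the instrumented fold, the cost of the reverse sweep, and correctness at regular and at nonsmooth parameters.

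\textbf{Forward cost.} By the piece bound used for Proposition~\ref{prop:complexity}, every profile $\Psh_G$ has at most $Q_G+1$ affine pieces, and $\sum_i Q_{G_i}=Q_G$ at every internal node $G$.

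At a $\Par$ node with children $G_1,\dots,G_m$, the merged breakpoint sweep visits at most $Q_G$ events. At each event exactly one child's active segment changes. The parent's affine piece is the product of the active child pieces, so we store those pieces as the leaves of a balanced binary product tree of depth $\Oh{\log m_G}$. Each event is then a single leaf update followed by recomputation along a root path, which costs $\Oh{\log m_G}$ arithmetic operations.

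At a $\Ser$ node, the downstream fold composes the maps $\tau_{G_j}(\gamma)=\gamma/\Psh_{G_j}(\gamma)$. On each piece such a map is fractional-linear, so it is represented by a $2\times 2$ matrix up to scaling, and composition is matrix multiplication. We first check that composing the suffix continuation with $\Psh_{G_j}$, via the homogeneity $\Ps_G(\gamma;\sigma\mid r)=r\Psh_G(\gamma/r)$, is again of this form. We then maintain the current composition in a balanced segment tree over the $m_G$ children. Each breakpoint of a child changes one leaf matrix, which costs $\Oh{\log m_G}$. Breakpoint locations are themselves obtained by solving for a fixed point or an intersection on the active piece, which is an $\Oh{1}$ rational expression.

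Summing over all nodes gives $\Oh{\sum_G Q_G\log m_G}$ forward time. Each control contributes its $q_e$ breakpoints at each ancestor, weighted by $\log m_G$, so this sum is at most $\sum_e q_e\sum_{G\succeq e}\log m_G\le Q\delta$. The tape records every elementary operation, so space is of the same order.

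\textbf{Reverse pass and exactness at regular points.} Reverse-mode differentiation of a straight-line program costs a constant factor of the forward cost, and the tape already has size $\Oh{Q\delta}$, so the reverse pass stays within this bound. For correctness, define a regular $\ell$ as one where:
\begin{itemize}
\item no two breakpoints coincide, and
\item no fixed-point equation $\alpha=c(e,k)R_e(\alpha)$ has its solution at a breakpoint of $R_e$.
\end{itemize}
Under these conditions the discrete choices made by the fold, namely the active pieces and the sorted merge order, are locally constant in a neighbourhood of $\ell$. On that neighbourhood the fold is a fixed composition of rational functions of $(\beta_e)_e$, all of which are smooth because their denominators, such as $1-c(e,k)B$ and $1-\beta_e(1-p_e(k))$, are nonzero. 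Hence $V^*_\Network$ agrees locally with a smooth function. The reverse sweep returns its gradient with respect to $\beta$, and the chain rule through $\beta_e=e^{-\lambda\ell_e}$ then gives $g_e=-\lambda\beta_e\,\partial V^*/\partial\beta_e$. Since $V^*_\Network$ is convex and differentiable at $\ell$, this gradient is the unique subgradient.

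\textbf{Nonsmooth points.} This is the step I expect to be the main obstacle. The plan is to show that the fixed tie-breaking rule makes the tape coincide with the tape at regular points $\ell^{(t)}\to\ell$ approached from a specific direction. One route is to perturb lexicographically, choosing directions that break each tie in the order the tie rule dictates. Another is to identify the tie-broken computation with the value function $V^{\pi}$ of a single policy $\pi$ that is optimal at $\ell$.

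With the second viewpoint, $V^{\pi}$ is smooth in $\ell$, and its gradient is a subgradient of $V^*_\Network=\max_\pi V^\pi$ because $V^\pi\le V^*_\Network$ everywhere with equality at $\ell$. With the first viewpoint, the returned vector is a limit of gradients at regular points, and convexity, through closedness of the subdifferential, makes any such limit a subgradient. Either route requires checking that degenerate events, such as zero-length pieces and coinciding breakpoints, are processed consistently across the $\Par$ product tree and the $\Ser$ composition tree. I would carry out that check node type by node type.
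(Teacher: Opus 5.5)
Your plan follows the paper's proof. It has the same ingredients: $\Oh{Q_G}$ breakpoint events per node, each costing $\Oh{\log m_G}$ in a balanced tree over the children (products of active slopes at $\Par$, composed fractional-linear transfer maps $\tau_{G_j}(s)=s/\Psh_{G_j}(s)$ at $\Ser$). Charging each $q_e$ to its ancestors gives $\sum_G Q_G\log m_G\le Q\delta$, and the reverse sweep costs the same. Exactness at regular points holds because the tape is locally a fixed differentiable circuit, and at ties a limit of regular gradients is a subgradient by convexity. Your policy-based alternative also works, but it needs the extra step of identifying the rational function on a regularity cell with a single $V^\pi$ (e.g.\ by analyticity), so the limiting-gradient route is lighter.

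The ingredient you do not supply is how the $\Ser$ sweep finds its next event in $\Oh{\log m_G}$. Maintaining the composed map is not enough. Child $j$'s pending local breakpoint $\eta_j$ sits at parent input $x_j^{-1}(\eta_j)$, where $x_j=\tau_{G_{j+1}}\circ\cdots\circ\tau_{G_m}$. Whenever a downstream child changes piece, all of these locations move at once. Recomputing them costs $\Theta(m_G)$ per event and gives only $\Oh{Q_Gm_G}$ at a flat $\Ser$ node. The paper's remedy is for each segment-tree node to also store the earliest pending breakpoint of its block, in that block's own input coordinate. These are combined by $\eta_C=\min\{\eta_R,\ \eta_L\beta_R/(1-\eta_L\alpha_R)\}$, where $\tau_R(s)=s/(\alpha_Rs+\beta_R)$ on the current piece. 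This is legitimate because every affine piece of a calibrated profile has nonnegative intercept (the $a_\pi\ge0$ of the affine representation), so each $\tau$ is nondecreasing. The same monotonicity guarantees that each child breakpoint is crossed only once, which your $\Oh{Q_G}$ event count at $\Ser$ nodes uses without saying so.

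There are two smaller corrections. First, at a $\Par$ node it is the active slopes that multiply, not the affine pieces; a product of affine functions is not affine. The parent intercepts come from backward integration from $\Psh_G(1)=1$, and taping those integration steps is what carries sensitivity to the moving breakpoints. Second, $1-c(e,k)B$, $1-\beta_e(1-p_e(k))$ and the location of the root of $\alpha=c(e,k)R_e(\alpha)$ all belong to the index pass. $n$-RT never runs that pass, since $V^*_\Network=\Psh_\Network(0)$ comes from the bottom-up fold alone. Regularity should instead require three things:
\begin{itemize}
\item locally unique active lines in each leaf envelope $\max_h\{A_h+B_hs\}$;
\item a locally fixed order of copied and mapped breakpoints;
\item a unique owner for every minimum.
\end{itemize}
The denominators that must stay away from zero are those of the leaf line intersections and $1-\eta_L\alpha_R$.
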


A node $G$ processes at most $\Oh{Q_G}$ breakpoint events, each requiring
$\Oh{\log m_G}$ segment-tree operations. Hence its cost is
$\Oh{Q_G\log m_G}$. Summing over the parse tree and charging each control
state to its ancestors gives
\(\sum_G Q_G\log m_G=\sum_e q_e\sum_{G\succeq e}\log m_G \le Q\delta.\)
In particular, $\delta\le d\log M$ when every node has arity at most
$M$; bounded arity therefore gives $\Oh{Qd}$ complexity, while a single
flat $m$-ary node costs $\Oh{Q\log m}$. The same count bounds the space: each event leaves constant-size tape records, so the tape holds $\Oh{Q_G\log m_G}$ entries per node and $\Oh{Q\delta}$ in total. Unlike the forward fold, the tape cannot be released as it is built, since reverse mode replays it in full; the retained profiles, totalling $\sum_G Q_G\le Qd\le Q\delta$, are subsumed.

\subsection{Intractability in the Parallel-AND setting}
Recall that in our model $\Par(G_1, G_2, \dots G_n)$ means that as long as \textit{any} of $G_1$ is compromised, then the entire sub-component is as well. Another natural interpretation is that \textit{all} of $G_1,\dots,G_n$ need to be compromised. This would be similar to our $\Ser$ formulation, but does \textit{not} impose sequential constraints (and is hence commutative). Let us distinguish these two notions of ``parallel'' by $\ParOR$ and $\ParAND$ respectively. We have the following negative result, whose proof is left to the Appendix.
\begin{theorem}[Informal] 
Extending the grammar of our SP-networks to allow both $\ParOR$ and $\ParAND$ makes finding the optimal solution $\pi^*$ for arbitrary $\ell$ weakly NP-hard. 
\end{theorem}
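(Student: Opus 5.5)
We will prove weak NP-hardness by a reduction from \textsc{Partition}: given positive integers $a_1,\dots,a_n$ with $\sum_i a_i = 2S$, decide whether some subset sums to exactly $S$. Because the numbers enter only through the lengths $\ell_e$, and the hardness rests on their magnitudes, this is the appropriate source problem for a weak (pseudo-polynomial) result. The decision version of the attacker problem will be: given $\mathcal N$, $\ell$, $\lambda$ and a threshold $\theta$, is $V^*_{\mathcal N}(\ell)\ge\theta$? Equivalently, we may ask whether the Gittins-style greedy rule of Theorem~\ref{thm:index-policy} is optimal. The reason $\ParAND$ breaks tractability is that an attacker who must complete all children of an AND node cannot treat those children as independent alternatives. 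Partial progress in one child is worthless unless the others also finish, so the value of a subcomponent is no longer summarised by a single profile $\Psh_G$ that is piecewise linear in a one-dimensional buyout. The construction is designed to make the optimal policy encode a subset choice.

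The gadget is as follows. For each item $i$ we build $\ParOR(X_i, Y_i)$, where $X_i$ and $Y_i$ are deterministic controls ($p\equiv1$, $q=1$) of lengths proportional to $a_i$ and $0$ respectively. We then combine two such "budget tracks" with $\ParAND$, so that choosing $X_i$ charges $a_i$ to track~1 and choosing $Y_i$ charges $a_i$ to track~2. Concretely, each item is a $\ParOR$ of two routes, one route passing through a shared-length control on track~1 and the other on track~2. Each track then ends in a stochastic terminal control whose success probability (or lockout) makes the reward a strictly concave, rapidly decaying function of the time accumulated on that track. If the attacker's payoff is $e^{-\lambda(T_1+T_2)}$ times a penalty term that is convex in $T_1 - T_2$, the optimum is achieved exactly when $T_1 = T_2 = S$. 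This gives
\[
V^*\ge\theta \iff \text{a perfect partition exists},
\]
with $\theta$ computed in polynomial time. The reduction is polynomial in the bit length of the $a_i$ but uses numeric lengths, which yields weak rather than strong hardness.

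The key steps are:
\begin{enumerate}
  \item Formalise the $\ParAND$ semantics, including frontier updates and pruning.
  \item Construct the gadget and compute, in closed form, the value of every "committed" policy, indexed by the subset $I$ of items routed through track~1.
  \item Show that no adaptive interleaving beats the best committed policy. The gadget controls are deterministic, so adaptivity yields no information before the final stochastic controls, and the ordering of attempts does not change total discounted time.
  \item Choose the final controls and $\theta$ so that the gap between balanced and unbalanced subsets is polynomially representable.
\end{enumerate}

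The main obstacle is the third step together with the design of the penalty. Under discounting alone, $e^{-\lambda T}$ depends only on total time, so balance between tracks does not matter. The terminal AND-gadget must therefore create a genuine nonlinear interaction. Our first attempt is to give each track a terminal control with a deadline-like structure, where the success probability drops to $0$ after a lockout calibrated to $S$. The attacker then wants both tracks to finish within a "budget". Since the attack runs sequentially, however, the budget must be realised through the other track's timing, and our fallback is to reduce instead from a knapsack-style problem (subset sum with a target). In that version, one AND-branch rewards total length at least $S$ via an OR-alternative of fixed cost, and the other penalises exceeding $S$.
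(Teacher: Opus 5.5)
There is a genuine gap, and it sits in the core mechanism. In this model, elapsed time affects the payoff only through $e^{-\lambda T}$, where $T=\sum_e n_e\ell_e$ is the \emph{total} time. Each success probability $p_e(k)$ depends only on that control's own failure count, and lockout counts failures, not time. So no terminal control can make the reward depend on ``the time accumulated on a track'': there is no deadline, and no penalty convex in $T_1-T_2$ can be built. You concede this yourself.

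The track architecture is not expressible either. The parse tree uses each control once, so an item gadget $\ParOR(X_i,Y_i)$ is a single subcomponent. It cannot send one branch into one child of a higher $\ParAND$ and the other branch into another. Also, with $\ell_{Y_i}=0$ and $p\equiv1$, the attacker simply always takes $Y_i$.

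More fundamentally, deterministic item gadgets give nothing hard to solve. Of two deterministic routes to the same state, the shorter is always weakly better, because utility decreases in $T$ and failure pays $0$ regardless of time spent. The optimal time of a tree of deterministic controls is then computed bottom-up: sums at $\Ser$ and $\ParAND$, minima at $\ParOR$. Your step~3, that the ordering of attempts does not change total discounted time, is exactly why such gadgets are polynomially solvable. The same monotonicity defeats your knapsack fallback, since nothing in the model ever rewards spending more time to reach ``at least $S$''.

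The missing idea is to draw the nonlinearity from stochastic failure under a top-level $\ParOR$. After a failed attempt the attacker moves to another alternative, so time wasted on one child discounts everything after it. The paper's construction works as follows:
\begin{itemize}
\item Each policy fragment on one child is an affine map $W\mapsto\mu+\nu W$ of the continuation value. Fragments on disjoint children are ordered by the threshold $\chi=\mu/(1-\nu)$ (the exchange rules), and composing fragments multiplies their slopes.
\item Each bundle $\ParAND(U_i,V_i)$ has two unit-time attempts per control with decreasing success probabilities. It offers a binary choice, attacking $U_i$ first or $V_i$ first, between two maps with essentially equal thresholds but slopes $bx_i$ and $cx_i^2$, where $x_i=1+\epsilon w_i$.
\item A one-shot pivot, with threshold strictly between the two mode thresholds, separates the $H$-modes from the $L$-modes. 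A deterministic fallback makes failed bundles never worth revisiting.
\item For balanced $S$, the loss $h-V(S)$ is proportional to $z_S+z_0^2/z_S$ with $z_S=\prod_{i\in S}x_i$. This is minimised exactly at $z_S=z_0=1+\epsilon Z$.
\end{itemize}

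Making this rigorous needs three further ingredients:
\begin{itemize}
\item \textsc{Equal-Cardinality Partition} rather than plain \textsc{Partition}, to force $|S|=k$.
\item An atomisation lemma showing that adaptive interleaving cannot help. This is the genuinely hard step; your step~3 avoids it only because your gadget has no stochastic choices.
\item A rationalisation step, because the integers are encoded in polynomial-bit success probabilities rather than in lengths.
\end{itemize}
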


\section{Experimental Evaluation}\label{sec:experiments}

\paragraph{Experimental Setup and Validation} All experiments were done on a dedicated 64-bit workstation running Ubuntu 24.04.3 LTS and equipped with 125 GiB of RAM. Before the experiments, we performed some basic verifications. For example, to verify that the Gittins policy is optimal, we validate both the computed values and the index policy against an independent brute-force MDP oracle that solves the game exactly by exhaustive backward induction over the joint state space. As this oracle is exponential in the number of controls, we apply it only to small networks. For subgradient computation, we use the centered finite-differences method to verify the correctness of the exact subgradients from our algorithms.

\paragraph{Synthetic and Real-World SP Networks used} We use a total of 5 classes of SP network topologies, 3 synthetically generated and 2 from real-world examples obtained from the FFORT website \citep{ffort_website}. The 3 synthetic classes are \textbf{SH} (series-heavy), \textbf{PH} (parallel-heavy) and \textbf{MN} (mixed-nested), whereas the 2 real-world classes are attack fault trees which model an attack on a password-protected file \citep{phillips1998graph}(denoted by \textbf{PWD}) and an attack to obtain administrator privileges of a system \citep{jurgenson2008computing}(denoted by \textbf{ADM}). In our experiments, $n$ is the number of controls and $Q$ is the sum of allowed failures across all controls $Q \eqdef \sum_e q_e$.

\subsection{Computational Efficiency of the Gittins Index} 
To evaluate scalability, we benchmarked our recursive method against a standard MDP solver (capped at a 15-minute timeout). Unsurprisingly, the standard MDP solver timed out on all of the instances. In stark contrast, the folding algorithm computed the Gittins indices for every instance in under 15 seconds.

\subsection{Evaluation of Exact Subgradient Algorithms} 
Our main results are reported in \Cref{tab:exp2_results}.
All methods use regret matching \citep{hart2013simple} and differ only in the gradient oracle: $n$-RT and its 8-core variant $n$-RT-8 are exact, while SG and SG-8 use stochastic estimates with batch sizes 1 and 8 (the latter across 8 cores). We adopt a time-to-target protocol: run $n$-RT until its objective nearly converges, denote that value $\widehat V$, and record for each method the wall-clock time---and, for SG and SG-8, the iterations---needed to first reach $\widehat V$ from a common initialisation. We observe that $n$-RT is generally faster than SG, and on 8 cores significantly outperforms both stochastic variants. In particular, our methods are more than $10$ times faster than stochastic methods for ADM. The only exception is \textbf{PH}, whose highly parallel structure yields many attack paths and hence near-certain success, giving a low-variance estimator and fast convergence. This is corroborated by the observation that SG-8 is slightly slower than SG, the overheads from parallelization exceed it's benefits even a batch size of 1. Otherwise SG is competitive only at small $Q$, with $n$-RT scaling better as $Q$ grows. 

\begin{table*}[h!]
  \centering
  \begin{tabular}{c c c | c c c c | c c c}
    \multicolumn{3}{c|}{Instances} & \multicolumn{4}{c|}{Runtime (min)} & \multicolumn{3}{c}{Iterations} \\
    \hline
    Class & $n$ & $Q$ & SG & SG-8 & $n$-RT  & $n$-RT-8 & Exact & SG & SG-8 \\
    \hline
    \textbf{SH}  & 450 & 20K & $305 \pm 82.0$  & $104 \pm 12.0$ & $240 \pm 9.40$  & $\mathbf{71.4} \pm 3.60$  & 5K  & 22.9K  & 6.75K\\
    \textbf{MN}  & 600 & 20K & $652 \pm 314$ & $356 \pm 58.8$ & $130 \pm 7.60$ & $\mathbf{53.1} \pm 7.10$  & 5K  & 187K & 48.1K\\
    \textbf{PH}  & 450 & 20K & $\mathbf{131} \pm 52.0$  & $140 \pm 47.6$ & $354 \pm 14.0$ & $161 \pm 11.5$ & 25K & 41.4K  & 27.7K\\
    \hline
    \textbf{PWD} & 11  & 10K & $81.7 \pm 16.6$   &$76.6 \pm 20.1$ & $31.8 \pm 3.60$ & $\mathbf{15.4} \pm 0.40$  & 20K & 35.1K  & 22.3K\\
    \textbf{ADM} & 12  & 10K & $270 \pm 187$ & $232 \pm 170$ & $20.6 \pm 2.00$ & $\mathbf{11.1} \pm 0.30$  & 15K & 252K & 112K\\
  \end{tabular}
  \caption{Wall-clock runtime and iteration counts required to reach the reference objective across network classes. Runtime values are reported in minutes (mean $\pm$ standard deviation) over 5 runs. Boldface indicates the lowest runtime for each class.}
  \label{tab:exp2_results}
\end{table*}

In addtion, we compared runtimes using the dynamic programming (DP) approach (designed specifically for parallel chains) by \cite{ling2026security} for subgradient computation. Runtime averages for 5000 iterations are reported in \Cref{tab:m-ary_vs_lck}. We find that our algorithm indeed outperforms the DP approach significantly as $n, Q$ is large, despite both being exact methods.

\begin{table}[h!]
    \centering
    \begin{tabular}{c|cc|cc}
        \# Chains & $n$ & $Q$ & $n$-RT (min) & DP (min) \\\hline
        4 & 20 & 1K & $0.79 \pm 0.03$ & $1.23 \pm 0.04$ \\
        5 & 50 & 2K & $4.13 \pm 1.21$ & $6.24 \pm 0.94$ \\
        6 & 75 & 4K & $10.2 \pm 0.55$ & $17.5 \pm 0.74$ \\
        8 & 90 & 5K & $8.02 \pm 1.52$ & $21.2 \pm 1.39$ \\
        6--9 & 400 & 10K & $63.5 \pm 9.35$ & $299\pm 65.0$ \\
    \end{tabular}
    \caption{Comparison of runtimes in minutes for parallel chains between
    $n$-RT and DP from \cite{ling2026security}. Values are reported as
    mean $\pm$ standard deviation.}
    \label{tab:m-ary_vs_lck}
\end{table}

\section{Conclusion}

In this paper, we developed a framework for efficiently computing the Gittins indices for an adaptive attacker targeting an SP network of controls and further gave efficient algorithms to compute the optimal allocations of resources for the defender against an optimal attacker. We also show NP-hardness in the Parallel-AND setting. Our experiments do show strong performance of our algorithms for both computing the Gittins indices and subgradients which outperforms stochastic gradients and existing methods for parallel chains.

\section*{Acknowledgements}
This research/project is supported by the National Research Foundation, Singapore under its AI Singapore Programme (AISG Award No: AISG3-AMP/2025-08-005), and is supported by the National University of Singapore, under the Startup-Grant Scheme.

\bibliographystyle{biblio}
\bibliography{references}

@inproceedings{ling2026security,
 author = {Chun Kai Ling and Jakub Cerny and Chin Hui Han and Garud Iyengar and Christian Kroer},
 booktitle = {AAAI (Oral)},
 date = {2026-02},
 title = {Security Games with Layered Defenses: Adaptive Adversaries and Gittins Indices},
 url = {},
 year = {2026}
}

@misc{choo2025adaptivefrontierexplorationgraphs,
      title={Adaptive Frontier Exploration on Graphs with Applications to Network-Based Disease Testing}, 
      author={Davin Choo and Yuqi Pan and Tonghan Wang and Milind Tambe and Alastair van Heerden and Cheryl Johnson},
      year={2025},
      eprint={2505.21671},
      archivePrefix={arXiv},
      primaryClass={cs.AI},
      url={https://arxiv.org/abs/2505.21671}, 
}

@inbook{doi:https://doi.org/10.1002/9780470980033.ch4,

publisher = {John Wiley \& Sons, Ltd},
author = {John Gittins and Kevin Glazebrook and Richard Weber},
isbn = {9780470980033},
title = {Superprocesses, Precedence Constraints and Arrivals},
booktitle = {Multi‐Armed Bandit Allocation Indices},
chapter = {4},
pages = {79-114},
doi = {https://doi.org/10.1002/9780470980033.ch4},
url = {https://onlinelibrary.wiley.com/doi/abs/10.1002/9780470980033.ch4},
eprint = {https://onlinelibrary.wiley.com/doi/pdf/10.1002/9780470980033.ch4},
year = {2011}
}

@article{knie2026graph,
  title={A Graph-Based Infrastructure for Characterizing Structural Risk and Lateral Movement Patterns in APT Campaigns},
  author={Knie, Trever and Mink, Dustin and Bagui, Sikha and Bagui, Subhash},
  journal={IEEE access},
  year={2026},
  publisher={IEEE}
}

@inproceedings{kordy2011foundations,
  author    = {Kordy, Barbara and Mauw, Sjouke and Radomirovi{\'c}, Sa{\v{s}}a and Schweitzer, Patrick},
  title     = {Foundations of Attack--Defense Trees},
  booktitle = {Formal Aspects of Security and Trust},
  series    = {Lecture Notes in Computer Science},
  volume    = {6561},
  pages     = {80--95},
  publisher = {Springer},
  year      = {2011},
  doi       = {10.1007/978-3-642-19751-2_6}
}

@inproceedings{jha2002attackgraphs,
  author    = {Jha, Somesh and Sheyner, Oleg and Wing, Jeannette M.},
  title     = {Two Formal Analyses of Attack Graphs},
  booktitle = {Proceedings of the 15th IEEE Computer Security Foundations Workshop},
  pages     = {49--63},
  year      = {2002},
  doi       = {10.1109/CSFW.2002.1021806}
}

@inproceedings{ou2005mulval,
  author    = {Ou, Xinming and Govindavajhala, Sudhakar and Appel, Andrew W.},
  title     = {{MulVAL}: A Logic-Based Network Security Analyzer},
  booktitle = {Proceedings of the 14th USENIX Security Symposium},
  year      = {2005},
  url       = {https://www.usenix.org/conference/14th-usenix-security-symposium/mulval-logic-based-network-security-analyzer}
}

@inproceedings{durkota2015attackgraphgames,
  author    = {Durkota, Karel and Lis{\'y}, Viliam and Kiekintveld, Christopher and Bo{\v{s}}ansk{\'y}, Branislav},
  title     = {Game-Theoretic Algorithms for Optimal Network Security Hardening Using Attack Graphs},
  booktitle = {Proceedings of the 2015 International Conference on Autonomous Agents and Multiagent Systems},
  pages     = {1773--1774},
  year      = {2015},
  url       = {https://www.ifaamas.org/Proceedings/aamas2015/aamas/p1773.pdf}
}

@inproceedings{ruijters2019ffort,
  author    = {Ruijters, Enno and Budde, Carlos E. and Nakhaee, Mohammad Amin and Stoelinga, Mari{\"e}lle and Bucur, Doina and Hiemstra, Djoerd},
  title     = {{FFORT}: A Benchmark Suite for Fault Tree Analysis},
  booktitle = {Proceedings of the 29th European Safety and Reliability Conference (ESREL 2019)},
  year      = {2019},
  publisher = {Research Publishing},
  doi       = {10.3850/981-973-0000-00-0_main},
  url       = {https://djoerdhiemstra.com/wp-content/uploads/esrel2019.pdf}
}

@misc{ffort_website,
  author       = {{FMT Group, University of Twente}},
  title        = {{FFORT}: The Extended Fault Tree Forest},
  year         = {2026},
  url          = {https://dftbenchmarks.utwente.nl/},
  note         = {Accessed 2026-07-10}
}

@techreport{nist_sp80082r3_2023,
  author      = {Stouffer, Keith and Pillitteri, Victoria and Lightman, Suzanne and Abrams, Marshall and Hahn, Adam},
  title       = {Guide to Operational Technology ({OT}) Security},
  institution = {{NIST}},
  type        = {Special Publication},
  number      = {800-82 Revision 3},
  year        = {2023},
  doi         = {10.6028/NIST.SP.800-82r3},
  url         = {https://csrc.nist.gov/pubs/sp/800/82/r3/final},
  note        = {National Institute of Standards and Technology}
}

@book{hart2013simple,
  title={Simple adaptive strategies: from regret-matching to uncoupled dynamics},
  author={Hart, Sergiu and Mas-Colell, Andreu},
  volume={4},
  year={2013},
  publisher={World Scientific}
}

@inproceedings{pietre2010beyond,
  title={Beyond attack trees: dynamic security modeling with Boolean logic Driven Markov Processes (BDMP)},
  author={Pi{\`e}tre-Cambac{\'e}d{\`e}s, Ludovic and Bouissou, Marc},
  booktitle={2010 European Dependable Computing Conference},
  pages={199--208},
  year={2010},
  organization={IEEE}
}

@inproceedings{jurgenson2008computing,
  title={Computing exact outcomes of multi-parameter attack trees},
  author={J{\"u}rgenson, Aivo and Willemson, Jan},
  booktitle={OTM Confederated International Conferences" On the Move to Meaningful Internet Systems"},
  pages={1036--1051},
  year={2008},
  organization={Springer}
}

@techreport{jointtaskforce2020,
  author      = {{Joint Task Force}},
  title       = {Security and Privacy Controls for Information Systems and Organizations},
  institution = {National Institute of Standards and Technology},
  number      = {NIST Special Publication 800-53, Revision 5},
  year        = {2020},
  doi         = {10.6028/NIST.SP.800-53r5}
}

@article{kordy2014attack,
  title={Attack--defense trees},
  author={Kordy, Barbara and Mauw, Sjouke and Radomirovi{\'c}, Sa{\v{s}}a and Schweitzer, Patrick},
  journal={Journal of logic and computation},
  volume={24},
  number={1},
  pages={55--87},
  year={2014},
  publisher={Oxford University Press}
}

@inproceedings{jha2002two,
  title={Two formal analyses of attack graphs},
  author={Jha, Somesh and Sheyner, Oleg and Wing, Jeannette},
  booktitle={Proceedings 15th IEEE Computer Security Foundations Workshop. CSFW-15},
  pages={49--63},
  year={2002},
  organization={IEEE}
}

@article{durkota2015game,
  title={Game-theoretic algorithms for optimal network security hardening using attack graphs},
  author={Durkota, Karel and Lis{\`y}, Viliam and Kiekintveld, Christopher and Bo{\v{s}}ansk{\`y}, Branislav},
  journal={Database},
  volume={20},
  pages={4xPC},
  year={2015}
}

@misc{formalmethods2026ffort,
  author       = {{Formal Methods and Tools Group, University of Twente}},
  title        = {{FFORT}: The Extended Fault Tree Forest},
  year         = {2026},
  howpublished = {\url{https://dftbenchmarks.utwente.nl/}},
  note         = {Accessed 2026-07-24}
}

@inproceedings{pita2008armor,
  title={ARMOR Security for Los Angeles International Airport.},
  author={Pita, James and Jain, Manish and Ord{\'o}nez, Fernando and Portway, Christopher and Tambe, Milind and Western, Craig and Paruchuri, Praveen and Kraus, Sarit},
  booktitle={AAAI},
  pages={1884--1885},
  year={2008}
}

@inproceedings{basilico2009leader,
  title={Leader-follower strategies for robotic patrolling in environments with arbitrary topologies},
  author={Basilico, Nicola and Gatti, Nicola and Amigoni, Francesco and others},
  booktitle={Proceedings of the International Joint Conference on Autonomous Agents and Multi Agent Systems (AAMAS)},
  pages={57--64},
  year={2009}
}

@inproceedings{lisy2016counterfactual,
  title={Counterfactual regret minimization in sequential security games},
  author={Lisy, Viliam and Davis, Trevor and Bowling, Michael},
  booktitle={Proceedings of the AAAI conference on artificial intelligence},
  volume={30},
  number={1},
  year={2016}
}

@inproceedings{sinha2018stackelberg,
  title={Stackelberg security games: Looking beyond a decade of success},
  author={Sinha, Arunesh and Fang, Fei and An, Bo and Kiekintveld, Christopher and Tambe, Milind},
  year={2018},
  organization={IJCAI}
}

@inproceedings{phillips1998graph,
  title={A graph-based system for network-vulnerability analysis},
  author={Phillips, Cynthia and Swiler, Laura Painton},
  booktitle={Proceedings of the 1998 workshop on New security paradigms},
  pages={71--79},
  year={1998}
}

@inproceedings{sheyner2002automated,
  title={Automated generation and analysis of attack graphs},
  author={Sheyner, Oleg and Haines, Joshua and Jha, Somesh and Lippmann, Richard and Wing, Jeannette M},
  booktitle={Proceedings 2002 IEEE Symposium on Security and Privacy},
  pages={273--284},
  year={2002},
  organization={IEEE}
}

@inproceedings{ou2006scalable,
  title={A scalable approach to attack graph generation},
  author={Ou, Xinming and Boyer, Wayne F and McQueen, Miles A},
  booktitle={Proceedings of the 13th ACM conference on Computer and communications security},
  pages={336--345},
  year={2006}
}

@article{lallie2020review,
  title={A review of attack graph and attack tree visual syntax in cyber security},
  author={Lallie, Harjinder Singh and Debattista, Kurt and Bal, Jay},
  journal={Computer Science Review},
  volume={35},
  pages={100219},
  year={2020},
  publisher={Elsevier}
}

@inproceedings{milani2020harnessing,
  title={Harnessing the power of deception in attack graph-based security games},
  author={Milani, Stephanie and Shen, Weiran and Chan, Kevin S and Venkatesan, Sridhar and Leslie, Nandi O and Kamhoua, Charles and Fang, Fei},
  booktitle={International conference on decision and game theory for security},
  pages={147--167},
  year={2020},
  organization={Springer}
}

@inproceedings{guo2022practical,
  title={Practical fixed-parameter algorithms for defending active directory style attack graphs},
  author={Guo, Mingyu and Li, Jialiang and Neumann, Aneta and Neumann, Frank and Nguyen, Hung},
  booktitle={Proceedings of the AAAI Conference on Artificial Intelligence},
  volume={36},
  number={9},
  pages={9360--9367},
  year={2022}
}

@article{gittins1979bandit,
  author  = {Gittins, John C.},
  title   = {Bandit Processes and Dynamic Allocation Indices},
  journal = {Journal of the Royal Statistical Society:
             Series B (Methodological)},
  volume  = {41},
  number  = {2},
  pages   = {148--164},
  year    = {1979},
  doi     = {10.1111/j.2517-6161.1979.tb01068.x}
}

@book{gittins2011multiarmed,
  author    = {Gittins, John C. and Glazebrook, Kevin D.
               and Weber, Richard R.},
  title     = {Multi-Armed Bandit Allocation Indices},
  edition   = {2},
  publisher = {John Wiley \& Sons},
  address   = {Chichester, UK},
  year      = {2011},
  isbn      = {9780470670026},
  doi       = {10.1002/9780470980033}
}

@article{bertsimas1996conservation,
  title={Conservation laws, extended polymatroids and multiarmed bandit problems; a polyhedral approach to indexable systems},
  author={Bertsimas, Dimitris and Nino-Mora, Jos{\'e}},
  journal={Mathematics of Operations Research},
  volume={21},
  number={2},
  pages={257--306},
  year={1996},
  publisher={INFORMS}
}

@article{nino20072,
  title={A (2/3) n 3 fast-pivoting algorithm for the Gittins index and optimal stopping of a Markov chain},
  author={Ni{\~n}o-Mora, Jos{\'e}},
  journal={INFORMS Journal on Computing},
  volume={19},
  number={4},
  pages={596--606},
  year={2007},
  publisher={INFORMS}
}

@article{whittle1988restless,
  title={Restless bandits: Activity allocation in a changing world},
  author={Whittle, Peter},
  journal={Journal of applied probability},
  volume={25},
  number={A},
  pages={287--298},
  year={1988},
  publisher={Cambridge University Press}
}

@article{weiss1988branching,
  title={Branching bandit processes},
  author={Weiss, Gideon},
  journal={Probability in the Engineering and Informational Sciences},
  volume={2},
  number={3},
  pages={269--278},
  year={1988},
  publisher={Cambridge University Press}
}

@article{nemirovskij1983problem,
  title={Problem complexity and method efficiency in optimization},
  author={Nemirovskij, Arkadij Semenovi{\v{c}} and Yudin, David Borisovich},
  year={1983},
  publisher={Wiley-Interscience}
}

\onecolumn
\appendix
\begingroup
\onecolumn 


\fontsize{11pt}{13.6pt}\selectfont

\let\normalsize\undefined
\newcommand{\normalsize}{\fontsize{11pt}{13.6pt}\selectfont}

\setcounter{secnumdepth}{2} 

\usetikzlibrary{backgrounds,shapes.geometric,positioning,fit, arrows.meta}




\theoremstyle{plain}
\newtheorem{property}{Property}
\newtheorem{appendixlemma}{Lemma}[subsection]
\newtheorem{appendixcor}{Corollary}[subsection]
\newtheorem{appendixprop}{Proposition}[subsection]
\newtheorem{appendixproperty}{Property}[subsection]
\newtheorem{appendixtheorem}{Theorem}[subsection]
\newtheorem{appendixdef}{Definition}[subsection]

\newcommand{\restatedlemmaname}{}
\newtheorem*{restatedlemma}{\restatedlemmaname}

\newcommand{\restatedtheoremname}{}
\newtheorem*{restatedtheorem}{\restatedtheoremname}

\newcommand{\restatedpropositionname}{}
\newtheorem*{restatedproposition}{\restatedpropositionname}

\newenvironment{restatelemma}[1]
  {\renewcommand{\restatedlemmaname}{Lemma~#1}%
   \begin{restatedlemma}}
  {\end{restatedlemma}}

\newenvironment{restatetheorem}[1]
  {\renewcommand{\restatedtheoremname}{Theorem~#1}%
   \begin{restatedtheorem}}
  {\end{restatedtheorem}}

\newenvironment{restateproposition}[1]
  {\renewcommand{\restatedpropositionname}{Proposition~#1}%
   \begin{restatedproposition}}
  {\end{restatedproposition}}

\tikzset{
  vtx/.style   = {circle, draw, line width=0.4pt, fill=white,
                  inner sep=0pt, minimum size=4.5pt},
  edg/.style   = {-{Stealth[length=4pt,width=3pt]}, line width=0.5pt,
                  shorten >=1pt, shorten <=1pt},
  elab/.style  = {font=\scriptsize, inner sep=1.5pt},
  vlab/.style  = {font=\scriptsize, inner sep=2pt},
  tnode/.style = {draw, rounded corners=2pt, line width=0.4pt, inner sep=2pt,
                  minimum height=4.2mm, minimum width=8.5mm, font=\scriptsize},
  lnode/.style = {tnode, minimum width=5.5mm},
  tedg/.style  = {line width=0.4pt},
  steplab/.style = {font=\scriptsize, anchor=east},
  bbox/.style  = {draw, dashed, line width=0.4pt, rounded corners=1.5pt},
  blk/.style   = {draw, rounded corners=1.5pt, line width=0.4pt, fill=white,
                  inner sep=1pt, minimum width=1cm, minimum height=4.4mm,
                  font=\scriptsize},
}

\title{Appendix}

\setcounter{assumption}{0}
\setcounter{theorem}{0}
\setcounter{proposition}{0}
\setcounter{corollary}{0}
\setcounter{remark}{0}

\centerline{\textbf{Appendix}}

\paragraph{Numbering convention.}
Theorems, lemmas, propositions, and corollaries introduced only in the appendix are numbered with an appendix-letter prefix, such as Theorem~A.1. Results restated from the main paper retain their original unprefixed numbering. Thus, a reference to Theorem~1 in the appendix refers to the same Theorem~1 stated in the main paper.

\paragraph{Notations} Note that we use $\Ctrl(e)$ and control $e$ interchangeably. In the main paper, we used $\gamma$ as the input parameter of $\Psh_G$ but for the appendix, we use $s \in[0, 1]$ in place of $\gamma$ for ease of reading.

\section{Regret Matching as a First Order Method to solve for the optimal $\ell$}
One of the technical contribution of this paper relates to efficiently computing a subgradient $g(\ell) \in \partial V_\mathcal{N}^*(\ell)$. This subgradient is then used to solve \eqref{eq:minmax} using first order methods (FOM), as part of the ``inner loop''. Such FOM include projected subgradient descent and mirror descent~\citep{nemirovskij1983problem}, as well as various online learning methods. We do not assume smoothness of $V_\mathcal{N}^*(\ell)$; indeed, this is not true in general. 

In the following, we recall the regret matching algorithm used \citep{hart2013simple}. We have included this for completeness. It is essentially the same algorithm used as \citet{ling2026security} and is fairly standard, so we are \textit{not} claiming any novelty here. 

\begin{algorithm}
\caption{Regret Matching for Optimizing $\ell$}
\label{alg:rm}
\begin{algorithmic}[1]
    \State $\ell^{(1)} \gets \mathbf{1}/|\mathcal{E}|$
        \Comment{Initialize lengths uniformly}
    \State $y^{(1)} \gets \mathbf{0}$
        \Comment{Initialize regrets}

    \For{$T = 1,\dots,T_{\max}$}
        \State $\pi^* \gets
            \Call{AttackerOptimalStrategy}{\ell^{(T)}}$
        \State $g^{(T)} \gets
            \Call{Subgradient}{\ell^{(T)},\pi^*}$
        \State $y^{(T+1)} \gets y^{(T)}
            + \mathbf{1}\langle g^{(T)},\ell^{(T)}\rangle
            - g^{(T)}$
            \Comment{Update regrets}

        \If{$y^{(T+1)}$ is not all $\leq \mathbf{0}$}
            \State $\displaystyle
                \ell^{(T+1)}
                \gets
                \frac{\max\{y^{(T+1)},\mathbf{0}\}}
                {\sum_{e\in\mathcal{E}}
                \max\{y^{(T+1)}(e),0\}}$
                \Comment{Normalize positive regrets}
        \Else
            \State $\ell^{(T+1)} \gets \mathbf{1}/|\mathcal{E}|$
                \Comment{Use the uniform allocation}
        \EndIf
    \EndFor

    \State \Return $\displaystyle
        \frac{1}{T_{\max}}
        \sum_{T=1}^{T_{\max}}\ell^{(T)}$
        \Comment{Return the average strategy}
\end{algorithmic}
\end{algorithm}

RM (Algorithm~\ref{alg:rm}) is guaranteed to converge to the optimal $\ell$ at a rate of $\mathcal{O}(1/\sqrt{T})$. We use it mainly because of its simplicity and attractive quality of being parameter free (i.e., no learning rate/schedule to be tuned), even though it may not have the optimal dependence on the number of controls. Since our main contribution is the finding of $\pi^*$ and $g^{(T)}$, we have simply used vanilla RM. There are many variants of RM (e.g., RM+, predictive RM, linear and discounted methods) that could be employed as well. Finally, note that RM continues to work with stochastic estimates for gradients, hence it is an appropriate method for the baseline of stochastic gradients used in our experiments.

\section{Supplementary Material on the SP Parse Tree $\mathcal{T}$}

\subsection{Converting an SP network to its parse tree}
An SP network is by construction the result of repeatedly composing single controls with $\Ser$ and $\Par$, so its parse tree is nothing but the record of that construction. Recovering the tree from a drawing, therefore, amounts to running the construction backwards: repeatedly \textit{reduce} a two-edge pattern to a single edge, and emit the corresponding tree node. \Cref{fig:rules} gives the two patterns.
 
\begin{figure}[h!]
\centering
\begin{tikzpicture}
\node[vtx,label={[vlab]below:$s$}] (s1) at (0,0) {};
\node[vtx,label={[vlab]below:$m$}] (m1) at (1.1,0) {};
\node[vtx,label={[vlab]below:$t$}] (t1) at (2.2,0) {};
\draw[edg] (s1) -- node[elab,above] {$a$} (m1);
\draw[edg] (m1) -- node[elab,above] {$b$} (t1);
\node at (2.8,0) {$\longrightarrow$};
\node[vtx] (s1r) at (3.4,0) {};
\node[vtx] (t1r) at (5.0,0) {};
\draw[edg] (s1r) -- node[elab,above] {$\Ser(a,b)$} (t1r);
\node at (5.6,0) {$\equiv$};
\node[tnode] (r1)  at (6.6, 0.42) {$\Ser$};
\node[lnode] (r1a) at (6.2,-0.42) {$a$};
\node[lnode] (r1b) at (7.0,-0.42) {$b$};
\draw[tedg] (r1) -- (r1a);
\draw[tedg] (r1) -- (r1b);
\node[vtx,label={[vlab]left:$s$}]  (s2) at (0,-1.9) {};
\node[vtx,label={[vlab]right:$t$}] (t2) at (2.2,-1.9) {};
\draw[edg] (s2) to[bend left=32]  node[elab,above] {$a$} (t2);
\draw[edg] (s2) to[bend right=32] node[elab,below] {$b$} (t2);
\node at (2.8,-1.9) {$\longrightarrow$};
\node[vtx] (s2r) at (3.4,-1.9) {};
\node[vtx] (t2r) at (5.0,-1.9) {};
\draw[edg] (s2r) -- node[elab,above] {$\Par(a,b)$} (t2r);
\node at (5.6,-1.9) {$\equiv$};
\node[tnode] (r2)  at (6.6,-1.48) {$\Par$};
\node[lnode] (r2a) at (6.2,-2.32) {$a$};
\node[lnode] (r2b) at (7.0,-2.32) {$b$};
\draw[tedg] (r2) -- (r2a);
\draw[tedg] (r2) -- (r2b);
\end{tikzpicture}
\caption{The two reduction rules, each shown as network pattern, reduced edge, and emitted parse-tree node. The $\Ser$ rule requires the shared vertex $m$ to have exactly one incoming and one outgoing edge and to be neither terminal of the whole network; the $\Par$ rule requires the two edges to agree on both endpoints.}
\label{fig:rules}
\end{figure}
 
Either rule may be applied wherever it fits, in any order, and the network is SP exactly when repeated application leaves a single edge. \Cref{fig:reduce} carries out the reduction on a five-control example. Note that $c$ cannot be reduced at Step~1 as it shares its endpoints with the \textit{path} $a,b$ rather than with a single edge, so the $\Ser$ reduction on $a,b$ must come first.
 
\begin{figure}[h!]
\centering
\begin{tikzpicture}
\node[steplab] at (-0.55,0) {(1)};
\node[vtx,label={[vlab]below:$s$}]  (s)  at (0,0)      {};
\node[vtx,label={[vlab]above:$x$}]  (x)  at (1.15,0.7) {};
\node[vtx,label={[vlab]below:$u$}]  (u)  at (2.3,0)    {};
\node[vtx,label={[vlab]right:$\varphi$}] (p) at (4.1,0) {};
\draw[edg] (s) -- node[elab,above left=-1pt]  {$a$} (x);
\draw[edg] (x) -- node[elab,above right=-1pt] {$b$} (u);
\draw[edg] (s) to[bend right=32] node[elab,below] {$c$} (u);
\draw[edg] (u) to[bend left=30]  node[elab,above] {$d$} (p);
\draw[edg] (u) to[bend right=30] node[elab,below] {$e$} (p);
\node[steplab] at (-0.55,-2.1) {(2)};
\node[vtx,label={[vlab]below:$s$}]  (s2) at (0,-2.1)   {};
\node[vtx,label={[vlab]below:$u$}]  (u2) at (2.3,-2.1) {};
\node[vtx,label={[vlab]right:$\varphi$}] (p2) at (4.1,-2.1) {};
\draw[edg] (s2) to[bend left=30]  node[elab,above] {$A=\Ser(a,b)$} (u2);
\draw[edg] (s2) to[bend right=30] node[elab,below] {$c$} (u2);
\draw[edg] (u2) to[bend left=30]  node[elab,above] {$d$} (p2);
\draw[edg] (u2) to[bend right=30] node[elab,below] {$e$} (p2);
\node[steplab] at (-0.55,-4.0) {(3)};
\node[vtx,label={[vlab]below:$s$}]  (s3) at (0,-4.0)   {};
\node[vtx,label={[vlab]below:$u$}]  (u3) at (2.3,-4.0) {};
\node[vtx,label={[vlab]right:$\varphi$}] (p3) at (4.1,-4.0) {};
\draw[edg] (s3) -- node[elab,above] {$B=\Par(A,c)$} (u3);
\draw[edg] (u3) -- node[elab,above] {$C=\Par(d,e)$} (p3);
\end{tikzpicture}
\caption{Reducing a five-control network. One further $\Ser$ reduction collapses
step~(3) to a single edge $\Ser(B,C)$, which is the root of the parse tree in
\Cref{fig:tree}.}
\label{fig:reduce}
\end{figure}
 
Reading the reductions from the inside out gives \Cref{fig:tree}: the first
reductions sit near the leaves and the last is the root.
 
\begin{figure}[h!]
\centering
\begin{tikzpicture}
\node[tnode] (root) at (2.325, 0.0) {$\Ser$};
\node[tnode] (pl)   at (1.100,-0.9) {$\Par$};
\node[tnode] (pr)   at (3.550,-0.9) {$\Par$};
\node[tnode] (sl)   at (0.450,-1.8) {$\Ser$};
\node[lnode] (c)    at (1.750,-1.8) {$c$};
\node[lnode] (d)    at (3.100,-1.8) {$d$};
\node[lnode] (e)    at (4.000,-1.8) {$e$};
\node[lnode] (a)    at (0.000,-2.7) {$a$};
\node[lnode] (b)    at (0.900,-2.7) {$b$};
\draw[tedg] (root) -- (pl);
\draw[tedg] (root) -- (pr);
\draw[tedg] (pl)   -- (sl);
\draw[tedg] (pl)   -- (c);
\draw[tedg] (pr)   -- (d);
\draw[tedg] (pr)   -- (e);
\draw[tedg] (sl)   -- (a);
\draw[tedg] (sl)   -- (b);
\end{tikzpicture}
\caption{Parse tree of the network of \Cref{fig:reduce}, namely
$\Ser\bigl(\Par(\Ser(a,b),c),\,\Par(d,e)\bigr)$. Left-to-right order is meaningful
under $\Ser$ and immaterial under $\Par$.}
\label{fig:tree}
\end{figure}
 
The reduction order is immaterial. Different orders differ only by reassociating chains of the same type of nodes and permuting the children of a $\Par$ node, which is exactly the freedom the semantics permits (since $\Par$ is commutative). For a $\Ser$ node, it is different, and the ordering (left to right) matters because $\Ser$ is associative but not commutative. Flattening every maximal same-type chain into one $m$-ary node yields the canonical decomposition tree, computable in linear time. Reading it off is easy. Descending from the root, take \textit{only the first uncleared child} at a $\Ser$ node and \textit{all live children} at a $\Par$ node; the leaves reached are exactly the controls on the current frontier (denoted as $\mathcal{F}$ in the main paper). This is why $\Ser$ must retain its child order while $\Par$ need not.
 
Finally, \Cref{fig:bridge} shows an example of a directed acyclic graph with single source and sink that is \textit{not} a SP network.
 
\begin{figure}[h!]
\centering
\begin{tikzpicture}
\node[vtx,label={[vlab]left:$s$}]        (s) at (0,0)     {};
\node[vtx,label={[vlab]above:$p$}]       (p) at (1.5,0.75){};
\node[vtx,label={[vlab]below:$q$}]       (q) at (1.5,-0.75){};
\node[vtx,label={[vlab]right:$\varphi$}] (t) at (3.0,0)   {};
\draw[edg] (s) -- node[elab,above left=-1pt]  {$a$} (p);
\draw[edg] (s) -- node[elab,below left=-1pt]  {$b$} (q);
\draw[edg] (p) -- node[elab,above right=-1pt] {$c$} (t);
\draw[edg] (q) -- node[elab,below right=-1pt] {$d$} (t);
\draw[edg] (p) -- node[elab,right]            {$f$} (q);
\end{tikzpicture}
\caption{A DAG that cannot be written as a SP-network.}
\label{fig:bridge}
\end{figure}

\section{Omitted Proofs for Calibrated Value Profile}
\label{app:calibrated_value_profile}
\subsection{Derivation of the Calibrated Value Profile}
Recall the calibrated value profile for the subcomponent \(G\), denoted as \(\Ps_G\)
\begin{appendixdef}[Calibrated value profile]\label{a-def:psi}
For a subcomponent $G$, state $\sgn$, exit reward $r$ and buyout $\gamma\in[0,r]$,
\begin{equation}\label{a-eq:psi}
\Psi_G(\gamma;\sgn\mid r)\ \eqdef \sup_\pi\ \E_\sgn\!\Big[e^{-\lambda T}r\one\{T\le \tau_\pi\}\;+\;\gamma e^{-\lambda \tau_\pi}\one\{T>\tau_\pi\}\Big],
\end{equation}
the supremum over all adaptive policies.
\end{appendixdef}
\paragraph{Derivation} Augment the game with a standing buyout offer worth $\gamma$ (collectible at any time). A play ends at time $\min(T,\tau_\pi)$ with two mutually exclusive outcomes: (i) \textbf{Win}. That is, $T\le\tau_\pi$ and collect the discounted reward $e^{-\lambda T}r$ or (ii) \textbf{Quit}. That is where $T>\tau_\pi$ and collect $e^{-\lambda\tau_\pi}\gamma$ at $\tau_\pi$. Summing and optimising gives \Cref{a-eq:psi}. At $\gamma=0$, which is our true game situation, quitting is worthless and $\Ps_G(0;\sgn\mid r)$ is the true game value.

\subsection{Proof of Structural Properties of the Calibrated Value Profiles \(\Ps_G\)}
\begin{appendixproperty}[Affine representation]\label{a-lem:affine}
For every subcomponent $G$ and state $\sgn$, there is a family of pairs $\{(a_\pi,b_\pi)\}_\pi$, with $a_\pi\ge0$, $b_\pi\in[0,1]$ and $a_\pi+b_\pi\le1$, depending only on $\pi$ such that
\begin{equation}
    \Ps_G(\gamma;\sgn\mid r)=\sup_\pi\bigl(a_\pi r+b_\pi\gamma\bigr)
\end{equation} where $a_\pi=\E_\sgn[e^{-\lambda T}\one\{T\le\tau_\pi\}]$ and $b_\pi=\E_\sgn[e^{-\lambda\tau_\pi}\one\{T>\tau_\pi\}]$.
\end{appendixproperty}
\begin{proof}
$a_\pi r+b_\pi\gamma$ is precisely the expression in the bracket of \Cref{a-eq:psi}, with the stated $a_\pi,b_\pi$. Note that $a_\pi$ and $b_\pi$ are independent of both $r$ and $\gamma$. $a_\pi$ and $b_\pi$ are clearly non-negative. Note that the events $\{T\le\tau_\pi\},\{T>\tau_\pi\}$ partition the sample space. Furthermore, we have that $a_\pi+b_\pi=\E_\sgn[e^{-\lambda\min(T,\tau_\pi)}]\le1$ since $\min(T,\tau_\pi)\ge0$. In particular $b_\pi\le1$.
\end{proof}

\begin{appendixproperty}[Shape]\label{a-lem:shape}
Fix $G,\sgn$. For $\gamma \in [0, r]$, the function $\Ps_G$ is convex and non-decreasing in $\gamma$, $1$--Lipschitz in $\gamma$, satisfies $\gamma\le\Ps_G(\gamma;\sgn\mid r)\le r$, with the boundary values $\Ps_G(\gamma;\sgn\mid\gamma)=\gamma$ and $\Ps_G(r;\sgn\mid r)=r$. Its right-derivative $g_G\eqdef\partial^+_\gamma\Ps_G$ exists everywhere, is non-decreasing and right-continuous, and lies in $[0,1]$.
\end{appendixproperty}
\begin{proof}
We know from Property~\ref{a-lem:affine} that the $\Ps_G$ is a pointwise supremum of functions affine in $\gamma$, hence convex. The slope of each said function is $b_\pi\in[0,1]$, and hence is non-decreasing and $1$-Lipschitz in $\gamma$. A finite convex function has right-derivatives everywhere, is non-decreasing and right-continuous, and for our case, valued in $[0,1]$. Suppose we have a policy $\pi\equiv$``retire immediately (don't start at all)''. This will result in $a_\pi = 0, b_\pi = 1$, resulting in $\Ps_G\ge\gamma$. Suppose that $r \geq \gamma$. Then $\Ps_G \leq \sup_\pi (a_\pi + b_\pi)\max(r, \gamma) = \sup_\pi(a_\pi + b_\pi)r \leq r$. For the case where \(r = \gamma\), then $\Ps_G(\gamma; \sigma \mid \gamma) = \sup_\pi (\gamma e^{-\lambda\min(T, \tau_\pi)})$. Since \(\tau_\pi = 0\), then \(\Ps_G(\gamma; \sigma \mid \gamma) = \gamma\). This means that \(\Ps_G\) is an identity at \(\gamma = r\).
\end{proof}

\begin{appendixproperty}[Piecewise linearity]
\label{a-lem:pwl}
Fix a subcomponent $G$, state $\sgn$, and exit reward $r>0$. The map $\gamma\mapsto\Ps_G(\gamma;\sgn\mid r)$ is piecewise linear in $\gamma$ on $[0,r]$.
\end{appendixproperty}

\begin{proof}
The attacker faces a finite, acyclic decision process. Each attempt either breaches a control or increases its failure count, and every failure count is bounded by its lockout limit. Hence, there are finitely many reachable states and finitely many deterministic Markov policies. By backward induction over the acyclic state-transition graph, an optimal policy may be chosen to be deterministic and Markov. For every such policy $\pi$, Property~\ref{a-lem:affine} gives
\begin{equation}
    \Ps_G^\pi(\gamma;\sgn\mid r)=a_\pi r+b_\pi\gamma.
\end{equation}
Therefore,
\begin{equation}
    \Ps_G(\gamma;\sgn\mid r)=\max_{\pi\in\Pi_G}\{a_\pi r+b_\pi\gamma\},
\end{equation}
where $\Pi_G$ is a finite set. The maximum of finitely many affine functions is piecewise linear.
\end{proof}

\begin{appendixproperty}[Homogeneity]\label{a-lem:homog}
$\Ps_G(\lambda\gamma;\sgn\mid\lambda r)=\lambda\Ps_G(\gamma;\sgn\mid r)$ for all $\lambda>0$. Hence, writing $\Psh_G(\gamma)\eqdef\Ps_G(\gamma;\sgn\mid1)$, one has $\Ps_G(\gamma;\sgn\mid r)=r\Psh_G(\gamma/r)$ for $r>0$.
\end{appendixproperty}
\begin{proof}
By Property \ref{a-lem:affine}, scaling $(\gamma,r)\mapsto(\lambda\gamma,\lambda r)$ scales each affine term $a_\pi r+b_\pi\gamma$ by a positive $\lambda$, hence the statement holds. The second statement is achieved by $\lambda=1/r$. We can safely assume $r > 0$ as for $r \leq 0$, the attacker can gain no extra reward. Equivalently, if the sub-component offers a downstream reward of $r \leq 0$, the sub-component can be removed from the SP network.
\end{proof}

\begin{appendixcor}[Retirement threshold]
\label{a-cor:retirement-threshold}
For every subcomponent $G$, reachable state $\sgn$, and exit reward $r>0$, define
\begin{equation}
\alpha_G(\sgn\mid r)\eqdef\inf\{\gamma\in[0,r]:\Ps_G(\gamma;\sgn\mid r)=\gamma\}.
\end{equation}
Then continuing is strictly preferable when $\gamma<\alpha_G(\sgn\mid r)$, whereas retirement is optimal when $\gamma\ge\alpha_G(\sgn\mid r)$. Moreover,
\begin{equation}
\alpha_G(\sgn\mid r)=r\alpha_G(\sgn\mid1).
\end{equation}
\end{appendixcor}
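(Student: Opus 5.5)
The plan is to derive all three claims from the structural properties already proved for $\Ps_G$, namely convexity, monotonicity, the $1$-Lipschitz bound, the sandwich $\gamma\le\Ps_G\le r$, piecewise linearity, and homogeneity. First I check that the infimum is well posed. The set $S\eqdef\{\gamma\in[0,r]:\Ps_G(\gamma;\sgn\mid r)=\gamma\}$ is nonempty, because Property~\ref{a-lem:shape} gives $\Ps_G(r;\sgn\mid r)=r$. It is also closed, because $\Ps_G$ is continuous (it is Lipschitz). Hence $\alpha\eqdef\alpha_G(\sgn\mid r)$ belongs to $S$.

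Next I show that $S$ is an interval of the form $[\alpha,r]$. Set $h(\gamma)\eqdef\Ps_G(\gamma;\sgn\mid r)-\gamma$. Then $h\ge 0$ by the lower bound in Property~\ref{a-lem:shape}. The map $h$ is convex, since $\Ps_G$ is convex and we subtract a linear function. Its right-derivative is $g_G-1\le 0$, since $g_G\in[0,1]$, so $h$ is non-increasing. Therefore, once $h(\alpha)=0$, we get $h(\gamma)\le h(\alpha)=0$ for every $\gamma\ge\alpha$. Combined with $h\ge0$, this gives $h\equiv 0$ on $[\alpha,r]$. For $\gamma<\alpha$ we have $\gamma\notin S$, so $h(\gamma)>0$.

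The decision-theoretic reading follows directly. Immediate retirement is the policy with $(a_\pi,b_\pi)=(0,1)$ in Property~\ref{a-lem:affine}, and its payoff is exactly $\gamma$. When $\gamma<\alpha$, the optimum satisfies $\Ps_G>\gamma$. Since the supremum is attained by one of the finitely many deterministic Markov policies (Property~\ref{a-lem:pwl}), some policy that does not retire immediately strictly beats retirement. When $\gamma\ge\alpha$, retirement attains $\Ps_G=\gamma$ and is therefore optimal. I expect the main point needing care to be the monotonicity of $h$. It uses the slope bound $b_\pi\le 1$, which in turn comes from $a_\pi+b_\pi\le1$ and ultimately from the fact that discounting never rewards delay. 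Once this is isolated as the statement ``$h$ is non-increasing,'' the rest is routine.

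Finally, for the scaling identity I apply Property~\ref{a-lem:homog} with scale factor $r$. This gives $\Ps_G(\gamma;\sgn\mid r)=r\Psh_G(\gamma/r)$, so $\Ps_G(\gamma;\sgn\mid r)=\gamma$ holds iff $\Psh_G(\gamma/r)=\gamma/r$. Thus the set $S$ at exit reward $r$ is the image under $u\mapsto ru$ of the corresponding set at exit reward $1$, taken over $u\in[0,1]$. Because $r>0$, this map is an increasing bijection from $[0,1]$ onto $[0,r]$, so it carries infima to infima, and $\alpha_G(\sgn\mid r)=r\,\alpha_G(\sgn\mid 1)$.
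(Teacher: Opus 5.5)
Your proof is correct and follows the paper's argument essentially step for step. The paper also sets $H(\gamma)=\Psi_G(\gamma;\sigma\mid r)-\gamma$, uses $H\ge 0$, $H(r)=0$ and the $1$-Lipschitz bound to show $H$ is non-increasing with zero set $[\alpha_G(\sigma\mid r),r]$, and obtains the scaling identity from homogeneity. Your explicit check that the zero set is closed, so that the infimum is attained and $\alpha_G(\sigma\mid r)$ lies in it, is a small point the paper leaves implicit.
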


\begin{proof}
Define $H(\gamma)\eqdef\Ps_G(\gamma;\sgn\mid r)-\gamma$. By Property~\ref{a-lem:shape}, $H(\gamma)\ge0$ and $H(r)=0$. Moreover, for $0\le\gamma_1<\gamma_2\le r$, the $1$-Lipschitz property gives
\begin{equation}
H(\gamma_2)-H(\gamma_1)=\Ps_G(\gamma_2;\sgn\mid r)-\Ps_G(\gamma_1;\sgn\mid r)-(\gamma_2-\gamma_1)\le0.
\end{equation}
Thus, $H$ is non-increasing, and its zero set is a nonempty interval of the form $[\alpha_G(\sgn\mid r),r]$. Below this interval, $H(\gamma)>0$, so continuing is strictly preferable. On the interval, $\Ps_G(\gamma;\sgn\mid r)=\gamma$, so immediate retirement is optimal.

Finally, Property~\ref{a-lem:homog} gives
\begin{equation}
\Ps_G(\gamma;\sgn\mid r)=\gamma\quad\Longleftrightarrow\quad\Psh_G(\gamma/r)=\gamma/r.
\end{equation}
Taking the smallest fixed point on both sides yields $\alpha_G(\sgn\mid r)=r\alpha_G(\sgn\mid1)$.
\end{proof}

\section{Structural Results for SP Subcomponents}
\subsection{Indices of Individual Controls}
\label{app:individual-control-indices}

We first consider an isolated control whose successful compromise yields a fixed exit reward. We then replace this scalar reward by the calibrated continuation profile of a downstream subcomponent.

\paragraph{Zero-duration controls.}
The defender's feasible set permits $\ell_e=0$, in which case $\beta_e=e^{-\lambda\ell_e}=1$. The attacker problem remains well-defined because each control can be attempted at most $q_e$ times, so even a sequence of zero-duration attempts is finite. We define $c(e, k)$ as the following:
\begin{equation}
c(e,k)\eqdef
\begin{cases}
\dfrac{\beta_ep_e(k)}{1-\beta_e(1-p_e(k))}, & \beta_e<1,\\
1, & \beta_e=1\text{ and }p_e(k)>0,\\
0, & \beta_e=1\text{ and }p_e(k)=0.
\end{cases}
\end{equation}
The final two cases are the continuous extensions of the first expression as $\beta_e\to1$. When $c(e,k)<1$, the equation $\gamma=c(e,k)R_e(\gamma)$ has a unique solution. When $c(e,k)=1$, the fixed-point set may contain an interval; throughout, $\alpha(e,k)$ denotes its smallest fixed point, which is the retirement threshold. Thus, all index formulas and policy results include controls with $\ell_e=0$.

\begin{appendixlemma}[Isolated-control index]
\label{lem:isolated-control-index}
Fix a control $e$, failure count $k<q_e$, and exit reward $r\in[0,1]$. Its calibrated value satisfies
\begin{equation}
\Psi_e(\gamma;k\mid r)=\max\left\{\gamma,\beta_e\left[p_e(k)r+\bigl(1-p_e(k)\bigr)\Psi_e(\gamma;k+1\mid r)\right]\right\},
\end{equation}
with terminal condition $\Psi_e(\gamma;q_e\mid r)=\gamma$. Under the non-increasing probabilities assumption, its retirement threshold is
\begin{equation}
\alpha(e,k\mid r)=c(e,k)r
\end{equation}
Continuing is strictly optimal when $\gamma<\alpha(e,k\mid r)$, while retirement is optimal when $\gamma\ge\alpha(e,k\mid r)$. Moreover, $\alpha(e,k\mid r)$ is non-increasing in $k$.
\end{appendixlemma}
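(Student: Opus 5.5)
The plan has three parts: derive the recursion by a one-step argument, prove the threshold formula by backward induction on $k$, and get monotonicity in $k$ from Assumption~1.

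\textbf{The recursion.} For an isolated control, the state is the failure count $k$. At each decision epoch the attacker either retires, collecting $\gamma$ immediately, or attempts $e$. An attempt takes time $\ell_e$, so it contributes a discount factor $\beta_e$. With probability $p_e(k)$ it succeeds and pays $r$. Otherwise the state moves to $k+1$, and by time-homogeneity of the discounting the continuation is worth $\Psi_e(\gamma;k+1\mid r)$. At $k=q_e$ the control is locked out, so retirement is forced and $\Psi_e(\gamma;q_e\mid r)=\gamma$. The process is finite and acyclic, as in Property~\ref{a-lem:pwl}, so backward induction over $k=q_e,q_e-1,\dots$ shows that the optimal value satisfies the stated Bellman equation.

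\textbf{The threshold formula.} I would show that $\alpha(e,k\mid r)=c(e,k)r$ by backward induction on $k$, proving the stronger claim that the optimal policy from state $k$ is: continue at every $j\ge k$ while $\gamma<c(e,j)r$, and retire otherwise.

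Fix $\gamma$. Consider the policy that persists on $e$ until success or lockout, never retiring voluntarily. Compare it with the policy that persists only for one attempt. By Assumption~1, if $\gamma<c(e,k)r$, the one-step-lookahead (Gittins/stopping) comparison indicates continuing. The key calculation: if one attempts and, after a failure, would retire, the payoff is $\beta_e[p r+(1-p)\gamma]$. This exceeds $\gamma$ iff $\gamma<\beta_e p r/(1-\beta_e(1-p))=c(e,k)r$.

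The remaining issue is the monotone structure. Assumption~1 gives $p_e(k+1)\le p_e(k)$, and $c$ is increasing in $p$ for fixed $\beta_e$ (its derivative has the sign of $\beta_e(1-\beta_e)\ge0$). Hence $c(e,k+1)\le c(e,k)$, which also gives the final monotonicity claim. This makes the stopping problem monotone: once retirement is optimal at state $k$, it stays optimal at every later state. By the standard one-step-lookahead theorem for monotone stopping problems, the optimal rule is myopic.

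Concretely, I would verify by induction that $\Psi_e(\gamma;k\mid r)=\gamma$ for $\gamma\ge c(e,k)r$. Since $\gamma\ge c(e,k)r\ge c(e,k+1)r$, the inductive hypothesis gives $\Psi_e(\gamma;k+1\mid r)=\gamma$. The Bellman maximum is then $\max\{\gamma,\beta_e[p r+(1-p)\gamma]\}=\gamma$.

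Conversely, for $\gamma<c(e,k)r$, the attempt-then-retire policy already gives strictly more than $\gamma$, so continuation is strictly optimal. Together these show the zero set of $\Psi_e-\gamma$ is exactly $[c(e,k)r,r]$. Corollary~\ref{a-cor:retirement-threshold} then gives the threshold.

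\textbf{Boundary cases and the main obstacle.} The cases $\beta_e=1$ are handled separately. If $p_e(k)>0$, attempting is free and ultimately succeeds with positive probability, so the threshold is $r$ (equality at $\gamma=r$). If $p_e(k)=0$, monotonicity gives $p_e(j)=0$ for all $j\ge k$, so the threshold is $0$. Both agree with the definition of $c$.

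I expect the main obstacle to be the monotone-stopping step. The delicate direction is ruling out a policy that retires now but would have profited from later attempts. Monotonicity of $c(e,\cdot)$ excludes this, and that is precisely where Assumption~1 enters.
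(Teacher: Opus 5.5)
Your proposal is correct and takes essentially the same route as the paper. Both argue by backward induction on $k$. Monotonicity of $c(e,\cdot)$, from Assumption~1 and the monotonicity of $p\mapsto \beta_e p/(1-\beta_e(1-p))$, forces $\Psi_e(\gamma;k+1\mid r)=\gamma$ throughout the retirement region. The attempt-then-retire comparison $\beta_e[p_e(k)r+(1-p_e(k))\gamma]>\gamma \iff \gamma<c(e,k)r$ then settles the continuation region.

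Two small differences. Using $\Psi_e(\gamma;k+1\mid r)\ge\gamma$ directly merges the paper's two subcases ($\gamma\ge\alpha_{k+1}$ versus $\gamma<\alpha_{k+1}$) into one. Your appeal to a general monotone-stopping theorem is unnecessary, since your explicit induction already covers it.
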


\begin{proof}
Write $p_k\eqdef p_e(k)$, $\beta\eqdef \beta_e$, $c_k\eqdef c(e,k)$, and $\alpha_k\eqdef c_kr$. The map $p\mapsto\beta p/[1-\beta(1-p)]$ is increasing on $[0,1]$. Since $p_k$ is non-increasing in $k$, both $c_k$ and $\alpha_k$ are non-increasing in $k$.

We prove the threshold statement by backward induction. At $k=q_e$, no further attempt is available, so $\Psi_e(\gamma;q_e\mid r)=\gamma$.
Suppose the statement holds at $k+1$. If $\gamma\ge\alpha_k$, then $\gamma\ge\alpha_{k+1}$, and therefore $\Psi_e(\gamma;k+1\mid r)=\gamma$. The value of attempting $e$ is \(\beta\left[p_kr+(1-p_k)\gamma\right]\). Rearranging gives:
\begin{equation}
\beta\left[p_kr+(1-p_k)\gamma\right]\le\gamma\quad\Longleftrightarrow\quad \gamma\ge c_kr=\alpha_k.
\end{equation}
Thus, retirement is optimal whenever $\gamma\ge\alpha_k$.

Now suppose $\gamma<\alpha_k$. If $\gamma\ge\alpha_{k+1}$, then the continuation value after a failure is $\gamma$, and the preceding equivalence shows that attempting $e$ has value strictly greater than $\gamma$. If $\gamma<\alpha_{k+1}$, then the induction hypothesis gives $\Psi_e(\gamma;k+1\mid r)>\gamma$, and hence:
\begin{equation}
\beta\left[p_kr+(1-p_k)\Psi_e(\gamma;k+1\mid r)\right]\ge\beta\left[p_kr+(1-p_k)\gamma\right]>\gamma.
\end{equation}
Thus, continuing is strictly optimal whenever $\gamma<\alpha_k$.
\end{proof}

Having resolved the isolated control case, let us now proceed to the case of an intermediate control. Note that an intermediate control does not yield a fixed exit reward. Instead, its successful compromise \textbf{exposes a downstream continuation} whose value \textit{depends on the buyout}.

\begin{appendixlemma}[Intermediate-control index]
\label{lem:intermediate-control-index}
Fix a control $e$ at failure count $k<q_e$, and let $R_e:[0,1]\to[0,1]$ denote the calibrated continuation profile exposed upon successfully compromising $e$. Suppose that $R_e$ is continuous, non-decreasing, and $1$-Lipschitz. The calibrated value of $e$ satisfies
\begin{equation}
\Psi_e(\gamma;k \mid R_e)=\max\left\{\gamma,\beta_e\left[p_e(k)R_e(\gamma)+\bigl(1-p_e(k)\bigr)\Psi_e(\gamma;k+1 \mid R_e)\right]\right\},
\end{equation}
with $\Psi_e(\gamma;q_e \mid R_e)=\gamma$. Its index $\alpha(e,k)$ is the smallest solution in $[0,1]$ of
\begin{equation}
\alpha(e,k)=c(e,k)R_e\bigl(\alpha(e,k)\bigr).
\end{equation}
Continuing is strictly optimal when $\gamma<\alpha(e,k)$, while retirement is optimal when $\gamma\ge\alpha(e,k)$. Moreover, $\alpha(e,k)$ is non-increasing in $k$.
\end{appendixlemma}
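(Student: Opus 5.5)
The plan is to mirror the backward induction of Lemma~\ref{lem:isolated-control-index}, replacing the constant reward $r$ by the $\gamma$-dependent continuation $R_e(\gamma)$.

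\textbf{Step 1: the Bellman recursion.} The recursion for $\Psi_e(\gamma;k\mid R_e)$ follows directly from Definition~\ref{a-def:psi}. At state $k$ the attacker either retires and collects $\gamma$, or pays one discounted attempt. On success it enters the downstream subcomponent with the same standing buyout, whose calibrated value is $R_e(\gamma)$. On failure it moves to $k+1$. At $k=q_e$ only retirement remains.

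\textbf{Step 2: the fixed-point equation.} Write $c_k\eqdef c(e,k)$ and let $f_k(\gamma)\eqdef c_kR_e(\gamma)-\gamma$. This function is continuous. At the left endpoint, $f_k(0)=c_kR_e(0)\ge0$. At the right endpoint, $f_k(1)=c_kR_e(1)-1\le 0$, because $R_e\le 1$ and $c_k\le1$. By the intermediate value theorem, the zero set of $f_k$ is a nonempty compact subset of $[0,1]$, so a smallest zero $\alpha_k\eqdef\alpha(e,k)$ exists.

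When $c_k<1$, the map $\gamma\mapsto c_kR_e(\gamma)$ is $c_k$-Lipschitz. Hence $f_k$ is strictly decreasing and the root is unique. When $c_k=1$, we simply take the smallest root, as stipulated in the paragraph on zero-duration controls. In either case $f_k(\gamma)>0$ for $\gamma<\alpha_k$. I also need $f_k(\gamma)\le0$ for $\gamma\ge\alpha_k$. This holds because $f_k$ is non-increasing, which follows from $R_e$ being $1$-Lipschitz and $c_k\le1$.

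\textbf{Step 3: the key algebraic equivalence.} When $\beta<1$, the one-shot comparison rearranges exactly as in the isolated case:
\[
\beta\bigl[p_kR_e(\gamma)+(1-p_k)\gamma\bigr]\le\gamma
\iff
\beta p_kR_e(\gamma)\le\gamma\bigl(1-\beta(1-p_k)\bigr)
\iff
f_k(\gamma)\le0 .
\]
When $\beta=1$, the same equivalence holds case by case from the definition of $c_k$.

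\textbf{Step 4: monotonicity in $k$.} Since $p_k$ is non-increasing in $k$, so is $c_k$, and hence $f_{k+1}\le f_k$ pointwise. Therefore $f_{k+1}(\alpha_k)\le0$. Because $f_{k+1}$ is continuous and $f_{k+1}(0)\ge0$, it has a zero in $[0,\alpha_k]$, which gives $\alpha_{k+1}\le\alpha_k$. The degenerate case $p_{k+1}=0$ gives $c_{k+1}=0$ and hence $\alpha_{k+1}=0$, which is consistent.

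\textbf{Step 5: backward induction on $k$.} The base case is $k=q_e$, with threshold $0$ by convention. For the inductive step, assume the threshold statement holds at $k+1$.

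\emph{Case $\gamma\ge\alpha_k$.} Then $\gamma\ge\alpha_{k+1}$, so $\Psi_e(\gamma;k+1\mid R_e)=\gamma$. The attempt value is $\beta[p_kR_e(\gamma)+(1-p_k)\gamma]$, which is at most $\gamma$ by Step 3, since $f_k(\gamma)\le0$. So retirement is optimal.

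\emph{Case $\gamma<\alpha_k$.} We use $\Psi_e(\gamma;k+1\mid R_e)\ge\gamma$ together with $f_k(\gamma)>0$. Then Step 3 gives
\[
\beta\bigl[p_kR_e(\gamma)+(1-p_k)\Psi_e(\gamma;k+1\mid R_e)\bigr]\ge\beta\bigl[p_kR_e(\gamma)+(1-p_k)\gamma\bigr]>\gamma,
\]
so continuing is strictly optimal.

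\textbf{Main obstacle.} The delicate point is Step 2 together with the ``retire for all $\gamma\ge\alpha_k$'' claim. Unlike the constant-reward case, $R_e$ varies with $\gamma$, so the set where $f_k\le0$ need not a priori be an up-set. This is exactly where the $1$-Lipschitz hypothesis on $R_e$, combined with $c_k\le1$, is indispensable: it makes $f_k$ non-increasing. The boundary case $c_k=1$, where the root set may be an interval, must also be handled; taking the smallest root resolves it.
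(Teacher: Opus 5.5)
Your proposal is correct and follows essentially the same route as the paper. Both arguments use the function $c_kR_e(\gamma)-\gamma$, its monotonicity from the $1$-Lipschitz hypothesis, the rearranged one-step comparison, $c_{k+1}\le c_k$ for monotonicity of the index, and backward induction on $k$. Your additions are minor tightenings rather than a different approach: the explicit intermediate-value arguments for the existence of the smallest root and for $\alpha_{k+1}\le\alpha_k$, and using $\Psi_e(\gamma;k+1\mid R_e)\ge\gamma$ to merge the paper's two sub-cases when $\gamma<\alpha_k$.
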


\begin{proof}
Write $c_k\eqdef c(e,k)$ and define $h_k(\gamma)\eqdef c_kR_e(\gamma)-\gamma$. We first consider $c_k<1$. For $0\le x<y\le1$, the $1$-Lipschitz property of $R_e$ gives
\begin{equation}
h_k(y)-h_k(x)=c_k\bigl(R_e(y)-R_e(x)\bigr)-(y-x)\le(c_k-1)(y-x)<0.
\end{equation}
Thus, $h_k$ is strictly decreasing and has a unique zero.

Now suppose $c_k=1$, which occurs when $\ell_e=0$ and $p_e(k)>0$. Then $h_k(\gamma)=R_e(\gamma)-\gamma$ is non-increasing, and its zero set is the retirement region of the continuation profile. We define $\alpha_k$ to be the smallest zero of $h_k$. For $\gamma<\alpha_k$, one has $R_e(\gamma)>\gamma$, and an immediate attempt is strictly preferable because $p_e(k)>0$. For $\gamma\ge\alpha_k$, both successful completion and retirement have value $\gamma$, and the backward-induction argument gives retirement as an optimal action. Finally, if $c_k=0$, then $p_e(k)=0$ and, by the non-increasing-probabilities assumption, all subsequent success probabilities are also zero and hence $\alpha_k=0$.

Because $p_e(k)$ is non-increasing, $c_{k+1}\le c_k$. Hence $h_{k+1}(\gamma)\le h_k(\gamma)$ for every $\gamma$, which implies $\alpha_{k+1}\le\alpha_k$.

We prove the threshold statement by backward induction. At $k=q_e$, no attempt remains and $\Psi_e(\gamma;q_e \mid R_e)=\gamma$. Suppose the statement holds at $k+1$. If $\gamma\ge\alpha_k$, then $\gamma\ge\alpha_{k+1}$, so $\Psi_e(\gamma;k+1 \mid R_e)=\gamma$. The value of attempting $e$ is
\begin{equation}
\beta_e\left[p_e(k)R_e(\gamma)+\bigl(1-p_e(k)\bigr)\gamma\right].
\end{equation}
Rearranging gives
\begin{equation}
\beta_e\left[p_e(k)R_e(\gamma)+\bigl(1-p_e(k)\bigr)\gamma\right]\le\gamma\quad\Longleftrightarrow\quad c_kR_e(\gamma)\le\gamma.
\end{equation}
Since $h_k$ is strictly decreasing and vanishes at $\alpha_k$, this inequality holds exactly when $\gamma\ge\alpha_k$.

If $\gamma<\alpha_k$ and $\gamma\ge\alpha_{k+1}$, then the same equivalence shows that attempting $e$ has value strictly greater than $\gamma$. If $\gamma<\alpha_{k+1}$, then $\Psi_e(\gamma;k+1 \mid R_e)>\gamma$, and therefore
\begin{equation}
\beta_e\left[p_e(k)R_e(\gamma)+\bigl(1-p_e(k)\bigr)\Psi_e(\gamma;k+1\mid R_e)\right]\ge\beta_e\left[p_e(k)R_e(\gamma)+\bigl(1-p_e(k)\bigr)\gamma\right]>\gamma.
\end{equation}
This completes the induction.
\end{proof}

\begin{appendixcor}[Index on an affine continuation piece]
\label{cor:affine-continuation-index}
Suppose that $\alpha(e,k)$ lies on a piece of the continuation profile on which $R_e(\gamma)=A+B\gamma$. Then
\begin{equation}
\alpha(e,k)=\frac{c(e,k)A}{1-c(e,k)B}.
\end{equation}
\end{appendixcor}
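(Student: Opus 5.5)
This is a direct consequence of Lemma~\ref{lem:intermediate-control-index}: substitute the affine form of $R_e$ into the fixed-point equation and solve the resulting scalar linear equation.

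\textbf{Step 1: the non-degenerate case $c(e,k)B<1$.} By Lemma~\ref{lem:intermediate-control-index}, $\alpha\eqdef\alpha(e,k)$ is the smallest solution of $\alpha=c(e,k)R_e(\alpha)$. The hypothesis that $\alpha$ lies on the piece where $R_e(\gamma)=A+B\gamma$ lets us write
\begin{equation*}
\alpha=c(e,k)\bigl(A+B\alpha\bigr),
\quad\text{equivalently}\quad
\alpha\bigl(1-c(e,k)B\bigr)=c(e,k)A .
\end{equation*}
Whenever $1-c(e,k)B\neq0$, this gives the claimed formula at once. Note that $B$ is a slope of $R_e$, which is non-decreasing and $1$-Lipschitz, so $B\in[0,1]$. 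Also $c(e,k)\in[0,1]$. Hence $c(e,k)B\le1$, and equality can occur only if $c(e,k)=1$ and $B=1$.

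\textbf{Step 2: the degenerate case $c(e,k)=B=1$.} This is the only step needing care. Here $c(e,k)=1$, so $\ell_e=0$ and $p_e(k)>0$, and the defining equation reduces to $\alpha=A+\alpha$, which forces $A=0$.

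\emph{(a) The index is not an interior point of the piece.} On the piece we then have $R_e(\gamma)=\gamma$, so every point of the piece is a fixed point. Since $\alpha$ is the \emph{smallest} fixed point, it must be the left endpoint of the piece.

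\emph{(b) Resolution at the left endpoint.} Either $\alpha=0$, or the preceding piece has slope $B'<1$, because $R_e(\gamma)-\gamma$ is non-increasing and positive to the left of $\alpha$. In the second case $\alpha$ also lies on that preceding piece, and applying Step~1 there gives the formula with the non-degenerate coefficients $(A',B')$. In the first case, the formula with $A=0$ is read via the convention that the left-endpoint piece is used.

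\textbf{Convention.} I would state that at a breakpoint the piece is taken to be the one for which $1-c(e,k)B>0$, which always exists by Step~2. This convention makes the formula unambiguous.
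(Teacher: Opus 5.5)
Your Step 1 is the paper's proof: substitute $R_e(\alpha)=A+B\alpha$ into the fixed-point equation of Lemma~\ref{lem:intermediate-control-index} and rearrange. The paper justifies the denominator only by asserting $c(e,k)<1$. That overlooks the zero-duration case $c(e,k)=1$, which the paper's own definition of $c$ explicitly allows.

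Your Step 2 closes that gap correctly. The formula still holds whenever $c(e,k)B<1$. When $c(e,k)=B=1$, the index is the left endpoint of its piece and also lies on the preceding piece. That piece has slope $B'<1$, because $R_e(\gamma)-\gamma$ is non-increasing and strictly positive to the left of $\alpha$. So the formula holds with $(A',B')$.

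One claim is wrong, though. Take the sub-case $\alpha(e,k)=0$ with $c(e,k)=B=1$. This forces $A=0$, so $R_e(\gamma)=\gamma$ on the first piece, as happens when the downstream continuation is worthless. Then there is no preceding piece, and no piece containing $\alpha$ has $1-c(e,k)B>0$. The expression is genuinely $0/0$, so your assertion that such a piece ``always exists by Step~2'' fails. Saying the formula is ``read via the convention that the left-endpoint piece is used'' does not resolve this. That case must either be excluded from the corollary, or handled by stipulating $\alpha(e,k)=0$ directly from the smallest-fixed-point definition.
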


\begin{proof}
Substituting $R_e(\alpha)=A+B\alpha$ into the fixed-point equation gives $\alpha=c(e,k)(A+B\alpha)$. Rearranging yields the result. The denominator is positive because $0\le B\le1$ and $c(e,k)<1$.
\end{proof}

\subsection{Compositional Rules}
\label{app:compositional-rules}

We next prove the rules used to combine intrinsic calibrated value profiles at $\Par$ and $\Ser$ nodes.

\begin{appendixlemma}[Threshold representation]
\label{lem:threshold-representation}
Let $f:[0,1]\to[0,1]$ be convex, non-decreasing, and $1$-Lipschitz, with $f(1)=1$. Then there exists an auxiliary random variable $X_f\in[0,1]$ such that
\begin{equation}
f(s)=\mathbb E[\max\{s,X_f\}]
\end{equation}
for every $s\in[0,1]$. Moreover, $X_f$ may be chosen so that $\Pr(X_f\le s)=\partial_s^+f(s)$ for $s\in[0,1)$.
\end{appendixlemma}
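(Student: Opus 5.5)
The plan is to construct $X_f$ directly from the right-derivative of $f$ and then verify the identity by integrating. Let $g(s)\eqdef\partial_s^+f(s)$ for $s\in[0,1)$. By convexity, $g$ exists everywhere on $[0,1)$, is non-decreasing and right-continuous. Because $f$ is non-decreasing and $1$-Lipschitz, $g$ takes values in $[0,1]$. Set $g(s)=1$ for $s\ge1$ and $g(s)=0$ for $s<0$. Then $g$ is a valid cumulative distribution function of a random variable supported on $[0,1]$. We let $X_f$ have this distribution, so $\Pr(X_f\le s)=g(s)$ on $[0,1)$, and $X_f\le 1$ almost surely.

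Next we compute $\mathbb E[\max\{s,X_f\}]$. For a $[0,1]$-valued variable and $s\in[0,1]$ we have
\[
\mathbb E[\max\{s,X_f\}]=s+\mathbb E[(X_f-s)^+]=s+\int_s^1\Pr(X_f>u)\,du=s+\int_s^1\bigl(1-g(u)\bigr)\,du .
\]
On the other side, a convex function on an interval is absolutely continuous and equals the integral of its right-derivative. Hence $f(1)-f(s)=\int_s^1 g(u)\,du$, and with $f(1)=1$ this gives
\[
f(s)=1-\int_s^1 g(u)\,du=s+\int_s^1\bigl(1-g(u)\bigr)\,du .
\]
The two expressions coincide, which establishes $f(s)=\mathbb E[\max\{s,X_f\}]$ for all $s\in[0,1]$.

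The only delicate points are bookkeeping at the endpoints. The first is that $g$ may not reach $1$ before $s=1$: the leftover mass $1-\lim_{u\uparrow1}g(u)$ sits as an atom at $X_f=1$. This is consistent with the tail integral, because $\Pr(X_f>u)$ is only integrated over $u<1$. The second is that $g(0)$ may be positive, giving an atom at $0$, which is harmless. I expect the main (mild) obstacle to be justifying $f(1)-f(s)=\int_s^1 g$ at the closed endpoint $s=1$ using only one-sided derivatives. To handle it, I would invoke the standard fact that a finite convex function on $[0,1]$ is Lipschitz there (here $1$-Lipschitz by assumption), hence absolutely continuous, with derivative equal to $g$ almost everywhere. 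No result beyond the stated hypotheses (mirroring Property~\ref{a-lem:shape}) is needed.
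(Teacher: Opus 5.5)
Your proposal is correct and follows essentially the same route as the paper: you take $X_f$ with distribution function $\partial_s^+f$ on $[0,1)$ and the leftover mass at $1$, write $\mathbb E[\max\{s,X_f\}]=s+\int_s^1\Pr(X_f>u)\,du$, and match this with $f(s)=1-\int_s^1\partial_u^+f(u)\,du$ using absolute continuity. One small caveat: a finite convex function on a closed interval need not be Lipschitz (for example $-\sqrt{x}$ on $[0,1]$), so absolute continuity must come from the assumed $1$-Lipschitz property, which you do invoke and which also tightens the paper's looser appeal to convexity alone.
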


\begin{proof}
Because $f$ is convex and $1$-Lipschitz, its right derivative $\partial_s^+f$ is non-decreasing and takes values in $[0,1]$. Define a distribution function $F_f$ on $[0,1]$ by $F_f(s)\eqdef \partial_s^+f(s)$ for $s<1$ and $F_f(1)\eqdef 1$, and let $X_f$ have distribution function $F_f$.

For any $s\in[0,1]$, we have $(X_f-s)_+=\int_s^1\mathbf 1\{X_f>z\}\,dz$. Taking expectations and interchanging expectation and integration gives
\begin{equation}
\mathbb E[\max\{s,X_f\}]=s+\int_s^1\Pr(X_f>z)\,dz=1-\int_s^1F_f(z)\,dz.
\end{equation}
Since a convex function on a compact interval is absolutely continuous,
\begin{equation}
f(1)-f(s)=\int_s^1 \partial_z^+f(z)\,dz.
\end{equation}
Using $f(1)=1$ and $F_f=\partial_z^+f$ gives
\begin{equation}
f(s)=1-\int_s^1F_f(z)\,dz=\mathbb E[\max\{s,X_f\}].
\end{equation}
\end{proof}

\begin{appendixlemma}[Parallel composition]
\label{lem:parallel-composition-proof}
Let $G=\Par(G_1,\cdots,G_m)$. Then
\begin{equation}
\widehat\Psi_G(\gamma)=1-\int_\gamma^1\prod_{i=1}^m\widehat g_{G_i}(z)\,dz.
\end{equation}
Equivalently,
\begin{equation}
\widehat g_G(\gamma)=\prod_{i=1}^m\widehat g_{G_i}(\gamma).
\end{equation}
\end{appendixlemma}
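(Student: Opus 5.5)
Since each $\widehat\Psi_{G_i}$ is convex, non-decreasing, 1-Lipschitz with $\widehat\Psi_{G_i}(1)=1$ (Property~\ref{a-lem:shape}), Lemma~\ref{lem:threshold-representation} gives independent auxiliary variables $X_i\eqdef X_{\widehat\Psi_{G_i}}$ with $\widehat\Psi_{G_i}(\gamma)=\E[\max\{\gamma,X_i\}]$ and $\Pr(X_i\le z)=\widehat g_{G_i}(z)$. The plan is to show the parallel value is
\begin{equation}
\widehat\Psi_G(\gamma)=\E\bigl[\max\{\gamma,X_1,\dots,X_m\}\bigr].
\end{equation}
Once this holds, the computation of Lemma~\ref{lem:threshold-representation} gives $1-\int_\gamma^1\Pr(\max_i X_i\le z)\,dz$. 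Independence turns the probability into $\prod_i\widehat g_{G_i}(z)$, and differentiating from the right gives the slope identity.

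To prove the displayed identity, I will interpret the $X_i$ as the standard ``prevailing charge'' (Whittle/Weber) quantities. For each child $G_i$ run in isolation with exit reward $1$, let $\alpha_{G_i}(\sigma_i\mid1)$ be its retirement threshold from Corollary~\ref{a-cor:retirement-threshold}. The prevailing charge process is the running minimum of this threshold, discounted along $G_i$'s own clock. Weber's fair-charge argument then says: for any joint policy on $\Par(G_1,\dots,G_m)$ with buyout $\gamma$, the expected discounted payoff is at most $\E[\max\{\gamma,\cdot\}]$ of the discounted prevailing charges collected. Equality holds for the policy that always works on the child with the largest current threshold and retires once all thresholds fall below $\gamma$. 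Two features of $\Par$ make the argument fit. First, the children are frozen when not played. Second, completing any child completes $G$ and yields reward $1$, which is exactly the ``first arm to pay out ends the game'' structure of a multi-armed bandit with retirement option $\gamma$ (Whittle's retirement formula). Whittle's formula gives $\widehat\Psi_G(\gamma)=1-\int_\gamma^1\prod_i \partial^+\widehat\Psi_{G_i}(z)\,dz$ directly. I would therefore prove it by re-deriving Whittle's argument in our setting rather than via the coupling.

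\textbf{Main obstacle.} The children $G_i$ are not simple arms but superprocesses: inside $G_i$ the attacker still chooses which control to attack. Whittle's formula requires each child to behave as an arm with a single scalar state, i.e.\ the internal choice must not depend on $\gamma$. I would handle this by structural induction on the parse tree. The inductive hypothesis is that for each child there is a fixed tie-broken internal control that is optimal for every buyout below the child's current threshold (the buyout-independent-control condition stated in the main text). Under this hypothesis, each child reduces to a Markov bandit process along its index-optimal internal trajectory, and its own $\widehat\Psi_{G_i}$ is the isolated retirement value of that process. The Gittins--Glazebrook--Weber superprocess theorem then applies, and Whittle's product formula follows by the standard argument: write $\Ps$ as an integral of $\partial_\gamma\Ps=\E[e^{-\lambda\tau}\one\{\text{retire}\}]$ and factor the retirement probability across the independent children. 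The delicate points are checking that $\partial_\gamma^+$ equals the discounted retirement probability under the optimal (right-continuous tie-broken) policy, and handling forced retirement when a child becomes infeasible; in the latter case its $X_i$ simply places mass at $0$.
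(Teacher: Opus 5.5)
Your proposal is correct, and its outer shell is the paper's. The paper also takes independent auxiliary variables $X_i$ from Lemma~\ref{lem:threshold-representation}, declares the effective threshold of the parallel node to be $\max_i X_i$, and reads off $1-\int_\gamma^1\prod_i\widehat g_{G_i}(z)\,dz$ by independence. The difference lies in the one step that carries all the content.

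The paper justifies $\widehat\Psi_G(\gamma)=\E[\max\{\gamma,X_1,\dots,X_m\}]$ in a single sentence, by appeal to the superprocess representation behind Theorem~\ref{thm:index-policy}. But the $X_i$ are only distributional encodings of the child profiles. Moreover, that superprocess representation is established through Lemma~\ref{lem:statewise-parallel-closure}, whose proof itself invokes the statewise parallel composition formula. So the paper's argument only closes when read as a joint induction on the parse tree.

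You make that induction explicit and prove the step by Whittle's retirement argument:
\begin{enumerate}
\item The right-derivative of the value in the buyout equals the discounted retirement probability under an optimal index policy tie-broken to retire as soon as all thresholds are $\le\gamma$.
\item Under that policy each frozen child is operated exactly until its own isolated retirement time or its completion. Hence on the retirement event the elapsed time is a sum over children, and the discounted probability factorizes into $\prod_i\widehat g_{G_i}(\gamma)$.
\item Integrating back from $\widehat\Psi_G(1)=1$ gives the formula.
\end{enumerate}
Optimality of that index policy for the family comes from Theorem~\ref{thm:superprocess-index-theorem}. It is fed by the hypothesis that each child satisfies the buyout-independent-control condition at every reachable state.

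What your route buys is an actual proof of the step the paper merely asserts, with a non-circular dependency order. What it costs is bookkeeping the paper skips:
\begin{itemize}
\item The induction must also carry the buyout-independent-control condition up to the parent, via the closure lemmas, which use the formula just proved at that node.
\item ``Completion of any child ends the game'' must be embedded in the bandit framework. For example, treat completion as entry into an index-$1$ state paying a perpetual stream worth $1$, and a dead child as an index-$0$ arm with $X_i\equiv0$.
\item Variable and zero durations $\ell_e$ must be handled.
\end{itemize}
Your prevailing-charge sketch becomes unnecessary once you commit to Whittle's derivative argument.
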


\begin{proof}
Apply Lemma~\ref{lem:threshold-representation} to each child profile. For every $i$, choose an auxiliary threshold variable $X_i$ satisfying
\begin{equation}
\widehat\Psi_{G_i}(\gamma)=\mathbb E[\max\{\gamma,X_i\}],\qquad \Pr(X_i\le z)=\widehat g_{G_i}(z).
\end{equation}
The $X_i$ may be taken to be independent because the child subcomponents have independent local dynamics. Under the superprocess representation used in the proof of Theorem~\ref{thm:index-policy}, a parallel subcomponent is worth continuing whenever at least one child is worth continuing. Its effective threshold is therefore $X_G\eqdef \max_iX_i$.

By independence,
\begin{equation}
\Pr(X_G\le z)=\Pr(X_1\le z,\cdots,X_m\le z)=\prod_{i=1}^m\widehat g_{G_i}(z).
\end{equation}
The threshold representation (a.k.a Lemma~\ref{lem:threshold-representation} now applied to $\Par$), then gives
\begin{equation}
\widehat\Psi_G(\gamma)=\gamma+\int_\gamma^1\Pr(X_G>z)\,dz=1-\int_\gamma^1\prod_{i=1}^m\widehat g_{G_i}(z)\,dz.
\end{equation}
Taking the right derivative with respect to $\gamma$ proves the slope identity.
\end{proof}

\begin{appendixlemma}[Series composition]
\label{lem:series-composition-proof}
Let $G=\Ser(G_1,\cdots,G_m)$. Define $\rho_{m+1}(\gamma)\eqdef 1$ and, recursively for $j=m,m-1,\cdots,1$,
\begin{equation}
\rho_j(\gamma)\eqdef \Psi_{G_j}\bigl(\gamma;\sigma_{G_j}^0\mid\rho_{j+1}(\gamma)\bigr).
\end{equation}
Then
\begin{equation}
\widehat\Psi_G(\gamma)=\rho_1(\gamma).
\end{equation}
More generally, $\rho_j(\gamma)$ is the calibrated continuation value upon entering $G_j$.
\end{appendixlemma}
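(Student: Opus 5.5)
The plan is a backward induction over the children $G_m,G_{m-1},\dots,G_1$ of the series node, proving the stronger claim that $\rho_j(\gamma)$ is the calibrated value of the fresh suffix $\Ser(G_j,\dots,G_m)$ with exit reward $1$ and buyout $\gamma$. The case $j=1$ is then $\widehat\Psi_G(\gamma)=\rho_1(\gamma)$.

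\textbf{Base case.} The suffix is $G_m$ alone. Its calibrated value with exit reward $1$ is $\Psi_{G_m}(\gamma;\sigma^0_{G_m}\mid 1)=\rho_m(\gamma)$ by definition, since $\rho_{m+1}\equiv 1$.

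\textbf{Inductive step.} Assume the claim for $j+1$, and consider the suffix starting at $G_j$. The key structural fact of $\Ser$ semantics is that while $G_j$ is incomplete, the frontier of the suffix equals the frontier of $G_j$, and $G_{j+1},\dots,G_m$ remain untouched and fresh. So any policy for the suffix splits into two parts:
\begin{itemize}
\item a policy acting inside $G_j$ with retirement option $\gamma$, up to the stopping time $\min(T_j,\tau)$, where $T_j$ is the completion time of $G_j$;
\item a policy for the fresh suffix $j+1$, started at time $T_j$ if $T_j\le\tau$.
\end{itemize}
The discount is multiplicative: $e^{-\lambda(T_j+T')}=e^{-\lambda T_j}e^{-\lambda T'}$. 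The buyout $\gamma$ is the same constant before and after $T_j$. Conditioning on $\mathcal F_{T_j}$ (strong Markov property at $T_j$), the post-$T_j$ payoff of the combined policy is at most $\rho_{j+1}(\gamma)$ by the inductive hypothesis. Hence the total payoff is at most
\[
\E\big[e^{-\lambda T_j}\rho_{j+1}(\gamma)\one\{T_j\le\tau\}+\gamma e^{-\lambda\tau}\one\{T_j>\tau\}\big]\le\Psi_{G_j}(\gamma;\sigma^0_{G_j}\mid\rho_{j+1}(\gamma)),
\]
where the last inequality uses Definition~\ref{a-def:psi} with exit reward $r=\rho_{j+1}(\gamma)$. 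The condition $\gamma\le r$ needed there holds because $\rho_{j+1}(\gamma)\ge\gamma$ by Property~\ref{a-lem:shape}.

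For the reverse inequality, concatenate two optimal policies:
\begin{itemize}
\item an optimal policy for $G_j$ with reward $\rho_{j+1}(\gamma)$, which is finite and deterministic Markov by Property~\ref{a-lem:pwl};
\item upon completing $G_j$, an optimal policy for the suffix $j+1$ at the same $\gamma$.
\end{itemize}
This combined policy attains the bound, which gives equality.

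\textbf{Main obstacle.} The delicate step is the decomposition in the upper bound, for two reasons. First, a general adaptive policy could condition its later behaviour on the history inside $G_j$. This is harmless because the suffix dynamics after $T_j$ are independent of that history given that $G_j$ is completed, so we can take a supremum over continuation policies pathwise. Second, we must confirm that eager pruning does not propagate across the series boundary. If $G_j$ becomes infeasible, the whole suffix is dead, and this matches forced retirement in $G_j$ with payoff $\gamma e^{-\lambda\tau}$. With both points addressed, the interpretation of $\rho_j$ as the continuation value on entering $G_j$ follows directly from the induction.
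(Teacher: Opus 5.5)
Your proposal is correct and follows essentially the same route as the paper: a backward induction over the suffixes $\Ser(G_j,\dots,G_m)$, using that the later children stay frozen in their fresh states while $G_j$ is active and that discounting is multiplicative, so completing $G_j$ acts as an exit reward of $\rho_{j+1}(\gamma)$. The differences are minor. You split the identity into an explicit upper bound (conditioning at $T_j$) and a matching lower bound (concatenating optimal policies), where the paper asserts the equality in one step. You also start the induction at $j=m$ rather than at the empty breached suffix $j=m+1$.
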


\begin{proof}
For $j\in\{1,\cdots,m\}$, let $H_j\eqdef \Ser(G_j,\cdots,G_m)$ denote the suffix beginning at $G_j$, and let $H_{m+1}$ denote the breached terminal subcomponent. We prove by backward induction that
\begin{equation}
\rho_j(\gamma)=\Psi_{H_j}\bigl(\gamma;\sigma_{H_j}^0\mid1\bigr).
\end{equation}

For $j=m+1$, the suffix subcomponent is already breached, and hence
\begin{equation}
\Psi_{H_{m+1}}\bigl(\gamma;\sigma_{H_{m+1}}^0\mid1\bigr)=1=\rho_{m+1}(\gamma).
\end{equation}

Suppose the claim holds for $j+1$. Upon entering $H_j$, the attacker must first breach $G_j$. If the attacker retires before breaching $G_j$, it receives the buyout $\gamma$. If it breaches $G_j$, the remaining suffix is $H_{j+1}$ in its fresh initial state. By the induction hypothesis, the value of this continuation, measured from the time at which $G_j$ is breached, is $\rho_{j+1}(\gamma)$.

The controls in the suffix remain frozen while $G_j$ is attempted, and exponential discounting is multiplicative across the two stages. Therefore, from the perspective of $G_j$, successful breach yields the exit reward $\rho_{j+1}(\gamma)$. Hence
\begin{equation}
\Psi_{H_j}\bigl(\gamma;\sigma_{H_j}^0\mid1\bigr)=\Psi_{G_j}\bigl(\gamma;\sigma_{G_j}^0\mid\rho_{j+1}(\gamma)\bigr)=\rho_j(\gamma).
\end{equation}
This completes the induction. Taking $j=1$ and using $H_1=G$ gives $\widehat\Psi_G(\gamma)=\rho_1(\gamma)$.
\end{proof}

\begin{appendixcor}[Binary series rule]
\label{cor:binary-series-rule}
For $G=\Ser(G_1,G_2)$,
\begin{equation}
\widehat\Psi_G(\gamma)=\widehat\Psi_{G_2}(\gamma)\widehat\Psi_{G_1}\left(\frac{\gamma}{\widehat\Psi_{G_2}(\gamma)}\right).
\end{equation}
\end{appendixcor}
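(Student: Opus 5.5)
We derive the binary rule as the case $m=2$ of Lemma~\ref{lem:series-composition-proof}, followed by one application of homogeneity (Property~\ref{a-lem:homog}). For $G=\Ser(G_1,G_2)$ the recursion gives $\rho_3(\gamma)=1$. Next,
\[
\rho_2(\gamma)=\Psi_{G_2}\bigl(\gamma;\sigma_{G_2}^0\mid 1\bigr)=\widehat\Psi_{G_2}(\gamma).
\]
Then
\[
\rho_1(\gamma)=\Psi_{G_1}\bigl(\gamma;\sigma_{G_1}^0\mid \rho_2(\gamma)\bigr)=\Psi_{G_1}\bigl(\gamma;\sigma_{G_1}^0\mid \widehat\Psi_{G_2}(\gamma)\bigr),
\]
and the lemma identifies $\rho_1$ with $\widehat\Psi_G$.

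The remaining step rewrites the exit-reward argument through homogeneity. Write $r=\widehat\Psi_{G_2}(\gamma)$. Property~\ref{a-lem:homog} gives, for $r>0$,
\[
\Psi_{G_1}(\gamma;\sigma^0_{G_1}\mid r)=r\,\widehat\Psi_{G_1}(\gamma/r).
\]
Substituting yields the claimed formula directly.

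Two side conditions must be checked. First, we need $r>0$. By Property~\ref{a-lem:shape}, $\widehat\Psi_{G_2}(\gamma)\ge\gamma$, so $r>0$ whenever $\gamma>0$. At $\gamma=0$, $r$ may vanish, which happens exactly when $G_2$ cannot be completed. In that case both sides are $0$ once we adopt the convention $r\,\widehat\Psi_{G_1}(\gamma/r)=0$, or equivalently take the limit $\gamma\downarrow 0$, justified by continuity of both sides.

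Second, the argument $\gamma/r$ must lie in the domain $[0,1]$ of $\widehat\Psi_{G_1}$. This holds because $\gamma\le r$, again by Property~\ref{a-lem:shape}. That same inequality places the buyout in $[0,r]$, which is exactly the range on which Definition~\ref{a-def:psi} and the homogeneity property were stated.

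The only real obstacle is this boundary case, and it is handled by the convention or the continuity argument above. Everything else is direct substitution.
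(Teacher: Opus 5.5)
Your proof is correct and follows the paper's argument: the paper also specialises Lemma~\ref{lem:series-composition-proof} to $m=2$ to get $\widehat\Psi_G(\gamma)=\Psi_{G_1}\bigl(\gamma;\sigma^0_{G_1}\mid\widehat\Psi_{G_2}(\gamma)\bigr)$, then applies homogeneity with $r=\widehat\Psi_{G_2}(\gamma)$, and handles $\gamma=\widehat\Psi_{G_2}(\gamma)=0$ by continuous extension. Your checks that $r>0$ for $\gamma>0$ and that $\gamma/r\in[0,1]$, both via Property~\ref{a-lem:shape}, are details the paper leaves implicit.
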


\begin{proof}
Lemma~\ref{lem:series-composition-proof} gives
\begin{equation}
\widehat\Psi_G(\gamma)=\Psi_{G_1}\bigl(\gamma;\sigma_{G_1}^0\mid\widehat\Psi_{G_2}(\gamma)\bigr).
\end{equation}
Applying Property~\ref{a-lem:homog} with $r=\widehat\Psi_{G_2}(\gamma)$ yields the result. If $\gamma=\widehat\Psi_{G_2}(\gamma)=0$, the expression is interpreted by its continuous extension.
\end{proof}

\subsection{Statewise Maximum-Index Property}
\label{app:statewise-index-property}

\begin{appendixprop}[Statewise maximum-index property]
\label{prop:statewise-max-index}
For every SP subcomponent $G$, reachable non-terminal state $\sigma$,
and exit reward $r>0$, define
\begin{equation}
\overline\alpha_G(\sigma)
\eqdef
\max_{e\in\mathcal F_G(\sigma)}\alpha(e,k_e).
\end{equation}
Then
\begin{equation}
\alpha_G(\sigma\mid r)=\overline\alpha_G(\sigma).
\end{equation}
Moreover, after fixing a deterministic tie-breaking rule, there exists
a control
\begin{equation}
e_G^*(\sigma)\in
\arg\max_{e\in\mathcal F_G(\sigma)}\alpha(e,k_e)
\end{equation}
that is optimal for every
$\gamma<\alpha_G(\sigma\mid r)$. Retirement is optimal for every
$\gamma\ge\alpha_G(\sigma\mid r)$.
\end{appendixprop}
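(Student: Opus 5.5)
The plan is a structural induction over the SP parse tree, but with a strengthened hypothesis. The index $\alpha(e,k_e)$ of a control inside $G$ is defined via its continuation profile $R_e$, and through Lemma~\ref{lem:series-composition-proof} that profile depends on $\gamma$. So we prove the following for every subcomponent $G$ and every \emph{continuation profile} $R:[0,1]\to[0,1]$ that is continuous, non-decreasing and $1$-Lipschitz, not merely for a constant exit reward $r$. Let $\Psi_G(\gamma;\sigma\mid R)$ denote the calibrated value when completing $G$ pays $R(\gamma)$. The claim is that its retirement threshold equals $\max_{e\in\mathcal F_G(\sigma)}\alpha(e,k_e)$, where each $\alpha(e,k_e)$ is the smallest fixed point of $\alpha=c(e,k_e)R_e(\alpha)$ and $R_e$ is the profile obtained by pushing $R$ down the parse tree (Pass~2). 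The claim also asserts that one tie-broken maximiser $e^*_G(\sigma)$ is optimal for every $\gamma$ below the threshold. The constant case $R\equiv r$ is the statement as written. Scaling is handled by Property~\ref{a-lem:homog} and Corollary~\ref{a-cor:retirement-threshold}, with indices read relative to the same $r$.

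\textbf{Base case.} For $G=\Ctrl(e)$ this is exactly Lemma~\ref{lem:intermediate-control-index}, or Lemma~\ref{lem:isolated-control-index} for constant $R$. The frontier is $\{e\}$, and continuing means attempting $e$, so the maximiser is trivially buyout-independent.

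\textbf{Series case.} Take $G=\Ser(G_1,\dots,G_m)$ in a state whose active child is $G_j$ in state $\sigma_j$, with $G_{j+1},\dots,G_m$ fresh and frozen. Repeat the argument of Lemma~\ref{lem:series-composition-proof} from the current state. It gives $\Psi_G(\gamma;\sigma\mid R)=\Psi_{G_j}(\gamma;\sigma_j\mid R_j)$, where $R_j(\gamma)$ is the suffix value $\Psi_{\Ser(G_{j+1},\dots,G_m)}(\gamma;\sigma^0\mid R)$ folded backward. This $R_j$ is again continuous, non-decreasing and $1$-Lipschitz, since Properties~\ref{a-lem:shape} and \ref{a-lem:homog} are preserved under composition of such maps. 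Moreover $\mathcal F_G(\sigma)=\mathcal F_{G_j}(\sigma_j)$, and the profile Pass~2 hands to each leaf of $G_j$ is precisely the one induced by $R_j$. Applying the inductive hypothesis to $G_j$ with profile $R_j$ therefore gives both the threshold identity and the buyout-independent maximiser.

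\textbf{Parallel case and main obstacle.} Take $G=\Par(G_1,\dots,G_m)$ with live children in states $\sigma_i$, all facing the same profile $R$. By the inductive hypothesis each child $G_i$, viewed alone with buyout $\gamma$, has threshold $\alpha_i=\max_{e\in\mathcal F_{G_i}(\sigma_i)}\alpha(e,k_e)$ and a fixed optimal internal control $e_i^*$ for all $\gamma<\alpha_i$. This is exactly the condition in the superprocess theorem of Gittins--Glazebrook--Weber (Ch.~4): each child is a superprocess whose optimal control is independent of the retirement reward.

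The step I expect to be hardest is that the children are not ordinary arms. Completing any one child terminates the whole process with a common reward $R(\gamma)$, so the classical theorem does not apply verbatim. I would first try to reduce to the standard retirement (Whittle) formulation. Because completion of child $i$ prunes its siblings, completing child $i$ while the others stay frozen is equivalent to that child ``retiring'' the entire system with reward $R(\gamma)$. Whittle's identity for multi-armed retirement problems then gives the product rule for slopes, as in Lemma~\ref{lem:parallel-composition-proof}, but now from an arbitrary state: $\partial^+_\gamma\Psi_G=\prod_i\partial^+_\gamma\Psi_{G_i}$. With that rule in hand, $\Psi_G(\gamma)=\gamma$ holds exactly when every factor equals $1$ on $[\gamma,1]$. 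That happens exactly when $\gamma\ge\alpha_i$ for all $i$, which yields $\alpha_G=\max_i\alpha_i=\overline\alpha_G(\sigma)$.

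For buyout-independence, I would run an interchange argument within the superprocess framework. Let $i^*$ be the tie-broken argmax of $\alpha_i$, with internal control $e^*_{i^*}$. For any $\gamma<\alpha_G$, playing $G_{i^*}$ until its own index drops below the next-best $\alpha_i$ dominates any policy that first plays elsewhere. During that phase the inductive hypothesis keeps $e^*_{i^*}$ optimal within $G_{i^*}$, uniformly in the effective buyout, which is the larger of $\gamma$ and the others' value. Hence $e^*_G(\sigma)=e^*_{i^*}(\sigma_{i^*})$ attains $\max_{e\in\mathcal F_G(\sigma)}\alpha(e,k_e)$. I must also make sure this arbitrary-state product rule is proved independently of Theorem~\ref{thm:index-policy}, since Lemma~\ref{lem:parallel-composition-proof} invokes that theorem and reusing it here would be circular. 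Retirement optimality for $\gamma\ge\alpha_G$ then follows from Corollary~\ref{a-cor:retirement-threshold}.
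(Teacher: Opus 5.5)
Your proposal is correct and shares the paper's skeleton: induction strengthened to admissible continuation profiles, the series case reduced to the active child $G_j$ under the suffix profile $R_j$, and the $\Par$ threshold read off from the product of child slopes. Where you differ is in how you get buyout-independence at a $\Par$ node.

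\emph{The paper's route.} It runs no interchange argument inside the induction. It proves the action-gap identity of Lemma~\ref{lem:parallel-action-gap}, $\Psi_G-\Psi_{G\mid e}=h_e(\gamma)g_2(\gamma)+\int_{(\gamma,1]}h_e(z)\,dg_2(z)$, by conditioning on the sibling's threshold variable $X_2$. It then observes that $h_{e^*}$ vanishes on $[0,\theta_1]$ while $dg_2$ is carried by $[0,\theta_2]\subseteq[0,\theta_1]$, so the gap is zero. It handles $m$-ary nodes by nesting $\Par(G_j,H)$, and invokes the superprocess theorem only afterwards, in the proof of Theorem~\ref{thm:index-policy}.

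\emph{Your route.} You apply the superprocess (Condition~D) theorem locally at every $\Par$ node, with the induction hypothesis supplying Condition~D for the children. This yields the threshold and the fixed control together, treats $m$-ary nodes directly, and makes the logical order explicit. Your remark that Lemma~\ref{lem:parallel-composition-proof} leans on the proof of Theorem~\ref{thm:index-policy} is fair. The statewise product formula, on which both the paper's closure and its action-gap identity rest, needs exactly the non-circular Whittle-type justification you describe.

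\emph{Trade-off.} The paper's route gives an explicit formula for the loss of any first action, so buyout-independence reduces to a support inclusion. Yours is more direct, but as written the interchange sentence is only asserted.

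To make it rigorous, cite the Gittins--Glazebrook--Weber theorem after embedding completion of a child as an absorbing state of maximal index, whose perpetual reward stream is worth exactly the terminal reward. The buyout effectively faced by $G_{i^*}$ is then $\max\{\gamma,X_{-i^*}\}$, which never exceeds $\alpha_{i^*}$. The tie case where it equals $\alpha_{i^*}$ still needs the continuity step the paper uses.

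Finally, apply the product rule with the terminal reward frozen at $r=R(\gamma)$, or after normalising to $r=1$ as the paper does. Read as a total derivative in $\gamma$, it is false when $R$ genuinely varies. For two identical one-shot controls (success probability $p$, discount $\beta_e<1$) in their common continuation region, the two sides differ by $\beta_e p\,R'(\gamma)\bigl[1-\beta_e(1-p)-\beta_e p\,R'(\gamma)\bigr]$. This is positive whenever $R'(\gamma)>0$.
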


The proof is organised into 3 cases: the $\Ctrl (e)$ case, the $\Ser$ and $\Par$ case and our method of proof is structural induction on the SP parse tree $\mathcal T$.

\begin{appendixlemma}[$\Ctrl$ (Leaf) base case]
\label{lem:statewise-leaf}
Proposition~\ref{prop:statewise-max-index} holds when
$G=\Ctrl(e)$.
\end{appendixlemma}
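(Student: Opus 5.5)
The plan is to reduce the leaf case to the single-control lemmas already proved. When $G=\Ctrl(e)$, every reachable non-terminal state has the form $\sigma=(\{e\},k_e)$ with $k_e<q_e$, since once $e$ is locked out or breached the state is terminal. Hence $\mathcal F_G(\sigma)=\{e\}$, and the maximum defining $\overline\alpha_G(\sigma)$ has a single term:
\[
\overline\alpha_G(\sigma)=\alpha(e,k_e).
\]
The only candidate for $e_G^*(\sigma)$ is $e$ itself, so tie-breaking plays no role. What must be shown is that the retirement threshold $\alpha_G(\sigma\mid r)$ from Corollary~\ref{a-cor:retirement-threshold} coincides with the control's index, and that attempting $e$ is optimal below it.

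\textbf{Step 1 (identify the threshold).} I will invoke Lemma~\ref{lem:intermediate-control-index} with the continuation profile $R_e$ supplied to the leaf. For an isolated leaf with a fixed exit reward, this is the constant profile $R_e\equiv r$, and the formula reduces to Lemma~\ref{lem:isolated-control-index}. The lemma shows that continuing is strictly optimal for $\gamma<\alpha(e,k_e)$ and that retirement is optimal for $\gamma\ge\alpha(e,k_e)$. Consequently
\[
\Psi_e(\gamma;k_e\mid R_e)>\gamma \ \text{ for } \gamma<\alpha(e,k_e), \qquad \Psi_e(\gamma;k_e\mid R_e)=\gamma \ \text{ for } \gamma\ge\alpha(e,k_e).
\]
Taking the infimum in the definition of $\alpha_G(\sigma\mid r)$ then gives $\alpha_G(\sigma\mid r)=\alpha(e,k_e)$.

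\textbf{Step 2 (normalisation).} The proposition is stated with the index $\alpha(e,k_e)$ computed relative to the leaf's actual continuation, whereas $\alpha_G(\sigma\mid r)$ scales with $r$. I will reconcile the two using homogeneity. By Property~\ref{a-lem:homog} and Corollary~\ref{a-cor:retirement-threshold}, a leaf with constant exit reward $r$ has threshold $c(e,k_e)r$, which is exactly the index of Lemma~\ref{lem:isolated-control-index} computed at that same $r$. I therefore read $\alpha(e,k_e)$ as the index under the continuation actually attached to $e$, which at the root gives $r=1$. I expect this bookkeeping about which $r$ or $R_e$ the index refers to to be the only delicate point.

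\textbf{Step 3 (the optimal action).} For $\gamma<\alpha_G(\sigma\mid r)$ the only non-retiring action is to attempt $e$, and Step 1 shows it is strictly optimal. For $\gamma\ge\alpha_G(\sigma\mid r)$, retirement is optimal by the same lemma. Zero-length controls with $c(e,k)\in\{0,1\}$ are covered by the smallest-fixed-point convention already adopted, so no separate argument is needed for them.
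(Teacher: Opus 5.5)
Your proposal is correct and matches the paper's proof: the paper observes that the frontier is $\{e\}$ and invokes Lemma~\ref{lem:intermediate-control-index}, which gives strict optimality of attempting $e$ below $\alpha(e,k_e)$ and optimality of retirement at or above it, hence $\alpha_G(\sigma\mid r)=\alpha(e,k_e)$. Your Step~2 supplies bookkeeping the paper leaves implicit: since $\alpha_G(\sigma\mid r)=r\,\alpha_G(\sigma\mid 1)$, the $r$-free right-hand side $\overline\alpha_G(\sigma)$ is only consistent if $\alpha(e,k_e)$ is read relative to the continuation actually attached to $e$, which is how you read it.
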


\begin{proof}
At a leaf, the frontier contains only $e$. By Lemma~\ref{lem:intermediate-control-index}, attempting $e$ is strictly optimal for $\gamma<\alpha(e,k_e)$, while retirement is optimal for $\gamma\ge\alpha(e,k_e)$. Hence $\alpha_G(\sigma\mid r)=\alpha(e,k_e)$. This is trivial.
\end{proof}

The induction hypothesis is understood in the following slightly stronger form. The exit reward may be replaced by any admissible continuation profile $R_e:[0,1]\to[0,1]$ arising from a downstream SP subcomponent. The index of a frontier control is then computed against the continuation profile obtained from its forward cone and $R_e$. Let
$R_e:[0,1]\to[0,1]$ be an admissible continuation profile. Define
\begin{equation}
\Psi_G(\gamma;\sigma\mid R_e) \eqdef  \sup_\pi \mathbb E_\sigma^\pi\!\left[ e^{-\lambda T}R_e(\gamma)\mathbf 1\{T\le\tau_\pi\} + e^{-\lambda\tau_\pi}\gamma\mathbf 1\{\tau_\pi<T\} \right].
\end{equation}
Thus, if $G$ is breached, the attacker receives the continuation value $R_e(\gamma)$, measured from the completion time of $G$. A scalar exit reward $r$ is identified with the constant profile $R_e(\gamma)\equiv r$, so this definition contains $\Psi_G(\gamma;\sigma\mid r)$ as a special case. For a non-terminal state $\sigma$ and a frontier control $e\in\mathcal F_G(\sigma)$, let $\Psi_{G\mid e}(\gamma;\sigma\mid R)$ denote the optimal calibrated value subject to attempting $e$ first and behaving optimally thereafter. Hence
\begin{equation}
    \Psi_G(\gamma;\sigma\mid R_e) = \max\left\{ \gamma, \max_{e\in\mathcal F_G(\sigma)} \Psi_{G\mid e}(\gamma;\sigma\mid R_e) \right\}.
\end{equation}

\begin{appendixlemma}[Series closure]
\label{lem:statewise-series-closure}
Let $G=\Ser(G_1,\cdots,G_m)$. Suppose that the statewise maximum-index property holds for each child $G_i$ under every admissible continuation profile. Then the statewise maximum-index property holds for $G$.
\end{appendixlemma}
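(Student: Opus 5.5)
The plan is to reduce the series case to the child currently being attacked, with the frozen downstream suffix absorbed into that child's continuation profile. Fix a reachable non-terminal state $\sigma$ of $G=\Ser(G_1,\dots,G_m)$ and an admissible continuation profile $R$. In the SP semantics only one child is active at $\sigma$: there is a unique index $j$ such that $G_1,\dots,G_{j-1}$ are breached, $G_j$ is in some reachable non-terminal state $\sigma_j$, and $G_{j+1},\dots,G_m$ are fresh and frozen. Hence $\mathcal F_G(\sigma)=\mathcal F_{G_j}(\sigma_j)$.

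The first step is to build the continuation profile seen by $G_j$. Mirroring Lemma~\ref{lem:series-composition-proof}, set $\rho_{m+1}\eqdef R$ and, for $i=m,\dots,j+1$,
\[
\rho_i(\gamma)\eqdef\Psi_{G_i}\bigl(\gamma;\sigma^0_{G_i}\mid \rho_{i+1}\bigr).
\]
I would check that $\rho_{j+1}$ is admissible, i.e.\ continuous, non-decreasing, $1$-Lipschitz and $[0,1]$-valued, so that the induction hypothesis and Lemma~\ref{lem:intermediate-control-index} apply with $R_e=\rho_{j+1}$. Admissibility should follow inductively: a supremum of expressions $a_\pi\rho(\gamma)+b_\pi\gamma$ with $a_\pi+b_\pi\le1$ and $\rho$ non-decreasing and $1$-Lipschitz is again non-decreasing and $1$-Lipschitz. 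This extends Properties~\ref{a-lem:affine}--\ref{a-lem:shape} to profile-valued rewards.

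The second step is the value identity
\[
\Psi_G(\gamma;\sigma\mid R)=\Psi_{G_j}\bigl(\gamma;\sigma_j\mid\rho_{j+1}\bigr)\quad\text{for every }\gamma .
\]
It follows from the same stagewise argument as Lemma~\ref{lem:series-composition-proof}. The suffix is frozen until $G_j$ is breached, discounting is multiplicative across stages, and retirement inside the suffix is captured by the buyout already built into $\rho_{j+1}$.

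It then remains to check that the policy sets correspond. Every policy on $G$ from $\sigma$ splits into a policy for $G_j$ followed by a policy for the fresh suffix. Conversely, optimal suffix play realises $\rho_{j+1}(\gamma)$ at the breach time. Since the same first action $e\in\mathcal F_{G_j}(\sigma_j)$ is available in both problems, the identity also holds with $\Psi_{G\mid e}$ in place of $\Psi_G$.

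With this identity, the induction hypothesis for $G_j$ under the profile $\rho_{j+1}$ gives three things. First, $\alpha_G(\sigma\mid R)=\alpha_{G_j}(\sigma_j\mid\rho_{j+1})=\max_{e\in\mathcal F_{G_j}(\sigma_j)}\alpha(e,k_e)$. Second, there is a single tie-broken maximiser $e^*_{G_j}(\sigma_j)$ that is optimal for all $\gamma$ below the threshold. Third, retirement is optimal at and above the threshold. Because the frontiers coincide, these statements transfer verbatim to $G$.

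The main obstacle is consistency of the indices. The index $\alpha(e,k_e)$ appearing in the statewise property for $G_j$ must be computed against $e$'s full forward cone composed with $\rho_{j+1}$, and this must be the same index that the property for $G$ refers to. I would settle this by noting that the forward cone of $e$ in $G$ is precisely its forward cone in $G_j$ followed by $G_{j+1},\dots,G_m$ and then $R$. Associativity of the series fold, namely the nested use of Lemma~\ref{lem:series-composition-proof} together with Property~\ref{a-lem:homog}, then shows that the two continuation profiles agree as functions of $\gamma$. That gives identical fixed points in Lemma~\ref{lem:intermediate-control-index}, and hence the same index in both statements.
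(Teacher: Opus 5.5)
Your proposal is correct and follows essentially the same route as the paper's proof. Both arguments isolate the unique active child $G_j$, absorb the frozen suffix into a continuation profile, prove the value identity for both $\Psi_G$ and $\Psi_{G\mid e}$, apply the induction hypothesis to $G_j$, and match the indices through the forward cone. The differences are small refinements: you carry a general admissible profile $R$ directly rather than normalising to $r=1$ and appealing to homogeneity, and you explicitly check that $\rho_{j+1}$ is admissible, which the paper leaves implicit.
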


\begin{proof}
By positive homogeneity, it suffices to consider the normalised exit reward $r=1$. Fix a reachable non-terminal state $\sigma$ of $G$. By the semantics of $\Ser$ composition and eager pruning, there is a unique active child $G_j$: the children $G_1,\cdots,G_{j-1}$ have already been breached, $G_j$ is still available, and the later children $G_{j+1},\cdots,G_m$ remain frozen in their fresh states. In particular,
\begin{equation}
\mathcal F_G(\sigma) = \mathcal F_{G_j}(\sigma_j),
\end{equation}
where $\sigma_j$ is the state induced on $G_j$.

Let
\begin{equation}
H_{j+1}\eqdef\Ser(G_{j+1},\cdots,G_m)
\end{equation}
denote the suffix following $G_j$, with $H_{m+1}$ interpreted as the breached empty suffix. Define the continuation profile
\begin{equation}
R_j(\gamma)\eqdef\Psi_{H_{j+1}}(\gamma;\sigma_{H_{j+1}}^0\mid1),
\end{equation}
where $R_m(\gamma)=1$ when $j=m$. Thus, if the attacker breaches $G_j$, it enters the suffix in its fresh state and receives continuation value $R_j(\gamma)$, measured from the completion time of $G_j$.

While $G_j$ is being attacked, every control in $H_{j+1}$ remains frozen. Moreover, exponential discounting is multiplicative across the active child and the suffix. Consequently, every policy for $G$ from state $\sigma$ corresponds to a policy for $G_j$ from state $\sigma_j$ with completion continuation $R_j$, and the two policies have identical pathwise discounted payoffs. Therefore,
\begin{equation}
\Psi_G(\gamma;\sigma\mid1)=\Psi_{G_j}(\gamma;\sigma_j\mid R_j).
\end{equation}
The same correspondence holds after fixing the first attempted control. For every $e\in\mathcal F_G(\sigma)$,
\begin{equation}
\Psi_{G\mid e}(\gamma;\sigma\mid1)=\Psi_{G_j\mid e}(\gamma;\sigma_j\mid R_j).
\end{equation}

The continuation profile seen after compromising any frontier control $e\in\mathcal F_{G_j}(\sigma_j)$ consists of the remainder of $G_j$, followed by $H_{j+1}$. Hence the embedded index of $e$ in the parent $\Ser$ subcomponent is exactly the index assigned to $e$ in $G_j$ under continuation profile $R_j$.

Applying the induction hypothesis to $G_j$ with continuation profile $R_j$ gives
\begin{equation}
\alpha_G(\sigma\mid1)=\max_{e\in\mathcal F_G(\sigma)}\alpha(e,k_e).
\end{equation}
It also gives, after fixing the deterministic tie-breaking rule, a control
\begin{equation}
e_G^*(\sigma)\in\arg\max_{e\in\mathcal F_G(\sigma)}\alpha(e,k_e)
\end{equation}
such that
\begin{equation}
\Psi_{G_j\mid e_G^*(\sigma)}(\gamma;\sigma_j\mid R_j)=\Psi_{G_j}(\gamma;\sigma_j\mid R_j)
\end{equation}
for every $\gamma<\alpha_G(\sigma\mid1)$. The two value correspondences above therefore imply
\begin{equation}
\Psi_{G\mid e_G^*(\sigma)}(\gamma;\sigma\mid1)=\Psi_G(\gamma;\sigma\mid1)
\end{equation}
throughout the same continuation region.
Finally, the induction hypothesis states that retirement is optimal for the active child exactly when
\begin{equation}
\gamma\ge\max_{e\in\mathcal F_G(\sigma)}\alpha(e,k_e).
\end{equation}
Because the $\Ser$ parent and active child have the same value and the same available actions, the same statement holds for $G$. Positive homogeneity extends the conclusion from the normalised case to every exit reward $r>0$.
\end{proof}

The key point above is that a reachable series states contains only one active child. $\Ser$ composition therefore introduces no competition between different children but it only changes the continuation profile against which the controls of the active child are evaluated. Before proving the $\Par$ case, we need an important lemma.

\begin{appendixlemma}[Parallel action-gap identity]
\label{lem:parallel-action-gap}
Let $G=\Par(G_1,G_2)$ and fix a reachable state $\sigma=(\sigma_1,\sigma_2)$. Define
\begin{equation}
\Phi_i(\gamma)\eqdef\Psi_{G_i}(\gamma;\sigma_i\mid1),\qquad g_i(\gamma)\eqdef\partial_\gamma^+\Phi_i(\gamma),
\end{equation}
for $i\in\{1,2\}$. For a control $e\in\mathcal F_{G_1}(\sigma_1)$, define
\begin{equation}
h_e(\gamma)\eqdef\Phi_1(\gamma)-\Psi_{G_1\mid e}(\gamma;\sigma_1\mid1).
\end{equation}
Then
\begin{equation}
\Psi_G(\gamma;\sigma\mid1)-\Psi_{G\mid e}(\gamma;\sigma\mid1)=h_e(\gamma)g_2(\gamma)+\int_{(\gamma,1]}h_e(z)\,dg_2(z).
\end{equation}
The analogous identity holds for a control in $G_2$ after exchanging the two children.
\end{appendixlemma}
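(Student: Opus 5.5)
The plan is to write both $\Psi_G(\gamma;\sigma\mid1)$ and $\Psi_{G\mid e}(\gamma;\sigma\mid1)$ as expectations against the \emph{same} auxiliary threshold variable $X_2$ of the second child. The identity should then follow by subtracting the two expressions.

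First, apply Lemma~\ref{lem:threshold-representation} to $\Phi_2$. By Property~\ref{a-lem:shape} it is convex, non-decreasing, $1$-Lipschitz, and satisfies $\Phi_2(1)=1$. This gives $X_2\in[0,1]$ with $\Pr(X_2\le z)=g_2(z)$. For any bounded measurable $\phi$ on $[0,1]$ this yields
\[
\E[\phi(\max\{\gamma,X_2\})]=\phi(\gamma)g_2(\gamma)+\int_{(\gamma,1]}\phi(z)\,dg_2(z).
\]
Taking $\phi=h_e$ makes the right-hand side of the claimed identity exactly $\E[h_e(\max\{\gamma,X_2\})]$. It therefore suffices to prove the two representations
\[
\Psi_G(\gamma;\sigma\mid1)=\E[\Phi_1(\max\{\gamma,X_2\})],\qquad \Psi_{G\mid e}(\gamma;\sigma\mid1)=\E[\Psi_{G_1\mid e}(\gamma;\sigma_1\mid1)|_{\gamma\leftarrow\max\{\gamma,X_2\}}],
\]
where in the second the argument of $\Psi_{G_1\mid e}$ is replaced by $\max\{\gamma,X_2\}$. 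Linearity of expectation then finishes the proof.

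For the first representation, I would check that it agrees with Lemma~\ref{lem:parallel-composition-proof}, applied at the reachable state $\sigma$ rather than the fresh state. The composition argument only uses that the children evolve independently and that completing either child completes $G$, and both hold at any reachable state. Integrating by parts, with $\Phi_1(1)=1$ and $d\Phi_1=g_1\,dz$, gives
\[
\Phi_1(\gamma)g_2(\gamma)+\int_{(\gamma,1]}\Phi_1\,dg_2=1-\int_\gamma^1 g_1g_2\,dz,
\]
which is the parallel formula.

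For the second representation, I would use a one-step Bellman decomposition of "attempt $e$ first". With $p=p_e(k_e)$ and $\beta=\beta_e$:
\[
\Psi_{G\mid e}(\gamma;\sigma\mid1)=\beta\bigl[p\,\Psi_G(\gamma;\sigma^{s}\mid1)+(1-p)\Psi_G(\gamma;\sigma^{f}\mid1)\bigr],
\]
where $\sigma^s,\sigma^f$ are the post-success and post-failure states. While $e$ is attempted, $G_2$ is frozen, so both post-states keep the same second-child state $\sigma_2$, and hence the same $X_2$. Applying the first representation at each post-state, and pulling $\beta,p$ inside the expectation, turns the bracket into $\E[\beta(p\Phi_1^{s}+(1-p)\Phi_1^{f})(\max\{\gamma,X_2\})]$. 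By the analogous Bellman identity inside $G_1$ this integrand is $\Psi_{G_1\mid e}(\cdot;\sigma_1\mid1)$.

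Edge cases need separate checks:
\begin{itemize}[nosep]
\item If success on $e$ completes $G_1$, then $G$ is completed and $G_2$ is pruned. The post-success value is $1=\E[1]$, which matches the constant profile for a breached $G_1$.
\item If failure locks $e$ out and makes $G_1$ infeasible, then $G$ reduces to $G_2$. The value is $\Phi_2(\gamma)=\E[\max\{\gamma,X_2\}]$, which matches $\Phi_1^f(s)=s$.
\end{itemize}
Both are consistent with the convention that a breached child has constant profile $1$ and a dead child has profile $s\mapsto s$.

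The main obstacle is the first representation at arbitrary reachable states. It is the Whittle-type "retirement value" formula for a two-component superprocess, and it relies implicitly on the optimality of an index rule inside $G$. I would therefore prove the whole lemma by induction on the number of remaining attempts in $G$: backward along the finite acyclic state graph, with the Bellman step above as the inductive step. The induction hypothesis is that the representation $\Psi_G=\E[\Phi_1(\max\{\gamma,X_2\})]$ already holds at all successor states. To close the loop at $\sigma$ itself, I would write $\Psi_G=\max\{\gamma,\max_e\Psi_{G\mid e}\}$ and use that $\Phi_1(s)=\max\{s,\max_e\Psi_{G_1\mid e}(s)\}$. One can then show the maximum commutes with the expectation, because by the statewise maximum-index property for $G_1$ a single tie-broken $e^*$ is optimal for all $s$ below the threshold of $G_1$. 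If that interchange proves delicate, the fallback is to cite the Gittins–Glazebrook–Weber superprocess result directly for this representation.
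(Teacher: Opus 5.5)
Your proposal is correct and has the same skeleton as the paper: both values are written as expectations in $\max\{\gamma,X_2\}$ (equivalently, Stieltjes integrals against $dg_2$, with the convention $g_2(1)\eqdef 1$), and the identity follows by subtracting. The difference lies in how the two representations are justified.

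\textbf{How the routes differ.} The paper takes $\Psi_G(\gamma;\sigma\mid1)=1-\int_\gamma^1 g_1g_2\,dz$ from a ``statewise version'' of Lemma~\ref{lem:parallel-composition-proof} and integrates by parts. It then asserts the fixed-first-action formula by a heuristic: conditional on $X_2=z$, the second branch acts as an outside option $\max\{\gamma,z\}$ for $G_1$. You instead derive the fixed-first-action formula from the unconstrained formula at the two successor states, via a one-step Bellman decomposition. This is legitimate because $G_2$ is frozen, so $X_2$ is unchanged. You then prove the unconstrained formula at every reachable state by backward induction, closing each step with the children's buyout-independent maximiser.

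\textbf{What each route buys.} The paper's argument is shorter. However, its composition lemma is justified by appeal to the superprocess representation in the proof of Theorem~\ref{thm:index-policy}, which itself depends on this lemma. Your route makes explicit what is actually needed: Condition~D for $G_1$ and $G_2$, i.e.\ Whittle's retirement formula for superprocesses. That condition is available as the structural induction hypothesis in the only place the lemma is used, Lemma~\ref{lem:statewise-parallel-closure}. Your early remark that the composition argument ``only uses'' independence and OR-completion is not accurate, since the retirement formula is not guaranteed for superprocesses without Condition~D. Your final paragraph correctly repairs this.

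\textbf{Two details needed to close the induction.} Let $\theta_i$ denote the retirement threshold of $G_i$ at $\sigma_i$.
\begin{enumerate}
\item The upper bound $\Psi_G(\gamma;\sigma\mid1)\le\E[\Phi_1(\max\{\gamma,X_2\})]$ must also cover controls $e\in\mathcal F_{G_2}(\sigma_2)$. For these, the Bellman step gives $\E[\Psi_{G_2\mid e}(\max\{\gamma,X_1\};\sigma_2\mid1)]\le\E[\Phi_2(\max\{\gamma,X_1\})]$. You then need the symmetry $\E[\Phi_2(\max\{\gamma,X_1\})]=\E[\max\{\gamma,X_1,X_2\}]=\E[\Phi_1(\max\{\gamma,X_2\})]$ for independent $X_1,X_2$. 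The cleanest fix is to state the induction hypothesis in the symmetric form $\E[\max\{\gamma,X_1,X_2\}]$.
\item The lower bound via the maximiser $e^*$ of $G_1$ works only when $\theta_1\ge\theta_2$. In that case $X_2\le\theta_2\le\theta_1$ almost surely, and $\Phi_1=\Psi_{G_1\mid e^*}$ on $[0,\theta_1]$ (using continuity at $\theta_1$).
\begin{itemize}
\item If $\theta_2>\theta_1$, you must use $G_2$'s maximiser and the symmetric form.
\item For $\gamma\ge\max\{\theta_1,\theta_2\}$, both sides equal $\gamma$.
\end{itemize}
\end{enumerate}
With these additions, the induction closes and the identity follows exactly as you describe.
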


\begin{proof}
By the statewise version of the parallel composition rule,
\begin{equation}
\Psi_G(\gamma;\sigma\mid1)=1-\int_\gamma^1g_1(z)g_2(z)\,dz.
\end{equation}
Because $\Phi_1$ is absolutely continuous with derivative $g_1$ almost everywhere, integration by parts for the Lebesgue--Stieltjes integral gives
\begin{equation}
\Psi_G(\gamma;\sigma\mid1)=\Phi_1(\gamma)g_2(\gamma)+\int_{(\gamma,1]}\Phi_1(z)\,dg_2(z).
\end{equation}
Indeed, $\Phi_1(1)=1$ and $g_2(1)=1$, so the boundary term at $1$ is equal to one.

The same transformation applies when the first action in $G_1$ is fixed to be $e$. To see this directly, let $X_2$ be the auxiliary threshold variable associated with $\Phi_2$, so that
\begin{equation}
\Pr(X_2\le z)=g_2(z).
\end{equation}
Conditional on $X_2=z$, the second branch acts as an outside option of value $\max\{\gamma,z\}$ for the first branch. Since attempting $e$ changes only the state of $G_1$, while $G_2$ remains frozen, the value obtained by fixing $e$ as the first action is
\begin{equation}
\Psi_{G\mid e}(\gamma;\sigma\mid1)=\mathbb E\!\left[\Psi_{G_1\mid e}\bigl(\max\{\gamma,X_2\};\sigma_1\mid1\bigr)\right].
\end{equation}
Writing this expectation using the distribution function $g_2$ yields
\begin{equation}
\Psi_{G\mid e}(\gamma;\sigma\mid1)=\Psi_{G_1\mid e}(\gamma;\sigma_1\mid1)g_2(\gamma)+\int_{(\gamma,1]}\Psi_{G_1\mid e}(z;\sigma_1\mid1)\,dg_2(z).
\end{equation}
Subtracting this identity from the corresponding identity for $\Psi_G$ gives
\begin{equation}
\Psi_G(\gamma;\sigma\mid1)-\Psi_{G\mid e}(\gamma;\sigma\mid1)=h_e(\gamma)g_2(\gamma)+\int_{(\gamma,1]}h_e(z)\,dg_2(z).
\end{equation}
Since the profiles are piecewise linear, the Stieltjes integral is simply a finite sum over the jumps of $g_2$.
\end{proof}

\begin{appendixlemma}[Parallel closure]
\label{lem:statewise-parallel-closure}
Let $G=\Par(G_1,\cdots,G_m)$. Suppose that the statewise maximum-index property holds for each child $G_i$. Then the statewise maximum-index property holds for $G$.
\end{appendixlemma}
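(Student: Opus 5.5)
I first reduce the $m$-ary case to the binary case. $\Par$ is associative under the \textsc{or} semantics, since $\Par(G_1,\dots,G_m)$ and $\Par(G_1,\Par(G_2,\dots,G_m))$ have the same frontier, pruning and completion events. So I would induct on $m$, applying the binary argument to $G_1$ and $G'\eqdef\Par(G_2,\dots,G_m)$, for which the property holds by the inner induction. By positive homogeneity (Property~\ref{a-lem:homog}) it suffices to take $r=1$, and the admissible continuation profile $R$ is common to all children because a $\Par$ node passes the same surrounding profile to each of them.

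Now fix a reachable state $\sigma=(\sigma_1,\sigma_2)$ of $G=\Par(G_1,G_2)$, with $\Phi_i$, $g_i$ as in Lemma~\ref{lem:parallel-action-gap}, and write $\bar\alpha_i\eqdef\overline\alpha_{G_i}(\sigma_i)$ (set to $0$ if $G_i$ is pruned). By the induction hypothesis, $\Phi_i(\gamma)=\gamma$ exactly for $\gamma\ge\bar\alpha_i$. Hence $g_i\equiv1$ on $[\bar\alpha_i,1]$ and, by convexity, $g_i(\gamma)<1$ for $\gamma<\bar\alpha_i$.

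The threshold of $G$ follows from the parallel composition rule:
\[
\Psi_G(\gamma;\sigma\mid1)-\gamma=\int_\gamma^1\bigl(1-g_1(z)g_2(z)\bigr)\,dz .
\]
This vanishes iff $g_1g_2=1$ a.e.\ on $[\gamma,1]$, that is, iff $\gamma\ge\max\{\bar\alpha_1,\bar\alpha_2\}$. Hence $\alpha_G(\sigma\mid1)=\overline\alpha_G(\sigma)$, and retirement is optimal at or above this value.

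For the optimal control, assume without loss of generality $\bar\alpha_1\ge\bar\alpha_2$, breaking ties by the fixed rule. Let $e^*=e^*_{G_1}(\sigma_1)$, which attains $\bar\alpha_1=\overline\alpha_G(\sigma)$. I would apply Lemma~\ref{lem:parallel-action-gap} with $e=e^*$. By the induction hypothesis, $h_{e^*}(z)=0$ for every $z<\bar\alpha_1$, since $e^*$ is optimal in $G_1$ throughout its continuation region. For $z\ge\bar\alpha_1$, both $\Phi_1(z)$ and $\Psi_{G_1\mid e^*}(z)$ should be handled as follows: the former equals $z$, and the measure $dg_2$ places no mass on $(\bar\alpha_1,1]\subseteq(\bar\alpha_2,1]$ because $g_2$ is constant there. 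Thus for $\gamma<\bar\alpha_1$ the gap is
\[
h_{e^*}(\gamma)g_2(\gamma)+\int_{(\gamma,\bar\alpha_1]}h_{e^*}\,dg_2 ,
\]
where only a possible atom of $dg_2$ at $\bar\alpha_1$ remains. That atom can be nonzero only if $\bar\alpha_2=\bar\alpha_1$, where $g_2$ may jump. I would handle it using right-continuity of $g_2$ and the convention $h_{e^*}(\bar\alpha_1)=0$: at the threshold, both attempting $e^*$ (with value $\ge z$) and retiring are optimal, so $\Psi_{G_1\mid e^*}(\bar\alpha_1)=\bar\alpha_1$ by continuity of the piecewise-linear profiles. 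Hence the gap is zero, and $e^*$ is optimal in $G$ for every $\gamma<\alpha_G(\sigma\mid1)$. Because tie-breaking is deterministic, the same $e^*$ works uniformly in $\gamma$.

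The main obstacle is making Lemma~\ref{lem:parallel-action-gap} rigorous when $G_1$ carries a non-constant continuation profile $R$ rather than exit reward $1$. The threshold-variable representation of Lemma~\ref{lem:threshold-representation} must remain valid: the frozen sibling must act as a random outside option $\max\{\gamma,X_2\}$ even though completion pays $R(\gamma)$. I would first verify that Lemma~\ref{lem:parallel-composition-proof} and the action-gap identity extend statewise with $R$ in place of $1$, using that completing either child triggers the same continuation $R$ and prunes the sibling. Only then would I run the gap computation above.
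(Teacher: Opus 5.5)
Your proposal is correct and follows the paper's proof essentially step for step. It uses the same binary reduction, reads off the threshold from $\Psi_G(\gamma;\sigma\mid 1)-\gamma=\int_\gamma^1(1-g_1g_2)\,dz$, and applies the action-gap identity with $h_{e^*}$ vanishing on $[0,\bar\alpha_1]$ (including the endpoint, by continuity) while $dg_2$ is supported on $[0,\bar\alpha_2]\subseteq[0,\bar\alpha_1]$.

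The paper does not treat the continuation-profile issue you flag separately; it simply normalises to $r=1$. It resolves pointwise: at fixed $\gamma$ the continuation is the scalar $R(\gamma)$, so $\Psi_G(\gamma;\sigma\mid R)=R(\gamma)\,\Psi_G\bigl(\gamma/R(\gamma);\sigma\mid 1\bigr)$, and the same identity holds with the first action fixed. Since $\gamma\mapsto\gamma/R(\gamma)$ is non-decreasing ($R$ is $1$-Lipschitz with $R(\gamma)\ge\gamma$), both the threshold and the maximising control transfer.
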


\begin{proof}
By positive homogeneity, it suffices to prove the result for normalised exit reward $r=1$. We first consider the binary case
\begin{equation}
G=\Par(G_1,G_2).
\end{equation}
Fix a reachable non-terminal state $\sigma=(\sigma_1,\sigma_2)$. If one child has already become impossible and has therefore been pruned, the parent reduces to the remaining child and the result follows immediately from the induction hypothesis. We may therefore assume that both children remain viable.

Define
\begin{equation}
\Phi_i(\gamma)\eqdef\Psi_{G_i}(\gamma;\sigma_i\mid1),\qquad g_i(\gamma)\eqdef\partial_\gamma^+\Phi_i(\gamma),
\end{equation}
and let
\begin{equation}
\theta_i\eqdef\max_{e\in\mathcal F_{G_i}(\sigma_i)}\alpha(e,k_e).
\end{equation}
By the induction hypothesis, $\theta_i$ is the retirement threshold of child $G_i$. Hence
\begin{equation}
\Phi_i(\gamma)=\gamma\quad\text{for every }\gamma\ge\theta_i,
\end{equation}
and therefore
\begin{equation}
g_i(\gamma)=1\quad\text{for every }\gamma>\theta_i.
\end{equation}

Let
\begin{equation}
\theta\eqdef\max\{\theta_1,\theta_2\}.
\end{equation}
The statewise parallel-composition formula gives
\begin{equation}
\Psi_G(\gamma;\sigma\mid1)=1-\int_\gamma^1g_1(z)g_2(z)\,dz.
\end{equation}
If $\gamma\ge\theta$, then $g_1(z)=g_2(z)=1$ throughout $(\gamma,1)$, and consequently
\begin{equation}
\Psi_G(\gamma;\sigma\mid1)=1-\int_\gamma^1 1\,dz=\gamma.
\end{equation}
Thus, retirement is optimal for every $\gamma\ge\theta$.

Now suppose that $\gamma<\theta$. Without loss of generality, assume that $\theta_1=\theta$. Since $\gamma<\theta_1$, the induction hypothesis gives $\Phi_1(\gamma)>\gamma$. Using
\begin{equation}
\Phi_1(\gamma)-\gamma=\int_\gamma^1\bigl(1-g_1(z)\bigr)\,dz
\end{equation}
and $0\le g_2\le1$, we obtain
\begin{align*}
\Psi_G(\gamma;\sigma\mid1)-\gamma
&=\int_\gamma^1\bigl(1-g_1(z)g_2(z)\bigr)\,dz\\
&\ge\int_\gamma^1\bigl(1-g_1(z)\bigr)\,dz\\
&=\Phi_1(\gamma)-\gamma\\
&>0.
\end{align*}
Therefore, continuing is strictly preferable whenever $\gamma<\theta$. It follows that
\begin{equation}
\alpha_G(\sigma\mid1)=\theta=\max_{e\in\mathcal F_G(\sigma)}\alpha(e,k_e).
\end{equation}

It remains to prove that one fixed maximum-index control is optimal throughout this continuation region. By the induction hypothesis for $G_1$, after applying the deterministic tie-breaking rule there exists
\begin{equation}
e^*(\sigma)\in\arg\max_{e\in\mathcal F_{G_1}(\sigma_1)}\alpha(e,k_e)
\end{equation}
such that
\begin{equation}
\Psi_{G_1\mid e^*(\sigma)}(z;\sigma_1\mid1)=\Phi_1(z)
\end{equation}
for every $z<\theta_1$. Both sides are continuous, so equality also holds at $z=\theta_1$. Consequently, the standalone action gap
\begin{equation}
h_{e^*(\sigma)}(z)\eqdef\Phi_1(z)-\Psi_{G_1\mid e^*(\sigma)}(z;\sigma_1\mid1)
\end{equation}
vanishes throughout $[0,\theta_1]$.
Since $g_2(z)=1$ for every $z>\theta_2$, the Stieltjes measure $dg_2$ is supported on $[0,\theta_2]$. Because \(\theta_2\le\theta_1\), this support is contained in the interval on which $h_{e^*(\sigma)}$ vanishes. For every $\gamma<\theta_1$, Lemma~\ref{lem:parallel-action-gap} therefore gives
\begin{align}
\Psi_G(\gamma;\sigma\mid1)-\Psi_{G\mid e^*(\sigma)}(\gamma;\sigma\mid1)
&=h_{e^*(\sigma)}(\gamma)g_2(\gamma)
+\int_{(\gamma,1]}h_{e^*(\sigma)}(z)\,dg_2(z)\\
&=0. \nonumber
\end{align}
Hence
\begin{equation}
\Psi_{G\mid e^*(\sigma)}(\gamma;\sigma\mid1)=\Psi_G(\gamma;\sigma\mid1)
\end{equation}
for every $\gamma<\alpha_G(\sigma\mid1)$. The same maximum-index control is therefore optimal independently of the buyout throughout the continuation region.

The result for an $m$-ary parallel node follows by repeated binary composition. Equivalently, choose a child $G_j$ whose maximum frontier index is largest, combine the remaining children into
\begin{equation}
H\eqdef\Par(G_1,\cdots,G_{j-1},G_{j+1},\cdots,G_m),
\end{equation}
and apply the binary argument to $\Par(G_j,H)$. Associativity of parallel composition gives
\begin{equation}
\alpha_G(\sigma\mid1)=\max_{1\le i\le m}\max_{e\in\mathcal F_{G_i}(\sigma_i)}\alpha(e,k_e),
\end{equation}
and the tie-broken maximum-index control in a maximising child remains optimal for every buyout below this threshold. Positive homogeneity extends the result to every exit reward $r>0$.
\end{proof}

\begin{proof}[Proof of Proposition~\ref{prop:statewise-max-index}]
The result follows by structural induction over the SP parse tree. The leaf case is Lemma~\ref{lem:statewise-leaf}; series and parallel composition preserve the property by Lemmas~\ref{lem:statewise-series-closure} and \ref{lem:statewise-parallel-closure}.
\end{proof}

\section{On Indexability of the Attacker's Problem and the Gittins Index Policy}
\subsection{Background on the Gittins Index}
\label{app:gittins-background}

The Gittins index is a scalar priority assigned to the current state of a discounted stochastic process. In the classical multi-armed bandit setting, several independent processes, or \textit{arms}, are available. At each decision epoch, exactly one arm is selected; the selected arm generates reward and changes state, while every unselected arm remains frozen. Future rewards are discounted. The Gittins-index theorem states that there exists a state-dependent scalar index for each arm such that an optimal policy always selects an arm with the largest current index \citep{gittins1979bandit}. Thus, a dynamic optimisation problem over the joint state of all arms is reduced to comparing one scalar per arm. Below are several equivalent interpretations of the index which are common.

\begin{enumerate}
    \item \textbf{Priority Interpretation:} The index measures the attractiveness of operating an arm in its current state, and only the ordering of the indices is needed to determine the next action. Different normalisations may assign different numerical values to an index while preserving the same ordering.
    \item \textbf{Outside/Retirement Option:} Suppose that the decision-maker may either continue operating the arm or retire and receive a payoff $\gamma$. For small values of $\gamma$, continuing is preferable, whereas for sufficiently large values retirement is optimal. Under the usual discounted-bandit assumptions, the Gittins index is the critical value of $\gamma$ at which retirement first becomes optimal. This is often called the \textit{retirement}, \textit{calibration}, or \textit{break-even} interpretation of the index.
    \item \textbf{Reward-to-Work Ratio} In a reward-rate formulation, let $R_\tau(x)$ denote the expected discounted reward obtained by operating an arm from state $x$ until a stopping time $\tau$, and let $W_\tau(x)$ denote the corresponding expected discounted amount of active time. One common normalisation of the Gittins index is
        \begin{equation}
            \nu(x):=\sup_{\tau>0}\frac{R_\tau(x)}{W_\tau(x)}.
        \end{equation} Equivalently, $\nu(x)$ is the largest charge per unit of discounted work for which it remains worthwhile to operate the arm. The ratio, retirement, and fair-charge interpretations describe the same break-even comparison under different normalisations.
\end{enumerate}  

The retirement interpretation is the \textit{most natural} for the attacker problem studied here. The outside option represents the value of abandoning the current attack subcomponent and pursuing alternatives elsewhere on the frontier. For a subcomponent $G$ in state $\sigma$, we therefore introduce a buyout $\gamma$ and define its index as the \textit{smallest buyout} at which retirement becomes optimal. This calibrated threshold is the quantity denoted by $\alpha_G(\sigma\mid r)$ below.

There is, however, one distinction from the classical bandit model. A standard arm has a single action whenever it is selected, whereas continuing with an SP subcomponent requires the attacker to choose among several frontier controls. Such an object is called a \textit{superprocess}. To obtain a valid index for a superprocess, it is not enough that a retirement threshold exists. An extra condition is that the internal control used when continuing \textit{must also be selectable independently of the buyout}. The remainder of this section proves that every SP subcomponent has this buyout-independent continuation property. Consequently, each subcomponent admits a well-defined Gittins-type index, and greedily selecting a maximum-index frontier control is optimal.

\subsection{Indexability of Controlled Superprocesses}
\label{app:superprocess-indexability}

\paragraph{Controlled superprocesses.} We first state the controlled-process extension of the classical Gittins-index framework used in our proof.
A standard bandit arm has a state but no internal control choice, that is whenever the arm is selected, its reward and state transition are determined by its current state. A \textit{superprocess} generalises an arm by allowing an additional internal decision. Formally, a superprocess has a state $x$, an admissible internal-control set $\mathcal U(x)$, and, for each $u\in\mathcal U(x)$, a reward and state-transition rule. Selecting the superprocess therefore requires both selecting the process itself and choosing one of its internal controls.

\paragraph{Alternative superprocesses.} A \textit{family of alternative superprocesses} consists of superprocesses indexed by $i\in\{1,\cdots,n\}$, with joint state $(x_1,\cdots,x_n)$. At each decision epoch, the decision-maker selects exactly one superprocess $i$ and then chooses an internal control $u\in\mathcal U_i(x_i)$. The selected superprocess generates the current reward and changes state.

\paragraph{Simple families.} The family is called \textit{simple} when every unselected superprocess remains frozen. More precisely, suppose that the joint state before a decision is $(x_1,\cdots,x_n)$ and that superprocess $i$ is selected. The selected state $x_i$ may change according to the transition rule associated with the chosen internal control, whereas the state of every unselected superprocess remains equal to its pre-decision value. Thus, writing $(x_1',\cdots,x_n')$ for the joint state after the transition,
\begin{equation}
x_j'=x_j\qquad\text{for every }j\ne i.
\end{equation}
The superprocesses are therefore coupled only through the restriction that exactly one of them may be selected at each decision epoch. This is the controlled analogue of the frozen-arm assumption in the classical multi-armed bandit problem.

\paragraph{Retirement thresholds.} To determine whether a scalar priority can be assigned to the state of a superprocess, consider its auxiliary retirement problem. Fix a superprocess in state $x$ and offer the decision-maker a retirement payoff $\gamma$. Let $V(\gamma;x)$ denote the optimal value when the decision-maker may either retire for $\gamma$ or continue operating the superprocess. Suppose that the retirement set has threshold form (i.e there exists a scalar $\alpha(x)$ such that continuing is strictly preferable for $\gamma<\alpha(x)$, while retirement is optimal for $\gamma\ge\alpha(x)$).

\paragraph{Buyout-independent internal control.} For an ordinary bandit arm, this threshold is sufficient to define an index because the arm has only one continuation action. For a superprocess, however, the internal control chosen while continuing could depend on the competing retirement payoff. The following additional property rules out such dependence.

\begin{appendixdef}[Buyout-independent control condition]
\label{def:buyout-independent-control}
A superprocess satisfies the \textit{buyout-independent control
condition} if, for every state $x$, there exists an internal control
$g(x)\in\mathcal U(x)$ such that $g(x)$ is optimal for every buyout
$\gamma<\alpha(x)$.
\end{appendixdef}

Thus, the decision of whether to continue may depend on the buyout, but, conditional on continuing, the same state-dependent internal control may be used throughout the continuation region. The buyout-independent control condition is referred to as ``\textit{Condition D}" by \citet[Chapter~4]{doi:https://doi.org/10.1002/9780470980033.ch4}.

\paragraph{The Superprocess Index Theorem.} We use the following form of the superprocess index theorem.

\begin{appendixtheorem}[Gittins Index Theorem for Simple Alternative Superprocesses]
\label{thm:superprocess-index-theorem}
Consider a simple family of alternative discounted superprocesses. Suppose that, for every superprocess and every reachable state, the retirement set has threshold form and the buyout-independent control condition holds. Then each superprocess state admits a scalar Gittins index. Under the retirement normalisation, this index is its critical retirement value $\alpha(x)$.

Moreover, there exists an optimal policy that, at every decision epoch, selects a superprocess satisfying
\begin{equation}
i^*\in\arg\max_{1\le i\le n}\alpha_i(x_i)
\end{equation}
and, within the selected superprocess, applies a
buyout-independent maximizing control $g_i(x_i)$.
\end{appendixtheorem}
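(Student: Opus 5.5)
The plan is to follow Whittle's retirement-option proof of the Gittins theorem, adapted to superprocesses. The buyout-independent control condition (Definition~\ref{def:buyout-independent-control}) is what reduces each superprocess to an ordinary arm. The setting is semi-Markov: an action taking time $\ell$ discounts by $e^{-\lambda\ell}$. The retirement-option machinery of Appendix~\ref{app:calibrated_value_profile} works unchanged in that setting, so I treat discounting generically.

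\textbf{Step 1 (reduction to ordinary arms).} For each superprocess $i$, let $V_i(m;x_i)$ be its optimal value with a standing retirement payoff $m$. By hypothesis, $V_i(m;x_i)=m$ exactly for $m\ge\alpha_i(x_i)$, and the control $g_i(x_i)$ attains $V_i(m;x_i)$ for every $m<\alpha_i(x_i)$. I then define the ordinary bandit arm $\tilde{x}_i$ that always applies $g_i$. I claim its retirement value $\tilde V_i(m;x_i)$ coincides with $V_i(m;x_i)$ for all $m$. The argument is backward induction over reachable states, which are finite and acyclic here. Trivially $\tilde V_i\le V_i$. Conversely, at any state reached under $g_i$, either retirement is optimal (both values equal $m$) or $g_i$ is optimal (the Bellman equations agree). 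Consequently the Gittins index of $\tilde{x}_i$, taken as the smallest $m$ with $\tilde V_i(m;x_i)=m$, equals $\alpha_i(x_i)$.

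\textbf{Step 2 (index policy optimal among $g$-policies).} Applying the classical Gittins theorem to the family $\{\tilde{x}_i\}$ shows that the policy choosing $i^*\in\arg\max_i\alpha_i(x_i)$ and applying $g_{i^*}$ is optimal among policies that use the controls $g_i$. The classical theorem applies because the family is simple, so frozen arms hold. Whittle's form of that theorem also identifies the value of this policy with retirement payoff $M$:
\begin{equation}
\Phi(M;x)=M-\int_M^{\infty}\prod_{i}\partial_m V_i(m;x_i)\,dm .
\end{equation}
Here $\partial_m V_i\equiv1$ above $\alpha_i$, and the integral is finite, reducing to $[M,1]$ when rewards are bounded by $1$. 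Equivalently, $\Phi(M;x)=\E[\max\{M,\max_i X_i\}]$ with independent thresholds $X_i$, as in Lemma~\ref{lem:threshold-representation}.

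\textbf{Step 3 (no other internal control does better).} This is the step I expect to be the main obstacle. Optimality of the index policy over \emph{all} policies, including ones using non-$g$ controls, follows if $\Phi$ satisfies the Bellman inequality for every action. Namely, for every $i$ and every $u\in\mathcal U_i(x_i)$, the one-step value of applying $u$ to superprocess $i$ and continuing with $\Phi$ must not exceed $\Phi(M;x)$.

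I would prove this by the same conditioning used in Lemma~\ref{lem:parallel-action-gap}. Condition on the thresholds $X_j$ of the frozen superprocesses $j\ne i$ and set $Y=\max\{M,\max_{j\ne i}X_j\}$. Then $\Phi(M;x)=\E[V_i(Y;x_i)]$. Applying $u$ changes only $x_i$, because the family is simple, so the post-action value is $\E[Q_i^u(Y;x_i)]$, where $Q_i^u$ is the value of applying $u$ once in the single-superprocess retirement problem. Since $V_i(m;x_i)\ge Q_i^u(m;x_i)$ for every $m$ and $u$ by the Bellman equation of the single superprocess, integrating over $Y$ gives the inequality.

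The delicate point is justifying $\Phi(M;x')=\E[V_i(Y;x_i')]$ at the \emph{post-transition} state. That identity requires the product formula to hold at every state, which comes from Step 2 applied at every reachable state. With the inequality established, the standard verification argument (finite horizon, by backward induction) gives that $\Phi$ upper-bounds every policy's value. Since the index policy attains $\Phi$, it is optimal, and the index is $\alpha(x)$ as claimed.
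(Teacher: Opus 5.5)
Your proof is correct in substance, but it differs from the paper because the paper does not prove this theorem. It states it as the known superprocess index theorem under Gittins--Glazebrook--Weber's Condition~D and cites their Chapter~4. You instead reconstruct a self-contained Whittle-style argument in three steps:
\begin{enumerate}
\item Condition~D lets you replace each superprocess by the ordinary arm that always applies $g_i$, without changing its retirement function.
\item The classical Whittle result for that restricted family gives the index-policy value $\Phi(M;x)=\E[\max\{M,\max_i X_i\}]$.
\item Conditioning on the independent thresholds of the frozen processes, $\Phi(M;x)=\E[V_i(Y;x_i)]$, turns the single-process inequality $Q_i^u\le V_i$ into the Bellman inequality for every action, including non-$g$ controls.
\end{enumerate}
Step~3 is the same integration-by-parts device the paper uses in Lemma~\ref{lem:parallel-action-gap} and Lemma~\ref{lem:statewise-parallel-closure}.

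What your route buys is transparency: it shows exactly where frozenness and buyout-independence enter, and it ties the cited theorem to machinery already in the appendix. It still rests on the classical Whittle formula for ordinary arms. The citation is shorter, but it leaves the reader to check that the paper's semi-Markov, terminal-reward setting fits the textbook hypotheses, including zero-duration controls with $\beta_e=1$.

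Three small points:
\begin{itemize}
\item Your displayed formula $\Phi(M;x)=M-\int_M^\infty\prod_i\partial_m V_i\,dm$ is misstated. The integrand equals $1$ above $\max_i\alpha_i(x_i)$, so the integral diverges. It should read $\Phi(M;x)=B-\int_M^B\prod_i\partial_m V_i(m;x_i)\,dm$ for any $B\ge\max_i\alpha_i(x_i)$, equivalently $M+\int_M^\infty\bigl(1-\prod_i\partial_m V_i\bigr)\,dm$. This agrees with the expectation form you actually use.
\item Your backward inductions assume a finite acyclic state space. That covers the paper's application, but for general discounted superprocesses you would instead use uniqueness of fixed points of the contracting Bellman operators.
\item Step~3 needs Condition~D at post-transition states reached through non-$g$ controls. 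The hypothesis on ``every reachable state'' supplies this, but it is worth saying explicitly.
\end{itemize}
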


The theorem separates the proof of index optimality into two tasks. First, one must show that each process has a \textit{scalar retirement threshold}. Second, because the process has internal controls, one must show that a \textit{single internal control remains optimal throughout the continuation region}. The preceding statewise maximum-index property establishes both requirements for SP attack subcomponents.

\subsection{Indexability of SP Attack Subcomponents}
\label{app:sp-indexability}

\paragraph{Superprocess representation.}
Fix an SP subcomponent $G$ in a reachable state $\sgn$, and write $\mathcal F_G(\sgn)$ for the controls of $G$ currently on the attack frontier. We regard $G$ as a controlled superprocess whose internal controls are the elements of $\mathcal F_G(\sgn)$. Selecting an internal control $e\in\mathcal F_G(\sgn)$ means attempting $e$ once. The resulting success or failure updates the local state of $e$ and may trigger a breach, lockout, impossibility propagation, or pruning according to the SP semantics.

At a $\Par$ node, the viable child subcomponents are alternatives to each other. That means that the attacker operates one child at a time, while every unselected child remains frozen. If the selected child compromises the parallel parent, the remaining children are pruned, and their subsequent states are irrelevant. At a $\Ser$ node, only the current available child is active, while every later child remains frozen in its fresh state until the preceding child is breached. Hence, before termination or pruning, an unselected subcomponent does not evolve. The attacker problem therefore satisfies the frozen-process requirement of a simple family of alternative superprocesses.

\paragraph{Verification of the indexability conditions.}
The preceding statewise maximum-index property establishes both conditions required by Theorem~\ref{thm:superprocess-index-theorem}.

\begin{appendixprop}[Indexability of SP subcomponents]
\label{prop:sp-subcomponent-indexability}
Fix an SP subcomponent $G$ and a reachable non-terminal state $\sgn$. For each $e\in\mathcal F_G(\sgn)$, let $k_e$ denote the failure count of $e$ in state $\sgn$. Under the normalised exit reward $r=1$, the auxiliary retirement problem for $G$ has threshold
\begin{equation}
\alpha_G(\sgn\mid1)=\max_{e\in\mathcal F_G(\sgn)}\alpha(e,k_e).
\end{equation}
Moreover, after fixing a deterministic tie-breaking rule, there exists a control
\begin{equation}
e_G^*(\sgn)\in\arg\max_{e\in\mathcal F_G(\sgn)}\alpha(e,k_e)
\end{equation}
that is optimal for every buyout $\gamma<\alpha_G(\sgn\mid1)$. Consequently, every SP subcomponent satisfies the buyout-independent control condition.
\end{appendixprop}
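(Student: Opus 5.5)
This proposition is essentially a restatement of Proposition~\ref{prop:statewise-max-index}, specialised to the normalised exit reward $r=1$ and phrased in the language of Definition~\ref{def:buyout-independent-control}. The plan is therefore to invoke that result directly and then verify the correspondence of definitions.

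First, fix $G$ and a reachable non-terminal state $\sgn$. Apply Proposition~\ref{prop:statewise-max-index} with $r=1$. This gives both the threshold identity $\alpha_G(\sgn\mid1)=\max_{e\in\mathcal F_G(\sgn)}\alpha(e,k_e)$ and the existence of a tie-broken maximiser $e_G^*(\sgn)$ that is optimal for every $\gamma<\alpha_G(\sgn\mid1)$. That proposition was itself obtained by structural induction over the parse tree:
\begin{itemize}
\item the leaf case is Lemma~\ref{lem:statewise-leaf};
\item the series case is Lemma~\ref{lem:statewise-series-closure}, which works because there is exactly one active child and only the continuation profile changes;
\item the parallel case is Lemma~\ref{lem:statewise-parallel-closure}, which uses the action-gap identity of Lemma~\ref{lem:parallel-action-gap}. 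The key point there is that $dg_2$ is supported on $[0,\theta_2]\subseteq[0,\theta_1]$, where the gap $h_{e^*}$ vanishes.
\end{itemize}
So no new inductive work is required here.

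Second, translate these conclusions into the superprocess language. The superprocess state is $x=\sgn$, and the internal-control set is $\mathcal U(x)=\mathcal F_G(\sgn)$. The retirement problem $V(\gamma;x)$ is exactly $\Ps_G(\gamma;\sgn\mid1)$.
\begin{itemize}
\item Threshold form of the retirement set follows from Corollary~\ref{a-cor:retirement-threshold}: continuation is strictly better below $\alpha_G$, and retirement is optimal at or above it.
\item Taking $g(x)\eqdef e_G^*(\sgn)$ gives a control that is optimal for every $\gamma<\alpha(x)=\alpha_G(\sgn\mid1)$. This is precisely Definition~\ref{def:buyout-independent-control}.
\end{itemize}
Since $\sgn$ was an arbitrary reachable non-terminal state, the condition holds at every such state. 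Terminal states, where $\mathcal F_G(\sgn)=\emptyset$, impose no requirement, because $\mathcal U(x)$ is empty and forced retirement applies.

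The only delicate point is making sure the quantifiers line up with Proposition~\ref{prop:statewise-max-index}. That proposition must hold for \emph{every} admissible continuation profile, not just the constant reward $1$, because the subcomponent $G$ may sit inside a $\Ser$ parent. We rely on the strengthened induction hypothesis stated before Lemma~\ref{lem:statewise-series-closure}. In that hypothesis, each index $\alpha(e,k_e)$ is computed against the continuation profile $R_e$ from $e$'s forward cone. The identity $\alpha_G(\sgn\mid r)=r\,\alpha_G(\sgn\mid1)$ from Corollary~\ref{a-cor:retirement-threshold} then shows that normalising to $r=1$ loses no generality.
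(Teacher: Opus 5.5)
Your proposal is correct and follows essentially the same route as the paper: the paper's proof also applies Proposition~\ref{prop:statewise-max-index} to obtain the threshold $\alpha_G(\sigma\mid 1)=\max_{e\in\mathcal F_G(\sigma)}\alpha(e,k_e)$ and the tie-broken maximiser $e_G^*(\sigma)$, and then reads this off as the buyout-independent control condition of Definition~\ref{def:buyout-independent-control}. Your extra citation of Corollary~\ref{a-cor:retirement-threshold} for the threshold form, and your remarks on terminal states and on the strengthened continuation-profile induction hypothesis, elaborate the same argument rather than change it.
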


\begin{proof}
By Proposition~\ref{prop:statewise-max-index}, continuing with $G$ is strictly preferable when
\begin{equation}
\gamma<\max_{e\in\mathcal F_G(\sgn)}\alpha(e,k_e),
\end{equation}
whereas retirement is optimal at and above this value. Hence the retirement set has threshold form, with
\begin{equation}
\alpha_G(\sgn\mid1)=\max_{e\in\mathcal F_G(\sgn)}\alpha(e,k_e).
\end{equation}
The same proposition provides a tie-broken maximum-index control $e_G^*(\sgn)$ that is optimal for every buyout below this threshold. The optimal internal control therefore depends on the subcomponent state $\sgn$, but not on the buyout within the continuation region. This is exactly the buyout-independent control condition of Definition~\ref{def:buyout-independent-control}.
\end{proof}

\paragraph{Identification of the subcomponent index.}
Proposition~\ref{prop:sp-subcomponent-indexability} verifies the hypotheses of Theorem~\ref{thm:superprocess-index-theorem}. Therefore, every reachable SP subcomponent state admits a well-defined scalar Gittins index. Under the retirement normalisation, this index is its critical retirement value,
\begin{equation}
\alpha_G(\sgn\mid1)=\max_{e\in\mathcal F_G(\sgn)}\alpha(e,k_e).
\end{equation}
Thus, $\alpha(e,k_e)$ is the scalar priority of the frontier control state $(e,k_e)$, while their maximum is the Gittins index of the active SP subcomponent.

\subsection{Dependence on the Forward Cone}
\label{app:index-locality}

\paragraph{Forward-cone continuation.}
For a frontier control $e$, let $R_e:[0,1]\to[0,1]$ denote the calibrated continuation profile obtained after successfully compromising $e$. This profile includes every subcomponent that must subsequently be breached before the terminal reward is obtained. We call this downstream structure the \textit{forward cone} of $e$.

\begin{appendixprop}[Locality of control indices]
\label{prop:index-locality}
Fix a frontier control $e$ with failure count $k_e$. Its index $\alpha(e,k_e)$ depends only on the local parameters and failure count of $e$ and on the calibrated continuation profile $R_e$ of its forward cone. In particular, it is independent of the current states of controls in parallel sibling subcomponents.
\end{appendixprop}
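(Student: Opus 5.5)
The index is defined through Lemma~\ref{lem:intermediate-control-index} as the smallest solution of $\alpha=c(e,k_e)R_e(\alpha)$, so the plan is to show that every ingredient of this equation depends only on the claimed data. First, $c(e,k_e)$ is a function of $\beta_e=e^{-\lambda\ell_e}$ and $p_e(k_e)$, so it involves only the local parameters $(\ell_e,p_e,q_e)$ and the failure count. Second, the smallest-root characterisation is purely a functional of $c(e,k_e)$ and the map $R_e$. Hence it suffices to show that $R_e$ itself is determined by the forward cone, and in particular is unaffected by sibling states.

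To identify $R_e$, I would descend the parse tree from the root to the leaf $\Ctrl(e)$, as in Pass~2. I would argue by induction on depth that the profile handed to each node $G$ on this path depends only on the structure of $G$'s ancestors and on the \emph{fresh} profiles $\Psh_{G'}$ of the $\Ser$ successors encountered. There are two cases.
\begin{itemize}
\item At a $\Ser$ ancestor $\Ser(G_1,\dots,G_m)$ whose $j$-th child contains $e$, Lemma~\ref{lem:series-composition-proof} together with Property~\ref{a-lem:homog} gives the continuation as the folded suffix $\rho_{j+1}$ composed with the outer profile. By the series semantics and eager pruning, $G_{j+1},\dots,G_m$ are frozen in their fresh states $\sigma^0$ whenever $e$ is on the frontier. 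So this continuation depends only on fixed downstream structure, not on any current state.
\item At a $\Par$ ancestor, breaching the child containing $e$ completes the parent and prunes every sibling. The continuation passed down is therefore exactly the parent's continuation, and sibling states never enter.
\end{itemize}
Composing these steps, $R_e$ is the series fold of the fresh profiles of the forward cone with the constant $1$ at the root.

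The main obstacle is the subtle point that ``the remainder of the active child after breaching $e$'' in the $\Ser$ closure (Lemma~\ref{lem:statewise-series-closure}) could seem to involve partially attacked controls inside the enclosing subcomponent. I would resolve this by noting that the forward cone consists only of subcomponents strictly downstream of $e$ in series. Any control upstream of or parallel to $e$ is either already breached or pruned once $e$ succeeds, or is irrelevant to completion through $e$. Thus controls in the forward cone are not yet exposed and sit at failure count zero.

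Finally, I would invoke Proposition~\ref{prop:statewise-max-index} to confirm that the $\alpha(e,k_e)$ so defined is the same quantity used in the index policy. This yields independence from sibling states.
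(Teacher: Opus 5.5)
Your proposal is correct and follows essentially the same route as the paper. It reduces to the fixed-point characterisation $\alpha(e,k_e)=c(e,k_e)R_e(\alpha(e,k_e))$, notes that $c(e,k_e)$ is purely local, and walks the root-to-leaf path showing that $\Ser$ ancestors add only downstream suffixes (frozen in their fresh states, a point you make more explicitly than the paper) while $\Par$ ancestors pass the parent's continuation through unchanged, so sibling states never enter $R_e$ — the final appeal to the statewise maximum-index proposition is harmless but unnecessary.
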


\begin{proof}
By Lemma~\ref{lem:intermediate-control-index}, the index of $e$ is characterised by
\begin{equation}
\alpha(e,k_e)=c(e,k_e)R_e\bigl(\alpha(e,k_e)\bigr),
\end{equation}
where $c(e, k_e)$ was defined earlier. The factor $c(e,k_e)$ depends only on the duration, success probability, and failure count of $e$. It therefore remains to identify the information contained in $R_e$.

After a successful attempt on $e$, the attacker continues only through subcomponents that lie downstream of $e$. At a $\Ser$ ancestor, the continuation appends the remaining series suffix, which belongs to the forward cone of $e$. At a $\Par$ ancestor, breaching the selected child breaches the entire parallel subcomponent and prunes its remaining children. Those parallel siblings are therefore not part of the continuation obtained after breaching the selected child, and their current states do not enter $R_e$.

Applying this argument successively along the path from $\Ctrl(e)$ to the root shows that $R_e$ is determined entirely by the downstream series continuations and the continuation supplied above the forward cone. No current state of a parallel sibling enters its construction. Since $\alpha(e,k_e)$ is determined by $c(e,k_e)$ and $R_e$, the claimed locality follows.
\end{proof}

\paragraph{Consequence for index evaluation.}
The proposition separates the state information needed to evaluate an index from the full attacker state. The failure count $k_e$ supplies the local control state, while $R_e$ summarises the entire relevant downstream continuation. States belonging to alternative parallel branches need not be included. This is the locality property asserted in Theorem~\ref{thm:index-policy}.

\subsection{Optimality of the Gittins Index Policy and Proof of Theorem 1}
\label{app:gittins-policy-optimality}

\paragraph{Greedy index policy.}
We now apply the controlled-superprocess theorem to the full attacker problem.

\begin{proof}[Proof of Theorem~\ref{thm:index-policy}]
Fix a defender allocation $\ell$ and a reachable non-terminal attacker state $\sgn$. At every $\Par$ node, the viable child subcomponents form a simple family of alternative superprocesses: the attacker operates one child, while every unselected child remains frozen until it is selected or pruned. At every $\Ser$ node, only the current available child is active. Proposition~\ref{prop:sp-subcomponent-indexability} verifies that every active SP subcomponent has a scalar retirement threshold and satisfies the buyout-independent control condition. Theorem~\ref{thm:superprocess-index-theorem} therefore, implies that an optimal action selects an active subcomponent with largest Gittins index and applies its buyout-independent maximising control.

By Proposition~\ref{prop:sp-subcomponent-indexability}, the Gittins index of a subcomponent $G$ in state $\sgn$ is
\begin{equation}
\alpha_G(\sgn\mid1)=\max_{e\in\mathcal F_G(\sgn)}\alpha(e,k_e).
\end{equation}
Consequently, maximising over the active subcomponents is equivalent to maximising over the current attack frontier. After fixing a deterministic tie-breaking rule, the selected internal control may therefore be chosen as
\begin{equation}
\pi_\ell^*(\sgn)\in\arg\max_{e\in\mathcal F(\sgn)}\alpha(e,k_e).
\end{equation}
If the maximum index is positive, the statewise maximum-index property shows that this control is optimal at the root buyout $\gamma=0$. If the maximum index is zero, the calibrated value is also zero, and a maximum-index control remains optimal.

Applying the same rule after every state transition defines a deterministic Markov policy. Since the attacker state process is finite and acyclic, and the selected action is Bellman-optimal at every reachable state, this policy is optimal. Finally, setting the root buyout to $\gamma=0$ and the terminal exit reward to $r=1$ recovers the original attacker problem. Hence, there exists an optimal attacker policy satisfying
\begin{equation}
\pi_\ell^*(\sgn)\in\arg\max_{e\in\mathcal F(\sgn)}\alpha(e,k_e)
\end{equation}
at every reachable non-terminal state.
\end{proof}

\section{Supplementary Material for Gittins-Index Computation by the 2-Phase Fold Algorithm}
\label{app:index-computation}
Having first established basic properties of calibrated value profiles in \Cref{app:calibrated_value_profile}, for this section, we give a technical review of the complete folding algorithm to compute the Gittins indices of the game. Throughout this section, fix a defender allocation $\ell$ and define $\beta_e\eqdef \exp{(-\lambda\ell_e)}$. For a subcomponent $G$, let $\sigma_G^0$ denote its fresh initial state.

\subsection{2-Phase Folding Algorithm to compute the Gittins Index}
\label{app:index-algorithm}

We now combine the preceding results into a two-pass algorithm. The first pass computes the intrinsic profile of every subcomponent. The second pass propagates the continuation supplied by the surrounding parse-tree context and recovers the continuation profile $R_e$ of every control. For a profile $f$ and a continuation profile $r$, define their perspective composition pointwise, whenever $r(\gamma)>0$, by
\begin{equation}
\mathcal P_f[r](\gamma)\eqdef r(\gamma)f\left(\frac{\gamma}{r(\gamma)}\right).
\end{equation}
At a point where $\gamma=r(\gamma)=0$, the expression is interpreted by continuity. By homogeneity, $\mathcal P_f[r]$ is the value of a subcomponent with intrinsic profile $f$ when successful breach yields the continuation reward $r(\gamma)$.

\paragraph{Pass 1: Bottom-up profile construction. (\Cref{alg:profile})}
Traverse the SP parse tree $\mathcal T$ from the leaves to the root. At a leaf $\Ctrl(e)$, construct the intrinsic profile $\widehat\Psi_e$ from the recursion in Lemma~\ref{lem:isolated-control-index} with $r=1$ and initial state $k=0$.

At a $\Par$ node $G=\Par(G_1,\cdots,G_m)$, merge the sorted breakpoints of the child profiles. On each interval between consecutive breakpoints, multiply the active child slopes and integrate backward from $\widehat\Psi_G(1)=1$, as prescribed by Lemma~\ref{lem:parallel-composition-proof}.

At a $\Ser$ node $G=\Ser(G_1,\cdots,G_m)$, construct the suffix profiles backward. Set $\rho_{m+1}(\gamma)\eqdef 1$ and compute
\begin{equation}
\rho_j(\gamma)=\mathcal P_{\widehat\Psi_{G_j}}[\rho_{j+1}](\gamma),\qquad j=m,m-1,\cdots,1.
\end{equation}
Store the suffix profiles $\rho_j$ and set $\widehat\Psi_G\eqdef \rho_1$. The profile at the root satisfies
\begin{equation}
V_{\mathcal N}^*(\ell)=\widehat\Psi_{\mathcal N}(0).
\end{equation}

Refer to \Cref{alg:profile} for pseudocode.

\begin{algorithm}[h!]
\caption{\textsc{Profile}$(G)$: constructs and stores the intrinsic profile $\Psh_G$ (Pass 1).}
\label{alg:profile}
\begin{algorithmic}[1]
\Function{Profile}{$G$}
  \If{$G=\Ctrl(e)$}
     \State $\Psh\gets(s\mapsto s)$ \Comment{state $k=q_e$ is locked out}
     \For{$k=q_e-1,\dots,0$}
        \State $\Psh\gets\bigl(s\mapsto\max\{s,\beta_e[p_e(k)+(1-p_e(k))\Psh(s)]\}\bigr)$
     \EndFor
     \State Store $\Psh_G\gets\Psh$
  \ElsIf{$G=\Par(G_1,\dots,G_n)$}
     \For{$i=1,\dots,n$}
        \State $\Psh_i\gets\Call{Profile}{G_i}$
     \EndFor
     \State $\Psh_G\gets\bigl(s\mapsto1-\int_s^1\prod_{i=1}^n\partial_z^+\Psh_i(z)\,dz\bigr)$
  \Else\Comment{$G=\Ser(G_1,\dots,G_n)$}
     \For{$i=1,\dots,n$}
        \State $\Psh_i\gets\Call{Profile}{G_i}$
     \EndFor
     \State $\Psh_{\mathrm{acc}}\gets(s\mapsto1)$
     \For{$i=n,\dots,1$}
        \State $\Psh_{\mathrm{acc}}\gets\bigl(s\mapsto\Psh_{\mathrm{acc}}(s)\Psh_i(s/\Psh_{\mathrm{acc}}(s))\bigr)$
     \EndFor
     \State Store $\Psh_G\gets\Psh_{\mathrm{acc}}$
  \EndIf
  \State \Return $\Psh_G$
\EndFunction
\State $V_{\Network}^*(\ell)\gets\Call{Profile}{\Network}(0)$
\end{algorithmic}
\end{algorithm}

\paragraph{Pass 2: Top-down continuation propagation. (\Cref{alg:indices})}
For every node $G$, let $R_G:[0,1]\to[0,1]$ denote the calibrated continuation profile obtained after breaching $G$ in its surrounding context. Initialize the root continuation by $R_{\mathcal N}(\gamma)\eqdef 1$.

If $G=\Par(G_1,\cdots,G_m)$, breaching any child compromises $G$, so every child receives the same continuation:
\begin{equation}
R_{G_i}\eqdef R_G,\qquad i=1,\cdots,m.
\end{equation}

If $G=\Ser(G_1,\cdots,G_m)$, let $S_{j+1}\eqdef \Ser(G_{j+1},\cdots,G_m)$ denote the suffix following $G_j$, with $S_{m+1}$ the empty breached suffix. Breaching $G_j$ requires breaching $S_{j+1}$ before receiving $R_G$. Hence
\begin{equation}
R_{G_j}(\gamma)=\mathcal P_{\widehat\Psi_{S_{j+1}}}[R_G](\gamma),\qquad j=1,\cdots,m,
\end{equation}
where $\widehat\Psi_{S_{m+1}}\equiv1$. The suffix profiles needed in this expression were constructed during the bottom-up pass (Pass 1).

At a leaf $\Ctrl(e)$, set $R_e\eqdef R_{\Ctrl(e)}$. For every failure count $k=0,\cdots,q_e-1$, compute $\alpha(e,k)$ as the smallest solution of
\begin{equation}
\alpha(e,k)=c(e,k)R_e\bigl(\alpha(e,k)\bigr).
\end{equation}
Because $R_e$ is piecewise linear, the root can be found by scanning its pieces. If $R_e(\gamma)=A+B\gamma$ on the piece containing the root, Corollary~\ref{cor:affine-continuation-index} gives
\begin{equation}
\alpha(e,k)=\frac{c(e,k)A}{1-c(e,k)B}.
\end{equation}
Since $c(e,k)$ and $\alpha(e,k)$ are non-increasing in $k$, all indices of a control can be obtained in one monotone sweep through the pieces of $R_e$. Refer to \Cref{alg:indices} for pseudocode.

\begin{algorithm}[h!]
\caption{\textsc{Indices}$(G,R)$: stores the indices in $G$, where $R$ is the continuation profile after breaching $G$ (Pass 2).}
\label{alg:indices}
\begin{algorithmic}[1]
\Function{Indices}{$G,R$}
  \If{$G=\Ctrl(e)$}
     \For{$k=q_e-1,\dots,0$}
        \State $\al(e,k)\gets\inf\{\gamma\in[0,1]:\gamma=c(e,k)R(\gamma)\}$
     \EndFor
  \ElsIf{$G=\Ser(G_1,\dots,G_n)$}
     \State $R_{\mathrm{acc}}\gets R$
     \For{$i=n,\dots,1$}
        \State $\Call{Indices}{G_i,R_{\mathrm{acc}}}$
        \State $R_{\mathrm{acc}}\gets\bigl(\gamma\mapsto R_{\mathrm{acc}}(\gamma)\Psh_{G_i}(\gamma/R_{\mathrm{acc}}(\gamma))\bigr)$
     \EndFor
  \Else\Comment{$G=\Par(G_1,\dots,G_n)$}
     \For{$i=1,\dots,n$}
        \State $\Call{Indices}{G_i,R}$
     \EndFor
  \EndIf
\EndFunction
\State $\Call{Indices}{\Network,\gamma\mapsto1}$
\end{algorithmic}
\end{algorithm}

\begin{proposition}[Complexity]
\label{prop:index-algorithm-complexity}
Let $Q\eqdef \sum_{e\in E}q_e$ and let $d$ be the depth of the SP parse tree. The two-pass procedure computes the attacker value and all control indices in $\Oh{Qd}$ arithmetic operations and uses $\Oh{Qd}$ space when all profiles required by the top-down pass are memoised.
\end{proposition}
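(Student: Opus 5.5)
The plan is to prove the complexity bound by induction over the parse tree. The invariant is a size bound: for every subcomponent $G$, the intrinsic profile $\Psh_G$ is a convex piecewise-linear function on $[0,1]$ with at most $Q_G+1$ pieces, where $Q_G\eqdef\sum_{e\in G}q_e$. With this in hand, every node operation is a linear merge of sorted breakpoint lists, and the total cost is charged to controls once per ancestor.

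\textbf{Leaves.} For $\Ctrl(e)$, the recursion in Lemma~\ref{lem:isolated-control-index} shows that $\Psh_e(\cdot)$ at $k=0$ retires exactly at the thresholds $\alpha(e,k\mid 1)=c(e,k)$. These are non-increasing in $k$. On each interval between consecutive thresholds the optimal policy is ``attempt $e$ until $k$ reaches a fixed level, then retire'', which is affine in $\gamma$. Hence there are at most $q_e+1$ pieces, and they can be built in $\Oh{q_e}$ time.

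\textbf{$\Par$ nodes.} By Lemma~\ref{lem:parallel-composition-proof}, the right-derivative of the parent is the product of the child right-derivatives. These are step functions, so the breakpoints of $\Psh_G$ lie in the union of the child breakpoints, giving at most $\sum_i Q_{G_i}+1$ pieces. To build the parent, we sweep the merged sorted list once, maintain the running product, and integrate backward from $\Psh_G(1)=1$. A zero slope must not be divided out; we handle this by keeping a count of zero factors, or simply by recomputing within the sweep.

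\textbf{$\Ser$ nodes.} Here we use the perspective composition $\mathcal P_f[r](\gamma)=r(\gamma)f(\gamma/r(\gamma))$ from Lemma~\ref{lem:series-composition-proof}. On a piece where $r=A+B\gamma$, the map $\gamma\mapsto\gamma/r(\gamma)$ is monotone non-decreasing: its derivative is $A/r^2\ge 0$. Therefore breakpoints of $f$ pull back to at most one $\gamma$ each, and the breakpoint count of $\mathcal P_f[r]$ is at most that of $f$ plus that of $r$. Within a piece of both, $r\cdot f(\gamma/r)=r\,(a+b\gamma/r)=ar+b\gamma$ is affine, so piecewise linearity is preserved. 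Folding from $G_m$ back to $G_1$ gives suffix $\rho_j$ of size at most $Q_{H_j}+1$, obtained by a monotone two-pointer merge.

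The main obstacle is this step. The naive sequential fold costs $\sum_j Q_{H_j}$, which is quadratic in the arity, not $\Oh{\sum_i Q_{G_i}}$. I would first account for this by charging each piece of $\Psh_{G_j}$ to its own insertion and re-scanning only the portion of $\rho_{j+1}$ actually affected. If that fails, I would accept that the stated $\Oh{Qd}$ bound is for the binarized parse tree: binarizing raises $d$ by at most a factor of the arity, and binary $\Ser$ merges are linear.

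\textbf{Summation.} Each node costs $\Oh{Q_G}$. Summing over all nodes gives
\[
\sum_G Q_G=\sum_e q_e\,\#\{\text{ancestors of } e\}\le Qd.
\]
For Pass~2, $R_G$ for a child is a perspective composition of a stored suffix $\Psh_{S_{j+1}}$ with the parent's continuation $R_G$. The number of pieces of $R$ could grow with the depth, so I would bound it by the breakpoints actually needed: for each control, only $q_e$ roots are sought. Because $\alpha(e,k)$ is monotone in $k$, these roots can be found by a single monotone sweep, evaluating $R_e$ lazily through the stored compositions at a cost of $\Oh{d}$ per query step. This gives $\Oh{q_e d}$ per control and $\Oh{Qd}$ in total.

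\textbf{Space.} Memoising each $\Psh_G$ and each suffix profile takes $\sum_G\Oh{Q_G}=\Oh{Qd}$, which matches the stated bound.
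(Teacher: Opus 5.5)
Your Pass~1 argument is the paper's own: at most $q_e+1$ pieces at a leaf, $\Par$ breakpoints inside the union of child breakpoints, monotone perspective compositions at $\Ser$ nodes so each breakpoint is crossed once, and the charge $\sum_G Q_G=\sum_e q_ed_e\le Qd$. Your worry about flat $\Ser$ nodes is justified. The paper's proof simply asserts $\Oh{Q_G}$ there, whereas the suffix fold costs $\sum_j Q_{H_j}$. Your first remedy cannot work, though. Wherever $\Psh_{G_j}(u)=a+bu$, the update is $\rho_j=a\rho_{j+1}+b\gamma$, so every piece of $\rho_{j+1}$ in the continuation region is rewritten and there is no unaffected portion to skip. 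The binarised reading is the right one. It is also what the main text implicitly uses when it quotes $\Oh{mQ}$ for chains of length $m$. You need it for your $\Par$ sweep as well, since merging $m$ sorted breakpoint lists costs $\Theta(N\log m)$ comparisons, not $\Oh{N}$.

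The genuine gap is in Pass~2: the step ``$\Oh{d}$ per query step, hence $\Oh{q_ed}$ per control'' does not hold, for two reasons.
\begin{itemize}
\item \textbf{Root finding.} The smallest root of $\gamma=c(e,k)R_e(\gamma)$ is not pinned down by $q_e$ point evaluations. You must find the affine piece of $R_e$ containing it, and a monotone sweep crosses every piece of $R_e$ below the root. $R_e$ can have as many pieces as its whole forward cone has attempts, independently of $q_e$, and a lazily represented $R_e$ does not even expose those breakpoints.
\item \textbf{Single evaluations.} One lazy evaluation is not $\Oh{d}$ either. At each $\Ser$ ancestor you must locate $\gamma/R_G(\gamma)$ in a stored suffix that is shared by every control upstream of it, so monotone pointers do not amortise.
\end{itemize}
The paper's remark that each breakpoint propagates through at most $d$ levels does not resolve this. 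In $\Ser(\Par(\Ser(a_1,b_1),\dots,\Ser(a_n,b_n)),f)$, each of the $n$ distinct profiles $R_{a_i}=\mathcal P_{\Psh_{b_i}}[\Psh_f]$ can carry all $\Theta(q_f)$ breakpoints of $\Psh_f$.

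The missing idea is to invert transfer maps rather than evaluate $R_e$.
\begin{itemize}
\item \textbf{Reformulation.} With $\tau_h(u)\eqdef u/h(u)$, the computation of Lemma~\ref{lem:transfer-map-composition} gives $\tau_{\mathcal P_f[r]}=\tau_f\circ\tau_r$. Hence for $c=c(e,k)>0$ the index equation becomes $\tau_{S^{(1)}}\circ\cdots\circ\tau_{S^{(p)}}(\gamma)=c$, where $S^{(1)},\dots,S^{(p)}$ are the stored suffixes met on the way from $e$ up to the root.
\item \textbf{Level-by-level solve.} Solve $u_1=c\,\Psh_{S^{(1)}}(u_1)$, then $u_2=u_1\Psh_{S^{(2)}}(u_2)$, and so on, always taking smallest roots; the maps $\tau$ are non-decreasing, so smallest roots compose. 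Then $\alpha(e,k)=u_p$. Each step is an index computation against an explicitly stored profile, and no $R_e$ is ever formed.
\item \textbf{Cost.} Answering queries one at a time by binary search gives $\Oh{Qd\log Q}$. Alternatively, answer them offline and bottom-up, each subtree passing its queries upward as a sorted list. The lists are sorted at a leaf because $c(e,k)$ is non-increasing in $k$, stay sorted under the monotone inverses, and merge linearly at binary nodes. This costs $\Oh{Q_G}$ per binary node and recovers $\Oh{Qd}$ on the binarised tree.
\end{itemize}
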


\begin{proof}
For a subcomponent $G$, let $Q_G\eqdef\sum_{e\in G}q_e$. The intrinsic profile of a leaf $\Ctrl(e)$ has at most $q_e+1$ affine pieces. At a $\Par$ node, all slope changes occur at child breakpoints, so the number of pieces is at most the sum of the numbers of child pieces. At a $\Ser$ node, the recursive series compositions are monotone, so each breakpoint of a child or suffix profile is encountered at most once. Consequently, $\widehat\Psi_G$ has $\Oh{Q_G}$ pieces and can be constructed in $\Oh{Q_G}$ arithmetic operations from sorted child profiles.

Summing over all parse-tree nodes gives
\begin{equation}
\sum_{G\in\mathcal T}Q_G=\sum_{e\in E}q_ed_e\le Qd,
\end{equation}
where $d_e$ is the number of parse-tree nodes containing $\Ctrl(e)$. Hence the bottom-up pass takes $\Oh{Qd}$ time.

The same accounting applies to the top-down continuation profiles: each control-state breakpoint is propagated through at most $d$ parse-tree levels. At a leaf, monotonicity of $\alpha(e,k)$ in $k$ allows all $q_e$ fixed points to be found by one sweep through the continuation profile. The top-down pass therefore also takes $\Oh{Qd}$ time. Memoising the intrinsic, suffix, and continuation profiles stores $\Oh{Q_G}$ pieces at each relevant parse-tree node, giving $\Oh{Qd}$ total space.
\end{proof}

\section{Supplementary Material for $n$-RT Reverse Tape Algorithm}

\subsection{Introduction}
As a recap, the defender must compute subgradients $g(\ell) \in \partial V^\star_\Network (\ell)$ and execute regret matching to obtain optimal defensive allocation $\ell$ on every control against an adaptive attacker. This appendix presents an exact value-and-gradient algorithm ($n$-ary Reverse Tape, or $n$-RT for short) for series-parallel (SP) security networks, denoted by $\Network$, whose parse trees may contain internal nodes of arbitrary arity. 

\subsubsection{Overview of the Algorithm}
Recall that each subcomponent is represented by a piecewise-linear calibrated profile constructed from its children. The algorithm processes child breakpoints in order, using a balanced tree to maintain the product of active slopes at $\Par$ nodes, and the composed series update at $\Ser$ nodes. During this construction, it \textit{records all arithmetic operations on a reverse-mode tape}. After evaluating the root profile, the tape is \textit{traversed backwards}, and the chain rule yields derivatives with respect to all defender delays \textit{in a single reverse pass}.

\subsection{Problem setup and profile representation}
\label{app:mary-setup}

\paragraph{Basic Notation} For each leaf control \(e\in \mathcal E\), let \(q_e\) denote the maximum number of permitted attempts at that control. Let \(\ell_e\) denote the defender delay incurred by one attempt at control \(e\). The delay is converted into a multiplicative discount factor $\beta_e=\exp(-\lambda \ell_e),$ where \(\lambda>0\) is a fixed discount-rate parameter. For any node \(G\), let \(E(G)\subseteq E\) denote the set of leaf controls in the subtree rooted at \(G\). The total attempt budget in that subtree is defined by $Q_G=\sum_{e\in \mathcal  E(G)}q_e.$ If \(\Network\) denotes the root of $\mathcal{T}$, then the total attempt budget of the complete network is $Q=Q_\Network=\sum_{e\in \mathcal{E}}q_e$. For an internal node $G$, let $m_G$ be its number of children.

\paragraph{Intrinsic Calibrated Profile Representation for $G$ in parse tree} Each node \(G\) can be represented by an intrinsic calibrated profile, $\widehat{\Psi}_G(s)$.  The variable \(s\) is the attacker's outside option and it is the payoff that the attacker may accept by stopping instead of continuing to attack the subcomponent \(G\). The value \(\widehat{\Psi}_G(s)\) is the attacker's optimal expected discounted payoff when this outside option is available. Let \(\Network\) denote the root node. The attacker value of the complete network is $V^\star_\Network(\ell)=\widehat{\Psi}_\Network(0),$ where $\ell=(\ell_e)_{e\in \mathcal E}$ is the vector of defender delays.

The calibrated profile of every node is continuous and piecewise linear.
For node \(G\), let
\begin{equation}
     0=x_{G,0}<x_{G,1}<\cdots<x_{G,M_G}=1
\end{equation} be its ordered breakpoints, where \(M_G\) is the number of affine pieces.
On the \(j\)-th interval,
\begin{equation}
    s\in[x_{G,j-1},x_{G,j}],
\end{equation}
the profile is represented as
\[\widehat{\Psi}_G(s) = \alpha_{G,j}s+\beta_{G,j}, \qquad j=1,\cdots,M_G,\] 
where \(\alpha_{G,j}\) is the slope and \(\beta_{G,j}\) is the intercept of the \(j\)-th affine piece. The \textit{right derivative} of the profile is denoted by \(g_G(s)=\widehat{\Psi}_G'(s)\) It is a step function whose value on the \(j\)-th interval is \(\alpha_{G,j}\).

The number of pieces satisfies $M_G\leq Q_G+1$. Consequently, the complete profile of a node can be stored using a number of breakpoints and affine coefficients that is linear in its subtree attempt budget. The breakpoint locations are part of the profile construction and depend on the model parameters. They must therefore be differentiated together with the slopes and intercepts. In particular, treating the breakpoints as fixed would omit the effect of changes in the widths of the affine intervals.

\paragraph{Outline of subsequent sections} In the following sections, we will talk about individual parts of the algorithms that deal with $\Ser$ and $\Par$ nodes, $\Ctrl$ leaf profile construction, followed by the complete overview on how each parts of the algorithm fit into the entire picture.

\subsection{Reverse-Tape Framework}
\label{app:reverse-tape-framework}

\paragraph{Forward and reverse passes.}
\(n\)-RT first constructs calibrated profiles from the leaves to the root. Every scalar operation used to compute affine coefficients, breakpoint locations, and endpoint values is recorded on a reverse-mode tape. After evaluating
\begin{equation}
V_{\Network}^*(\ell)=\widehat\Psi_{\Network}(0),
\end{equation}
the tape is traversed backwards, applying the chain rule to obtain derivatives with respect to all defender delays in one reverse pass. Breakpoints are recorded as computed quantities, rather than fixed constants, because changes in \(\ell\) may move both child and parent breakpoints.

At regular points, where the active affine pieces and breakpoint ordering are locally unique, the reverse pass returns the exact gradient. At ties, coincident changes are processed together and a fixed deterministic rule selects an active computation branch, yielding a valid subgradient. The same framework applies to both \(\Par\) and \(\Ser\) nodes, although the quantities maintained by the segment tree differ. We begin with parallel composition, where all child profiles are evaluated at the same outside option and the parent slope is the product of the active child slopes. This provides the simpler setting in which to introduce the breakpoint sweep and its reverse differentiation.

\subsection{Direct \(n\)-ary parallel composition}
\label{app:nary-parallel}

Let
\begin{equation}
G=\Par(G_1,\ldots,G_n).
\end{equation}
For each child $G_i$, write $\widehat{\Psi}_{G_i}$ for its calibrated profile and
\begin{equation}
g_i(s)=\partial_s^+\widehat{\Psi}_{G_i}(s)
\end{equation}
for its right derivative. Each $g_i$ is a step function because $\widehat{\Psi}_{G_i}$ is piecewise linear.

\begin{appendixprop}[Parallel slope identity]
\label{prop:nary-parallel-slope}
The right derivative of the parent profile is
\begin{equation}
g_G(s)=\prod_{i=1}^{n}g_i(s).
\label{eq:nary-parallel-slope}
\end{equation}
Consequently, the parent slope is constant while every child remains on its current affine piece, and it changes only when at least one child reaches a breakpoint.
\end{appendixprop}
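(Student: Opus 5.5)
The slope identity follows directly from the parallel composition rule already established in Lemma~\ref{lem:parallel-composition-proof}. That lemma gives
\[
\widehat\Psi_G(s)=1-\int_s^1\prod_{i=1}^n g_i(z)\,dz .
\]
The plan has two parts. First, differentiate this integral representation from the right. Second, use the piecewise-linearity of the child profiles (Property~\ref{a-lem:pwl}) to describe the structure of the resulting step function.

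For the derivative, write $P(z)\eqdef\prod_{i=1}^n g_i(z)$. Each $g_i$ is the right derivative of a convex piecewise-linear function. By Property~\ref{a-lem:shape}, it is therefore a non-decreasing, right-continuous step function with values in $[0,1]$ and finitely many jumps. A finite product of such functions is again a right-continuous step function with finitely many jumps. Since $P$ is right-continuous, the fundamental theorem of calculus gives the right derivative at $s$:
\[
\partial_s^+\widehat\Psi_G(s)=\lim_{h\downarrow0}\frac1h\int_s^{s+h}P(z)\,dz=P(s).
\]
This is exactly~\eqref{eq:nary-parallel-slope}. The only point needing care is that we want the \emph{right} derivative, so that the right-continuity of $P$ makes the limit equal to $P(s)$ and not to a one-sided value of $P$ on the left. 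At $s=1$ we use the convention $g_i(1)=1$ from Lemma~\ref{lem:threshold-representation}.

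For the consequence, let $x$ and $x'$ be consecutive points in the merged breakpoint set $\bigcup_i\{x_{G_i,j}\}$. On $[x,x')$ every child lies on a single affine piece, so each $g_i$ equals a fixed slope $\alpha_{G_i,j_i}$. Hence $g_G$ is constant on $[x,x')$, equal to $\prod_i\alpha_{G_i,j_i}$. It follows that $\widehat\Psi_G$ is affine there. Consequently, every parent breakpoint is contained in the union of the child breakpoints, and the parent slope can change only when some child crosses a breakpoint. The converse need not hold: a child breakpoint may leave the product unchanged, for instance when another factor is $0$. The statement only requires the one-directional claim, so this causes no difficulty.

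I expect the main obstacle to be the justification of the integral formula itself, rather than the differentiation step. That formula rests on the threshold-variable argument of Lemma~\ref{lem:parallel-composition-proof}, namely independence of the $X_i$ together with $X_G=\max_i X_i$. For this proposition I will take the lemma as given and only verify its hypotheses. Specifically, each child profile must be convex, non-decreasing, $1$-Lipschitz, and satisfy $\widehat\Psi_{G_i}(1)=1$; these are supplied by Property~\ref{a-lem:shape}. With those in hand, the argument above completes the proof.
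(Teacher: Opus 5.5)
Your proposal is correct and is essentially the paper's argument. Both derive the identity from the parallel composition rule of Lemma~\ref{lem:parallel-composition-proof}: the paper iterates the binary slope rule $g_{\Par(H_1,H_2)}=g_{H_1}g_{H_2}$ with associativity, while you right-differentiate the $m$-ary integral form directly, which avoids the bracketing step and additionally makes explicit the right-continuity point and the ``consequently'' clause that the paper leaves implicit.
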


\begin{proof}
The binary parallel rule gives $g_{\Par(H_1,H_2)}=g_{H_1}g_{H_2}$. Repeated application of this identity yields Equation~\ref{eq:nary-parallel-slope}. Associativity of multiplication shows that the result is independent of the binary bracketing.
\end{proof}

\paragraph{Segment-tree invariant.}
Sweep the common input $s$ from $0$ to $1$. At the current sweep position, child $G_i$ is on one affine piece. Let $c_i$ be its slope and let $t_i$ be the right endpoint of that piece. For the final piece, set $t_i=1$.

Store the active child data in a balanced segment tree with one leaf for each child. Leaf $i$ stores $(c_i,t_i,i)$. Every internal node stores:

\begin{enumerate}
\item the product of the active slopes in its subtree; and
\item the earliest active breakpoint in its subtree, together with every child attaining that breakpoint.
\end{enumerate}

If the left and right subtrees store products $p_L,p_R$ and earliest breakpoints $t_L,t_R$, the parent stores
\begin{equation}
p=p_Lp_R,\qquad t=\min\{t_L,t_R\}.
\label{eq:parallel-tree-combine}
\end{equation}
At the root,
\begin{equation}
p_{\mathrm{root}}=\prod_{i=1}^{n}c_i,
\qquad
t_{\mathrm{root}}=\min_{1\leq i\leq n}t_i.
\label{eq:parallel-root-summary}
\end{equation}

\paragraph{Forward sweep.}
Let $s_{\mathrm{cur}}$ be the current parent input and set
\begin{equation}
s_{\mathrm{next}}=\min\{t_{\mathrm{root}},1\}.
\end{equation}
By Proposition~\ref{prop:nary-parallel-slope},
\begin{equation}
g_G(s)=p_{\mathrm{root}},
\qquad
s\in[s_{\mathrm{cur}},s_{\mathrm{next}}].
\label{eq:parallel-current-interval}
\end{equation}
Record this interval and slope on the reverse tape. At $s_{\mathrm{next}}$, advance every child whose current piece ends there, update the affected leaf-to-root paths, and continue. Coincident child breakpoints are processed together.

Suppose the sweep produces
\begin{equation}
0=x_0<x_1<\cdots<x_r=1,
\end{equation}
and let $d_j$ be the parent slope on $[x_{j-1},x_j]$. The profile values are recovered backwards from $\widehat{\Psi}_G(1)=1$. Writing $y_j=\widehat{\Psi}_G(x_j)$,
\begin{equation}
y_r=1,\qquad
y_{j-1}=y_j-d_j(x_j-x_{j-1}),
\qquad
j=r,\ldots,1.
\label{eq:parallel-backward-integration}
\end{equation}
Hence, on the $j$th interval,
\begin{equation}
\widehat{\Psi}_G(s)=y_{j-1}+d_j(s-x_{j-1}).
\label{eq:parallel-parent-piece}
\end{equation}

\begin{algorithm}[h!]
\caption{\(n\)-ary parallel profile construction}
\label{alg:nary-parallel}
\begin{algorithmic}[1]
\Require Child profiles of $G=\Par(G_1,\ldots,G_n)$
\Ensure Parent profile and reverse tape
\State For each child $G_i$, activate its first slope $c_i$ and next breakpoint $t_i$
\State Build a balanced segment tree whose leaves store $(c_i,t_i,i)$
\State Store the product in Equation~\ref{eq:parallel-tree-combine} at each internal node
\State Store the earliest breakpoint and all children attaining it
\State $s\gets0$
\While{$s<1$}
    \State Read $(p_{\mathrm{root}},t_{\mathrm{root}})$ from the root
    \State $s'\gets\min\{t_{\mathrm{root}},1\}$
    \State Record $g_G(z)=p_{\mathrm{root}}$ for $z\in[s,s']$
    \State Advance every child satisfying $t_i=s'$
    \State Update the affected leaf-to-root paths and record the operations
    \State $s\gets s'$
\EndWhile
\State Set $\widehat{\Psi}_G(1)\gets1$
\For{the recorded intervals $[u,v]$ in reverse order}
    \State $\widehat{\Psi}_G(u)\gets\widehat{\Psi}_G(v)-g_G(z)(v-u)$
    \State Record this integration operation and its endpoints
\EndFor
\State \Return $\widehat{\Psi}_G$ and the reverse tape
\end{algorithmic}
\end{algorithm}

\paragraph{Reverse differentiation.}
The reverse pass stores, for every recorded scalar $z$, the full derivative
\begin{equation}
\frac{\partial V_{\Network}^*}{\partial z}.
\label{eq:full-derivative-notation}
\end{equation}
A variable may be used by several later operations, so every contribution is added to the derivative already stored for that variable.

For the product operation
\begin{equation}
p=p_Lp_R,
\end{equation}
the reverse updates are
\begin{equation}
\frac{\partial V_{\Network}^*}{\partial p_L}
\mathrel{+}=
p_R
\frac{\partial V_{\Network}^*}{\partial p},
\label{eq:parallel-product-derivative-left}
\end{equation}
\begin{equation}
\frac{\partial V_{\Network}^*}{\partial p_R}
\mathrel{+}=
p_L
\frac{\partial V_{\Network}^*}{\partial p}.
\label{eq:parallel-product-derivative-right}
\end{equation}

For the backward-integration operation
\begin{equation}
y_{j-1}=y_j-d_j(x_j-x_{j-1}),
\end{equation}
the derivative with respect to the right endpoint value is updated by
\begin{equation}
\frac{\partial V_{\Network}^*}{\partial y_j}
\mathrel{+}=
\frac{\partial V_{\Network}^*}{\partial y_{j-1}}.
\label{eq:parallel-integration-derivative-y}
\end{equation}
The derivative with respect to the interval slope is updated by
\begin{equation}
\frac{\partial V_{\Network}^*}{\partial d_j}
\mathrel{+}=
-(x_j-x_{j-1})
\frac{\partial V_{\Network}^*}{\partial y_{j-1}}.
\label{eq:parallel-integration-derivative-d}
\end{equation}
The two endpoint derivatives are
\begin{equation}
\frac{\partial V_{\Network}^*}{\partial x_j}
\mathrel{+}=
-d_j
\frac{\partial V_{\Network}^*}{\partial y_{j-1}},
\label{eq:parallel-integration-derivative-right-endpoint}
\end{equation}
\begin{equation}
\frac{\partial V_{\Network}^*}{\partial x_{j-1}}
\mathrel{+}=
d_j
\frac{\partial V_{\Network}^*}{\partial y_{j-1}}.
\label{eq:parallel-integration-derivative-left-endpoint}
\end{equation}
These endpoint updates account for movements of the parent breakpoints. If a minimum breakpoint is unique, its derivative contribution is passed to the child that supplied it. Coincident breakpoints are handled by the fixed tie rule used for subgradient selection.

\begin{appendixprop}[Parallel-node complexity]
\label{prop:nary-parallel-complexity}
For the complexity analysis, retain the notation $m_G$ for the arity of $G$ and define
\begin{equation}
K_G=\sum_{i=1}^{m_G}M(\widehat{\Psi}_{G_i}),
\label{eq:parallel-KG}
\end{equation}
where $M(\widehat{\Psi}_{G_i})$ is the number of affine pieces in child $G_i$. The forward construction and its reverse pass require
\begin{equation}
\Oh{K_G\log m_G}
\end{equation}
work and tape storage. Since $K_G=\Oh{Q_G}$, this is
\begin{equation}
\Oh{Q_G\log m_G}.
\end{equation}
\end{appendixprop}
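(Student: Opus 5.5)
The plan is to bound the work of the sweep by counting the events it processes, and then to charge $\Oh{\log m_G}$ to each event. First, the number of while-loop iterations in Algorithm~\ref{alg:nary-parallel} is at most the total number of child breakpoints, i.e.\ at most $K_G$. The reason is that each iteration advances at least one child past its current piece, and each child $G_i$ can be advanced at most $M(\widehat\Psi_{G_i})$ times. Simultaneous breakpoints are merged into one iteration, but every child advance is still charged to a distinct child piece.

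Next comes the cost of a single event, using the segment-tree invariant. Building the balanced tree over $m_G$ leaves costs $\Oh{m_G}$, and $m_G\le K_G$ because every child has at least one piece. Reading $(p_{\mathrm{root}},t_{\mathrm{root}})$ costs $\Oh{1}$. Advancing one child means changing its leaf $(c_i,t_i,i)$ and recomputing the combine rule~\eqref{eq:parallel-tree-combine} along its leaf-to-root path. That path has length $\Oh{\log m_G}$, and each step does a constant number of scalar operations. The set of children attaining the earliest breakpoint is exactly the set advanced in that iteration. I would store it implicitly, descending only into subtrees whose minimum equals $t_{\mathrm{root}}$, so finding $j$ tied children costs $\Oh{j\log m_G}$. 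That cost is again charged to child advances. Summing gives $\Oh{K_G\log m_G}$ forward operations. The backward integration~\eqref{eq:parallel-backward-integration} adds $\Oh{r}\le\Oh{K_G}$ further operations.

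Each forward scalar operation writes a constant-size tape record. Each record is replayed once in reverse with a constant number of adjoint updates, namely \eqref{eq:parallel-product-derivative-left}--\eqref{eq:parallel-integration-derivative-left-endpoint}. So the reverse pass and the tape storage are both within a constant factor of the forward operation count, which is $\Oh{K_G\log m_G}$.

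Finally, $K_G=\Oh{Q_G}$ follows from the piece bound $M(\widehat\Psi_{G_i})\le Q_{G_i}+1$ stated in the profile representation. Summing over the children gives $K_G\le Q_G+m_G$. Each control has $q_e\ge1$, so $m_G\le Q_G$, and hence $K_G\le 2Q_G$.

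The step that needs the most care is the tie handling. A naive scan for all children at the minimum breakpoint would cost $\Oh{m_G}$ per event. The implicit descent avoids this: any subtree visited during the descent contains at least one tied leaf, so the search is output-sensitive, and I would argue that this gives the $\Oh{j\log m_G}$ bound claimed above.
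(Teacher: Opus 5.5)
Your proposal is correct and follows the same route as the paper's proof. Both arguments use an $\Oh{m_G}$ tree construction, at most $K_G$ breakpoint events each costing $\Oh{\log m_G}$ leaf-to-root updates, constant-size tape records replayed once in reverse, and $K_G=\Oh{Q_G}$ from the per-child piece bound; your output-sensitive descent for tied children and the explicit bound $K_G\le Q_G+m_G\le 2Q_G$ (using $q_e\ge 1$) just make precise two points the paper's short proof passes over.
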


\begin{proof}
The tree is built in $\Oh{m_G}$ work. Each child breakpoint is processed once, and advancing one child updates $\Oh{\log m_G}$ tree nodes. Thus the sweep uses $\Oh{K_G\log m_G}$ work. Every update and integration operation creates a constant-sized tape record, and every record is processed once in reverse. Finally, $M(\widehat{\Psi}_{G_i})=\Oh{Q_{G_i}}$ and the child subcomponents are disjoint, so $K_G=\Oh{Q_G}$.
\end{proof}

\subsection{Direct \(n\)-ary series composition}
\label{app:nary-series}

Let
\begin{equation}
G=\Ser(G_1,\ldots,G_n),
\end{equation}
where $G_1$ is the most upstream child and $G_n$ is the most downstream child. Unlike parallel composition, the children are not evaluated at a common input. The parent input is transformed successively by the downstream children before it reaches an upstream child.

\paragraph{Transfer maps.}
For any subcomponent $H$, define
\begin{equation}
\tau_H(s)=\frac{s}{\widehat{\Psi}_H(s)}.
\label{eq:transfer-map}
\end{equation}
The binary series identity is
\begin{equation}
\widehat{\Psi}_{\Ser(G_L,G_R)}(s)
=
\widehat{\Psi}_{G_R}(s)
\widehat{\Psi}_{G_L}\!\left(\tau_{G_R}(s)\right),
\label{eq:binary-series-profile}
\end{equation}
where $G_L$ is upstream and $G_R$ is downstream.

\begin{appendixlemma}[Composition of transfer maps]
\label{lem:transfer-map-composition}
If $G_C=\Ser(G_L,G_R)$, then
\begin{equation}
\tau_{G_C}=\tau_{G_L}\circ\tau_{G_R}.
\label{eq:binary-transfer-composition}
\end{equation}
Consequently,
\begin{equation}
\tau_G=\tau_{G_1}\circ\tau_{G_2}\circ\cdots\circ\tau_{G_n},
\label{eq:nary-transfer-composition}
\end{equation}
with the rightmost map applied first.
\end{appendixlemma}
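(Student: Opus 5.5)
The identity \eqref{eq:binary-transfer-composition} is a direct algebraic consequence of the binary series identity \eqref{eq:binary-series-profile}. The $n$-ary form \eqref{eq:nary-transfer-composition} then follows by induction using associativity of $\Ser$.

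\textbf{Binary case.} For $s$ with $\widehat\Psi_{G_C}(s)>0$, divide $s$ by \eqref{eq:binary-series-profile}:
\begin{equation*}
\tau_{G_C}(s)=\frac{s}{\widehat\Psi_{G_R}(s)\,\widehat\Psi_{G_L}\bigl(\tau_{G_R}(s)\bigr)}
=\frac{\tau_{G_R}(s)}{\widehat\Psi_{G_L}\bigl(\tau_{G_R}(s)\bigr)}
=\tau_{G_L}\bigl(\tau_{G_R}(s)\bigr).
\end{equation*}
This computation is valid only if $\tau_{G_R}(s)$ lies in $[0,1]$, the domain of $\widehat\Psi_{G_L}$. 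That holds because Property~\ref{a-lem:shape} gives $\widehat\Psi_{G_R}(s)\ge s$. Positivity of the denominators follows from $\widehat\Psi_{G_C}(s)>0$, since $\widehat\Psi_{G_C}$ is the product of the two factors.

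\textbf{Degenerate points.} I expect these to be the only delicate step. Since every profile satisfies $\widehat\Psi_H(s)\ge s$, a zero denominator can occur only at $s=0$. There $\tau_H(0)=0$ whenever $\widehat\Psi_H(0)>0$. If $\widehat\Psi_H(0)=0$, I would use the same continuous-extension convention as in Corollary~\ref{cor:binary-series-rule}: define $\tau_H(0)$ as $\lim_{s\downarrow0}\tau_H(s)$, which is the reciprocal of the right slope at $0$ whenever $\widehat\Psi_H$ vanishes there. With this convention I would check that both sides agree at $s=0$. Either both are obtained as limits of the identity already proved for $s>0$, or one argues directly. Continuity of $\widehat\Psi_{G_L}$ justifies passing the limit inside the composition.

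\textbf{$n$-ary case.} I would induct on $n$. Associativity of series composition lets me write $\Ser(G_1,\dots,G_n)=\Ser\bigl(G_1,\Ser(G_2,\dots,G_n)\bigr)$, which is consistent with Lemma~\ref{lem:series-composition-proof}. The profile of the right-nested form coincides with the folded $\rho_1$ because the suffix recursion is exactly this nesting. Applying the binary case gives $\tau_G=\tau_{G_1}\circ\tau_{\Ser(G_2,\dots,G_n)}$. The induction hypothesis then yields \eqref{eq:nary-transfer-composition}, with the rightmost map $\tau_{G_n}$ applied first.
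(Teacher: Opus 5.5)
Your proof is correct and takes the same route as the paper: you divide $s$ by the binary series identity \eqref{eq:binary-series-profile} to obtain $\tau_{G_L}(\tau_{G_R}(s))$, then get the $n$-ary form from repeated (right-nested) application. Your extra checks go beyond the paper's one-line proof but are valid, and the paper leaves them implicit: $\tau_{G_R}(s)\in[0,1]$ because $\widehat\Psi_{G_R}(s)\ge s$, the continuous-extension convention at $s=0$, and the fact that the right-nested induction matches the suffix recursion of Lemma~\ref{lem:series-composition-proof}.
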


\begin{proof}
Using Equation~\ref{eq:binary-series-profile},
\begin{equation}
\tau_{G_C}(s)
=
\frac{s}{
\widehat{\Psi}_{G_R}(s)
\widehat{\Psi}_{G_L}(\tau_{G_R}(s))
}
=
\frac{\tau_{G_R}(s)}{
\widehat{\Psi}_{G_L}(\tau_{G_R}(s))
}
=
\tau_{G_L}(\tau_{G_R}(s)).
\end{equation}
Repeated application gives Equation~\ref{eq:nary-transfer-composition}.
\end{proof}

\paragraph{Active-piece summaries.}
Suppose child $G_i$ is currently on the affine piece
\begin{equation}
\widehat{\Psi}_{G_i}(s)=a_is+b_i.
\end{equation}
Its transfer map is
\begin{equation}
\tau_{G_i}(s)=\frac{s}{a_is+b_i}.
\label{eq:leaf-transfer-map}
\end{equation}
This fractional-linear form is closed under composition. Hence a consecutive series block $H$ can be summarised by coefficients $(\alpha_H,\beta_H)$ such that
\begin{equation}
\tau_H(s)=\frac{s}{\alpha_Hs+\beta_H}.
\label{eq:series-summary-form}
\end{equation}

A balanced segment tree stores the ordered children. Every node represents a consecutive series subcomponent and stores:

\begin{enumerate}
\item its transfer coefficients $(\alpha_H,\beta_H)$;
\item its next breakpoint $\eta_H$, measured in the input coordinate of that subcomponent; and
\item every child responsible for that breakpoint.
\end{enumerate}

At a leaf corresponding to $G_i$, store
\begin{equation}
\alpha_{G_i}=a_i,\qquad \beta_{G_i}=b_i,
\end{equation}
and let $\eta_{G_i}$ be the right endpoint of the active child piece.

\begin{appendixlemma}[Combination of series summaries]
\label{lem:series-summary-combine}
Let an internal node combine an upstream block $G_L$ and the immediately following downstream block $G_R$, and let $G_C=\Ser(G_L,G_R)$. Then
\begin{equation}
\alpha_C=\alpha_L+\beta_L\alpha_R,
\qquad
\beta_C=\beta_L\beta_R.
\label{eq:series-coefficient-combine}
\end{equation}
The upstream candidate breakpoint, expressed in the input coordinate of $G_C$, is
\begin{equation}
t_L=\frac{\eta_L\beta_R}{1-\eta_L\alpha_R},
\label{eq:mapped-upstream-breakpoint}
\end{equation}
whereas the downstream candidate is
\begin{equation}
t_R=\eta_R.
\end{equation}
Thus the next breakpoint stored at the combined node is
\begin{equation}
\eta_C=\min\{t_L,t_R\}.
\label{eq:series-next-breakpoint}
\end{equation}
\end{appendixlemma}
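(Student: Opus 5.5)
The plan is to verify the three claims directly from the fractional-linear form of the transfer maps, using Lemma~\ref{lem:transfer-map-composition} for the coefficients and monotonicity of $\tau_{G_R}$ for the breakpoint. Throughout, fix a sweep position at which $G_R$ is on the active piece $\widehat\Psi_{G_R}(s)=\alpha_Rs+\beta_R$ (in block form) and $G_L$ is on the piece summarised by $(\alpha_L,\beta_L)$. Both summaries are valid only on their respective current intervals: up to $\eta_R$ in the input coordinate of $G_R$, and up to $\eta_L$ in the input coordinate of $G_L$.

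\textbf{Coefficient rule.} By Lemma~\ref{lem:transfer-map-composition}, $\tau_{G_C}=\tau_{G_L}\circ\tau_{G_R}$, with the downstream map applied first. Substituting $u=\tau_{G_R}(s)=s/(\alpha_Rs+\beta_R)$ into $\tau_{G_L}(u)=u/(\alpha_Lu+\beta_L)$ and clearing the common factor $\alpha_Rs+\beta_R$ gives
\begin{equation*}
\tau_{G_C}(s)=\frac{s}{(\alpha_L+\beta_L\alpha_R)s+\beta_L\beta_R}.
\end{equation*}
Reading off the coefficients gives \eqref{eq:series-coefficient-combine}. This form is valid exactly as long as neither child's active piece has ended.

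\textbf{Breakpoint rule.} $G_R$ receives the parent input $s$ unchanged, so its candidate breakpoint is $t_R=\eta_R$. $G_L$ receives $u=\tau_{G_R}(s)$, so its piece ends at the first $s$ with $\tau_{G_R}(s)=\eta_L$. Solving $s=\eta_L(\alpha_Rs+\beta_R)$ gives $t_L=\eta_L\beta_R/(1-\eta_L\alpha_R)$. To justify that this is the first crossing, I would show that $\tau_{G_R}$ is strictly increasing on the active piece, since its derivative is $\beta_R/(\alpha_Rs+\beta_R)^2>0$ when $\beta_R>0$. Hence the level set $\{\tau_{G_R}=\eta_L\}$ is hit at most once, and it is hit exactly at $t_L$ when the denominator $1-\eta_L\alpha_R$ is positive. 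Since either event ends the validity of the combined affine summary, the next breakpoint of $G_C$ is $\eta_C=\min\{t_L,t_R\}$. The responsible children are those attaining the minimum, and in the case of a tie both sides are recorded.

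\textbf{Main obstacle.} The delicate step is the degenerate cases of the breakpoint formula. If $1-\eta_L\alpha_R\le0$, then $\tau_{G_R}$ stays below $\eta_L$ on the whole active piece of $G_R$. In that case I would set $t_L=+\infty$, so the minimum picks $\eta_R$; this uses $\tau_{G_R}(s)\le1$, which follows from Property~\ref{a-lem:shape} since $\widehat\Psi\ge s$. If $\beta_R=0$, then $\tau_{G_R}$ is constant at $1/\alpha_R$ on that piece, and I would handle it via the continuous-extension convention of Corollary~\ref{cor:binary-series-rule}. I would also check that the mapped breakpoint stays consistent when the formula is applied recursively up the segment tree. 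This holds because $\eta_L$ for an internal block is itself expressed in that block's input coordinate, so the same derivation applies at every level by induction on the tree height.
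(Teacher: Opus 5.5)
Your proposal is correct and follows the paper's proof. Like the paper, you substitute the fractional-linear summaries into $\tau_{G_C}=\tau_{G_L}\circ\tau_{G_R}$ to read off $(\alpha_C,\beta_C)$, solve $\tau_{G_R}(s)=\eta_L$ for $t_L$, and take the earlier candidate. Your monotonicity argument and your handling of a non-positive denominator make explicit what the paper compresses into the remark that the denominator is positive on the active domain; that can only fail in the edge case $\eta_L=\alpha_R=1$, $\beta_R=0$.
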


\begin{proof}
Substituting the summaries into Equation~\ref{eq:binary-transfer-composition} gives
\begin{equation}
\tau_{G_C}(s)
=
\frac{s}{(\alpha_L+\beta_L\alpha_R)s+\beta_L\beta_R},
\end{equation}
which proves Equation~\ref{eq:series-coefficient-combine}.

A breakpoint in $G_R$ is already expressed in the input coordinate of $G_C$, so its candidate is $\eta_R$. A breakpoint in $G_L$ occurs when the local input to $G_L$ reaches $\eta_L$, that is,
\begin{equation}
\tau_{G_R}(s)=\frac{s}{\alpha_Rs+\beta_R}=\eta_L.
\end{equation}
Solving for $s$ gives Equation~\ref{eq:mapped-upstream-breakpoint}. The denominator is positive on the active domain because the inverse image lies before the end of the current downstream piece. Taking the earlier candidate gives Equation~\ref{eq:series-next-breakpoint}.
\end{proof}

At the root, Equation~\ref{eq:series-summary-form} and $\tau_G(s)=s/\widehat{\Psi}_G(s)$ imply that the active parent piece is
\begin{equation}
\widehat{\Psi}_G(s)=\alpha_{\mathrm{root}}s+\beta_{\mathrm{root}}.
\label{eq:series-parent-piece}
\end{equation}

\paragraph{Forward sweep.}
Begin at parent input $s=0$ with every child on its first affine piece. At each step, the root supplies the coefficients of the active parent piece and the next breakpoint. If the current position is $s_{\mathrm{cur}}$, set
\begin{equation}
s_{\mathrm{next}}=\min\{\eta_{\mathrm{root}},1\}.
\end{equation}
Record the affine piece Equation~\ref{eq:series-parent-piece} on $[s_{\mathrm{cur}},s_{\mathrm{next}}]$. Then advance every child responsible for the event at $s_{\mathrm{next}}$, update the affected leaf-to-root paths, and continue. Coefficient combinations, mapped breakpoints, minimum selections, and emitted profile pieces are all recorded on the reverse tape.

\begin{algorithm}[h!]
\caption{\(n\)-ary series profile construction}
\label{alg:nary-series}
\begin{algorithmic}[1]
\Require Child profiles of $G=\Ser(G_1,\ldots,G_n)$
\Ensure Parent profile and reverse tape
\State Activate the first affine piece $a_is+b_i$ of every child $G_i$
\State Build a balanced segment tree with leaf summaries $(a_i,b_i,\eta_i,i)$
\State Combine coefficients using Equation~\ref{eq:series-coefficient-combine}
\State Map the upstream breakpoint using Equation~\ref{eq:mapped-upstream-breakpoint}
\State Store the minimum in Equation~\ref{eq:series-next-breakpoint}
\State Record every child responsible for the minimum
\State $s\gets0$
\While{$s<1$}
    \State Read $(\alpha_{\mathrm{root}},\beta_{\mathrm{root}},\eta_{\mathrm{root}})$ from the root
    \State $s'\gets\min\{\eta_{\mathrm{root}},1\}$
    \State Record $\widehat{\Psi}_G(z)=\alpha_{\mathrm{root}}z+\beta_{\mathrm{root}}$ for $z\in[s,s']$
    \State Advance every child responsible for the breakpoint at $s'$
    \State Update the affected leaf-to-root paths and record the operations
    \State $s\gets s'$
\EndWhile
\State \Return $\widehat{\Psi}_G$ and the reverse tape
\end{algorithmic}
\end{algorithm}

\paragraph{Reverse differentiation.}
Consider one internal segment-tree node with forward coefficient updates
\begin{equation}
\alpha_C=\alpha_L+\beta_L\alpha_R,
\qquad
\beta_C=\beta_L\beta_R.
\end{equation}
The derivative contribution to $\alpha_L$ is
\begin{equation}
\frac{\partial V_{\Network}^*}{\partial\alpha_L}
\mathrel{+}=
\frac{\partial V_{\Network}^*}{\partial\alpha_C}.
\label{eq:series-derivative-alpha-L}
\end{equation}
The derivative contribution to $\beta_L$ is
\begin{equation}
\frac{\partial V_{\Network}^*}{\partial\beta_L}
\mathrel{+}=
\alpha_R
\frac{\partial V_{\Network}^*}{\partial\alpha_C}
+
\beta_R
\frac{\partial V_{\Network}^*}{\partial\beta_C}.
\label{eq:series-derivative-beta-L}
\end{equation}
The derivative contribution to $\alpha_R$ is
\begin{equation}
\frac{\partial V_{\Network}^*}{\partial\alpha_R}
\mathrel{+}=
\beta_L
\frac{\partial V_{\Network}^*}{\partial\alpha_C}.
\label{eq:series-derivative-alpha-R-coeff}
\end{equation}
The derivative contribution to $\beta_R$ is
\begin{equation}
\frac{\partial V_{\Network}^*}{\partial\beta_R}
\mathrel{+}=
\beta_L
\frac{\partial V_{\Network}^*}{\partial\beta_C}.
\label{eq:series-derivative-beta-R-coeff}
\end{equation}

The mapped upstream breakpoint is
\begin{equation}
t_L=\frac{\eta_L\beta_R}{1-\eta_L\alpha_R}.
\end{equation}
For readability, set
\begin{equation}
d=1-\eta_L\alpha_R.
\end{equation}
The derivative contribution to the upstream breakpoint is
\begin{equation}
\frac{\partial V_{\Network}^*}{\partial\eta_L}
\mathrel{+}=
\frac{\beta_R}{d^2}
\frac{\partial V_{\Network}^*}{\partial t_L}.
\label{eq:series-derivative-eta-L}
\end{equation}
The mapped-breakpoint contribution to $\beta_R$ is
\begin{equation}
\frac{\partial V_{\Network}^*}{\partial\beta_R}
\mathrel{+}=
\frac{\eta_L}{d}
\frac{\partial V_{\Network}^*}{\partial t_L}.
\label{eq:series-derivative-beta-R-breakpoint}
\end{equation}
The mapped-breakpoint contribution to $\alpha_R$ is
\begin{equation}
\frac{\partial V_{\Network}^*}{\partial\alpha_R}
\mathrel{+}=
\frac{\eta_L^2\beta_R}{d^2}
\frac{\partial V_{\Network}^*}{\partial t_L}.
\label{eq:series-derivative-alpha-R-breakpoint}
\end{equation}
The contributions in Equation~\ref{eq:series-derivative-beta-R-coeff} and Equation~\ref{eq:series-derivative-beta-R-breakpoint} are accumulated, as are the contributions in Equation~\ref{eq:series-derivative-alpha-R-coeff} and Equation~\ref{eq:series-derivative-alpha-R-breakpoint}. If the minimum breakpoint is unique, its derivative contribution is passed only to the selected candidate. Coincident candidates are processed according to the fixed tie rule.

\begin{appendixprop}[Series-node complexity]
\label{prop:nary-series-complexity}
Let
\begin{equation}
K_G=\sum_{i=1}^{m_G}M(\widehat{\Psi}_{G_i}).
\label{eq:series-KG}
\end{equation}
The forward construction and reverse differentiation at an \(m_G\)-ary series node require
\begin{equation}
\Oh{K_G\log m_G}=\Oh{Q_G\log m_G}
\end{equation}
work and tape storage.
\end{appendixprop}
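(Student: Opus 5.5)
The plan mirrors the proof of Proposition~\ref{prop:nary-parallel-complexity}, with the parallel product summary replaced by the fractional-linear transfer summary of Lemma~\ref{lem:series-summary-combine}. Building the balanced segment tree over the $m_G$ ordered children costs $\Oh{m_G}$ work. Each leaf stores the active piece $(a_i,b_i)$ and its right endpoint $\eta_i$. Each internal node is computed from its two children by a constant number of arithmetic operations: the coefficient update \eqref{eq:series-coefficient-combine}, the mapped breakpoint \eqref{eq:mapped-upstream-breakpoint}, and one minimum. Every such combine is therefore $\Oh{1}$ work and produces an $\Oh{1}$-size tape record. Since $m_G\le K_G$, the initial build is within the target bound.

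Next I will bound the number of sweep events by $K_G$. The key claim is that each child breakpoint is consumed at most once, so the total number of leaf advances is at most $K_G$. This needs the local input of every child to be non-decreasing in the parent input $s$. The input to $G_i$ is $\tau_{G_{i+1}}\circ\cdots\circ\tau_{G_n}(s)$ by Lemma~\ref{lem:transfer-map-composition}. Each $\tau_H(s)=s/\widehat\Psi_H(s)$ is non-decreasing, because $\widehat\Psi_H$ is convex with $\widehat\Psi_H(0)\ge0$, so $\widehat\Psi_H(s)/s$ is non-increasing on $(0,1]$. This is the monotonicity already invoked in Proposition~\ref{prop:index-algorithm-complexity}. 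A composition of non-decreasing maps is non-decreasing, so no child ever revisits a piece.

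Each event advances the responsible leaves and recomputes the summaries along their leaf-to-root paths. One advance touches $\Oh{\log m_G}$ tree nodes, each recomputed in $\Oh{1}$ and each producing a constant-size tape record. Coincident events are charged to each advancing child separately. Emitting the parent piece \eqref{eq:series-parent-piece} costs $\Oh{1}$ per event. The forward sweep therefore uses $\Oh{K_G\log m_G}$ work and tape.

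The reverse pass visits each tape record once and applies the constant-size local rules \eqref{eq:series-derivative-alpha-L}--\eqref{eq:series-derivative-alpha-R-breakpoint}. Its work therefore matches the tape length. Finally, $M(\widehat\Psi_{G_i})\le Q_{G_i}+1$, the children are disjoint, and $Q_{G_i}\ge1$. Together these give $K_G\le Q_G+m_G\le 2Q_G$, which yields $\Oh{Q_G\log m_G}$.

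The main obstacle is the event-counting step. Monotonicity of the composed transfer maps must be checked carefully at degenerate points, namely where $\widehat\Psi_H(s)=s$, where $\gamma=r=0$ (use the continuous extension), and where the denominator $1-\eta_L\alpha_R$ of the mapped breakpoint could vanish. I would first argue that the mapped breakpoint is the unique preimage on the current downstream piece, so that $d>0$ on the active domain, as stated in Lemma~\ref{lem:series-summary-combine}. I would then note that when the preimage falls beyond the end of the downstream piece, the downstream candidate $\eta_R$ wins the minimum, so no spurious events are generated.
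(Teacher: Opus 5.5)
Your proposal follows the paper's proof step for step: an $\Oh{m_G}$ tree build, non-decreasing local inputs $x_i(s)=(\tau_{G_{i+1}}\circ\cdots\circ\tau_{G_{m_G}})(s)$ so that each child breakpoint is crossed at most once, $\Oh{\log m_G}$ constant-size updates and tape records per event, one reverse visit per record, and $K_G=\Oh{Q_G}$. Your explicit bound $K_G\le Q_G+m_G\le 2Q_G$ is a detail the paper leaves implicit.

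The one step that fails as written is your reason why $\tau_H$ is non-decreasing. Convexity together with $\widehat\Psi_H(0)\ge 0$ does not make $\widehat\Psi_H(s)/s$ non-increasing. Convexity makes the secant slope $(\widehat\Psi_H(s)-\widehat\Psi_H(0))/s$ non-decreasing, which pushes the other way. For example, $f(s)=\max\{0.1,\,2s-1\}$ is convex on $[0,1]$ with $f(0)>0$ and $f(1)=1$, yet $f(s)/s=2-1/s$ increases for $s>0.55$.

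The missing ingredient is the slope bound together with $\widehat\Psi_H(s)\ge s$, both from Property~\ref{a-lem:shape}. The right derivative of $\tau_H$ has the sign of $\widehat\Psi_H(s)-s\,\partial_s^+\widehat\Psi_H(s)$, and this is at least $\widehat\Psi_H(s)-s\ge 0$. Equivalently, by Property~\ref{a-lem:affine} every affine piece of $\widehat\Psi_H$ is a policy line $a_\pi+b_\pi s$ with $a_\pi\ge 0$. So $\widehat\Psi_H(s)/s=a_\pi/s+b_\pi$ is non-increasing on each piece, and continuity glues the pieces. The paper only asserts this monotonicity, so with this fix your argument is complete.

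The same facts settle your concern about the mapped breakpoint. The summary $\alpha_R s+\beta_R$ is the active piece of the calibrated profile of the downstream block, so $\alpha_R\le 1$. Hence $d=1-\eta_L\alpha_R\ge 1-\eta_L\ge 0$. Moreover $d=0$ only when $\eta_L=1$, that is, when $G_L$ is already on its final piece and contributes no further event.
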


\begin{proof}
The local input received by child $G_i$ is
\begin{equation}
x_i(s)=
\left(
\tau_{G_{i+1}}\circ\cdots\circ\tau_{G_{m_G}}
\right)(s),
\qquad
x_{m_G}(s)=s.
\label{eq:series-local-input}
\end{equation}
Every transfer map is continuous and nondecreasing, so $x_i$ is nondecreasing. Hence each local child breakpoint is crossed at most once during the parent sweep. The tree is built in $\Oh{m_G}$ work, and each of the $K_G$ breakpoint events updates $\Oh{\log m_G}$ nodes. Each forward operation creates a constant-sized tape record, and each record is reversed once. Since $K_G=\Oh{Q_G}$, the claimed bounds follow.
\end{proof}

\subsection{Propagation to leaf parameters}
\label{app:local-to-leaf-derivatives}

The preceding rules propagate derivatives from the output profile of an \(n\)-ary node to the slopes, intercepts, and breakpoints of its child profiles. Reversing the child tapes recursively continues this propagation through the parse tree until it reaches the primitive controls.

\paragraph{Reverse accumulation.}
If a recorded variable $z$ is used to compute later variables $w_1,\ldots,w_q$, then the chain rule gives
\begin{equation}
\frac{\partial V_{\Network}^*}{\partial z}
=
\sum_{r=1}^{q}
\frac{\partial V_{\Network}^*}{\partial w_r}
\frac{\partial w_r}{\partial z}.
\label{eq:reverse-accumulation}
\end{equation}
The reverse traversal evaluates this identity by processing the tape in the opposite order from the forward computation and adding every contribution to the derivative already stored for $z$.

At a $\Par$ node,  Equations~\ref{eq:parallel-product-derivative-left} to \ref{eq:parallel-integration-derivative-left-endpoint} propagate derivatives from the parent interval values to its slope and breakpoints and then through the stored product tree to the active child slopes. No explicit leave-one-out products are required.

At a $\Ser$ node, Equations~\ref{eq:series-derivative-alpha-L} to \ref{eq:series-derivative-alpha-R-breakpoint} propagate the derivatives of the active parent piece and its breakpoint to the affine coefficients and local breakpoints of the children. Recursively reversing each child tape eventually reaches a primitive control.

\subsubsection{How an internal node contributes to a leaf derivative}
\label{app:node-to-leaf-chain}

Fix a primitive control $e$. Suppose first that $e$ lies in the subtree of child $G_i$ of the current internal node. The internal-node reverse pass does not differentiate the parent value directly with respect to $\beta_e$. Instead, it performs the following three-stage chain rule:
\begin{equation}
V_{\Network}^*
\longrightarrow
\text{parameters of the parent profile}
\longrightarrow
\text{parameters of the profile of }G_i
\longrightarrow
\beta_e.
\label{eq:node-to-leaf-pipeline}
\end{equation}
The first arrow is supplied by the reverse pass of the component containing the current node. The second arrow is supplied by the parallel or series rules below. The third arrow is obtained by recursively reversing the construction of the child profile. This recursion ends at the leaf rule in Proposition~\ref{prop:leaf-reverse-rule}.

\paragraph{Parallel node.}
Let
\begin{equation}
G=\Par(G_1,\ldots,G_n),
\end{equation}
and suppose $e$ belongs to $G_i$. On a parent interval $[x_{j-1},x_j]$, let $c_{r,j}$ be the active slope of child $G_r$. The parent slope is
\begin{equation}
d_j=\prod_{r=1}^{n}c_{r,j}.
\end{equation}
Once the integration step has produced
\begin{equation}
\frac{\partial V_{\Network}^*}{\partial d_j},
\end{equation}
the product tree passes the corresponding derivative to the active slope of child $G_i$:
\begin{equation}
\frac{\partial V_{\Network}^*}{\partial c_{i,j}}
\mathrel{+}=
\frac{\partial V_{\Network}^*}{\partial d_j}
\prod_{\substack{r=1\\r\neq i}}^{n}c_{r,j}.
\label{eq:parallel-child-slope-derivative}
\end{equation}
The segment tree evaluates Equation~\ref{eq:parallel-child-slope-derivative} by reversing its stored multiplications; it does not explicitly form a separate product for every child.

The interval endpoints $x_{j-1}$ and $x_j$ may be breakpoints inherited from child $G_i$. When this occurs, the endpoint derivatives from Equation~\ref{eq:parallel-integration-derivative-right-endpoint} and Equation~\ref{eq:parallel-integration-derivative-left-endpoint} are routed to the corresponding child breakpoint. Thus the parallel node returns derivatives with respect to two types of child-profile data:
\begin{equation}
\frac{\partial V_{\Network}^*}{\partial c_{i,j}}
\qquad\text{and}\qquad
\frac{\partial V_{\Network}^*}{\partial t_{i,k}},
\label{eq:parallel-child-profile-derivatives}
\end{equation}
where $t_{i,k}$ ranges over the breakpoints of $\widehat{\Psi}_{G_i}$.

Let $\mathcal I_i$ index the affine pieces of $\widehat{\Psi}_{G_i}$ and let $\mathcal B_i$ index its internal breakpoints. The complete dependence of the network value on $\beta_e$ through this child is
\begin{equation}
\frac{\partial V_{\Network}^*}{\partial\beta_e}
=
\sum_{j\in\mathcal I_i}
\frac{\partial V_{\Network}^*}{\partial c_{i,j}}
\frac{\partial c_{i,j}}{\partial\beta_e}
+
\sum_{k\in\mathcal B_i}
\frac{\partial V_{\Network}^*}{\partial t_{i,k}}
\frac{\partial t_{i,k}}{\partial\beta_e}.
\label{eq:parallel-to-beta-chain}
\end{equation}
Equation~\ref{eq:parallel-to-beta-chain} is not evaluated as two explicit sums. Reversing the tape of $G_i$ computes the same chain rule recursively. If $G_i$ is an internal component, its own parallel or series rule propagates the derivatives further. If $G_i=\Ctrl(e)$, the leaf rule converts them into $\partial V_{\Network}^*/\partial\beta_e$.

\paragraph{Series node.}
Now let
\begin{equation}
G=\Ser(G_1,\ldots,G_n),
\end{equation}
and again suppose $e$ belongs to $G_i$. During one parent interval, child $G_i$ contributes an active affine piece
\begin{equation}
\widehat{\Psi}_{G_i}(s)=a_{i,j}s+b_{i,j},
\end{equation}
together with its next local breakpoint $\eta_{i,j}$. The series segment tree combines these quantities through Equation~\ref{eq:series-coefficient-combine} and Equation~\ref{eq:mapped-upstream-breakpoint}. Reversing the tree therefore returns
\begin{equation}
\frac{\partial V_{\Network}^*}{\partial a_{i,j}},
\qquad
\frac{\partial V_{\Network}^*}{\partial b_{i,j}},
\qquad
\frac{\partial V_{\Network}^*}{\partial\eta_{i,j}}.
\label{eq:series-child-profile-derivatives}
\end{equation}

Let $\mathcal I_i$ index the active affine pieces of $\widehat{\Psi}_{G_i}$ encountered during the sweep, and let $\mathcal B_i$ index its local breakpoints. The corresponding chain rule is
\begin{equation}
\frac{\partial V_{\Network}^*}{\partial\beta_e}
=
\sum_{j\in\mathcal I_i}
\frac{\partial V_{\Network}^*}{\partial a_{i,j}}
\frac{\partial a_{i,j}}{\partial\beta_e}
+
\sum_{j\in\mathcal I_i}
\frac{\partial V_{\Network}^*}{\partial b_{i,j}}
\frac{\partial b_{i,j}}{\partial\beta_e}
+
\sum_{k\in\mathcal B_i}
\frac{\partial V_{\Network}^*}{\partial\eta_{i,k}}
\frac{\partial\eta_{i,k}}{\partial\beta_e}.
\label{eq:series-to-beta-chain}
\end{equation}
As in the parallel case, the implementation does not form these sums explicitly. It reverses the tape of $G_i$, which applies the chain rule operation by operation until the derivative reaches the primitive parameter $\beta_e$.

\paragraph{Interpretation.}
Equations \eqref{eq:parallel-to-beta-chain} and \eqref{eq:series-to-beta-chain} express the same principle. A parallel node communicates with a child through the slopes and breakpoints of the child profile. A series node communicates with a child through the slopes, intercepts, and local breakpoints of the child profile. In both cases, the child tape translates derivatives with respect to profile data into derivatives with respect to the primitive controls in that child. Therefore, for a fixed control $e$, only the unique root-to-leaf path containing $e$ contributes to
\begin{equation}
\frac{\partial V_{\Network}^*}{\partial\beta_e}.
\label{eq:unique-root-leaf-derivative-path}
\end{equation}

\paragraph{Leaf construction.}
For a primitive control $e$, let
\begin{equation}
L_h(s)=A_h+B_hs
\end{equation}
be the payoff from allowing at most $h$ attempts before taking outside option $s$. Set
\begin{equation}
A_0=0,\qquad B_0=1,
\end{equation}
and write $p_h=p_e(h)$. Appending attempt $h$ gives
\begin{equation}
A_{h+1}=A_h+\beta_ep_hB_h,
\qquad
B_{h+1}=\beta_e(1-p_h)B_h.
\label{eq:leaf-coefficient-recurrence}
\end{equation}
The calibrated leaf profile is the upper envelope
\begin{equation}
\widehat{\Psi}_e(s)=
\max_{0\leq h\leq q_e}
\{A_h+B_hs\}.
\label{eq:leaf-upper-envelope}
\end{equation}
The leaf tape records the recurrence, the line intersections used to form the upper hull, and the active line on every profile interval. Reversing the hull construction first converts incoming profile derivatives into derivatives of $A_h$ and $B_h$.

\begin{appendixprop}[Leaf reverse rule]
\label{prop:leaf-reverse-rule}
For $h=q_e-1,\ldots,0$, reversing Equation~\ref{eq:leaf-coefficient-recurrence} gives the following updates.

The derivative of $A_h$ receives
\begin{equation}
\frac{\partial V_{\Network}^*}{\partial A_h}
\mathrel{+}=
\frac{\partial V_{\Network}^*}{\partial A_{h+1}}.
\label{eq:leaf-derivative-A}
\end{equation}
The derivative of $B_h$ receives two contributions:
\begin{equation}
\frac{\partial V_{\Network}^*}{\partial B_h}
\mathrel{+}=
\beta_ep_h
\frac{\partial V_{\Network}^*}{\partial A_{h+1}},
\label{eq:leaf-derivative-B-from-A}
\end{equation}
\begin{equation}
\frac{\partial V_{\Network}^*}{\partial B_h}
\mathrel{+}=
\beta_e(1-p_h)
\frac{\partial V_{\Network}^*}{\partial B_{h+1}}.
\label{eq:leaf-derivative-B-from-B}
\end{equation}
The derivative of $\beta_e$ also receives two contributions:
\begin{equation}
\frac{\partial V_{\Network}^*}{\partial\beta_e}
\mathrel{+}=
p_hB_h
\frac{\partial V_{\Network}^*}{\partial A_{h+1}},
\label{eq:leaf-derivative-beta-from-A}
\end{equation}
\begin{equation}
\frac{\partial V_{\Network}^*}{\partial\beta_e}
\mathrel{+}=
(1-p_h)B_h
\frac{\partial V_{\Network}^*}{\partial B_{h+1}}.
\label{eq:leaf-derivative-beta-from-B}
\end{equation}
After all attempts have been reversed, the accumulated quantity is
\begin{equation}
\frac{\partial V_{\Network}^*}{\partial\beta_e}.
\end{equation}
Since $\beta_e=e^{-\lambda\ell_e}$,
\begin{equation}
\frac{\partial V_{\Network}^*}{\partial\ell_e}
=
-\lambda\beta_e
\frac{\partial V_{\Network}^*}{\partial\beta_e}.
\label{eq:leaf-delay-gradient}
\end{equation}
\end{appendixprop}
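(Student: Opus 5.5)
The plan is to verify the leaf reverse rule by differentiating the forward recurrence in Equation~\ref{eq:leaf-coefficient-recurrence} term by term, using the reverse-accumulation identity in Equation~\ref{eq:reverse-accumulation}. The forward tape at a leaf first computes $(A_h,B_h)$ for $h=0,\ldots,q_e$, and only then builds the upper envelope in Equation~\ref{eq:leaf-upper-envelope}. Hence, by the time the recurrence is reversed, the hull reversal has already deposited $\partial V_{\Network}^*/\partial A_h$ and $\partial V_{\Network}^*/\partial B_h$ for every $h$. These are the contributions through the active lines and through their intersection abscissae. I take those deposits as given initial adjoints and process $h=q_e-1,\ldots,0$ in decreasing order.

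At step $h$, the two forward assignments are $A_{h+1}=A_h+\beta_e p_h B_h$ and $B_{h+1}=\beta_e(1-p_h)B_h$. The variables $A_{h+1},B_{h+1}$ are used only in step $h+1$ of the recurrence and in the hull. Both of those consumers are reversed before step $h$. So when step $h$ is reached, the adjoints of $A_{h+1}$ and $B_{h+1}$ are complete, which is the key ordering claim. The partial derivatives are then read off directly:
\[
\frac{\partial A_{h+1}}{\partial A_h}=1,\quad
\frac{\partial A_{h+1}}{\partial B_h}=\beta_e p_h,\quad
\frac{\partial A_{h+1}}{\partial \beta_e}=p_hB_h,\quad
\frac{\partial B_{h+1}}{\partial B_h}=\beta_e(1-p_h),\quad
\frac{\partial B_{h+1}}{\partial \beta_e}=(1-p_h)B_h .
\]
Also, $\partial B_{h+1}/\partial A_h=0$, and $p_h$ does not depend on $\ell$. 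Substituting these into Equation~\ref{eq:reverse-accumulation} gives exactly the five accumulations in Equations~\ref{eq:leaf-derivative-A}--\ref{eq:leaf-derivative-beta-from-B}.

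The variable $\beta_e$ is shared by every step of the recurrence. By additivity of the chain rule over all uses, after the sweep the accumulator holds the full $\partial V_{\Network}^*/\partial\beta_e$. Two conditions make this so. First, $\beta_e$ enters the leaf profile only through the recurrence; the hull uses it only via $A_h,B_h$. Second, by the unique root-to-leaf path remark (Equation~\ref{eq:unique-root-leaf-derivative-path}), $\beta_e$ enters no other leaf. Finally, $d\beta_e/d\ell_e=-\lambda\beta_e$ yields Equation~\ref{eq:leaf-delay-gradient}.

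The main obstacle is justifying that differentiating through the hull and the tape is legitimate, i.e.\ that the computation is locally a fixed composition of smooth maps. At regular $\ell$, the active lines and the order of intersections are locally constant, so this holds and the result is an exact gradient. At ties, I would appeal to the fixed tie rule and to the convexity of $V^*_{\Network}$ stated in Section~\ref{sec:defender}, so that the limiting gradient is a subgradient. For the proposition itself, I would check only the ordering invariant: every consumer of $A_{h+1}$ and $B_{h+1}$ precedes step $h$ in reverse order.
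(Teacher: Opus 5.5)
Your proposal is correct and follows the paper's own argument. Both read the five accumulations directly off the two scalar assignments $A_{h+1}=A_h+\beta_e p_hB_h$ and $B_{h+1}=\beta_e(1-p_h)B_h$, sum the contributions to $\beta_e$ over the reversed sweep, and finish with $\partial\beta_e/\partial\ell_e=-\lambda\beta_e$. The paper's proof is only a few lines; your check that every consumer of $A_{h+1},B_{h+1}$ (the next recurrence step and the hull) is reversed before step $h$ states explicitly what it leaves implicit.
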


\begin{proof}
Each update follows directly from the two scalar assignments in Equation~\ref{eq:leaf-coefficient-recurrence}. Processing the attempts in reverse order accumulates every direct and indirect dependence of the leaf profile on $\beta_e$. The final conversion follows from
\begin{equation}
\frac{\partial\beta_e}{\partial\ell_e}=-\lambda\beta_e.
\end{equation}
\end{proof}

Combining the internal-node rules with Proposition~\ref{prop:leaf-reverse-rule} gives the complete derivative path
\begin{equation}
\frac{\partial V_{\Network}^*}{\partial(\text{root-profile data})}
\longrightarrow
\frac{\partial V_{\Network}^*}{\partial(\text{child-profile data})}
\longrightarrow
\frac{\partial V_{\Network}^*}{\partial\beta_e}
\longrightarrow
\frac{\partial V_{\Network}^*}{\partial\ell_e}.
\label{eq:complete-derivative-path}
\end{equation}
At a regular parameter vector, the complete reverse traversal returns the exact defender gradient. At a tie, the same traversal under the fixed deterministic tie rule returns the corresponding limiting subgradient.

\begin{example}
The technical intuition behind the $n$-RT algorithm can be overwhelming, so we give the following toy example of how the algorithm works. Suppose we have the following equations:
\begin{equation*}
    a = 2b \qquad c = a^2 \qquad V=5c
\end{equation*}

We want to compute $\frac{\partial V}{\partial b}$ evaluated at $b = 3$.
We first compute the profile value (forward pass) as such:
\begin{enumerate}
    \item $a = 2b = 2(3) = 6$ (Store input $3$ and local derivative $2$)
    \item $c = a^2 = 6^2 = 36$ (Store input $6$ and local derivative $2a = 12$)
    \item $V = 5c =5(36) = 180$ (Store input $36$ and local derivative $5$)
\end{enumerate} Now we compute the gradients using the stored inputs and local derivatives in the \textbf{reverse way (backward pass)}. Start with $V = 1$.
\begin{enumerate}
    \item $\frac{\partial V}{\partial c} += \mathbf{5}*V = 5(1) = 5$
    \item $\frac{\partial V}{\partial a} += \mathbf{2a}*\frac{\partial V}{\partial c} = (12)(5) = 60 $
    \item $\frac{\partial V}{\partial b} += \mathbf{2} * \frac{\partial V}{\partial a} = 2(60) = 120$
\end{enumerate}
One can verify via simple calculus that $\frac{\partial V}{\partial b} = 120$ when $b=3$. Furthermore, we get $\frac{\partial V}{\partial a} = 60, \frac{\partial V}{\partial c} = 5$ for free as well when $b = 3$.

\end{example}

\subsection{Complete value-and-gradient $n$-RT algorithm}
\label{app:complete-reverse-tape}

The preceding constructions describe how to combine the calibrated profiles of the child subcomponents at an \(n\)-ary $\Par$ or $\Ser$ node. We now assemble these local procedures into a complete algorithm for computing the value of the root component and its derivatives with respect to all defender delays.

\paragraph{Leaf initialisation.}
For each leaf control \(e\), the algorithm first computes the discount factor

\begin{equation}
\beta_e=\exp(-\lambda\ell_e).
\end{equation}

The calibrated leaf profile is then constructed from the finite collection of affine policies corresponding to different stopping times. The upper hull of these affine policies gives the piecewise-linear leaf profile. The recurrence operations used to form the affine policies, the line intersections used to determine the hull breakpoints, and the identity of the line active on each interval are recorded on a local reverse-mode tape.

\paragraph{Bottom-up profile construction.}
After all leaf profiles have been constructed, the internal nodes of the parse tree are processed in postorder, so that the profiles of all child subcomponents are available before their parent is processed. If \(G=\Par(G_1,\cdots,G_n)\), the profile of \(G\) is constructed using Algorithm~\ref{alg:nary-parallel}. If \(G=\Ser(G_1,\cdots,G_n)\), it is constructed using Algorithm~\ref{alg:nary-series}. Each node appends its segment-tree operations, breakpoint computations, and profile-piece computations to the global reverse-mode tape \(\mathcal{A}\).

\paragraph{Root evaluation.}
Let \(\Network\) denote the root component. Once its calibrated profile has been constructed, the attacker value of the complete network is obtained by evaluating the root profile at outside option 0: $V^\star_\Network(\ell)=\widehat{\Psi}_N(0)$. The evaluation operation is appended to the tape, and the derivative of \(V^\star\) with respect to itself is initialised to one.

\paragraph{Reverse traversal.}
The reverse-mode tape is traversed in the opposite order from the forward profile construction. The traversal is initialized by setting the derivative of the root value with respect to itself equal to one. At each recorded operation, the corresponding local chain-rule derivatives are applied, and all contributions to a common input variable are accumulated. The reverse pass first propagates the sensitivity of the root value through the profile of the root component, then through the profiles of its child subcomponents, and continues recursively until it reaches the leaf-profile constructions. As shown in Subsection~\ref{app:local-to-leaf-derivatives}, reversing the tape associated with leaf control $e$ accumulates all paths through which its discount factor affects the root value. The derivative stored at the end of this leaf tape is therefore $\partial V^\star_\Network/\partial\beta_e$. After which optaining the full subgradient is easy. The full procedure is summarised in Algorithm~\ref{alg:complete-value-gradient}.

\begin{algorithm}
\caption{$n$-RT}
\label{alg:complete-value-gradient}
\begin{algorithmic}[1]
\Require Parse tree with root \(\Network\), defender delays \((\ell_e)_{e\in \mathcal{E}}\), and discount rate \(\lambda\)
\Ensure Network value \(V^\star\) and derivatives \((\partial V^\star_\Network/\partial\ell_e)_{e\in \mathcal{E}}\)
\State Initialize an empty reverse-mode tape \(\mathcal{A}\)
\For{each leaf control \(e\in \mathcal{E}\)}
    \State Compute \(\beta_e\gets\exp(-\lambda\ell_e)\)
    \State Construct \(\widehat{\Psi}_{\Ctrl(e)}\) and append the leaf-profile operations to \(\mathcal{A}\)
\EndFor
\For{each internal node \(G\) in postorder}
    \If{\(G=\Par(G_1,\cdots,G_n)\)}
        \State Construct \(\widehat{\Psi}_G\) using Algorithm~\ref{alg:nary-parallel}
    \Else
        \State Construct \(\widehat{\Psi}_G\) using Algorithm~\ref{alg:nary-series}
    \EndIf
    \State Append the node-level operations to \(\mathcal{A}\)
\EndFor
\State Evaluate \(V^\star\gets\widehat{\Psi}_\Network(0)\) and set \(\partial V^\star_\Network/\partial V^\star_\Network\gets 1\)
\For{each operation in \(\mathcal{A}\) in reverse order}
    \State Apply its local chain-rule derivatives and accumulate the resulting contributions
\EndFor
\For{each leaf control \(e\in \mathcal{E}\)}
    \State Set \(\partial V^\star_\Network/\partial\ell_e\gets-\lambda\beta_e\,\partial V^\star_\Network/\partial\beta_e\)
\EndFor
\State \Return \(V^\star\) and \(g(\ell) = (\partial V^\star_\Network/\partial\ell_e)_{e\in \mathcal{E}}\)
\end{algorithmic}
\end{algorithm}

At parameter values for which all active affine pieces and breakpoint orderings are locally unique, Algorithm~\ref{alg:complete-value-gradient} returns the exact gradient. The treatment of coincident breakpoints and non-differentiable parameter values is given in the subsequent subsection.

\subsection{$n$-RT Complexity Analysis}
\label{app:whole-tree-complexity}

The parallel and series constructions have the same asymptotic cost at an internal component. Let $\mathcal{I}$ denote the set of internal components in the parse tree. For each $G\in\mathcal{I}$, let $m_G$ be the number of child subcomponents of $G$, let $E(G)$ be the set of leaf controls below $G$, and let $q_e$ be the lockout limit of leaf control $e$. Define the subtree lockout budget

\begin{equation}
Q_G=\sum_{e\in E(G)}q_e.
\end{equation}

If $M(\widehat{\Psi}_e)$ denotes the number of pieces in the calibrated profile of leaf $e$, then $M(\widehat{\Psi}_e)\leq q_e+1$. More generally, the profile-size bound gives $M(\widehat{\Psi}_G)\leq Q_G+1$ for every component $G$.

The number of events processed while constructing the profile of $G$ is at most proportional to $Q_G$. Each event performs a constant number of queries and updates in a balanced tree of size $m_G$, and therefore costs $\Oh{\log m_G}$. Consequently, the forward construction and reverse traversal at $G$ each require \(\Oh{Q_G\log m_G}\) work. Summing over all internal components gives the whole-tree complexity \(\Oh{\sum_{G\in\mathcal{I}}Q_G\log m_G}.\)

To express this bound in terms of root-to-leaf structure, write $G \succ e$ to denote that $G$ is an ancestor of $e$, including its parent and the root component. Define the log-arity depth of the parse tree by

\begin{equation}
\delta=\max_{e\in \mathcal{E}}\sum_{G\succ e}\log m_G.
\end{equation}

Since each leaf-profile piece contributes to $Q_G$ precisely when $G$ is an ancestor of that leaf, the node-wise costs can be rearranged as

\begin{equation}
\sum_{G\in\mathcal{I}}Q_G\log m_G=\sum_{e\in \mathcal{E}}q_e\sum_{G\succ e}\log m_G.
\end{equation}

Letting $Q=\sum_{e\in \mathcal{E}}q_e$ denote the total number of leaf-profile pieces, it follows that

\begin{equation}
\sum_{G\in\mathcal{I}}Q_G\log m_G\leq Q\delta.
\end{equation}

The complete value-and-gradient algorithm therefore requires \(\Oh{Q\delta}\) work for both the bottom-up profile construction and the reverse-mode derivative computation. The reverse pass has the same asymptotic complexity as the forward pass because each recorded operation is processed once and applies only a constant number of local derivative updates.

If the parse tree has depth at most $d$ and every internal component has arity at most $M$, then $\delta\leq d\log M$, giving the simpler bound \(\Oh{Qd\log M}\). In particular, a single flat component with $Q$ leaf-profile pieces and arity proportional to $Q$ is processed in $\Oh{Q\log Q}$ time.

\subsection{Ties and subgradients}
\label{app:ties-and-subgradients}

The preceding derivative formulas assume that the discrete decisions made during the forward profile construction remain unchanged under a sufficiently small perturbation of the defender delays. This condition may fail when several affine policies are simultaneously optimal, several breakpoint events coincide, or a minimum operation has more than one owner. This subsection explains how such ties are handled and why the resulting reverse pass returns a valid subgradient.

\paragraph{Regular parameter vectors.}
Call a defender-delay vector regular if the active affine policy at every leaf is locally unique, the ordering of all copied and mapped breakpoints is locally unchanged, and every minimum event has a unique owner. On a neighbourhood of a regular vector, the same affine pieces, breakpoint orderings, and event owners are selected. The recorded profile construction is therefore an ordinary differentiable arithmetic circuit, and reversing its tape returns the exact gradient of $V^\star_\Network$.

\paragraph{Simultaneous breakpoint events.}
At a parallel or series component, several children may reach their next breakpoint at the same outside-option value. All such events must be processed as one batch. The parent profile is terminated at the common breakpoint, every child involved in the event is advanced to its next active piece, and all affected paths in the balanced event tree are updated before the next parent interval is constructed. This prevents the creation of artificial zero-width intervals and ensures that the forward profile remains independent of the order in which equal events are detected.

\paragraph{No-nunique selections.}
A tie may also occur when two affine policies attain the same value at a leaf or when two candidate event locations are equal in a segment-tree minimum. At a strict comparison, the reverse derivative follows the uniquely selected argument. At a tie, the implementation uses a fixed symbolic or lexicographic perturbation rule to select one of the neighbouring regular regions. The same rule must be used consistently throughout the forward construction and the reverse traversal so that the tape represents the limiting computation associated with that region.

\paragraph{Limiting subgradient.}
Let $\ell$ be a tied parameter vector, and let $(\ell^{(r)})_{r\geq 1}$ be a sequence of regular parameter vectors approaching $\ell$ while preserving the selections specified by the perturbation rule. The reverse pass at $\ell^{(r)}$ returns the exact gradient $\nabla V^\star_\Network(\ell^{(r)})$. The vector returned at the tie is the corresponding limiting gradient

\begin{equation}
g(\ell)=\lim_{r\to\infty}\nabla V^\star_\Network(\ell^{(r)}).
\end{equation}

Because $V^\star_\Network$ is finite and convex in the defender delays, every such limiting gradient satisfies

\begin{equation}
V^\star_\Network(y)\geq V^\star_\Network(\ell)+g^\top(y-\ell)
\end{equation}

for every feasible delay vector $y$. Hence $g(\ell)\in\partial V^\star_\Network(\ell)$, so the reverse-tape algorithm returns a valid subgradient even when the objective is not differentiable at $\ell$.

\paragraph{Numerical event grouping.}
In a floating-point implementation, a numerical tolerance may be used to decide whether breakpoint events should be processed in the same batch. The tolerance should be used only to group values that are numerically indistinguishable; it should not reverse the mathematical ordering of visibly separated breakpoints. Within a batch, the fixed perturbation rule determines the derivative selection and makes the computed subgradient reproducible.

\section{Omitted Proof of Theorem 2}
Recall that in our model $\Par(G_1, G_2, \dots G_n)$ means that as long as \textit{any} of $G_1$ is compromised, then the entire sub-component is as well. Another natural interpretation is that \textit{all} of $G_1,\dots,G_n$ need to be compromised. This would be similar to our $\Ser$ formulation, but does \textit{not} impose sequential constraints (and is hence commutative). Let us distinguish these two notions of ``parallel'' by $\ParOR$ and $\ParAND$ respectively. We have the following negative result.
\begin{restatetheorem}{2}[Informal] 
Extending the grammar of our SP-networks to allow both $\ParOR$ and $\ParAND$ makes finding the optimal solution $\pi^*$ for arbitrary $\ell$ weakly NP-hard. 
\end{restatetheorem}
 This is an unfortunate result, since $\ParAND$ is commonly seen in real attack graphs. Nonetheless, it can be viewed both a boon and bane to the defender. While this makes finding the true optimal defence difficult to compute, such networks are also \textit{computationally} difficult for the attacker to exploit. Finding fixed-parameter tractable or constant-factor approximations is left for future work. We give the formal writing of the theorem in the next subsection.

\subsection{Weak NP-Hardness for Mixed $\ParOR$--$\ParAND$ Networks}
\label{appendix:mixed-hardness}

\begin{appendixtheorem}[(Formal) Weak NP-hardness of mixed parallel semantics]
\label{thm:mixed-and-or-hardness}
Consider the value-decision problem that asks, for a rationally parameterised attack network $\Network$ and a rational threshold $\Theta$, whether $V_N^*\geq\Theta$. Extending the SP grammar to permit both $\ParOR$ and $\ParAND$ makes this problem weakly NP-hard. The hardness already holds for networks with one top-level $\ParOR$ node whose non-trivial children are pairwise disjoint two-control $\ParAND$ bundles (a bundle is a set of controls in which all controls in the set must be breached to compromise the bundles). Each bundle control has at most two unit-duration attempts, and its second-attempt success probability is strictly smaller than its first-attempt success probability. The construction uses only one additional one-shot pivot child and one deterministic fallback child. Consequently, exact computation of the optimal value is weakly NP-hard.
\end{appendixtheorem}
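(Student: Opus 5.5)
We reduce from \textsc{Partition} (or \textsc{Subset-Sum}), which is weakly NP-hard, and realise the combinatorial choice through the attacker's adaptive order of attempts inside the $\ParAND$ bundles. Given positive integers $a_1,\dots,a_n$ with $\sum_i a_i=2S$, we build one two-control $\ParAND$ bundle $B_i=\ParAND(X_i,Y_i)$ per item. Each control has unit duration ($\beta=e^{-\lambda}$ common), two attempts, and first- and second-attempt success probabilities $p^{(1)}>p^{(2)}$. These probabilities are chosen as rational functions of $a_i$ so that each bundle has two locally undominated ``modes'' of play, say \emph{commit} (keep retrying after a failure) versus \emph{probe} (abandon the bundle after the first failure). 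The two modes trade off discounted success mass against expected elapsed time by amounts proportional to $a_i$, to first order.

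Above the bundles we place a single top-level $\ParOR(B_1,\dots,B_n,P,F)$. Here $P$ is a one-shot pivot control and $F$ is a deterministic fallback control of fixed length. The fallback fixes an outside value $\gamma_F$ that the attacker can always secure, so the value becomes $\max$ over strategies of a piecewise expression. The pivot is calibrated so that it pays off exactly when the cumulative ``probe'' budget spent equals a target $S$.

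The key steps, in order, are as follows.

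\textbf{Local analysis of a bundle.} We compute $\Ps_{B_i}$ explicitly for a two-control AND bundle with two attempts each. We then show that the relevant optimal continuation policies collapse to the two modes above, with calibrated payoffs whose affine coefficients $(a_\pi,b_\pi)$ from Property~\ref{a-lem:affine} differ by terms encoding $a_i$.

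\textbf{Failure of the Gittins structure.} We show the bundle violates the buyout-independent control condition (Definition~\ref{def:buyout-independent-control}). After one control of an AND bundle is breached, its partner's value jumps, so the optimal internal action depends on $\gamma$. This is what allows a global, knapsack-like interaction across bundles instead of a greedy index rule.

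\textbf{Value decomposition.} Using the deterministic fallback and the discount multiplicativity used in Lemma~\ref{lem:series-composition-proof}, we write $V^*_\Network$ as a maximum over subsets $I\subseteq[n]$ of bundles played in commit mode. The objective is shown to be a strictly concave function of $\sum_{i\in I}a_i$, maximised at value $\Theta$ exactly when this sum equals $S$. All parameters have polynomial bit length in the binary encoding of the $a_i$, which gives weak (not strong) hardness, consistent with pseudo-polynomial DP over the sum.

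\textbf{Polynomial-time reduction and equivalence.} We set $\Theta$ to the value achieved by an exact partition. We then argue both directions: a partition yields a policy achieving $\Theta$, and any policy with value $\ge\Theta$ induces a subset summing to $S$. The second direction uses strict concavity together with a gap argument: integrality of the $a_i$ gives a separation of at least $1/\mathrm{poly}$, which dominates the higher-order terms neglected in the first-order expansion.

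The main obstacle is the value decomposition step. Genuinely adaptive attackers can interleave bundles and switch modes mid-run, so we must prove that, under our parameter choice, every optimal policy is equivalent to a static subset choice. I would first try to show that interleaving is dominated: by the exchange argument in Lemma~\ref{lem:parallel-action-gap}, any OR-level alternation among bundles can be reordered without loss. Mode switching would then be controlled by making $p^{(2)}$ small enough that second attempts are only worthwhile under the commit regime. Carrying the rational error terms through so that the gap survives is the delicate calculation.
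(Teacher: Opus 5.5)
Your outer architecture matches the statement: a top-level $\ParOR$ of disjoint $\ParAND$ bundles, a pivot and a fallback, a partition-type source problem, an atomisation step, and a gap argument. The mechanism that should create the combinatorial coupling is missing, though, and the two ingredients you name for it would not work.

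First, \emph{commit} versus \emph{probe} is not a free binary choice. Commit is just probe with the residual fragment appended on the failure branch, and the attacker can also defer that residual. Write any fragment as $f(W)=\mu_f+\nu_fW$ with $\chi(f)=\mu_f/(1-\nu_f)$. Then $f(W)-W=(1-\nu_f)(\chi(f)-W)$, and fragments on disjoint children are ordered by decreasing $\chi$ (Lemmas~\ref{lem:exchange-rules-1} and~\ref{lem:exchange-rules-2}). So whether the residual is played immediately, later or never is settled by threshold comparisons; it gives no independent degree of freedom per bundle. Your knob of taking $p^{(2)}$ small only makes this worse: it lowers the residual thresholds and makes commit dominated outright.

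Second, even two genuine alternative modes with a common threshold $\theta$ do not interact. A sequence of such fragments ending at the fallback is worth $\theta-(\theta-m)\prod_i\nu_i$, so each bundle just takes its smaller slope. A one-shot pivot contributes a fixed map $\mu_P+\nu_PW$, so it cannot ``pay off exactly when the cumulative probe budget equals $S$''. Nothing you describe produces an objective concave in $\sum_{i\in I}a_i$.

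The paper's missing idea works as follows.
\begin{itemize}
\item The two modes are \emph{which prerequisite is attacked first}: $H_i$ attacks $U_i$ first and $L_i$ attacks $V_i$ first. Both abandon the bundle after a failed first attempt.
\item A gadget enforces $\chi(\text{low residual})<m<\theta<\iota<\chi(\text{high residual})$ (Lemma~\ref{lem:robust-base-gadget}).
\item Weights enter the slopes. $H_i$ has threshold $h=\theta+\eta$ and slope $bx_i$, and $L_i$ has threshold $\theta$ and slope $cx_i^2$, where $x_i=1+\epsilon w_i$.
\item The pivot's threshold sits strictly between $\theta$ and $h$. This forces the order $H$-group, $P$, $L$-group, $F$.
\item The result is $h-V(S)=\xi b^k\bigl(z_S+z_0^2/z_S\bigr)$ with $z_S=\prod_{i\in S}x_i$. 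This is minimised at $z_S=z_0=1+\epsilon Z$ and is where the concavity you want actually comes from.
\end{itemize}

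The base slopes contribute $b^{|S|}c^{2k-|S|}$, so $|S|$ must be controlled. That is why the paper reduces from \textsc{Equal-Cardinality Partition} and uses $2c/(1+c)<b<(1+c)/2$ to force $|S|=k$ (Lemma~\ref{lem:balanced-assignment}). Your plain \textsc{Partition} reduction leaves this unaddressed.

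You also cannot use Lemma~\ref{lem:parallel-action-gap} to rule out interleaving. It rests on $\Ps_G(\gamma)=1-\int_\gamma^1 g_1g_2\,dz$, which comes from the superprocess index theorem. That theorem needs the children to satisfy the buyout-independent control condition, which is exactly what you argue AND bundles violate. The paper instead atomises each bundle using only the affine-fragment exchange rules plus the threshold separation (Lemma~\ref{lem:atomisation}). Low residuals lie below the always-available fallback value $m$ and are never revisited. High residuals lie above $\iota$ and are finished at once.

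Finally, the YES/NO gap is of order $\Delta(bc)^k\epsilon^2$, which is exponentially small in $k$ rather than $1/\mathrm{poly}$. It still has polynomial bit length, but rationalising the probabilities needs its own robustness-plus-coupling argument.
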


The proof is a reduction from \textsc{Equal-Cardinality Partition}. Its main difficulty is not the numerical encoding but the adaptive policy space: after a failed attempt, the attacker may leave a bundle, attack elsewhere, and later return. We first build a robust bundle gadget whose residual states are separated by continuation thresholds. We then prove an atomisation lemma showing that some optimal adaptive policy treats each bundle as one of two indivisible modes. The remaining optimisation is a finite ordering problem over affine maps, where a product construction encodes the partition instance.

\subsection{Restricted instances and the source problem}
\label{appendix:H-intro}

Fix the per-unit-time discount factor to
\begin{equation}
\beta=e^{-\lambda}=\frac12,
\end{equation}
so $\lambda=\log 2$. The constructed network has one top-level $\ParOR$ node. It contains $2k$ bundle children, one pivot child, and one fallback child. Bundle $B_i$ is a $\ParAND$ of two prerequisite controls $U_i$ and $V_i$. The top-level node completes when any child completes, whereas a bundle completes only when both of its controls have been breached.

\begin{figure}[h!]
\centering
\begin{tikzpicture}[>=Stealth,
  term/.style={circle,draw,fill=black,inner sep=1.1pt},
  lbl/.style={font=\small,inner sep=1.5pt}]
\begin{scope}
  \node[term] (s1) at (0,0) {};
  \node[term] (t1) at (2.2,0) {};
  \draw[->] (s1) to[bend left=30] node[lbl,above]{$G_1$} (t1);
  \draw[->] (s1) to[bend right=30] node[lbl,below]{$G_2$} (t1);
  \node[font=\small] at (1.1,-1.15) {(a) $\ParOR(G_1,G_2)$};
\end{scope}
\begin{scope}[xshift=3.4cm]
  \node[term] (s2) at (0,0) {};
  \node[term] (t2) at (2.2,0) {};
  \draw[->] (s2) to[bend left=30] node[lbl,above]{$G_1$} (t2);
  \draw[->] (s2) to[bend right=30] node[lbl,below]{$G_2$} (t2);
  \draw[thick] ([shift=(148:0.70)]t2) arc[start angle=148,end angle=212,radius=0.70];
  \node[font=\footnotesize] at ($(t2)+(-0.95,0)$) {$\wedge$};
  \node[font=\small] at (1.1,-1.15) {(b) $\ParAND(G_1,G_2)$};
\end{scope}
\end{tikzpicture}
\caption{The two parallel constructors on a two-terminal block. Both branches share a source and a sink. In (a) the block is cleared as soon as \textit{either} branch is cleared; in (b) the arc marks the merge as conjunctive, so the block is cleared only once \textit{both} branches are cleared.}
\label{fig:sp-and-or}
\end{figure}

\begin{figure*}[h!]
\centering
\begin{tabular}{@{}c@{\hspace{2.5em}}c@{}}
\begin{tikzpicture}[>=Stealth,
  gate/.style={circle,draw,inner sep=0pt,minimum size=4.4mm,font=\scriptsize},
  leaf/.style={draw,rounded corners=1.5pt,inner xsep=2pt,inner ysep=1.2pt,minimum width=6mm,minimum height=4mm,font=\scriptsize},
  every path/.style={thin}]
  \node[gate] (root) at (2.50,1.85) {$\vee$};
  \node[font=\scriptsize,anchor=south] at ($(root)+(0,0.2)$) {$\ParOR$};
  \foreach \x/\n/\lab in {0/a/1, 2.45/c/{2k}}{
    \node[gate] (\n) at (\x,0.9) {$\wedge$};
    \node[leaf] (\n u) at (\x-0.48,0) {$U_{\lab}$};
    \node[leaf] (\n v) at (\x+0.48,0) {$V_{\lab}$};
    \draw (\n) -- (\n u); \draw (\n) -- (\n v);
    \draw (root) -- (\n);
    \node[font=\scriptsize] at (\x,-0.58) {$B_{\lab}$};
  }
  \node[font=\scriptsize,anchor=east] at ($(a)+(-0.18,0)$) {$\ParAND$};
  \node[font=\scriptsize] at (1.225,0.9) {$\cdots$};
  \node[leaf] (pv) at (3.85,0.9) {$P$};
  \node[leaf] (fb) at (5.05,0.9) {$F$};
  \draw (root) -- (pv); \draw (root) -- (fb);
  \node[font=\scriptsize] at (3.85,0.3) {pivot};
  \node[font=\scriptsize] at (5.05,0.3) {fallback};
  \node[font=\small] at (2.20,-1.25) {(a) Example of a parse tree};
\end{tikzpicture}
&
\begin{tikzpicture}[>=Stealth,font=\scriptsize,
  term/.style={circle,draw,fill=black,inner sep=1.1pt},
  ed/.style={->,rounded corners=3pt,thin},
  elab/.style={above,font=\scriptsize,inner sep=1.5pt}]
  \node[term] (s) at (0,0) {};
  \node[term] (t) at (6.2,0) {};
  \foreach \y/\lab in {1.55/{$U_1$}, 1.00/{$V_1$}, 0.10/{$U_{2k}$}, -0.45/{$V_{2k}$},
                       -1.05/{$P$ (pivot)}, -1.60/{$F$ (fallback)}}{
    \draw[ed] (s) -- (1.0,\y) -- node[elab]{\lab} (5.2,\y) -- (t);}
  \foreach \ya/\yb in {1.55/1.00, 0.10/-0.45}{
    \draw[thick] (4.60,\ya+0.14) .. controls (4.86,{(\ya+\yb)/2}) .. (4.60,\yb-0.14);
    \node at (4.34,{(\ya+\yb)/2}) {$\wedge$};}
  \node at (2.0,0.55) {$\vdots$};
  \node[font=\small] at (3.1,-2.15) {(b) Equivalent SP network};
\end{tikzpicture}
\end{tabular}
\caption{Topology of a constructed instance, as an AND/OR parse tree (a) and as the two-terminal SP network it denotes (b). A single $\ParOR$ root races the $2k$ pairwise disjoint bundle children $B_1,\ldots,B_{2k}$ against the one-shot pivot child $P$ and the deterministic fallback child $F$. The run ends as soon as any child is cleared, and all children share the source and sink of the whole network. Each bundle is a $\ParAND$ of two prerequisite controls $U_i,V_i$, marked in (b) by an arc across the two conjunctive branches, and is cleared only when both controls are breached. Each bundle prerequisite admits at most two attempts. The pivot and the fallback comprise single controls, $P=\Ctrl(e_{\mathrm{piv}})$ and $F=\Ctrl(e_{\mathrm{fb}})$; their parameters are fixed later in the construction.}
\label{fig:hard-instance}
\end{figure*}

We reduce from \textsc{Equal-Cardinality Partition}. An instance consists of $2k$ positive integers $w_1,\ldots,w_{2k}$ satisfying
\begin{equation}
\sum_{i=1}^{2k}w_i=2Z,
\end{equation}
and asks whether there is a subset $S$ of the $2k$ positive integers such that
\begin{equation}
|S|=k,\qquad \sum_{i\in S}w_i=Z.
\end{equation}
This problem is NP-hard. Indeed, from a \textsc{Partition} instance $a_1,\ldots,a_n$ with total sum $2Z$, choose $C>2Z$ and form the $2n$ numbers $C+a_1,\ldots,C+a_n$ together with $n$ copies of $C$. A set of exactly $n$ new numbers has sum $nC+Z$ if and only if the selected original increments have sum $Z$.

We may assume without loss of generality that $k \geq k_0$ for a sufficiently large absolute constant $k_0$. If the original instance already satisfies $k \geq k_0$, no modification is needed. Otherwise, let $r \eqdef k_0 - k > 0$, and choose $C > 2Z$. Furthermore, we add $2r$ copies of $C$. Replace the target cardinality $k$ by $k + r = k_0$ and target sum $Z$ to $Z +rC$. Any feasible solution to the new instance must select exactly $r$ of the added copies as selecting fewer would require the original items to contribute more than $2Z$, while selecting more would make their required contribution negative. Hence, the new instance is feasible if and only if the original instance contains exactly $k$ items summing to $Z$.

\subsection{Proof architecture}
\label{appendix:H-overview}
The entire proof is really trying to do one thing. We want to turn each \textsc{Equal-Cardinality Partition} item $w_i$ into one attacker choice between 2 behaviours (in a bundle, either attack the first or the second control first. We have a name for each behaviour (or mode) to be defined later), and then engineer the network so that choosing which $k$ bundles fall under each behaviour encodes whether those $k$ weights sum to $Z$.

\paragraph{Step 1: Comparing Pieces of Attacker Behaviour.}
Consider a finite sequence of actions that operate only within one child of the top-level $\ParOR$ node and then either completes that child or switches away from it. We call such a sequence a \textit{policy fragment}. If switching away leads to a continuation of value $W$, each fragment induces an affine continuation $f(W) =\mu_f + \nu_fW$, where $\mu_f$ is its expected discounted contribution when it complete its child, while $\nu_f W$ is the discounted value fo continuing elsewhere when it does not. Its continuation threshold $\chi(f) \eqdef \frac{\mu_f}{1 - \nu_f}$ is the value of $W$ at which executing the fragment and skipping directly to $W$ are equally good. \Cref{appendix:affine-fragments} shows that fragments acting on disjoint children can be ordered by decreasing $\chi$, independently of the eventual continuation value.

\paragraph{Step 2: One 2-mode bundle for each source weight $w_i$}
For each integer $w_i$, construct a bundle $B_i = \ParAND(U_i, V_i)$, so completing $B_i$ requires breaching both controls. A fresh bundle has two intended behaviours, but we will call them \textit{modes} from now on: $H_i$ and $L_i$ modes. $H_i$ mode is when the attacker attacks $U_i$ first, whereas $L_i$ attacks $V_i$ first. After this first attempt, the unfinished bundle is in a \textit{residual state}, meaning the local state that remains after part of the bundle has been attempt. We design a \textit{gadget} so that a failed first attempt leaves a \textit{low residual state}, whereas a successful first attempt leaves a \textit{high residual state}, whose remaining actions have thresholds above all fresh bundles and auxiliary children. Thus failure makes the bundle unattractive to re-visit, while success makes finishing it immediately attractive.

\paragraph{Step 3: Eliminating arbitrary switching}
Steps 1 and 2 can already let us simplify the attacker's adaptive behaviour. After a \textit{successful} first attempt (bundle in a \textit{high residual state}), the attacker continues attacking the remaining prerequisite control until it either succeeds or exhausts all available attempts before moving to another child. If the first attempt in a fresh bundle fails (bundle in a \textit{low residual state}), the attacker permanently gives up on that bundle. It makes no further attempts on either prerequisite control and instead continues with other bundles, pivots, or fallback. Therefore some optimal policy treats every fresh bundle as \textbf{one indivisible choice}: execute either $H_i$ or $L_i$ mode, and then move on. We call this the \textit{macro} problem: instead of reasoning about individual attempts and arbitrary returns to partially completed bundles, we reason only about an ordered sequence of the affine maps \textit{due to} the mode the attacker adopts for each bundle. This is the \textit{atomisation} result proved in \Cref{appendix:atomisation}.

\paragraph{Step 4: Encoding the partition choice} Recall that each bundle mode has the form $f_i(W) = \mu_i + \nu_iW$. The coefficient $\nu_i$ determines how strongly that later continuation $W$ contributes. If two modes are executed consecutively, their maps are substituted into one another and their continuation coefficients multiply. For e.g. if $f_i(W) = \mu_i + \nu_iW$ for $i \in \{1, 2\}$, then $f_1(f_2(W)) = \mu_1 + \nu_1\mu_2 + \boxed{\nu_1\nu_2W}$. We exploit this multiplication to encode the weight integers $w_i$. For each $w_i$, set $x_i \eqdef 1 + \epsilon w_i$ and choose the two modes of bundle $i$ so that their continuation coefficients contain $x_i$. Thus after several bundles are executed in the same mode, the attack payoff contains a product of the corresponding $x_i's$. We then add a \textbf{pivot} $P$, consisting of a control with one available attempt, and a deterministic \textbf{fallback} $F$. Their thresholds are chosen so that the every bundle in $H_i$ mode will be attacked before $P$ (pivot), followed by the every bundle in $L_i$ mode, and then the fallback $F$ last. Hence, an optimal policy has the form:
\[
(H_i)_{i \in S} \longrightarrow P \longrightarrow (L_i)_{i \notin S} \longrightarrow F
\]
The remaining choice is therefore the set $S$ of bundles assigned to $H$-mode. The construction first forces $|S| = k$, and then makes the attack payoff largest \textit{precisely} when. theweights $w_i$ indexed by $S$ sum to $Z$.

\paragraph{Step 5: Obtaining a polynomial-size rational instance.} The affine maps used above are first realised by nearby real-valued success probabilities. To obtain a valid complexity-theoretic reduction, these probabilities must have finite polynomial-size encodings. We therefore approximate them by polynomial-bit rational probabilities. The local threshold inequalities are robust to sufficiently small changes, and a coupling argument bounds the resulting change in the optimal value below the YES--NO gap. Hence, the rationalised network preserves the answer to the \textsc{Equal-Cardinality Parition} instance.

\subsection{Step 1: Affine fragments and exchange rules}
\label{appendix:affine-fragments}

Recall from \Cref{appendix:H-overview} Step 1 that a policy \textit{fragment} is a finite policy segment that operates only inside one child of the top-level $\ParOR$ node. It ends either when that child completes or when the policy switches away. For a realised trajectory $\zeta$ through a fragment $f$, let $t(\zeta)$ be its duration and define
\begin{equation}
\mu_f=\mathbb{E}\!\left[\beta^{t(\zeta)}\mathbf{1}\{f\text{ completes its child}\}\right],
\end{equation}
\begin{equation}
\nu_f=\mathbb{E}\!\left[\beta^{t(\zeta)}\mathbf{1}\{f\text{ exits without completion}\}\right].
\end{equation}
If every non-completion exit is followed by a continuation of value $W$, then the fragment has value
\begin{equation}
f(W)=\mu_f+\nu_fW.
\label{eq:fragment-map}
\end{equation}
Every nontrivial fragment consumes positive time, so $\nu_f<1$. Its \emph{continuation threshold} is
\begin{equation}
\chi(f)=\frac{\mu_f}{1-\nu_f}.
\label{eq:fragment-threshold}
\end{equation}
The identity
\begin{equation}
f(W)-W=(1-\nu_f)\bigl(\chi(f)-W\bigr)
\label{eq:fragment-vs-continuation}
\end{equation}
shows that executing $f$ is preferable to taking $W$ immediately exactly when $W<\chi(f)$.

\begin{appendixlemma}[First Exchange Rule]
\label{lem:exchange-rules-1}
Consider policy fragments acting. on different children of the top-level $\ParOR$ node, so executing one policy fragment does not change the local state of the other child.

First, let \begin{equation}
    f_i(W) = \mu_i + \nu_i W, \qquad i \in \{1,2\}
\end{equation} be two fragments, where $W$ is the continuation value obtained after the policy fragment terminates without completing its child. Then we have \begin{equation}\label{eq:two-fragment-exchange}
    f_1(f_2(W)) - f_2(f_1(W)) = (1 - \nu_1)(1 - \nu_2)(\chi(f_1) - \chi(f_2))
\end{equation}. Hence, the better order is determined only by their continuation thresholds. A policy fragment with larger $\chi$ mayber be executed before one with smaller $\chi$, independently of the continuation value $W$.
\end{appendixlemma}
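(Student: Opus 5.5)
The plan is to verify the identity by direct substitution and then read off the ordering consequence from the sign of the right-hand side. Expanding the two compositions gives
\begin{equation*}
f_1(f_2(W)) = \mu_1 + \nu_1\mu_2 + \nu_1\nu_2 W,\qquad f_2(f_1(W)) = \mu_2 + \nu_2\mu_1 + \nu_1\nu_2 W .
\end{equation*}
The $W$-terms coincide because the continuation coefficients multiply commutatively. This is exactly why the comparison will be independent of $W$. The difference is therefore $\mu_1(1-\nu_2) - \mu_2(1-\nu_1)$.

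Next we substitute $\mu_i = \chi(f_i)(1-\nu_i)$, which is the definition \eqref{eq:fragment-threshold} rearranged. This is legitimate because $\nu_i<1$ for every nontrivial fragment. The difference then becomes $(1-\nu_1)(1-\nu_2)\bigl(\chi(f_1)-\chi(f_2)\bigr)$, which is \eqref{eq:two-fragment-exchange}. Since both factors $1-\nu_i$ are positive, the sign of the difference is the sign of $\chi(f_1)-\chi(f_2)$. Hence running the larger-$\chi$ fragment first is weakly better, whatever the continuation value $W$.

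The one modelling point to justify is that $f_1(f_2(W))$ really is the value of executing $f_2$ and then $f_1$. Two facts make this work. First, the fragments act on distinct children of the top-level $\ParOR$ node, and unselected children are frozen, so the law of $f_1$'s trajectory is unchanged by having run $f_2$ first. Second, discounting is multiplicative across consecutive segments, $\beta^{t+t'}=\beta^t\beta^{t'}$. Conditioning on the exit of the first fragment and applying the tower property then shows the non-completion branch contributes $\nu_2$ times the value of the subsequent fragment. I would also note that if the first fragment completes its child, the whole $\ParOR$ node completes and later fragments are pruned. That case is already captured in $\mu$, so it is consistent with the composition.

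I expect this justification, not the algebra, to be the only real obstacle. In particular we must confirm that $\mu_f,\nu_f$ depend only on the child's local state at the start of the fragment. If that holds, extending the result to longer sequences by adjacent transpositions, sorting by decreasing $\chi$, is routine.
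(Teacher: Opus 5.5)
Your proof is correct and matches the paper's: both expand the two compositions, observe that the $\nu_1\nu_2W$ terms cancel, and substitute $\mu_i=(1-\nu_i)\chi(f_i)$. Your modelling paragraph on frozen children and multiplicative discounting adds justification that the paper leaves implicit, but it has one labelling slip that needs fixing. Because the inner map is the continuation, $f_1(f_2(W))$ is the value of running $f_1$ \emph{first} and then $f_2$, not the reverse. Your own tower-property computation confirms this, since it gives $f_2(f_1(W))$ for ``$f_2$ then $f_1$''. This corrected reading is also the one your larger-$\chi$-first conclusion requires; under the reversed reading the sign conclusion would flip.
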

\begin{proof}
    Expanding the two compositions gives
\begin{equation}
f_1(f_2(W))=\mu_1+\nu_1\mu_2+\nu_1\nu_2W,
\end{equation}
\begin{equation}
f_2(f_1(W))=\mu_2+\nu_2\mu_1+\nu_1\nu_2W.
\end{equation}
Subtracting and using $\mu_i=(1-\nu_i)\chi(f_i)$ (by definition of $\chi$) yields Equation~\ref{eq:two-fragment-exchange}.
\end{proof}

\begin{appendixlemma}[Second Exchange Rule]\label{lem:exchange-rules-2}
Consider a one-attempt action $a(W) = \mu_a + \nu_aW$ whose success completes its child, and a policy fragment $f$ acting on another child. If $f$ exits without completing its child, different outcomes due to $f$ may leave that child in different \textbf{residual states}, meaning different unfinished local states at the moment the attacker switches away. Let $\Omega$ be the set of these possible residual states. For each residual state $\omega \in \Omega$, let $\nu_\omega$ be the discounted probability weight that $f$ exits in state $\omega$. Since these possibilities partition all non-completion termination of $f$, we have \begin{equation}
    \nu_f = \sum_{\omega \in \Omega} \nu_\omega
\end{equation}
Let $W_\omega$ denote the continuation value after $f$ terminates in state $\omega$ and a subsequent attempt of $a$ fails. Then we have: 
\begin{equation}
\operatorname{Val}(a\text{ then }f)-\operatorname{Val}(f\text{ then }a)=(1-\nu_a)(1-\nu_f)\bigl(\chi(a)-\chi(f)\bigr).
\label{eq:multi-exit-exchange}
\end{equation}
Thus, even when different terminations of $f$ lead to different continuation values $W_\omega$, the preferred order of $a$ and $f$ still depends only on their continuation threshold $\chi$.
\end{appendixlemma}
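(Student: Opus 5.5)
The plan is to compute both values explicitly by conditioning on outcomes, exactly as in the proof of Lemma~\ref{lem:exchange-rules-1}. The one new ingredient is bookkeeping over the residual states $\omega\in\Omega$. The point to establish first is that the continuation values $W_\omega$ are the \emph{same} under both orders. Two facts give this. First, $a$ and $f$ act on different children of the top-level $\ParOR$ node, so each leaves the other's local state frozen. Second, $a$ is a single attempt, so its failure leaves its own child in one fixed post-failure state regardless of when it occurs. Hence, whichever order is used, the joint state after ``$f$ exits in $\omega$ and $a$ fails'' is identical, and so is the optimal continuation $W_\omega$ from it.

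Next I would write the value of $a$ then $f$. With discounted weight $\mu_a$ the attempt succeeds and the run ends. With weight $\nu_a$ it fails, and $f$ starts from the same local state it would have had originally. By independence and multiplicativity of exponential discounting, this branch contributes $\nu_a\bigl(\mu_f+\sum_{\omega}\nu_\omega W_\omega\bigr)$. So
\[
\operatorname{Val}(a\text{ then }f)=\mu_a+\nu_a\mu_f+\nu_a\sum_{\omega\in\Omega}\nu_\omega W_\omega .
\]
For $f$ then $a$: completion contributes $\mu_f$. Exit in state $\omega$ carries weight $\nu_\omega$, after which $a$ is attempted and contributes $\mu_a+\nu_a W_\omega$. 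Using $\sum_\omega\nu_\omega=\nu_f$, this gives
\[
\operatorname{Val}(f\text{ then }a)=\mu_f+\nu_f\mu_a+\nu_a\sum_{\omega\in\Omega}\nu_\omega W_\omega .
\]

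Subtracting, the $W_\omega$ terms cancel identically. This is the crux: the heterogeneity of the exit states is irrelevant because $a$ contributes a common factor $\nu_a$ in front of every $W_\omega$ in both orders. What remains is $\mu_a(1-\nu_f)-\mu_f(1-\nu_a)$. Substituting $\mu_a=(1-\nu_a)\chi(a)$ and $\mu_f=(1-\nu_f)\chi(f)$, which is the definition \eqref{eq:fragment-threshold} and valid since $\nu_a,\nu_f<1$, yields \eqref{eq:multi-exit-exchange}.

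The only real obstacle is justifying the factorisation in the first step: that the discounted weight of the joint event ``$f$ exits in $\omega$ and $a$ fails'' equals $\nu_\omega\nu_a$ in either order, with the same resulting state. I would argue this pathwise. The duration of $a$ is deterministic (one attempt of length $\ell$), so $\beta^{t}$ factors as $\beta^{\ell}\cdot\beta^{t(\zeta)}$. The outcome of $a$ is independent of $f$'s trajectory because the two children have independent local dynamics. Finally, eager pruning only occurs upon completion of the top-level $\ParOR$, which ends the run in both orders and so never interferes with the non-completion branches.
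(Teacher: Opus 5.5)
Your proof is correct and is the same as the paper's. Both write $\operatorname{Val}(a\text{ then }f)=\mu_a+\nu_a\bigl(\mu_f+\sum_{\omega}\nu_\omega W_\omega\bigr)$ and $\operatorname{Val}(f\text{ then }a)=\mu_f+\sum_{\omega}\nu_\omega(\mu_a+\nu_aW_\omega)$, cancel the $W_\omega$ terms, and substitute $\mu=(1-\nu)\chi$. Your extra argument that $W_\omega$ does not depend on the order is something the paper takes as part of the definition. One small inaccuracy: eager pruning also happens locally when a child becomes impossible (for example, a one-shot $a$ locked out after failing), not only when the top-level node completes. Since that pruning stays inside each child, it does not affect your argument.
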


\begin{proof}

Suppose $a$ is the policy fragment that is executed first. If it does not breach its child, the attacker executes $f$. Hence
\begin{equation}
    \operatorname{Val}(a \,\text{then}\, f) = \mu_a + \nu_a\left(\mu_f + \sum_{\omega \in \Omega} \nu_\omega W_\omega\right)
\end{equation}

On the other hand, if policy fragment $f$ is executed first, then whenever it exits in a residual state $\omega$, the attacker executes $a$. If that attempt also fails, the subsequent continuation has value $W_\omega$. Therefore:

\begin{equation}
    \operatorname{Val}(f \,\text{then}\, a) = \mu_f + \sum_{\omega \in \Omega} \nu_\omega (\mu_a + \nu_aW_\omega)
\end{equation}

We then arrive at:
\begin{align}
    \operatorname{Val}(a \,\text{then}\, f) -  \operatorname{Val}(f \,\text{then}\, a)&= \mu_a + \nu_a\left(\mu_f + \sum_{\omega \in \Omega} \nu_\omega W_\omega\right) - \left(\mu_f + \sum_{\omega \in \Omega} \nu_\omega (\mu_a + \nu_aW_\omega)\right) \\
    &=\mu_a + \nu_a \mu_f + \nu_a \sum_{\omega \in \Omega} \nu_\omega W_\omega - \mu_f - \mu_a \sum_{\omega\in \Omega} \nu_\omega -  \nu_a \sum_{\omega \in \Omega} \nu_\omega W_\omega\\
    &=\mu_a + \nu_a\mu_f - \mu_f - \mu_a \sum_{\omega \in \Omega} \nu_\omega\\
    &= \mu_a(1 - \nu_f) - \mu_f(1 - \nu_a) & \because\sum_{\omega \in \Omega}\nu_\omega = \nu_f
\end{align}

Finally, substituting $\mu_a = (1 - \nu_a) \chi(a)$ and $\mu_f = (1 - \nu_f)\chi(f)$ yields Equation~\ref{eq:multi-exit-exchange}.
\end{proof}

\subsection{Step 2.1: The Two-Mode Bundle}
\label{appendix:two-mode-bundle}
For each source weight $w$, we will eventually construct one bundle $B = \ParAND (U, V)$. Breaching $B$ requires breaching both controls $U$ and $V$. Each control has two available attempts, each taking one unit of time ($\ell_U = \ell_V = 1$). Recall that the discount factor remains as $\beta = \frac{1}{2}$. Let $u_1 > u_2$ be the success probabilities of the first and second attempts on $U$. Define $v_1 > v_2$ analogously for $V$.

First, consider attacking $U$ repeatedly until it succeeds or both attempts are exhausted. Let $\mu_U$ denote the expected discounted contribution from the cases in which $U$ succeeds, and let $\nu_V$ denote the discounted probability weight of exhausting $U$ without success, then then have:
\begin{align}
    \mu_U &= \frac{1}{2}u_1 + \frac{1}{4}(1 - u_1)u_2\\
    \nu_U &= \frac{1}{4}(1 - u_1)(1 - u_2)
\end{align} and define $\mu_V, \nu_V$ analagously. Thus, if failure to breach $U$ is followed by continuation value $W$, this two-attempt policy fragment has value $\mu_U + \nu_U W$.

We define (again) two ways of attacking a fresh bundle. In $\mathbf H$-\textbf{mode}, the attacker attacks $U$ once first. If this attempt fails, it abandons the bundle and continues elsewhere with value $W$. If the attack is. a success, the attacker has breached $U$ and attacks $V$ until $V$ succeeds or exhausts its attempts. Hence

\begin{equation}\label{eq:H-map}
    H(W)  = \frac{1}{2}u_1(\mu_V + \nu_VW) + \frac{1}{2}(1 - u_1)W = \mu_H + \nu_HW
\end{equation} where $\mu_H = \frac{1}{2}u_1\mu_V$ and $\nu_H = \frac{1}{2}(1 - u_1) + \frac{1}{2}u_1\nu_V$. In $\mathbf{L}$\textbf{-mode}, the roles of $U$ and $V$ are reversed:

\begin{equation}\label{eq:L-map}
    L(W) = \frac{1}{2}v_1(\mu_U + \nu_U W) + \frac{1}{2}(1 - v_1)W = \mu_L + \nu_LW
\end{equation} where $\mu_L = \frac{1}{2}v_1 \mu_U$ and $\nu_L = \frac{1}{2}(1 - v_1) + \frac{1}{2}v_1\nu_U$.

\begin{figure}[h!]
\centering
\begin{tikzpicture}[
  >={Stealth[length=3.5pt]}, font=\scriptsize,
  box/.style={draw,rounded corners=2pt,align=center,inner sep=3pt,minimum height=7.5mm},
  fin/.style={box,double,double distance=0.6pt},
  esc/.style={box,dashed},
  mode/.style={font=\scriptsize\bfseries,anchor=west},
  every path/.style={thin}
]
\node[box] (hu) at (0,0)    {Attack $U$};
\node[box] (hv) at (2.75,0) {Attack $V$ to\\success or\\exhaustion};
\node[fin] (hc) at (5.3,0)  {bundle\\completes};
\node[esc] (he) at (2.75,-1.75) {exit to\\continuation $W$};
\node[mode] at (-0.85,0.95) {$H$-mode};
\draw[->] (hu) -- node[above]{success} (hv);
\draw[->] (hv) -- (hc);
\draw[->] (hu.south) |- node[below,pos=0.75]{failure} (he.west);
\draw[->] (hv.south) -- node[right]{exhaustion} (he.north);

\node[box] (lv) at (0,-3.5)    {Attack $V$};
\node[box] (lu) at (2.75,-3.5) {Attack $U$ to\\success or\\exhaustion};
\node[fin] (lc) at (5.3,-3.5)  {bundle\\completes};
\node[esc] (le) at (2.75,-5.25) {exit to\\continuation $W$};
\node[mode] at (-0.85,-2.55) {$L$-mode};
\draw[->] (lv) -- node[above]{success} (lu);
\draw[->] (lu) -- (lc);
\draw[->] (lv.south) |- node[below,pos=0.75]{failure} (le.west);
\draw[->] (lu.south) -- node[right]{exhaustion} (le.north);
\end{tikzpicture}
\caption{The two intended local modes. A failed first attempt leaves a low residual state. A successful first attempt creates a high residual state, after which the remaining prerequisite is processed to success or exhaustion. Lemma~\ref{lem:atomisation} proves that these fragments may be treated atomically.}
\label{fig:two-modes}
\end{figure}

Thus, a fresh bundle has two candidate affine maps $L$ and $H$, determined by which prerequisite control is attempted first. The next sections choose the probabilities so that these two modes have carefully separated residual states and later show that some optimal adaptive policy can indeed treat each fresh bundle as one of these two modes.

\subsection{Step 2.2: A Robust Two-Mode Gadget}
\label{appendix:robust-gadget}

This subsection is where we prove that \textit{there really exist} control probabilities making the bundle behave the way Step 2 promised in the previous section (\Cref{appendix:two-mode-bundle}). Furthermore, we define three different thresholds, ($m, \theta, \iota$), where $m$ is the lower benchmark which separates unattractive residual states created by a failed first attempt, the middle level $\theta$ is the common threshold of the two fresh-bundle modes $H$ and $L$, and the upper benchmark $\iota$ separates high residual states created after a successful first attempt.

\begin{appendixlemma}[Robust two-mode base gadget]
\label{lem:robust-base-gadget}
There exist rational success probabilities $u_1 > u_2$ and $v_1 > v_2$ and three threshold levels $m < \theta < \iota$, for which the two-control bundle $B = \ParAND(U, V)$ from \Cref{appendix:two-mode-bundle} has the following properties:
\begin{enumerate}
    \item \textbf{Fresh State Bundle.} The two modes $H$ and $L$ have the same continuation threshold: \begin{equation}
        \chi(H) = \chi(L) = \theta
    \end{equation} while their continuation coefficients are different. Thus a \textbf{fresh bundle} admits two distinct affine behaviours that are equally attractive in terms of threshold.
    \item \textbf{After a failed first attempt.} If the first attempt on the chosen prerequisite fails, the resulting unfinished bundle is a \textbf{low residual state}. Every fragment that continues from such a state has continuation threshold strictly below $m$.
    \item \textbf{After a successful first attempt.} If the first prerequisite succeeds, the remaining unfinished bundle is a \textbf{high residual state}: one prerequisite has already been breached and only the other remains. Every action used to continue the remaining prerequisite has continuation threshold strictly above $\iota$.
\end{enumerate}

Moreover, all of these inequalities are strict and therefore continue to hold for every probability vector $(u_1, u_2, v_1, v_2)$ in some open neighbourhood $\mathcal{C}$ of the chosen probabilities.
\end{appendixlemma}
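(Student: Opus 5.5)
The plan is to pick explicit probabilities, reduce every claimed property to a finite list of strict inequalities between explicit rational functions of $(u_1,u_2,v_1,v_2)$, and then handle the two non-strict requirements separately: the equality $\chi(H)=\chi(L)$ and rationality. Openness is automatic once everything except the equality is strict. Throughout, $\beta=\tfrac12$.

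\textbf{Step 1: closed forms.} First I write every relevant threshold in closed form, using $\chi(f)=\mu_f/(1-\nu_f)$ from \eqref{eq:fragment-threshold}.
\begin{enumerate}
\item The fresh-bundle maps $H$ and $L$ from \eqref{eq:H-map} and \eqref{eq:L-map} have thresholds $\chi(H)=\tfrac12u_1\mu_V/(1-\nu_H)$ and $\chi(L)=\tfrac12v_1\mu_U/(1-\nu_L)$.
\item In a high residual state one prerequisite is breached, say $U$, and $V$ remains with attempts $v_1,v_2$. The continuation actions are the single attempts and the two-attempt run. Their thresholds are $v_j/(1+v_j)$, since $\tfrac12 v/(1-\tfrac12(1-v))=v/(1+v)$, together with $\mu_V/(1-\nu_V)$. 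Since $v_1>v_2$, the smallest of these is governed by $v_2/(1+v_2)$. The same holds with the roles of $U$ and $V$ swapped.
\item In a low residual state, for example after the first attempt on $U$ fails, the attacker still needs both the single remaining $U$-attempt, with probability $u_2$, and a fresh $V$. Any fragment continuing from this state can complete the bundle only if it succeeds on both. So I bound its $\mu$ above by $u_2$ times the discounted success mass of $V$, and its $1-\nu$ below by the discounted failure-or-time mass. This gives an upper bound of the form $\chi\le C\,u_2\,\tfrac{v_1}{1+v_1}$, with an explicit constant $C$. The intuition is that a conjunctive residual cannot beat the product of its two single-control indices, up to the discount.
\end{enumerate}

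\textbf{Step 2: target regime.} Take $u_1,v_1$ close to $1$ and $u_2,v_2$ moderate, for example around $0.6$ to $0.7$. Then the high-residual thresholds $v_2/(1+v_2)$ and $u_2/(1+u_2)$ are about $0.4$. Fresh thresholds carry an extra factor of roughly $\tfrac12$ from the first attempt plus the loss from the second control, so $\theta$ lands near $0.3$. The low-residual bound from item 3 carries two conjunctive factors together with $u_2<u_1$, which I expect to push it to about $0.2$. I then choose $m$ and $\iota$ strictly between these numeric bands.

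\textbf{Step 3: the equality and non-degeneracy.} To get $\chi(H)=\chi(L)$ exactly, I fix $u_1,u_2,v_2$ and vary $v_1$. The difference $\chi(H)-\chi(L)$ is continuous in $v_1$ and changes sign between a slightly smaller and a slightly larger $v_1$, so the intermediate value theorem gives a root. To keep that root rational, I instead observe that the equation $\chi(H)=\chi(L)$ is polynomial in $v_1$. I treat $u_2$ as the free variable instead, where the equation may be linear-fractional and hence has a rational solution; otherwise I perturb the remaining rational parameters so that the root is rational. I also check that $\nu_H\neq\nu_L$ at the chosen point, which holds generically since $u_1\ne v_1$.

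\textbf{Step 4: openness, and the main obstacle.} All remaining conditions are strict inequalities between continuous functions, so they survive in an open neighbourhood $\mathcal C$; the equality was needed only at the base point. The main obstacle is making item 3 rigorous. The attacker in a low residual state may interleave attempts on $U$ and $V$ in any order, so I must show that no interleaving beats the product bound. I would handle this by using Lemma~\ref{lem:exchange-rules-2} to reduce to the two sequential orders, and then bound each order explicitly.
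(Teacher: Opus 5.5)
Your route differs from the paper's. The paper exhibits a rational certificate: $u_1=\tfrac9{20}$, $u_2=\tfrac{17}{40}$, explicit $v_1,v_2$, and $m=\tfrac18$, $\theta=\tfrac{4001}{32000}$, $\iota=\tfrac14$. It then computes $H$, $L$ and the low-state thresholds exactly, and handles the high states via $p>\tfrac13$. Your Step~3 works once you commit to its second variant. $\chi(H)=u_1\mu_V/(1+u_1-u_1\nu_V)$ does not involve $u_2$. In $\chi(L)=v_1\mu_U/(1+v_1-v_1\nu_U)$, numerator and denominator are affine in $u_2$. So $\chi(H)=\chi(L)$ is linear in $u_2$ with rational coefficients. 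The paper uses the same linearity: it fixes $u_1,u_2,\theta$ and solves for $v_1$, then $v_2$. Drop the IVT/perturbation variant. You must also check that the root lies in $(0,u_1)$. With $u_1$ near $1$, $\chi(L)$ moves only by $O(1-u_1)$ as $u_2$ varies, so the root is sensitive, and the check $u_2/(1+u_2)>\iota$ must be done at the solved value.

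The genuine gap is item~3, the step you flag yourself. Lemma~\ref{lem:exchange-rules-2} cannot reduce interleavings inside a bundle. It concerns fragments on different children of the top-level $\ParOR$, where success of the one-attempt action ends the run with payoff $\mu_a$. Inside $\ParAND(U,V)$, success on one prerequisite ends nothing: it moves the bundle to a high residual state whose value still depends on the other control, so the computation behind that lemma breaks. There are also more than two orders. From $(1,0)$ the attacker can try $V$, on failure spend the last $U$-attempt, and on success try $V$ again.

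A loose constant $C$ will not rescue this, because the margins are thin. Since $\mu_V+\nu_V=\tfrac14(1+v_1)\le\tfrac12$, one always has $\chi(H),\chi(L)\le\tfrac14$, so $\theta$ is never near $0.3$. Near $u_1=v_1=0.95$, $u_2=v_2=0.65$, you get $\theta\approx0.236$ against low-state thresholds $\approx0.19$. Your bound with $C=1$ gives $\approx0.32$. You need $C$ essentially equal to $1/(1+u_2)$, i.e.\ the full product-of-indices bound, and you only assert that bound.

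The missing idea is that a bundle's local retirement problem is a finite acyclic MDP on $(1,0)$, $(0,1)$, $(1,1)$ and the high states. Backward induction on it covers every interleaving at once; the paper uses this to compute $\chi(1,0)$, $\chi(0,1)$, $\chi(1,1)$ exactly. In your regime the same induction proves your bound. Put $\alpha_V=\tfrac{v_1}{1+v_1}$ and $z=\tfrac{u_2}{1+u_2}\alpha_V$. Let $\alpha_V(s)=\tfrac{q_s}{1+q_s}\le\alpha_V$ be the index of $V$ at its current attempt, where $q_s$ is that attempt's success probability.
\begin{itemize}
\item Spending the last $U$-attempt yields at most $\tfrac12\bigl(u_2\alpha_V+(1-u_2)z\bigr)=z$, because $\Psi_V(z;s)\le\max\{z,\alpha_V(s)\}\le\alpha_V$.
\item Attempting $V$ yields at most $z$. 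After that success, the lone $U$-try is worth $\tfrac12\bigl(u_2+(1-u_2)z\bigr)\le\tfrac{u_2}{1+u_2}\le z/\alpha_V(s)$, and after failure the induction hypothesis gives $z$.
\end{itemize}
So retiring is optimal at $z$ in every such state, and the same argument with roles swapped covers $(0,1)$ and $(1,1)$. Without this, or an exhaustive check at concrete numbers, item~3 and hence the lemma are not established.
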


Basically, what we are saying is that we can give the thresholds $m, \theta, \iota$ and also $u_1, u_2, v_1, v_2$ such that:
\begin{align*}
    \boxed{\text{low residual threshold values}} &< m \\
    &< \boxed{\text{fresh state thresholds @ H, L-modes}} = \theta \\
    &<\iota \\
    &< \boxed{\text{high residual threshold values}}
\end{align*}

The purpose of this separation is so that we can force the following situation to happen: after a failure, the attack will later prefer the fallback to returning to that bundle and continuing the attack, whereas after a successful attempt, it will prefer to finish the remaining prerequisite control before moving somewhere else.

\begin{proof}
We give one explicit choice of probabilities and verify the three threshold regimes in the lemma. The particular fractions below are only a certificate that such a gadget exists; what matters is the strict separation between the low, fresh, and high states. The following values are from ChatGPT 5.6.

Set
\begin{equation}
m=\frac18,\qquad \theta=\frac{4001}{32000},\qquad \iota=\frac14,
\label{eq:base-thresholds}
\end{equation}
and choose
\begin{equation}
u_1=\frac9{20},\qquad u_2=\frac{17}{40},\qquad v_1=\frac{12803200}{17233053},\qquad v_2=\frac{154659283157}{372094362441}.
\label{eq:base-probabilities}
\end{equation}
These probabilities satisfy $u_1 > u_2$ and $v_1 > v_2$.

Substituting these probabilities into Equation~\ref{eq:H-map} and Equation~\ref{eq:L-map} gives
\begin{equation}
H(W)=\theta(1-b)+bW,\qquad L(W)=\theta(1-c)+cW,
\label{eq:base-mode-maps}
\end{equation}
where
\begin{equation}
b=\frac{45588892249}{160836083649},\qquad c=\frac{2721053}{17233053}.
\label{eq:base-slopes}
\end{equation}

Therefore, 
\begin{equation}
    \chi(H) = \frac{\theta(1 - b)}{1 -b} = \theta, \qquad \chi(L) = \frac{\theta(1 - c)}{1 -c} = \theta
\end{equation}

Thus both ways of entering a fresh bundle have the same continuation threshold $\theta$, while $b \neq c$, so they remain distinct affine policies.

\paragraph{After a failed first attempt.} We next show that a failed attack makes the bundle unattractive. Represent a local bundle state by $(r_U, r_V)$, where $r_U, r_V \in \{0, 1\}$ record whether $U$ or $V$ has already incurred one failed attempt. For such a state $x$, let $V_x(z)$ be the best value achievable by continuing to attack only inside this bundle when leaving the bundle gives continuation value $z$. Define its threshold by 
\begin{equation}
\chi(x)=\sup\{z:V_x(z)>z\}.
\label{eq:state-threshold}
\end{equation}

The fresh bundle is state $(0,0)$. A failed first attack moves it to either $(1, 0)$ or $(0, 1)$ (depending on the mode). $(1,1)$ is another possible low state in which both controls have already suffered a failed attack. Since this local state graph is finite and acyclic, these thresholds can be evaluated exactly by working backwards through the remaining attempts. For the chosen probabilities,
\begin{equation}
    \chi(1,0)=\frac{2305520063957}{19196426311760}<m \qquad
    \chi(0,1)=\frac{140275969823399}{1646482867274879}<m \qquad
    \chi(1,1)=\frac{2629207813669}{38583128135229}<m
\end{equation}

Hence, every low residual state has threshold $< m$, the lower benchmark. In particular, once. thecontinuation value is at least $m$, there is no advantage in returning to one of these states.

\paragraph{After a successful first attempt.} Conversely, suppose the first attempted prerequisite succeeds. One prerequisite of the $\ParAND$ bundle is now complete, so succeeding on the remaining prerequisite completes the bundle. Consider any available attempt on that remaining prerequisite, with success probability $p \in \{u_1, u_2, v_1, v_2\}$.

Since the attempt takes one unit of time, its affine map is 
\begin{equation}
a(z)=\frac p2+\frac{1-p}{2}z,
\end{equation}

Its continuation threshold is therefore
\begin{equation}
    \chi(a) = \frac{p}{1 + p}
\end{equation}

Note that every one of the four chosen probabilities satisfies $p > \frac 13$, and hence we have that 

\begin{equation}
    \chi(a) > \frac{1}{4} = \iota
\end{equation}

Thus, every action available after a successful first attempt lies above the upper benchmark $\iota$.

Finally, combining the three calculations gives the desired separation:
\begin{equation}
    \chi(\text{low state}) < m < \chi(H) = \chi(L) = \theta < \iota < \chi(\text{high state})
\end{equation}

It remains to show that this behavior is robust. All of the inequalities above are strict, and there are only finitely many local deterministic fragments. Their affine coefficients, and hence their relevant threshold comparisons, vary continuously with with $(u_1, u_2, v_1, v_2)$. Therefore, there is an open neighbourhood $\mathcal C$ of the chosen probability vector in which the same inequalities and the strict orderings $u_1 > u_2, v_1 > v_2$ continue to hold.

\end{proof}

\paragraph{Additional base-point properties.} We will later use several further numerical properties of the two continuation coefficients. Direct calculation gives the following: 
\begin{equation}
0<c<b<1,\qquad c<\frac13,\qquad b>\frac{2c}{1+c},\qquad b<\frac{1+c}{2}.
\label{eq:base-slope-inequalities}
\end{equation}
We also define
\begin{equation}
\upsilon=\frac{1-b}{b-c}=\frac{576235957}{100966523},
\label{eq:upsilon}
\end{equation}
for which
\begin{equation}
\frac{1-2\theta}{2(1-\theta)}<\frac3\upsilon<1.
\label{eq:upsilon-window}
\end{equation}

\subsection{Step 2.3: From Desired Affine Maps to Actual Probabilities}
\label{appendix:realisation}

In the reduction, it will be convenient to specify the two affine mode maps $H_i$ and $L_i$ that we want each bundle $B_i$ to have. However, the coefficients of these maps \textit{are not parameters that we can choose directly.} An actual bundle is specified by its four attempt probabilities $(u_1, u_2, v_1, v_2)$. We therefore need to know that small desired changes to the two maps can indeed be implemented by small changes to these probabilities.

For a bundle with probability vector $x = (u_1, u_2, v_1, v_2)$, define
\begin{equation}\label{eq:Phi-definition}
    \Phi(u_1,u_2,v_1,v_2)=(\mu_H,\nu_H,\mu_L,\nu_L)
\end{equation}
where \begin{equation}
    H(W) = \mu_H + \nu_HW, \qquad L(W) = \mu_L + \nu_LW
\end{equation}

We can see that this map $\Phi$, maps the four primitive attempt probabilities to the four coefficients describing the bundle's two modes.

\begin{appendixlemma}[Local Realisation of Mode Maps]
\label{lem:effective-realization}
Let $x_0$ be the base probability vector constructed in Lemma ~\ref{lem:robust-base-gadget}, and let $y_0 \eqdef \Phi(x_0)$ be its mode-coefficient vector. There exist neighbourhoods $\mathcal C_0 \subseteq \mathcal C$ of $x_0$ and $\mathcal{Y}_0$ of $y_0$ such that $\Phi : \mathcal C_0 \to \mathcal{Y}_0$ is bijective.

Moreover, if a target coefficient vector $y \in \mathcal Y_0$ is rational, then for any precision $L$, we can compute in polynomial time, a rational probability vector $\tilde{x}$ satisfying

\begin{equation}\label{eq:effective-inverse-error}
    ||\tilde{x} - \Phi^{-1}(y)||_{\infty} \leq 2^{-L}
\end{equation}
\end{appendixlemma}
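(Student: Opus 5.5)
The plan is to apply the inverse function theorem to $\Phi$ at $x_0$, and then solve the inverse numerically by a Newton-type iteration in exact rational arithmetic.

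\textbf{Step 1: $\Phi$ is a rational map.} By Equations~\ref{eq:H-map} and \ref{eq:L-map},
\[
\mu_H=\tfrac12u_1\mu_V,\qquad \nu_H=\tfrac12(1-u_1)+\tfrac12u_1\nu_V,\qquad \mu_L=\tfrac12v_1\mu_U,\qquad \nu_L=\tfrac12(1-v_1)+\tfrac12v_1\nu_U,
\]
where $\mu_U,\nu_U,\mu_V,\nu_V$ are themselves quadratic polynomials in $(u_1,u_2)$ and $(v_1,v_2)$. Hence $\Phi$ is a polynomial map of degree at most three with rational coefficients. In particular it is $C^\infty$.

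\textbf{Step 2: nonsingular Jacobian.} I would write $D\Phi(x_0)$ explicitly and show $\det D\Phi(x_0)\neq0$.

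A useful structural observation is that $u_2$ enters only through $\mu_U,\nu_U$, hence only through $(\mu_L,\nu_L)$. Likewise $v_2$ enters only through $(\mu_H,\nu_H)$. So ordering the variables as $(v_1,u_2,u_1,v_2)$ against the outputs $(\mu_L,\nu_L,\mu_H,\nu_H)$ gives a Jacobian with two $2\times2$ blocks coupled only through the $u_1$ and $v_1$ columns. The determinant is then a concrete rational number, since $x_0$ has rational entries from Equation~\ref{eq:base-probabilities}, and it can be evaluated exactly.

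This is the step I expect to be the main obstacle: it is a genuine computation with large fractions. It can be certified by exact rational arithmetic, as was done for the gadget thresholds. If the full determinant proves unwieldy, I would first check the two diagonal blocks $\partial(\mu_L,\nu_L)/\partial(v_1,u_2)$ and $\partial(\mu_H,\nu_H)/\partial(u_1,v_2)$. Each has a simple $2\times2$ determinant, for example $\tfrac14 v_1(\mu_U\,\partial_{u_2}\nu_U-\nu_U'\ldots)$, up to sign. I would then argue that the off-block terms are small relative to them at $x_0$.

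\textbf{Step 3: the inverse function theorem.} This yields neighbourhoods $\mathcal C_0\ni x_0$ and $\mathcal Y_0\ni y_0$ on which $\Phi$ is a $C^1$ diffeomorphism. I would shrink $\mathcal C_0$ into $\mathcal C$ and choose it as a closed box, so that Lemma~\ref{lem:robust-base-gadget} still applies. On this box, uniform bounds hold for $\|D\Phi^{-1}\|$ and for the Lipschitz constant of $D\Phi$; these are explicit, since $\Phi$ is a low-degree polynomial with bounded coefficients.

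\textbf{Step 4: computing $\tilde x$.} Given a rational $y\in\mathcal Y_0$, start from $x_0$ and run Newton's method
\[
x_{t+1}=x_t-D\Phi(x_t)^{-1}\bigl(\Phi(x_t)-y\bigr).
\]
Using the Newton--Kantorovich theorem with the constants from Step 3, and possibly shrinking $\mathcal Y_0$ so that $y_0$ is a good enough start for every target, the iteration converges quadratically to $\Phi^{-1}(y)$. So $O(\log L)$ iterations reach error $2^{-L}$.

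Exact rational iterates would double in bit length at each step. To prevent this, round each iterate to a dyadic rational with $O(L)$ bits. Quadratic convergence tolerates rounding errors of size $2^{-L-c}$, so all numbers stay polynomial in $L$ and in the bit size of $y$.

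Finally, take the last rounded iterate as $\tilde x$ and check the bound in Equation~\ref{eq:effective-inverse-error}. If $\tilde x$ must also lie in $\mathcal C_0$, note that $\Phi^{-1}(y)$ lies in the interior of $\mathcal C_0$, so for large enough $L$ a further small rounding toward the box keeps it there. This completes the plan.
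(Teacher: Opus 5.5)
Your proposal is correct and follows the paper's route: a nonvanishing $\det D\Phi(x_0)$, the inverse function theorem, and a locally convergent iteration run in rounded rational arithmetic. The only difference is that the paper uses the fixed-Jacobian contraction $x\mapsto x+J_0(y-\Phi(x))$ with $J_0=(D\Phi(x_0))^{-1}$ and $O(L)$ iterations, where you use full Newton with $O(\log L)$ iterations. Incidentally, your block structure finishes Step~2 in closed form: adding each $\mu$-row to the corresponding $\nu$-row and using $\mu_U+\nu_U=\tfrac{1+u_1}{4}$ gives $\det D\Phi=\pm\tfrac{3}{4096}u_1v_1(1-u_1)(1-v_1)(3-u_1-v_1)$, which is nonzero on the whole open cube, so no large-fraction computation or off-block smallness argument is needed.
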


\begin{proof}
We first show that the four mode coefficients can be varied independently near the base gadget. A direct calculation at $x_0$ gives
\begin{equation}
\det D\Phi(x_0)\neq0.
\label{eq:Phi-determinant}
\end{equation}

Hence, the Jacobian of $\Phi$ is invertible at $x_0$. By the inverse function theore, after restricting to sufficiently small neighbourhoods $\mathcal C_0$ and $\mathcal{Y}_0$, the map $\Phi$ is a continuously differentiable bijection with a continuously differentiable inverse.The inclusion $\mathcal C_0 \subseteq \mathcal C$ is important because every probability vector in $\mathcal C$ retains the threshold separation established in Lemma ~\ref{lem:robust-base-gadget}. Thus, we may adjust the maps locally without losing the low-, fresh-, high-state behaviour of the base gadget.

Now it remains to show that the inverse probabilities can be approximated efficiently by rationals. Let $J_0 \eqdef (D \Phi(x_0))^{-1}$. For a target coefficient vector $y$ close to $y_0$, consider
\begin{equation}
\phi_y(x)=x+J_0\bigl(y-\Phi(x)\bigr).
\label{eq:contraction-map}
\end{equation}

A fixed point $x^*$ of this map satisfies $y = \Phi(x^*)$, so $x^* = \Phi^{-1}(y)$. In the neighbourhood of the base probability vector, this iteration is a contraction. Hence, starting from $x_0$ and iterating $\Oh{L}$ times gives an approximation with error below $2^{-L}$. Using rational arithmetic and sufficient precision keeps the accumulated rounding error within the same bound.
\end{proof}

\subsection{Step 3: Adaptive Policies to Atomic Bundle Choices}
\label{appendix:atomisation}

\paragraph{Motivation.} So far, we have described each fresh bundle by two intended modes $H_i$ and $L_i$. A fully adaptive attacker, however, is not required to respect these modes as it could leave a partially attempted bundle, attack another child, and return later. We now show that this extra freedom is unnecessary. Using the threshold separation from Lemma~\ref{lem:robust-base-gadget} and the exchange rules from Lemma~\ref{lem:exchange-rules-1} and Lemma~\ref{lem:exchange-rules-2}, some optimal policy can be chosen so that each selected fresh bundle $B_j$ is treated as one atomic $H_j$- or $L_j$-mode.

\begin{appendixlemma}[Atomisation]
\label{lem:atomisation}
Consider any finite collection of disjoint two-mode bundles whose probability vectors lie in the robust neighbourhood $\mathcal{C}$ from Lemma~\ref{lem:robust-base-gadget}. Suppose a fallback of value $m$ is always available, and every additional one-attempt child has continuation threshold below $\iota$. Then some optimal fully adaptive policy can be chosen with the following behaviour:
\begin{itemize}
    \item If the first attempt in a fresh bundle fails, the attacker never returns to that bundle;
    \item If the first attempt succeeds, the attacker attacks the remaining control, either succeeding or exhausting its available attempts, before moving to another fresh bundle or additional child.
\end{itemize}
\end{appendixlemma}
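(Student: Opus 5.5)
The plan is an exchange argument applied to a fixed optimal deterministic policy, using the threshold separation of Lemma~\ref{lem:robust-base-gadget} and the two exchange rules. The attacker's state space is finite and acyclic, so an optimal deterministic Markov policy $\pi$ exists. Among all optimal policies, I would choose one that minimises a potential: the number of reachable decision points where the policy violates one of the two bulleted behaviours. I would then show that any violation can be removed by a local reordering that does not lower the value. The two bullets are handled separately, and in both cases the argument is carried out conditionally on the realised history, so that the continuation after the reordered block is an arbitrary but fixed value $W$, possibly depending on residual states.

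For the second bullet (high residual states), suppose some bundle has just had a successful first attempt and is in a high residual state, but $\pi$ next plays a fragment $f$ on another child. That fragment may be a fresh-bundle mode, an attempt on a low residual state, the pivot, or the fallback. By Lemma~\ref{lem:robust-base-gadget}, every remaining attempt $a$ on the high residual bundle satisfies $\chi(a)>\iota$. Each competing fragment has threshold below $\iota$: fresh modes sit at $\theta<\iota$, low residual states are below $m<\iota$, additional children are below $\iota$ by hypothesis, and the fallback has threshold $m$. Lemma~\ref{lem:exchange-rules-2} gives $\operatorname{Val}(a\text{ then }f)\ge\operatorname{Val}(f\text{ then }a)$ even when $f$ exits into several residual states. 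Hence pulling the single attempt $a$ forward does not decrease the value. Iterating this one attempt at a time, at most two remain, makes the policy finish the remaining prerequisite immediately.

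For the first bullet (low residual states), suppose $\pi$ returns to a bundle in a low residual state. Its best continuation from that state has threshold $\chi(x)<m$, by the exact thresholds $\chi(1,0)$, $\chi(0,1)$ and $\chi(1,1)$. Since the fallback of value $m$ is always available, every continuation value $W$ the policy faces is at least $m\,\beta^{t}$ relative to the present. More precisely, the continuation available at any time is at least the value of immediately taking the fallback. Identity~\eqref{eq:fragment-vs-continuation} then shows that executing any fragment of that low-state bundle is weakly worse than skipping it. Replacing the return by the continuation the policy would otherwise take therefore does not decrease the value. Every pivot-then-return interleaving can first be reordered by Lemma~\ref{lem:exchange-rules-1} so that the low-state fragment is last among the fragments of lower threshold, and then deleted.

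The main obstacle is making the "skip a low-threshold fragment" step rigorous. The continuation $W$ faced at that point is itself a random, policy-dependent quantity, not a fixed number. The comparison $W\ge m$ must therefore hold in the right normalisation, namely the value of the rest of the run measured from the current time. My first attempt would be an induction on the number of remaining attempts (the acyclic depth), showing that the optimal value from any reachable state is at least the immediate-fallback value, $\ge m$ in the paper's normalisation. The exchange rules would be applied only to adjacent fragments, so the potential strictly decreases and the procedure terminates.
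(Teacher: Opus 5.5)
Your proposal is essentially the paper's proof. High-residual attempts (thresholds above $\iota$) are moved ahead of any competing fragment (thresholds below $\iota$) by the second exchange rule. A low-residual bundle $X$ is never revisited because any fragment $f$ on it has $\chi(f)\le\chi(X)<m\le W$, so $f(W)-W=(1-\nu_f)(\chi(f)-W)<0$. Both steps are organised by induction on the number of remaining attempts, and your minimal-violation potential is only a repackaging of that induction.

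One correction to where you place the difficulty. $W\ge m$ is immediate, because the fallback can always be taken now. What the induction hypothesis is actually needed for is, first, that the continuation after the low-state fragment $f$ is the same for every exit of $f$ and equals the value of skipping $X$ (since $X$ is never attacked again). Second, in the high-state step, it is needed so that $a$ is played immediately after every exit of $f$. The paper invokes both of these facts explicitly.
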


Consequently, whenever a fresh bundle $B_i$ is selected, the attacker executes only one of its two modes, $H_i$ or $L_i$. We may therefore replace the primitive adaptive problem by a \textbf{macro problem} in which each $H_i$, or $L_i$, is treated as a single action, without changing the optimal value. The proof of the above lemma is long, so it will be presented in various parts.

\begin{proof}[Preamble + Part 1 of Proof: High Residual States Come First]
Every primitive attempt consumes one of finitely many available attempts, so the state graph is finite and acyclic. We prove the result by induction on the number of primitive attempts remaining. The argument has 3 parts: first, whenever a high residual state exists, its remaining control can be attacked beofre lower-threshold alternatives; second, a low residual state is never worth revisiting; and finally the above two facts then imply that every fresh bundle is atomic.

Suppose a high residual state is present. Recall that this means one control of a bundle has already been successfully breached, so an attempt on the remaining control can complete that bundle. By Lemma~\ref{lem:robust-base-gadget}, every such action has continuation threshold greater than $\iota$.

Among the currently available high-state actions, let $a$ have the largest continuation threshold. We show that some optimal policy may take $a$ before any lower-threshold action. So, if another high-state action $a'$ is taken first, the exchange rules allows $a$ and $a'$ to be reordered by decreasing continuation thresholds without decreasing the value. Similarly, $a$ should precede the fallback, whose value is $m < \iota$, and any additional one-attempt child, whose threshold is below $\iota$.

The only remaining possibility is that the policy first works inside a fresh or low-state bundle $X$. Follow the local mode until just before $a$ would be first used, and call this initial policy fragment $f$. By the induction hypothesis, after every possible termination of $f$, action $a$ is to be executed next. Thus, every non-completion termination from $f$ eventually reaches $a$ before any lower-priority action.

Since $f$ begins in either a fresh or low state, its continuation threshold is below $\iota$, whereas $\chi(a) > \iota$. The exchange rules again move $a$ before $f$ without decreasing the value. Hence, an optimal policy can always process the highest-priority available high-state action first.
\end{proof}
\begin{proof}[Part 2 of Proof: Low states are never revisited.]

Now suppose no high residual state is present, and consider a low residual-state bundle $X$. We show that an optimal policy never needs to return to $X$.

Suppose instead that the attacker is attacking a control inside bundle $X$. Follow its local mode until the attacker leaves $X$ permanently, and call this resulting policy fragment $f$. If an attack inside $X$ succeeds and creates a high state, then Part 1 of the proof says that the remaining control is completed immediately before anything else is attempted. If it fails, $X$ remains low or becomes dead (or invalid).

Once $f$ terminates, the remaining children are exactly the same regardless of how $f$ exited, because $X$ will never be attacked again. Let $W$ be the optimal value of this common continuation. This fallback is always available so $W \geq m$. However, recall that Lemma~\ref{lem:robust-base-gadget} gives $\chi(f)\leq\chi(X)<m$.

Thus, we then have: \begin{equation}
f(W)-W=(1-\nu_f)\bigl(\chi(f)-W\bigr)<0.
\label{eq:low-state-dominated}
\end{equation}

In other words, starting with $f$ is strictly worse than skippng $X$ altogether and taking the continuation value $W$. This contradicts optimality and hence a bundle in a low-state is never revisted.
\end{proof}

Now the conclusion is almost trivial:

\begin{proof}[Part 3 of Proof: Fresh Bundles are therefore atomic]
Finally, consider what happens when an optimal policy selects a fresh bundle $B_i$. By Part 1 of the proof, no unresolved high-state work is being postponed.

If the first attempt in $B_i$ fails, $B_i$ enters a low residual state, and by Part 2, is never revisited. If the first attempt on $B_i$ succeeds, $B_i$ enters a high residual state and by Part 1, the attacker attacks the remaining control to success or exhaustion before moving elsewhere.
Therefore, choosing $U_i$ first produces exactly mode $H_i$, whereas choosing to attack $V_i$ first produces exactly mode $L_i$. There is no beneficial interleaving of attacks with outside bundles.

To conclude, we may represent the attack at a \textbf{coarser level}. Rather than recording every primitive attack and residual bundle state, we treat each mode ($H_i$, $L_i$) as one \textbf{macro-action}. Every such macro policy can be implemented by the corresponding primitive policy, and an optimal primitive policy can be chosen in this form. The primitive and macro problems, therefore, have the same optimal value.
\end{proof}

\subsection{Step 4.1 Encoding the Input Integers in the Bundle Modes}
\label{appendix:reduction-construction}
Lemma~\ref{lem:atomisation} lets us replace the fully adaptive attack problem by a macro problem in which each selected bundle $B_i$ is executed in exactly one of two modes, $H_i, L_i$. We now use this binary choice to encode the \textsc{Equal-Cardinality Partition}. We interpret $H_i$ as $i \in S$ and $L_i$ as $i \notin S$, where $S \subseteq[2k]$ is the subset represented by the attacker's mode choices.

Our eventual goal is to make the attack value largest precisely when $S$ contains exactly $k$ items and their input integers sum to $Z$. The first task is to therefore to encode each input integer $w_i$ into the affine map of bundle $B_i$.

Recall that the continuation coefficients of affine mode maps multiply when several modes are executed one after another. We therefore encode each input integer $w_i$ by a number close to 1, so that multiplying the numbers associated with a subset approximately records the sum of its weights.

Let $K_0$ be a fixed constant, chosen sufficiently large later, and define \begin{equation}
    \epsilon = \frac{1}{K_0(1 + Z)^4}
\end{equation} The only role of $K_0$ here is to make $\epsilon$ sufficiently small.

For each input integer $w_i$, define \begin{equation}
    x_i = 1 + \epsilon w_i
\end{equation}

If $S$ is a candidate subset, then 

\begin{equation}
    \prod_{i \in S} x_i = \prod_{i \in S} (1 + \epsilon w_i) \approx 1 + \epsilon \sum_{i \in S} w_i \qquad (\because \epsilon \, \text{is small})
\end{equation}

Therefore, if the weights in $S$ sum to the target $Z$, this product should be close to \begin{equation}
    z_0 \eqdef 1 + \epsilon Z
\end{equation}

We therefore use $z_0$ as the target value that the product associated with the chosen subset should approach. Also, let us define \begin{equation}
    X \eqdef \prod_{i = 1}^{2k} x_i
\end{equation}

The quantity $X$ is simply the product over all input iterms. If $S$ is the set assigned mode $H$, then the complement is assigned mode $L$, and we have
\begin{equation}
    \prod_{i \notin S} x_i = \frac{X}{\prod_{i \in S} x_i}
\end{equation}

This will later let us express both the $H$- and $L$-mode contributions using the same subset product. We now encode $x_i$ into the two mode maps. Recall the constants $b$ and $c$ which are the continuation coefficients of the two modes. We slightly separate the thresholds of the two modes by defining: 

\begin{equation}
    h = \theta + \eta
\end{equation} where $\eta = \frac{3(\theta - m)c^k}{\upsilon}$. For ease of notation, we let $\Delta = \theta - m$ for now. The exact choice of $\eta$ will matter only later. For now, the important point is simply that $\eta > 0$ is small, so $h$ is only slightly greater than $\theta$.

For bundle $B_i$, prescribe:
\begin{align*}
    H_i(W) &= h + b_i(W - h) &b_i = bx_i\\
    L_i(W) &= \theta + c_i(W - \theta) &c_i = cx_i^2
\end{align*}

By doing so, every $H_i$ has continuation threshold $h$, while every $L_i$ has continuation threshold $\theta$. In the next subsection, we will attempt to place a pivot whose threshold lies strictly between these two thresholds. The exchange rule will then force all $H$-modes before the pivot and all $L$-modes after it.

The factors $x_i$ and $x_i^2$ are chosen deliberately. Let \begin{equation}
    z_S \eqdef \prod_{i \in S} x_i.
\end{equation} Then the product of the continuation coefficients of all $H$-mode bundles is \begin{equation}
    \prod_{i \in S} b_i = b^{|S|}z_S
\end{equation}

Meanwhile, the $L$-mode bundles are precisely those outside $S$, so \begin{equation}
    \prod_{i \notin S} c_i = c^{2k - |S|} \frac{X^2}{z^2_S}
\end{equation}

The reason for using $x_i$ for bundles in $H$-mode, but $x^2$ for bundles in $L$-mode only becomes visible later. After we force exactly $k$ bundles into each mode, these two products combine so that the loss depends on \begin{equation}
    z_S + \frac{z_0^2}{z_S}
\end{equation}

The above expression is smallest exactly when $z_S = z_0$.

Thus, the construction is being engineered so that the best subset is the one whose product $z_S$ lies closest to the target $z_0$. In the next few sections, we will then show that this distinguishes whether the corresponding weights sum to $Z$.

Finally, we must check that the mode maps we have just prescribed can actually be produced by real two-control bundles. Lemma~\ref{lem:robust-base-gadget} only guarantees this for maps sufficiently close to the base gadget. This is therefore precisely why $\eta$ and $\epsilon$ were chosen to be small. Since the total input weight is $2Z$, then \begin{equation}
    \log X \leq 2 \epsilon Z,
\end{equation}
so choosing $K_0$ sufficiently large makes $X$, and therefore every $x_i$ close to 1. Consequently $b_i = bx_i$ stays close to $b$, and $c_i = cx_i^2$ stays close to $c$. Similarly, because $k \geq k_0$, we have that \begin{equation}
    0 < \eta \le \frac{3\Delta c^{k_0}}{\upsilon}.
\end{equation} Thus, choosing the fixed padding constant $k_0$ sufficiently large makes $h = \theta + \eta \approx \theta$.

We therefore choose $k_0$ first, and then $K_0$, sufficiently large so that all the desired $H_i$ and $L_i$ maps lie inside the neighbourhood covered by Lemma~\ref{lem:effective-realization}. That lemma then gives an actual nearby probability vector for every bundle. Since these probability vector remain inside the robust neighbourhood from Lemma~\ref{lem:robust-base-gadget}, all of the low-, fresh-, high-state threshold properties continue to hold.

\subsection{Step 4.2: Pivot and Fallback}
\label{appendix:pivot-fallback}
We now add two simple top-level children (controls): a fallback $F$, and a pivot $P$. They serve different purposes. The fallback provides an always-available continuation of value $m$, which is what makes low residual states unattractive. The pivot is placed between the thresholds of $H$- and $L$-modes, so that the exchange rule will later force all $H$-modes before the pivot and all $L$-modes after it.

The fallback is a deterministic control of duration 3. Since each unit of time contributes a discount factor of $\frac 12$, its value is $m = (\frac{1}{2})^3 = \frac 18$. Thus, the lower benchmark $m$ is now realised by an actual child of the constructed network.

We next construct the pivot. Recall that every $H_i$ has threshold $h$, while every $L_i$ has threshold $\theta$, with $h > \theta$. We want the pivot threshold to lie strictly between them (recall Lemma ~\ref{lem:robust-base-gadget}):

\begin{align}\label{eq:h-chi-theta}
    h > \chi(P) > \theta
\end{align}

To achieve this, define
\begin{equation}
\kappa \eqdef\Delta c^k\frac{X^2}{z_0^2}.
\label{eq:kappa-definition}
\end{equation}

The pivot consists of a single control with one available attempt, taking one unit of time. Choose its success probability as 

\begin{equation}
\vartheta=\frac{\theta+2\eta-\kappa}{1-\theta-\kappa}.
\label{eq:pivot-probability}
\end{equation}

If the pivot is breached, the top-level $\ParOR$ structure collapses and the attacker succeeds. However, if it fails, the attacker continues with whatever comes next. Herce, for continuation value $W$, we have the affine map of the pivot:\begin{equation}\label{eq:pivot-map}
    P(W) = \frac \theta 2 + \nu_FW
\end{equation} where $\nu_F = \frac{1 - \vartheta}{2}$. Its continuation threshold is therefore \begin{equation}
    \chi(P) = \frac{\vartheta}{1 + \vartheta}
\end{equation}

What is left to prove is that this threshold really lies between $\theta$ and $h$.

From the definitions of $\eta$ and $\kappa$, we have that: \begin{equation}\label{eq:eta-kappa-ratio}
    \frac{\eta}{\kappa} = \left(\frac{3}{\upsilon} \right) \left(\frac{z_0^2}{X^2} \right)
\end{equation}

Because  $K_0$ is chosen large, every $x_i = 1 + \epsilon w_i$ is close to 1. Hence both $X$ and $z_0$ are close to 1, so $z_0^2/X^2$ is also close to 1. The base gadget was chosen so that $3/\upsilon$ lies inside the interval needed here. Therefore, by choosing the fixed constants sufficiently large, via Equation~\eqref{eq:upsilon-window}, we can ensure
\begin{equation}
\frac{1-2\theta}{2(1-\theta)}<\frac{\eta}{\kappa}<1.
\label{eq:eta-kappa-window}
\end{equation}

The lower bounds gives \begin{equation}
\chi(P)-\theta=\frac{2\eta(1-\theta)-\kappa(1-2\theta)}{1+2\eta-2\kappa}>0 \implies \chi(P) > \theta
\label{eq:pivot-above-theta}
\end{equation} whereas the upper bound gives \begin{equation}
    h-\chi(P)=\frac{(\kappa-\eta)(1-2h)}{1+2\eta-2\kappa}>0 \implies\chi(P) <h
\label{eq:pivot-below-h}
\end{equation} which yields Equation~\ref{eq:h-chi-theta}.

For sufficiently large fixed $k_0$ and $K_0$, we also have
\begin{equation}
    \theta > m, \quad h < \iota. \quad \chi(P) < \iota
\end{equation}
and the chosen pivot probability satisfies $ 0< \vartheta < 1$, which now means that the pivot and fallback remain compatible with the threshold separation required by Lemma ~\ref{lem:robust-base-gadget}.

Finally, define
\begin{equation}
\xi=h-P(\theta).
\label{eq:xi-definition}
\end{equation}

This quantity measures how far below the $H$-mode level $h$ the pivot output lies when the continuation after the pivot has value exactly $\theta$.

Substitution gives
\begin{equation}
\xi=\nu_P\kappa=\nu_P\Delta c^k\frac{X^2}{z_0^2}.
\label{eq:xi-identity}
\end{equation}

We record $\xi$ because it will make the value of the canonical macro policy much cleaner in the next subsection. In particular, it is through the choice of $\kappa$ that the target quantity $z_0$ eventually enters the final subset objective.

At this point, the threshold ordering is \begin{equation*}
    h > \chi(P) > \theta > m
\end{equation*}

The next subsection will use the exchange rules to turn this threshold ordering into the canonical macro-policy order

\begin{equation*}
    H\text{-modes} \longrightarrow P \longrightarrow L\text{-modes} \longrightarrow F
\end{equation*}

\subsection{Step 4.3: The Canonical Macro Policy}
\label{appendix:canonical-macro}
The remaining question we now have is which mode should each bundle use and in what order these macro-actions should be executed. As per previous sections, let $S \subseteq [2k]$ be the set of bundles assigned to mode $H$ and bundles outside of $S$ are assigned mode $L$. We have the following lemma.

\begin{appendixlemma}[Canonical Macro Form]
\label{lem:canonical-macro}
Every optimal macro policy uses every bundle, the pivot, and the fallback. For a fixed set $S$, then an optimal order is
\begin{equation}
(H_i)_{i\in S}\;\longrightarrow\;P\;\longrightarrow\;(L_i)_{i\notin S}\;\longrightarrow\;F.
\label{eq:canonical-macro-order}
\end{equation}
The order within the $H$-group of bundles, and within the $L$-group of bundles is immaterial for now. 

Define the total continuation coefficients of the two groups of bundles by
\begin{equation}
\nu_H(S)=\prod_{i\in S}b_i,\qquad \nu_L(S)=\prod_{i\notin S}c_i.
\label{eq:macro-slopes}
\end{equation}

Then the value of this policy is
\begin{equation}
V(S)=h-\nu_H(S)\bigl(\xi+\nu_P\Delta\nu_L(S)\bigr),
\label{eq:macro-value}
\end{equation}
\end{appendixlemma}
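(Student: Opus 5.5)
The plan is to treat the macro problem as an ordering problem over affine maps and resolve it with the First Exchange Rule (Lemma~\ref{lem:exchange-rules-1}) together with identity~\eqref{eq:fragment-vs-continuation}.

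By Lemma~\ref{lem:atomisation}, a macro policy is a finite sequence of macro-actions drawn from two kinds of object. The first kind is one of $H_i$ or $L_i$ for each used bundle $B_i$, or the pivot $P$. Each of these is an affine map $f(W)=\mu_f+\nu_fW$ with $\nu_f<1$, and its failure exits all lead to the same next macro-action. The second kind is the fallback $F$, which is terminal and returns the constant $m$. The continuation thresholds are ordered as
\[
\chi(H_i)=h>\chi(P)>\chi(L_i)=\theta>m
\]
(Step~4.1 and \eqref{eq:h-chi-theta}), and $F$ plays the role of threshold $m$.

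\textbf{Step 1: every useful policy ends with $F$, and the value is bounded by the thresholds.} Ending with $F$ is no loss: any policy that stops earlier gets $0$ on failure, and appending $F$ replaces that $0$ by $m>0$, a strict gain. Rewrite each map as
\[
f(W)=\chi(f)+\nu_f\bigl(W-\chi(f)\bigr).
\]
This shows that $f(W)$ is a convex combination of $W$ and $\chi(f)$, up to the extra factor $\nu_f<1$ in front of $W$. Backward induction along any sequence then shows that the value of every suffix lies in $[m,\max(\text{thresholds in the suffix})]$.

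\textbf{Step 2: every bundle and the pivot are used.} Suppose some macro-action $g$ is omitted. Insert $g$ at the position dictated by decreasing threshold, and let $W$ be the value of the suffix that follows it. By Step~1, $W$ is at most the largest threshold in that suffix, which is at most $\chi(g)$. Moreover $W<\chi(g)$ strictly, because the suffix contains $F$ and every one of its thresholds exceeds $m$ only weakly relative to $g$. Identity~\eqref{eq:fragment-vs-continuation} gives $g(W)>W$.

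Each composition with an earlier map $f$ multiplies this gain by $\nu_f>0$. So the overall value strictly increases, and no optimal policy omits $g$. For a bundle, either mode can be inserted, since both $H_i$ and $L_i$ have thresholds above $m$.

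\textbf{Step 3: the order is as in \eqref{eq:canonical-macro-order}.} Fix $S$. By Lemma~\ref{lem:exchange-rules-1}, any adjacent pair of non-terminal macro-actions may be swapped into decreasing $\chi$-order without loss, independently of the continuation. Bubble-sorting therefore produces the $H$-group, then $P$, then the $L$-group, then $F$. The within-group order is immaterial because equal thresholds make the swap difference in \eqref{eq:two-fragment-exchange} vanish.

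\textbf{Step 4: compute $V(S)$.} Compose backwards from the fallback value $W=m=\theta-\Delta$.

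\begin{itemize}
\item The $L$-maps $L_i(W)=\theta+c_i(W-\theta)$ compose to $\theta+\nu_L(S)(W-\theta)$. This gives $\theta-\nu_L(S)\Delta$.
\item Write $P(W)=P(\theta)+\nu_P(W-\theta)$ and use $P(\theta)=h-\xi$ from \eqref{eq:xi-definition}. This gives $h-\xi-\nu_P\Delta\nu_L(S)$.
\item The $H$-maps $H_i(W)=h+b_i(W-h)$ compose to $h+\nu_H(S)(W-h)$. This yields \eqref{eq:macro-value}.
\end{itemize}

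The main obstacle I expect is the strictness in Step~2, which is needed for the claim that \emph{every optimal} policy uses everything. I would handle it by showing that the value of any suffix containing $F$ is strictly below the maximal threshold of that suffix. This holds because each map has $\nu_f<1$ and the terminal value $m$ is strictly below all thresholds. A secondary point to check is that the exchange rule applies even though $F$ is terminal. This is handled by always keeping $F$ last and applying the exchanges only among the non-terminal maps.
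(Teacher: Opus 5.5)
Your proposal is correct and follows essentially the same route as the paper. Like the paper, you order the maps by decreasing continuation threshold via the exchange rule, rule out omissions by inserting the missing map where the following continuation lies strictly below its threshold (and append $F$ because every continuation coefficient is positive), and then compose backwards from $m=\theta-\Delta$ through the $L$-group, $P(\theta)=h-\xi$, and the $H$-group. One small precision: when the suffix contains maps whose threshold equals $\chi(g)$ (e.g.\ other $H_j$ when inserting $H_i$), strictness comes from $\nu_f>0$, not $\nu_f<1$, since $f(W)-\chi(g)=(1-\nu_f)(\chi(f)-\chi(g))+\nu_f(W-\chi(g))$.
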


\begin{proof}
Firstly the order is justified as explained many times in the previous subsections. However, now we also need to show that none of these bundles/fallback/pivot are omitted.

Suppose bundle $B_i$ were omitted. The continuation provided by the rest of the policy is below $h$. Since $H_i$ has threshold $h$, inserting $H_i$ at the beginning would strictly increase the attack value. Hence every bundle is used, either in $H$- or $L$-mode.

The pivot is also used. If it were omitted, then after the $H$-modes the remaining policy would consist of only $L$-modes and the fallback. That continuation has value below $\theta$. Since the pivot has a threshold strictly above $\theta$, inserting it improves the value.

Finally, the fallback is always used. Without it, the terminal continuation would be 0; adding the fallback replaces this by $m > 0$. Since every preceding mode has a positive continuation coefficient, this increases the value.

Hence, every optimal macro policy has that stated form. We now compute its value, and it is easiest to work \textbf{backwards from the end.} Start with the $L$-group of bundles followed by the fallback. Recall that every $L_i$ map has fixed point $\theta$, and that the continuation coefficients multiply. Hence the entire $L$-group of bundles has a continuation coefficient:
\begin{equation}
    \nu_L(S)=\prod_{i\notin S}c_i
\end{equation}

Starting from the fallback value $m$, the value immediately before the $L$-group of bundles is therefore
\begin{equation}
W_L=\theta+\nu_L(S)(m-\theta)=\theta-\Delta\nu_L(S).
\label{eq:L-suffix-value}
\end{equation}

Next, apply the pivot. Since its continuation coefficient is $\nu_P$,
\begin{equation}
P(W_L)=P(\theta)-\nu_P\Delta\nu_L(S)=h-\xi-\nu_P\Delta\nu_L(S).
\label{eq:pivot-after-L}
\end{equation} by recalling the definitions of $\xi$.

Finally, apply the $H$-group of bundles. Again, recall that every $H_i$ map has fixed point $h$, and their continuation coefficient multiplies as well to:
\begin{equation}
    \nu_H(S)=\prod_{i\in S}b_i
\end{equation} and hence we have:
\begin{equation}
    V(S) = h + \nu_H(S)(P(W_L) - h)
\end{equation} which simplifies to the desired expression \begin{equation*}
    V(S)=h-\nu_H(S)\bigl(\xi+\nu_P\Delta\nu_L(S)\bigr)
\end{equation*}
\end{proof}

At this point, we now that once $S$ is chosen, the optimal order is fixed as in Equation ~\ref{eq:canonical-macro-order}. So the remaining decision is the subset $S$ itself. The next subsection shows that any optimal choice must satisfy $|S| =k$ which is exactly the equal-cardinality part of our source problem.

\subsection{Step 4.4: Forcing Exactly $k$ H-modes}
\label{appendix:equal-cardinality}

The canonical form of the Lemma~\ref{lem:canonical-macro} leaves only one choice, that is the subset $S$ of bundles assigned mode $H$. The source problem, however, requires $|S| =k$. We now show that the construction enforces this automatically.

The proof has two steps. First, we \textit{temporarily} set $\epsilon = 0$, removing the dependence on the input weights, and show that the loss is uniquely minimised when exactly $k$ bundles adopt each mode. We then return to the actual construction with small $\epsilon > 0$ and show that the weight-dependent changes are too small to overturn this strict preference for balance $|S| = |\bar{S}| = k$.

\begin{appendixlemma}[Every optimal assignment is balanced]
\label{lem:balanced-assignment}
Every optimal macro policy assigns exactly $k$ bundles to mode $H$ and exactly $k$ bundles to mode $L$.
\end{appendixlemma}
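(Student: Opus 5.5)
By Lemma~\ref{lem:canonical-macro}, maximising $V(S)$ is the same as minimising the loss
\begin{equation*}
\mathcal L(S)\eqdef\nu_H(S)\bigl(\xi+\nu_P\Delta\,\nu_L(S)\bigr),
\end{equation*}
since $h$ does not depend on $S$. Write $j=|S|$ and $z_S=\prod_{i\in S}x_i$. Using $b_i=bx_i$, $c_i=cx_i^2$, and the formula $\xi=\nu_P\Delta c^kX^2/z_0^2$ from Equation~\eqref{eq:xi-identity}, the loss becomes
\begin{equation*}
\mathcal L(S)=\nu_P\Delta X^2\, b^{j}\Bigl(\frac{c^k z_S}{z_0^2}+\frac{c^{2k-j}}{z_S}\Bigr).
\end{equation*}
The common factor $\nu_P\Delta X^2$ is positive and independent of $S$. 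It therefore suffices to show that $F(j,z)\eqdef b^j\bigl(c^kz/z_0^2+c^{2k-j}/z\bigr)$ is uniquely minimised at $j=k$. This must hold uniformly over the admissible range $z\in[1,X]$, which is a small neighbourhood of $1$.

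\textbf{Step 1 (the case $\epsilon=0$).} Here $z=z_0=X=1$, and the objective reduces to $f(j)\eqdef b^jc^k+b^jc^{2k-j}$. After dividing by $c^k$, this is $g(j)=b^j\bigl(1+(c)^{k-j}\bigr)$ with $c<b<1$. I would compare $g(j)$ with $g(k)=2b^k$ in two cases.
\begin{itemize}
\item For $j>k$, write $j=k+t$. Then $g=b^{k+t}(1+c^{-t})$, and the ratio to $g(k)$ is $b^t(1+c^{-t})/2$. This ratio exceeds $1$ as soon as $(b/c)^t>2$, up to the factor $b^t$. More precisely, I need $b^t+(b/c)^t>2$. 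For $t=1$ this reads $b+b/c>2$, i.e.\ $b>2c/(1+c)$, which is exactly one of the recorded base inequalities~\eqref{eq:base-slope-inequalities}. For larger $t$ the left side is increasing in $t$, because $b/c>1$ dominates. I would verify this monotonicity carefully, since it is the delicate direction.
\item For $j<k$, write $j=k-t$. The ratio is $b^{-t}(1+c^t)/2>1$, which holds because $b^{-1}>1$ and the expression grows in $t$. For $t=1$ it requires $(1+c)/(2b)>1$, i.e.\ $b<(1+c)/2$, again one of the listed inequalities.
\end{itemize}
These two cases give a strict relative gap. Over the finitely many values of $t\le k$ this gap is at least a fixed constant $\delta>0$ depending only on $b,c$, because the $t=1$ case is the worst and its margin is independent of $k$.

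\textbf{Step 2 (perturbation).} With $\epsilon>0$, every $z_S$, $z_0$ and $X$ lies within a factor $e^{2\epsilon Z}$ of $1$. Hence, for every $j$, $F(j,z_S)$ differs from $c^k g(j)$ by a multiplicative factor of at most $e^{O(\epsilon Z)}$. Choosing $K_0$ large makes $\epsilon Z$ small, so this factor is below $1+\delta/3$. Consequently, for any unbalanced $S$ and any balanced $S'$,
\begin{equation*}
\mathcal L(S)\ \ge\ (1+\delta)(1+\delta/3)^{-2}\,\mathcal L(S')\ >\ \mathcal L(S').
\end{equation*}
Moving to a balanced assignment therefore strictly improves the value, so every optimal assignment is balanced.

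\textbf{Main obstacle.} The main difficulty is the uniform monotonicity in $t$ in Step 1, specifically the $j>k$ branch, where $b^t$ decays while $(b/c)^t$ grows. To handle it I would show $\phi(t)=b^t+(b/c)^t$ is increasing for $t\ge1$ by a convexity argument: $\phi$ is convex in $t$, and $\phi(1)>\phi(0)=2$. This gives $\phi(t)\ge\phi(1)$ for all $t\ge1$, with a $k$-independent gap $\phi(1)-2$. A secondary subtlety is that the perturbation tolerance must not shrink with $k$, even though the constant $\epsilon$ was chosen independently of $k$. The multiplicative form of the error makes this automatic, since all $S$-dependence enters only through $z_S\in[1,X]$.
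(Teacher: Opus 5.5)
Your proposal is correct and follows the paper's route. You factor out the $S$-independent constant, show at $\epsilon=0$ that the balanced count uniquely minimises the loss with a $k$-independent relative gap using exactly the base inequalities $2c/(1+c)<b<(1+c)/2$, and then absorb the weights through a multiplicative $e^{O(\epsilon Z)}$ perturbation controlled by $K_0$.

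The only difference is that you get monotonicity in $t$ from convexity of $t\mapsto b^t+(b/c)^t$, where the paper uses the consecutive-ratio bound $(1+c^{j+1})/(1+c^j)\ge(1+c)/2$. The same convexity argument, applied to $t\mapsto b^{-t}+(c/b)^t$, is what actually justifies your $j<k$ branch; ``$b^{-1}>1$'' alone does not suffice there.
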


\begin{proof}
Let $S$ be the set of bundles assigned to mode $H$ and write \begin{equation*}
    |S| = k - j
\end{equation*}
This means that $k+j$ bundles use mode $L$. Clearly, $j = 0$ is the balanced case, $j > 0$ means that there are fewer that $k$ $H$-mode bundles, and $j < 0$ means there are more than $k$ $H$-modes. Define
\begin{equation}
z=\prod_{i\in S}x_i.
\end{equation}

Recall that $b_i = bx_i$ and $c_i = cx_i^2$. Hence, the total continuation coefficient of the $H$-group of bundles is 
\begin{equation}\label{nu-H}
    \nu_H(S)=b^{k-j}z
\end{equation}

Since $X = \prod_i x_i$, the complementary continuation of the $L$-group of bundles is therefore derived by
\begin{equation}\label{nu-L}
    \prod_{i \notin S}x_i^2 = \frac{X^2}{z^2} \implies \nu_L(S)=c^{k+j}\frac{X^2}{z^2}
\end{equation}

Recall that the value of the policy from Lemma~\ref{lem:canonical-macro} is \begin{equation*}
    V(S) = h - \nu_H(S)(\xi + \nu_P\Delta \nu_L(S)).
\end{equation*} Since maximising $V(S)$ is equivalent to minimising the positive loss $h - V(S)$, we shall work with the latter expression instead. We simply substitute Equations~\ref{nu-H} and ~\ref{nu-L}, and also Equation~\ref{eq:xi-identity} to get the following simplification:
\begin{equation}
    h - V(S) = \nu_P\Delta(bc)^kX^2\Lambda_j(z)
\end{equation} where
\begin{equation}
\Lambda_j(z)=b^{-j}\left(\frac{z}{z_0^2}+\frac{c^j}{z}\right).
\label{eq:normalised-loss}
\end{equation}
Note that the front divisors are positive and independent of $S$, so it actually suffices to minimise $\Lambda_j(z)$, which we define as the normalised loss.

We first isolate the effect of cardinality by considering the baseline case $\epsilon = 0$. Then every $x_i = 1$ and hence $z = z_0 = 1$. Hence, we have that

\begin{equation*}\label{eq:yj-definition}
    y_j \eqdef \Lambda_j(z=1) = b^{-j}(1 + c^j)
\end{equation*}

At the balanced ($j=0$) assignment, we have that $\Lambda_0(1) = 2$. We show that the loss increases strictly in either direction of $j = 0$. For $j \ge 0$, we have that
\begin{align}
    \frac{y_{j+1}}{y_j} &=\frac{\Lambda_{j+1}(1)}{\Lambda_{j}(1)} \\
    &= \left(\frac{1}{b}\right)\left(\frac{1 + c^{j+1}}{1+c^j}\right)\\
    &\geq \left(\frac{1}{b}\right)\left(\frac{1+c}{2}\right) &\because c \in (0,1)\\
    &>\left(\frac{2}{1+c}\right)\left(\frac{1+c}{2}\right) &\because b < \frac{1+c}{2} \,\,\text{from Equation~\ref{eq:base-slope-inequalities}} \\
    &= 1 \implies y_{j+1} > y_j
\end{align}

Therefore, the loss strictly increases when the number of $H$-mode bundles fall below $k$. Similarly, for negative $j = -u$ for $u \geq 0$:

\begin{align}
    \frac{y_{-(u+1)}}{y_{-u}} &= \frac{\Lambda_{-(u+1)}(1)}{\Lambda_{-u}(1)}\\
    &= \left(\frac{b}{c}\right)\left(\frac{1 + c^{j+1}}{1+c^j}\right)\\
    &\ge \left(\frac{b}{c}\right)\left(\frac{1 + c}{2}\right)&\because c \in (0,1)\\
    &> \left(\frac{2}{1+c}\right)\left(\frac{1 + c}{2}\right)&\because b < \frac{1+c}{2} \,\,\text{from Equation~\ref{eq:base-slope-inequalities}} \\
    &=1 \implies y_{-(u+1)} > y_{-u}
\end{align}
Hence, the loss also strictly increase when the number of $H$-mode bundles exceeds $k$. Therefore $j=0$ \textbf{is the unique minimiser of the normalised loss}.

Let
\begin{equation}
\varpi_0=\min\{y_1,y_{-1}\}-2>0.
\label{eq:cardinality-gap-constant}
\end{equation}

Since the sequence increases (in either direction) from $j = 0$, every unbalanced assignment has loss of at least $2 + \varpi_0$, whereas the balanced assignment has loss of 2.

We now return to the actual construction, where $\epsilon > 0$. The weight encoding slighly changes $z$ and $z_0$, but both remain close to 1. Indeed, every feasible subset would have satisfied

\begin{equation}
1\leq z\leq X\leq e^{2\varepsilon Z},\qquad 1\leq z_0\leq e^{\varepsilon Z},
\end{equation}
and therefore
\begin{equation}
e^{-4\varepsilon Z}y_j\leq\Lambda_j(z)\leq e^{4\varepsilon Z}y_j.
\label{eq:loss-perturbation-bound}
\end{equation}

Thus, the actual loss differs from its baseline value by only a small multiplicative factor. Finally, since $\varepsilon Z\leq1/K_0$, we may therefore choose the fixed constant $K_0$ sufficiently large such that
\begin{equation}
e^{-4\varepsilon Z}(2+\varpi_0)>2e^{4\varepsilon Z}
\label{eq:balanced-separation}
\end{equation}
for every positive integer $Z$. The left-hand side expression is a lower bound on the normalised loss of every unbalanced assignment, while the right-hand side is an upper bound on the loss of every balanced assignment. Hence, every assignment with $j \neq 0$ has strictly greater loss than every assignment with $j=0$. Therefore, every optimal macro policy satisfies $|S| =k$, meaning exactly $k$ bundles use mode $H$ and exactly $k$ bundles use mode $L$.
\end{proof}

Lemma~\ref{lem:balanced-assignment} enforces the cardinality condition. ofthe \textsc{Equal-Cardinality Partition}. It remains to enforce the sum condition $\sum_{i \in S} w_i = Z$, which we. donext by comparing the subset product $z_S$ with the target $z_0$.

\subsection{Step 4.5: Encoding the target sum $Z$}
\label{appendix:partition-gap}
By Lemma~\ref{lem:balanced-assignment}, we only need to consider assignments in which exactly $k$ bundles adopt $H$-mode. It remains to distinguish whether the corresponding $k$ input integers sum to the target $Z$. For each subset $S$, define
\begin{equation*}
    z_S \eqdef \prod_{i \in S} x_i.
\end{equation*} Recall that $x_i = 1 + \epsilon w_i$, while the target product was chosen as $z_0 = 1 + \epsilon Z$. The purpose of this construction is that $z_S$ approximately records the sum of the weights selected by $S$. We will show that the attack value is highest precisely when $z_S$ is close to $z_0$, and that the YES and NO cases are separated by a non-zero gap.

\begin{appendixlemma}[Gap for balanced assignments]
\label{lem:partition-gap}
There is a rational threshold $\Theta$ and a positive rational gap $\varpi_1$ such that the ideal construction has value at least $\Theta+\varpi_1$ in every YES-instance and at most $\Theta-\varpi_1$ in every NO-instance.
\end{appendixlemma}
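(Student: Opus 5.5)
The plan is to reduce the lemma to a one-variable inequality. By Lemma~\ref{lem:balanced-assignment}, every optimal macro policy has $|S|=k$, that is $j=0$. By Lemma~\ref{lem:atomisation}, the macro and primitive problems have the same optimal value. The ideal construction's value is therefore
\begin{equation*}
V^\star = h-\nu_P\Delta(bc)^kX^2\min_{|S|=k}\Lambda_0(z_S),\qquad \Lambda_0(z)=\frac{z}{z_0^2}+\frac1z .
\end{equation*}
The key algebraic identity is
\begin{equation*}
\Lambda_0(z)-\frac{2}{z_0}=\frac{(z-z_0)^2}{z_0^2\,z}\ \ge 0 .
\end{equation*}
Writing $D\eqdef\nu_P\Delta(bc)^kX^2>0$, the value is $h-2D/z_0-D\min_{|S|=k}(z_S-z_0)^2/(z_0^2z_S)$. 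The whole problem thus becomes bounding how close $z_S$ can come to $z_0$ in the two cases.

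Next I will estimate $z_S-z_0$ from the expansion $z_S=\prod_{i\in S}(1+\epsilon w_i)$. Set $\sigma_S=\sum_{i\in S}w_i\le 2Z$. Using $\log(1+x)\le x$, $e^x\le 1+x+x^2$ for $x\le1$, and $\prod(1+x_i)\ge 1+\sum x_i$, I get
\begin{equation*}
1+\epsilon\sigma_S\ \le\ z_S\ \le\ 1+\epsilon\sigma_S+4\epsilon^2Z^2 .
\end{equation*}
\textbf{YES case.} Pick $S$ with $|S|=k$ and $\sigma_S=Z$. Then $0\le z_S-z_0\le 4\epsilon^2Z^2$, so the excess loss is at most $16\epsilon^4Z^4$, using $z_0,z_S\ge1$.

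\textbf{NO case.} Every balanced $S$ has $|\sigma_S-Z|\ge1$ by integrality. This gives $|z_S-z_0|\ge \epsilon-4\epsilon^2Z^2\ge\epsilon/2$, because $\epsilon Z^2\le 1/(K_0(1+Z)^2)\le 1/8$ once $K_0\ge 8$. Since $z_0^2 z_S\le e^{4\epsilon Z}\le 2$, the excess loss is at least $\epsilon^2/8$.

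The choice $\epsilon=1/(K_0(1+Z)^4)$ makes $16\epsilon^4Z^4\le 16\epsilon^2/K_0^2<\epsilon^2/32$ for large $K_0$. This separates the two cases with room to spare.

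I will then set
\begin{equation*}
\Theta\eqdef h-\frac{2D}{z_0}-\frac{D\epsilon^2}{16},\qquad \varpi_1\eqdef\frac{D\epsilon^2}{32}.
\end{equation*}
In a YES instance, $V^\star\ge h-2D/z_0-D\epsilon^2/32=\Theta+\varpi_1$. In a NO instance, $V^\star\le h-2D/z_0-D\epsilon^2/8\le\Theta-\varpi_1$.

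Rationality is then checked from the parameters. The quantities $b,c,\theta,m$ are the rationals from Lemma~\ref{lem:robust-base-gadget}, and $\epsilon,x_i,X,z_0$ are rational by construction. Hence $\eta,\kappa$ and the pivot probability $\vartheta$ are rational, so $h$, $\nu_P=(1-\vartheta)/2$, $D$, $\Theta$ and $\varpi_1$ are rational. Their bit sizes are polynomial in $k$ and $\log Z$, since $(bc)^k$ and $c^k$ have $O(k)$-size encodings. I will note this for the later rationalisation step, where the gap $\varpi_1$ must dominate the perturbation error.

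I expect the main obstacle to be the NO-case lower bound on $|z_S-z_0|$. The product encoding is only approximately additive, so the second-order terms $O(\epsilon^2Z^2)$ must be kept uniformly below the integer gap $\epsilon$ for all subsets at once. This is exactly where the $(1+Z)^4$ in $\epsilon$ is used: squared, it must also beat the YES-case residual $\epsilon^4Z^4$. I will carry out that bookkeeping carefully with explicit constants rather than asymptotics. A secondary point is to confirm $D>0$, i.e.\ $\nu_P>0$, which follows from $0<\vartheta<1$ established in Step 4.2.
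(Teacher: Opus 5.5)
Your proposal is correct and follows essentially the same route as the paper. The paper writes the balanced loss as $\xi b^k\bigl(2z_0+(z_S-z_0)^2/z_S\bigr)$, which equals your $D\bigl(2/z_0+(z_S-z_0)^2/(z_0^2z_S)\bigr)$ because $\xi b^k=D/z_0^2$; it then uses the same $4\epsilon^2Z^2$ remainder bound to separate the YES case ($\le\epsilon^2/32$) from the NO case ($\ge\epsilon^2/8$) and places $\Theta$ in between, and your explicit constants and rationality check of $\Theta,\varpi_1$ (which the paper leaves implicit) are a small bonus.
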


\begin{proof}
Since $|S| = k$, the continuation coefficients from Lemma~\ref{lem:balanced-assignment} become \begin{equation}
    \nu_H(S) = b^kz_S \qquad \nu_L(S) = c^k\frac{X^2}{z_S^2}
\end{equation}

Using the definition $\xi = \nu_P\frac{\Delta c^k X^2}{z_0^2}$, the loss formula from Lemma~\ref{lem:balanced-assignment} simplifies to 

\begin{align}
h-V(S)&=\xi b^k\left(z_S+\frac{z_0^2}{z_S}\right)\\
&=\xi b^k\left(2z_0 + \frac{(z_S - z_0)^2}{z_S}\right)
\label{eq:balanced-loss}
\end{align}

The first term, $2z_0$, is the same for every balanced subset. Therefore, among balanced subsets, maximising the attack value is equivalent to minimising $\frac{(z_S - z_0)^2}{z_S}$. In particular, the loss is smallest when $z_S \approx z_0$.

We now relate this product distance to the original subset-sum condition. Let
\begin{equation}
    s_S \eqdef \sum_{i \in S} w_i.
\end{equation} Expanding the product $z_S = \prod_{i \in S} (1 + \epsilon w_i)$ gives 
\begin{equation}
z_S=1+\varepsilon s_S+R_S.
\label{eq:product-expansion}
\end{equation} where $R_S\geq0$ contains all terms of second and higher order in $\epsilon$. Recall $\epsilon$ is chosen sufficiently small, these higher-order terms are uniformly tiny. More precisely,
\begin{equation}
    R_s \leq e^{\epsilon s_S} - 1 - \epsilon s_S \leq 4 \epsilon^2Z^2
\end{equation} after increasing the fixed constant $K_0$ if necessary. To obtain the above, we simply apply the well-known inequality $1 + t \leq e^t$ and from the fact that $\epsilon s_S \leq 2 \epsilon Z$.

We now compare the YES and NO cases. If $S$ satisfies the partition constraint, then $s_S = Z$. Since $z_0 = 1 + \epsilon Z$, we then get $z_S - z_0 = R_S$. Thus, the only difference between $z_S$ and $z_0$ comes from the small higher-order terms in the product expansion. By choosing $K_0$ sufficiently large (recall how $K_0$ relates to $\epsilon$ in the earlier subsections), we have \begin{equation}
    \frac{(z_S - z_0)^2}{z_S} \leq \frac{\epsilon^2}{32}
\end{equation} So every YES-instance contains a balanced assignment whose product lies very close to the target.

Now suppose $S$ \textbf{does not} satisfy the partition constraint. Since all $w_i$ and $Z$ are integers, then 
\begin{equation}
    s_S \neq Z \implies |s_S - Z| \geq 1
\end{equation}

If $s_S > Z$, then \begin{equation}
    z_S - z_0 = \epsilon(s_S - Z) + R_S \geq \epsilon
\end{equation}.

If $s_S < Z$, then \begin{equation}
    z_0 - z_S = \epsilon(Z - s_S) - R_s
\end{equation}

The first term is at least $\epsilon$, while $R_s$ is of smaller order. Choosing $K_0$ sufficiently large, therefore, gives \begin{equation}
    z_0 - z_S \geq \frac{\epsilon}{2}
\end{equation}

Hence, every NO assignment satisfies
\begin{equation}
    |z_S - z_0| \geq \frac \epsilon 2
\end{equation}

We may also choose $K_0$ so that $z_S \leq X \leq 2$. Therefore, \begin{equation}
    \frac{(z_S - z_0)^2}{z_S} \geq \frac{\epsilon^2}{8}
\end{equation}

We have therefore obtained a strict separation:
\begin{align*}
    \text{YES}: &\qquad\frac{(z_S - z_0)^2}{z_S} \leq \frac{\epsilon^2}{32}\\
    \text{NO}: &\qquad\frac{(z_S - z_0)^2}{z_S} \geq \frac{\epsilon^2}{8}
\end{align*}

The remaining step is simply to place the decision threshold between these two values. Since $\frac{5\epsilon^2}{64}$ is the midpoint of $\frac{\epsilon^2}{32}$ and $\frac{\epsilon^2}{8}$ which leaves a margin of $\frac{3\epsilon^2}{64}$ on either side, we can define the following expressions:
\begin{align*}
    \Theta &= h - \xi b^k \left(2z_0 + \frac{5\epsilon^2}{64}\right)\\
    \varpi_1 &= \frac{3}{64}\xi b^k \epsilon^2
\end{align*}

Now, we have that
\begin{align*}
    \text{YES}: &\qquad V(S) \geq \Theta  + \varpi_1 > \Theta\\
    \text{NO}: &\qquad V(S) \leq \Theta - \varpi_1 < \Theta
\end{align*}

This proves the claimed gap.
\end{proof}

\subsection{Step 5: Replacing the Ideal Probabilities by Rationals}
\label{appendix:rationalisation}
The construction so far is very idealised. Lemma~\ref{lem:effective-realization} guarantees bundle probabilities that realise the desired $H_i$- and $L_i$-maps, but those probabilities need not be rational. A polynomial-time reduction, however, must output numbers with finite, polynomial encoding length.

We therefore replace each ideal bundle probability by a sufficiently close rational approximation. There are two things to check. First, the approximation must be small enough that all structural properties used earlier (especially the threshold separation from Lemma~\ref{lem:robust-base-gadget} remain valid. Second, it must be small enough that the attack value changes by much less than the YES/NO gap $\varpi_1$ established in Lemma~\ref{lem:partition-gap}.

\begin{appendixlemma}[Polynomial-bit rationalisation]
\label{lem:rationalisation}
The ideal real probability vectors can be replaced in polynomial time by rational probability vectors of polynomial encoding length while preserving whether the optimal value lies above or below the threshold $\Theta$.
\end{appendixlemma}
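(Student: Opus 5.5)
The plan is to rationalise in two layers and then compare the rational instance with the ideal one by a coupling bound. Fix the target precision
\[
\delta \eqdef 2^{-L}, \qquad L=\mathrm{poly}(k,\log Z),
\]
chosen so that $\delta$ is far below both the radius of the robust neighbourhood $\mathcal C$ of Lemma~\ref{lem:robust-base-gadget} and the gap $\varpi_1$ of Lemma~\ref{lem:partition-gap}. Since $\epsilon=1/(K_0(1+Z)^4)$ and $c^k,b^k$ have polynomial bit length, $\log(1/\varpi_1)$ is polynomial in the input size, so such an $L$ exists.

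\textbf{Rationalising the bundles.} All target mode coefficients $(h,b_i,\theta,c_i)$ are rational with polynomial encoding, because $x_i=1+\epsilon w_i$ and $h=\theta+\eta$ with $\eta=3\Delta c^k/\upsilon$. Lemma~\ref{lem:effective-realization} therefore applies directly: it produces rational $\tilde x^{(i)}$ within $\delta$ of the ideal bundle probabilities, in polynomial time.

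\textbf{Rationalising the pivot.} The pivot probability $\vartheta$ is already a rational function of rational data, so it can be kept exact. Alternatively it can be rounded to precision $\delta$, since the inequalities \eqref{eq:h-chi-theta} and $\chi(P)<\iota$ are strict with margins of order $\eta$, which is itself $2^{-\mathrm{poly}}$.

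\textbf{Preserving the structure.} Because every threshold inequality used in Lemma~\ref{lem:robust-base-gadget}, Lemma~\ref{lem:atomisation} and the pivot placement is strict, and the rounded vectors stay inside $\mathcal C_0\subseteq\mathcal C$, the low/fresh/high separation and the atomisation conclusion continue to hold verbatim for the rational instance. I would not try to re-derive the canonical macro form for the perturbed maps. Instead I would bound the value change directly, which avoids needing $\chi(H_i)$ to stay exactly equal across bundles.

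\textbf{Value perturbation by coupling.} Let $\Network$ and $\tilde\Network$ be the ideal and rational networks; they have identical topology and durations. Fix any attacker policy $\pi$, which is valid on both networks. Couple the two runs by drawing, for each attempt, one uniform $U\sim[0,1]$ and declaring success in $\Network$ iff $U<p$ and in $\tilde\Network$ iff $U<\tilde p$. The trajectories then coincide unless some attempt has $U$ between $p$ and $\tilde p$, which happens with probability at most $\delta$ per attempt. The total number of attempts is at most
\[
Q=4\cdot 2k+2,
\]
and payoffs lie in $[0,1]$, so
\[
|V^\pi_{\Network}-V^\pi_{\tilde\Network}|\le Q\delta .
\]
Taking suprema over $\pi$ gives
\[
|V^*_{\Network}-V^*_{\tilde\Network}|\le Q\delta .
\]
Choosing $L$ with $Q2^{-L}<\varpi_1$, Lemma~\ref{lem:partition-gap} yields $V^*_{\tilde\Network}>\Theta$ in YES instances and $V^*_{\tilde\Network}<\Theta$ in NO instances. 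Note that $\Theta$ and $\varpi_1$ are rational, since $\nu_P,\xi,b^k$ are.

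\textbf{Main obstacle.} I expect the hardest step to be the quantitative bookkeeping showing that the required $L$ is polynomial. This means lower-bounding $\varpi_1=\tfrac{3}{64}\xi b^k\epsilon^2$ by $2^{-\mathrm{poly}}$, which needs $\nu_P$, $\Delta$, $c^k X^2/z_0^2$ and $b^k$ to be bounded below by $2^{-O(k)}$. It also means checking that the contraction iteration of Lemma~\ref{lem:effective-realization} has polynomially bounded intermediate bit sizes when run to that precision. For the latter, I would truncate each iterate to $L+O(1)$ bits, since contraction absorbs rounding errors geometrically.
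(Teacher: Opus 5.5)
Your proposal is correct and follows essentially the same route as the paper: realise each bundle's rational target maps to precision $2^{-L}$ via Lemma~\ref{lem:effective-realization}, bound the change in optimal value by a shared-uniform coupling with a union bound over the at most $8k+2$ attempts, and choose $L=\Oh{k+\log Z}$ from the lower bound $\varpi_1\ge \frac{\Delta}{64}(bc)^k\epsilon^2$, which uses $\nu_P>1/3$ and $X^2/z_0^2\ge 1$. Your remarks that $\vartheta$ is already a polynomial-size rational, and that the structural properties need not be re-verified because the coupling compares optimal values directly, are clarifications the paper leaves implicit; one small fix is that rounding need not keep the vectors inside $\mathcal C_0$, so you should do as the paper does and also require $2^{-L}\le\delta_{\mathcal C}/2$, a fixed margin to the boundary of $\mathcal C$ that keeps $u_1>u_2$ and $v_1>v_2$ valid in the rational instance.
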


We split the proof into two parts. The first part is to show that we still preserve the YES/NO gap even with a rational approximation, and the second is to show we have an polynomial encoding length.

\begin{proof}[Proof of preserving the YES/NO gap]
Let $I^\star$ be the ideal instance constructed above and $\widetilde I$ denote the instance obtained after rational approximation. For each bundle $B_i$, let  $x_i^\star$ denote its probability vector (referring to $(u_1, u_2, v_1, v_2)$). By Lemma~\ref{lem:effective-realization}, for any precision parameter $L$, we can compute $\widetilde{x}_i$ satisfying \begin{equation}
    ||\widetilde{x}_i - x_i^\star||_\infty \leq 2^{-L}
\end{equation} Therefore, by increasing $L$, we can approximate every primitive bundle probability as accurately as needed.

We first determine how accurate the approximation must be. There are at most $H_{max} = 8k + 2$ primitive attempts in any execution ($2k$ bundles with 2 attempts each per internal control, the pivot and the fallback). We also need the approximated bundles to remain inside the robust neighbourhood $\mathcal C$ from Lemma~\ref{lem:robust-base-gadget}. Recall that all ideal bundle vectors were chosen inside a fixed compact subset $\mathcal C_b$ lying strictly inside $\mathcal C$. Hence, there is some fixed rational distance $\delta_C > 0$ such that moving any ideal bundle probability vector by at most $\delta_C$ still leaves it inside $\mathcal C$. This condition preserves all the strict gadget inequalities used in Lemma~\ref{lem:robust-base-gadget}.

We now choose an approximation tolerance that satisfies both requirements, $\delta_p$ defined as \begin{equation}
    \delta_p=\min\left\{\frac{\varpi_1}{4H_{\max}},\frac{\delta_{\mathcal C}}2\right\},
\end{equation} where $\frac{\varpi_1}{4H_{max}}$ will control the change in value and $\frac{\delta_C}{2}$ keeps every approximated bundle safely inside $\mathcal{C}$. We can then pick $L$, our precision parameter to be
\begin{equation}
L=\left\lceil\log_2\frac1{\delta_p}\right\rceil+1.
\label{eq:precision-choice}
\end{equation}

Now, it remains to bound how much these small probability changes can alter the attack value. Fix any deterministic history-dependent attacker policy, and imagine running it once in the ideal instance $I^*$ and once using the rational instance $\widetilde{I}$. Use the same independent uniform random number for the corresponding primitive attempt in both runs.

As long as the two runs have agreed (in stochastic processes literature, it is a coupling) up to a given attempt, their outcomes can differ at that attempt only because its success probabilities changed. Since that change is at most $\delta_p$, the probability of disagreement at any particular attempt is at most $\delta_p$. There are at most $H_{max}$ attempts, so a union bound therefore gives the probability that the two runs disagreeing to be at most $H_{max} \cdot \delta_p \leq \frac{\varpi_1}{4}$. 

Every discounted terminal reward lies between 0 and 1. If the two executions never disagree, then its straightforward that they receive exactly the same (expected) reward. Hence, the expected value of any fixed policy changes by at most the probability that the executions disagree, which is $\frac{\varpi_1}{4}$. Moreover, this bound holds for every policy, and thus it also bounds the change in the optimal value:

\begin{equation}
\left|V(\widetilde I)-V(I^\star)\right|\leq\frac{\varpi_1}{4}.
\label{eq:optimal-value-coupling}
\end{equation}

Now, recall that in the previous section, we obtained a positive gap of $\varpi_1$. So changing the value by at most $\frac{\varpi}{4}$ therefore cannot cross the threshold of $\Theta$. The rational instance remains above $\Theta$ in every YES-instance, and below $\Theta$ in every NO-instance.
\end{proof}

\begin{proof}[Proof of polynomial encoding length] It remains only to verify that the required rational numbers use polynomially many bits. For this, we make the dependence of the gap $\varpi_1$ on $k$ and $Z$ explicit. Recall that $\varpi_1 = \frac{3}{64}\xi b^k\epsilon^2$ and $\xi = \nu_P \Delta c^k \frac{X^2}{z_0^2}$. Combining the two expressions gives
\begin{equation}
    \varpi_1 = \frac{3}{64}\nu_P\Delta (bc)^k \frac{X^2}{z_0^2}\epsilon^2
\end{equation}

Thus, to understand how small $\varpi_1$ can become, we only need bounds on $\nu_P, \frac{X^2}{z_0^2}$ and $\epsilon$.

First, consider the pivot coefficient. Since $\chi(P) = \frac{\vartheta}{1 + \vartheta} < \frac{1}{4}$, we have that $\vartheta < \frac{1}{3}$ and hence $\nu_P = \frac{1 - \vartheta}{2} > \frac{1}{3}$. Also, since $\vartheta$ is a probability, we have that $\vartheta \in (0, 1)$, so $\nu_P < \frac{1}{2}$. Hence, we have a bound on $\nu_P$: \begin{equation}
    \frac{1}{3} < \nu_P < \frac{1}{2}
\end{equation}

Next, consider $\frac{X^2}{z_0^2}$. Recall that \begin{equation*}
    x_i = 1 + \epsilon w_i, \quad X = \prod_{i=1}^{2k}x_i, \quad z_0 = 1 + \epsilon Z
\end{equation*} and that the total input weight is $2Z$. Since all the terms in the product are positive,
\begin{align}
    X &= \prod_{i}(1 + \epsilon w_i)\nonumber\\
    &\geq 1 + \epsilon \sum_i w_i\\
    &=1 + 2 \epsilon Z > z_0 \implies \frac{X^2}{z_0^2} \geq 1
\end{align}

On the other hand, we already showed that $X \leq e^{2\epsilon Z}$. Since $\epsilon Z \leq \frac{1}{K_0}$, then we have \begin{equation}
    \frac{X^2}{z_0^2} \leq X^2 \leq e^{4\epsilon Z} \leq e^{\frac{4}{K_0}}
\end{equation}
We may enlarge the fixed constant $K_0$ (if necessary), so that $K_0 \geq 4$. Then, for every source instance:
\begin{equation}
    1 \leq \frac{X^2}{z_0^2} \leq e
\end{equation}

We can now lower-bound the YES/NO gap. Using $\nu_P > \frac{1}{3}$ and $\frac{X^2}{z_0^2} \geq 1$, we have
\begin{equation}
    \varpi_1 \geq \frac{\Delta}{64}(bc)^k \epsilon^2
\end{equation}
Recall from the very start that $\epsilon = \frac{1}{K_0(1 + Z)^4}$. Therefore
\begin{equation}
    \varpi_1 \geq \frac{\Delta}{64K_0^2}(bc)^k(1 + Z)^{-8}
\end{equation}
Taking logarithms, we have 
\begin{equation}
    \log \left(\frac{1}{\varpi_1}\right) \leq \log \frac{64K_0^2}{\Delta} + k \log \frac{1}{bc} + 8 \log(1 + Z)
\end{equation}

Here, $K_0, \Delta, b, c$ are all fixed constants chosen once for the reduction. None of the depends on the \textsc{Equal-Cardinality Partition} instance. Hence, the first term is a constant, $\log \frac{1}{bc}$ is also a constant. Consequently
\begin{equation}
    \log \frac{1}{\varpi_1} = \Oh{k + \log Z}
\end{equation}

Finally, recall that $\delta_p = \min \{\frac{\varpi_1}{4 H_{max}}, \frac{\delta_C}{2}\}$ and $H_{max} = 8k + 2$. The quantity $\delta_C > 0$ is a fixed constant, while $\log H_{max} = \Oh{\log k}$. Therefore, \begin{align}
    \log \frac{1}{\delta_p} = \Oh{k + \log Z} \implies L = \left \lceil \log_2 \frac{1}{\delta_p} \right\rceil + 1 = \Oh{k + \log Z}
\end{align}

Therefore, each bundle probability can be approximated using only polynomially many bits and these rational approximations can be computed in polynomial time.

\end{proof}

At this point, we have converted the ideal construction into a polynomial-size rational instance without changing the YES/NO answer. All that remains is to collect the preceding lemmas and conclude the reduction.

\subsection{Completion of the proof}

\begin{proof}[Proof of Theorem~\ref{thm:mixed-and-or-hardness}]
Given an \textsc{Equal-Cardinality Partition} instance, first apply the padding reduction so that $k\geq k_0$. Construct the $2k$ target bundle maps, the pivot, and the fallback as above. Lemma~\ref{lem:effective-realization} realises the desired bundle maps using nearby probability vectors, and Lemma~\ref{lem:rationalisation} replaces these ideal probabilities by polynomial-bit rational probabilities in polynomial time.

It remains to collect what the preceding lemmas establish. \begin{itemize}
    \item Lemma~\ref{lem:atomisation} shows that the attacker gains nothing from arbitrary switching and resumption. An optimal adaptive policy can be represented by the macro problem in which each bundle is either executed in $H$-mode or $L$-mode.
    \item Lemma~\ref{lem:canonical-macro} then gives the canonical order\begin{equation*}
        H\text{-modes} \longrightarrow P \longrightarrow L\text{-modes}  \longrightarrow F
    \end{equation*}
    \item Lemma~\ref{lem:balanced-assignment} forces exactly $k$ of the $2k$ bundles to be in $H$-mode. Hence, an optimal macro policy determines a $k$-element subset $S$ of the original \textsc{Equal-Cardinality Partition} instance.
    \item Finally, Lemma~\ref{lem:partition-gap} shows that the value distinguishes exactly the remaining partition condition. In the ideal construction: \begin{align*}
        \text{YES}&\implies V^* \geq \Theta + \varpi_1\\
        \text{NO}&\implies V^* \leq \Theta - \varpi_1
    \end{align*}. Thus, the ideal network has value above $\Theta$ when there exists a $k$-element subset whose weights sum to $Z$.
    \item Lemma~\ref{lem:rationalisation} changes the optimal value by at most $\frac{\varpi_1}{4}$ when the ideal probabilities are replaced by their rational approximations. This is smaller than the gap on either side of $\Theta$, so the rational output instance still satisfies
    \begin{align*}
        V^* > \Theta &\qquad\text{in every YES-instance}\\
        V^* < \Theta &\qquad\text{in every NO-instance}
    \end{align*}
\end{itemize}

The construction is computable in polynomial time and has polynomial encoding length. Therefore, deciding whether the optimal value exceeds a given rational threshold is weakly NP-Hard. Exact computation of the optimal value is consequently weakly NP-Hard as well. Moreover, any algorithm that outputs an optimal policy together with a polynomial-time evaluable certificate of its value would solve the value-decision problem and is therefore weakly NP-hard under the same reduction. This completes the proof.
\end{proof}

\endgroup

\end{document}